\newcommand{\Hilb}[1]{\mathcal{H}^{#1}}
\newcommand{\Alg}[2]{\mathscr{#1}_{#2}}
\newcommand{\Span}[1]{\text{Span}\left\{ #1 \right\}}
\newcommand{\LinOp}[1]{\mathcal{L}\left( #1 \right)}
\newcommand{\LinOpB}[2]{\mathcal{L}\left( #1 ,  #2 \right)}
\newcommand{\MapX}[2]{ \frac{\mathds{1}^{#1}}{d_{#1}} \otimes \Tr_{#1}\left[ #2 \right] }
\newcommand{\TrX}[2]{\mathrm{Tr}_{#1}\left[ #2 \right]}
\newcommand{\InProd}[2]{\left\langle #1 \,,\, #2 \right\rangle}
\newcommand{\GGB}[2]{\sigma^{#1}_{#2}}
\newcommand{\Proj}[2]{\mathcal{P}^{#1}_{#2}}
\newcommand{\ProjOn}[3]{\mathcal{P}^{#1}_{#2}\left\{#3\right\}}
\newcommand{\CompAlg}[1]{\overline{\mathscr{#1}}}
\newcommand{\CompProj}[2]{\overline{\mathcal{P}}^{#1}_{#2}}
\newcommand{\CompProjOn}[3]{\overline{\mathcal{P}}^{#1}_{#2}\left\{ #3 \right\}}
\newcommand{\TProj}[2]{\widetilde{\mathcal{P}}^{#1}_{#2}}
\newcommand{\MOn}[1]{\mathcal{M}\left( #1 \right)}
\newcommand{\MapOn}[3]{\mathcal{M}^{#1}_{#2}\left( #3 \right)}
\newcommand{\ChanScr}[2]{\mathscr{#1} \rightarrow \mathscr{#2}}
\newcommand{\parr}{%
\mathrel{\rotatebox[origin=c]{180}{$\&$}}}
\newtheorem{theo}{Theorem}
\newtheorem{defi}{Definition}
\newtheorem{lemm}{Lemma}
\newtheorem{prop}{Proposition}
\newtheorem{coro}{Corollary}
\begin{document}

\title{Projective characterization of higher-order quantum transformations}

\author{Timothée Hoffreumon}
 \email{t.hoffreumon@gmail.com}
 \thanks{currently at the Mathematical Institute of the Slovak Academy of Sciences, Bratislava, Slovakia.}
\author{Ognyan Oreshkov}
\affiliation{%
 Centre for Quantum Information and Communication (QuIC), {\'E}cole polytechnique de Bruxelles, CP 165/59,\\ Universit{\'e} libre de Bruxelles, Avenue F. D. Roosevelt 50, 1050 Brussels, Belgium.
}%

\begin{abstract}
Transformations of transformations, also called higher-order transformations, is a natural concept in information processing, which has recently attracted significant interest in the study of quantum causal relations. In this work, a framework for characterizing higher-order quantum transformations which relies on the use of superoperator projectors is presented.
More precisely, working with projectors in the Choi-Jamio{\l}kowski picture is shown to provide a handy way of defining the characterization constraints on any class of higher-order transformations. The algebraic properties of these projectors are furthermore shown to obey rules similar to \textit{multiplicative additive linear logic (MALL)}, providing an intuitive way of comparing any two classes through their projectors.
The main novelty of this work is the introduction to the algebra of the `prec' connector. It is used for the characterization of maps that are no signaling from input to output or the other way around. 
This allows to assess the possible signaling structure of any transformation characterized within the projective framework.
The properties of the prec are moreover shown to yield a normal form for projective expressions. This hints towards a general way to compare different classes of higher-order transformations.
\end{abstract}

\maketitle

\section{Introduction\label{sec:intro}}
The same way that a quantum channel describes the most general transformation mapping an input quantum state to an output quantum state \cite{Kraus1983}, a \textit{quantum supermap} describes the most general transformation mapping an input quantum channel to an output quantum channel \cite{Chiribella2008}.
Interpreting the quantum channel as a transformation between states, the supermap is then a transformation of transformations. For that reason, it is called \textit{higher-order} transformation. Since nothing forbids \textit{a priori} to nest transformations of transformations, one can consider successive nestings to recursively build whole hierarchies of higher-order transformations \cite{Chiribella2009,Perinotti2016,Bisio2018}. 

Fragments of a quantum circuit are a concrete instance of the use of a higher-order hierarchy. A fragment of quantum circuit that `goes around' a channel is a supermap: it takes a channel as input and outputs a channel; it is a `second-order object' with respect to seeing the channels as the `first-order objects'. This supermap, albeit composed of several interconnected channels, can in turn be seen as a single fragment around which a larger circuit can go. Doing so defines a `third-order object' that can be called a super-supermap: it takes a supermap as input and outputs a channel. And so on. This ensuing hierarchy has been defined under the name \textit{quantum comb formalism} \cite{Chiribella2009}, which has proven to be a valuable tool in the field of quantum information theory.

With a different goal than modeling circuit fragments, supermaps with multiple inputs were subsequently studied. First, the \textit{quantum switch} was proposed as a supermap that takes two channels and outputs them in an order that depends on a control qubit \cite{Chiribella2013}. Soon after, \textit{Process Matrices} (PM) were proposed as a general framework of supermaps that take a fixed number of quantum instruments \cite{Davies1970} and map them to a joint probability for their outcomes \cite{OCB2012}. (In the case when the inputs are channels, that probability is 1.)
Both concepts led to the identification of Indefinite Causal Order (ICO) as a feature of supermaps.

One can then wonder what differentiates the switch from a comb, or the PM from a comb. In particular, why do certain maps and hierarchies of maps feature ICO while others do not?
Motivated by these considerations, the goal of this work is to present a framework that formalizes and characterizes higher-order quantum transformations. The two main questions answered are `Given an operator on a set of input and output Hilbert spaces, does it represent (the Choi-Jamio{\l}kowski operator of) a higher order object?' and `What is (are) the underlying causal structure(s) of such an object?'. This work extends two previous characterizations, one done using type theory \cite{Perinotti2016,Bisio2018}, and the other using category theory \cite{Kissinger_2019}.
This extension relies on the use of superoperator projectors \cite{Araujo2015,MPM,Milz_2022,Milz_2024}. 

These projectors have a twofold advantage: first, they make the characterization more straightforward, as one can answer the first question simply by applying the projector corresponding to a given higher-order object on the operator. Second, they offer an intuitive explanation of the type-theoretic semantics of higher order: the algebraic rules for composing these projectors correspond to the semantic rules for forming new types.

The paper and its findings are organized as follows: 
First, the type-theoretic framework of Ref. \cite{Perinotti2016,Bisio2018} is reviewed in Section \ref{sec:types}. 
In Section \ref{sec:Proj_char}, types are put in correspondence with subsets of operators called \textit{state structures} that are characterized by a superoperator projector. Looking at these projectors instead of the state structures, and, by extension, the types they define, is the idea behind the projective characterization. The type connectives $\{\overline{\:\cdot\:},\otimes,\rightarrow\}$, which respectively correspond to the notions of discarding, parallel composition, and evolution of states, are shown to correspond to rules on how to derive new projectors from the ones characterizing states. It is then proven that this way of characterizing higher-order transformations recovers the type theory.
In Section \ref{sec:Projos_alg}, it is shown that the convenient aspect of expressing the type connectives as operations on projectors is that two new operations are obtained `for free': the intersection $\cap$ and union $\cup$ of projectors.  
This lifts the type system to a particular algebra over $\mathbb{C}$, which is shown in App. \ref{sec:projo} to be a Boolean lattice $\{\overline{\:\cdot\:},\cap,\cup\}$, equipped with two bilinear compositions $\{\otimes,\rightarrow\}$. Hence, the characterization of higher-order transformations is simplified into manipulations of formulae under Boolean-logic-like rules. In particular, two state structures are equivalent up to normalization if their projectors can be shown equivalent using the rewrite rules of the algebra. 
It is also observed in this section that the algebra of projectors, when interpreted as a logic, shares similarities with linear logic \cite{GIRARD1987}. 
In Section \ref{sec:NS}, the signaling structure of compositions is studied: a third composition of projectors, the `prec' $\prec$, is introduced in the algebra. It encodes the composition of two subsystems in a way that allows signaling in a single fixed direction. It is then shown that the other two, $\otimes$ and $\rightarrow$, correspond to, respectively, no signaling and two-way signaling compositions. 
In addition to its obvious role of making the underlying signaling structure of a set of transformations apparent, the prec has useful algebraic properties: as presented in Section \ref{sec:applications_NSandrelations}, these allow one to rewrite projectors into a normal form, which proves useful for comparing state structures.
Finally, in Section \ref{sec:applications_iso}, an example of the use of the projective characterization is given: we recover the result \cite[Prop. 6]{Bisio2018} that quantum combs are quantum networks, which have a single signaling direction.

Three additional examples of the use of the projective characterization are provided in the appendices. First, the characterizations of POVM, channel, and bipartite process matrix formalisms are presented in App. \ref{sec:examples_QT} as an example of what can be done with the results of Sec. \ref{sec:Proj_char}. Next, a channel theory based on a post-selected state structure is presented in App. \ref{sec:examples_biased_QT} as an example of what can be done in generality with the algebra, according to the results of Sec. \ref{sec:Projos_alg}. Lastly, in App. \ref{sec:examples_dynamics_constr}, the repeated nesting of channels (map, supermap, and super-supermap) is shown to result in ICO as an illustration of the results presented in Sec. \ref{sec:NS} and \ref{sec:applications}.

\section{Types\label{sec:types}}
The formalism of quantum channels and operations \cite{Kraus1983} describes transformations of quantum states into quantum states, but is unable to effectively describe the transformations whose inputs and outputs are transformations themselves \cite{Chiribella2013}. To overcome this issue, quantum supermaps were introduced \cite{Chiribella2008,Chiribella2009} and then Perinotti \cite{Perinotti2016} and Bisio \cite{Bisio2018} extended the concept into an axiomatic framework to classify any transformations of transformations, or \textit{higher-order quantum maps}. 

At the core of their work is the utilization of the Choi-Jamio{\l}kowski (CJ) isomorphism \cite{Jamiolkowski1972,Choi1975}. Define $\Hilb{A}$ to be a Hilbert space of finite dimension $d_A$ associated to system $A$, $\LinOp{\Hilb{A}}$ to be the space of linear operators on $\Hilb{A}$, and $\LinOpB{\LinOp{\Hilb{A}}}{\LinOp{\Hilb{B}}}$ to be the space of linear maps from $\LinOp{\Hilb{A}}$ to $\LinOp{\Hilb{B}}$. 
Let $\ket{\phi^+} = \sum_{i=0}^{d_A-1} \ket{i}^{A'}\otimes \ket{i}^{A}$ be a(n unnormalized) maximally entangled state on space $\Hilb{A'}\otimes\Hilb{A}$, with $A'$ a copy of $A$, and $\phi^+ \equiv \dyad{\phi^+}$ its density operator representation. Then,
\begin{equation}\label{eq:CJ}
    \begin{gathered}
        \mathcal{M} \in \LinOp{\LinOp{\Hilb{A}},\LinOp{\Hilb{B}}},  \\ \mathcal{M} \mapsto M \in \LinOp{\Hilb{A'} \otimes \Hilb{B}}:\\
        M \equiv \left[\left( \mathcal{I} \otimes \mathcal{M} \right)\left\{\phi^+\right\}\right]^T \:,
    \end{gathered}
\end{equation}
is an isomorphic bijective mapping. The correspondence sends linear maps between operator spaces to operators in tensor product spaces. It has the properties that a Hermitian-preserving (HP) map is mapped to a Hermitian operator and a completely positive (CP) map is mapped to a positive semi-definite operator. To recover the action of the map, the `reverse direction' of the CJ correspondence is used:
\begin{equation}\label{eq:CJ^-1}
    \mathcal{M}\left(V_A\right) = \left( \TrX{A}{M_{AB} \: \left(V_A \otimes \mathds{1}_B\right) }\right)^T \:,
\end{equation}
where $V_A \in \LinOp{\Hilb{A}}$ is an arbitrary operator, and $\mathds{1}_B$ the identity operator in $\LinOp{\Hilb{B}}$. (Capital letters $V$, $N$, and $U$ will be reserved to denote operators on single-partite space, while $W$ and $M$ will be for operators on bipartite ones. When needed, subscript letters will be used to clarify to which tensor factor an operator is associated.)

Using this correspondence, maps, maps of maps, and so on, can all be represented as sets of operators on some composite space. These sets resemble formally a constrained version of the set of quantum states on the composite space, and this is how one can compare them: by characterizing the constraints associated with a given set of supermaps. 

Then it is possible to define \textbf{types} of maps by associating to each of these constrained state spaces a given \textit{type} \cite{Perinotti2016,Bisio2018}. One first defines the base types $A,B,C,...$ associated with given systems $A,B,C,...$ upon which some parties A(lice), B(ob), C(harlie),... can act\footnote{Note that Refs. \cite{Perinotti2016,Bisio2018} use the terminology `elementary type' instead of `base type', reserve capital letters to base types, and refer to the `constrained state spaces' associated with a given type $A$ as the `set of deterministic events of type $A$', which is noted $\mathrm{T}_1(A)$.}. A base type, therefore, indicates the state space (in density matrix form) of a given finite-dimensional quantum system as well as its label\footnote{Remark that Ref. \cite{Kissinger_2019} provides a more general construction by allowing the state space of base types to be different than the one of quantum systems.}. (The base types will bear the same label as the system they are associated with. To avoid complicated notation when we describe types instantiated on different systems, we treat them as different, even though two systems $A$ and $B$ could have isomorphic state spaces. If needed, we will indicate by words when two types are isomorphic.)

Out of the base types, one can define transformations as a new type constructed from the base types. For example, if the operator $M$ in Eqs. \eqref{eq:CJ} and \eqref{eq:CJ^-1} above was the CJ representation of a quantum channel, it would transform every element of base type $A$ into an element of base type $B$, so its type would be `of a transformation from type $A$ to type $B$', noted as `type $A\rightarrow B$'. Other types can be defined by repeating this construction, yielding the hierarchy of higher-order transformations: if types $A$ and $B$ are valid types (not necessarily base types), then $A \rightarrow B$ is a valid type associated with the transformations having an input of type $A$ and an output of type $B$. The only required notion for defining the hierarchy is that of an \textit{admissible map}, which axiomatizes how the transformations are defined. That is, how (the CJ representations of) the admissible maps between any two types $A$ and $B$ are built from the knowledge of the sets of admissible\footnote{The admissibility axioms in Ref. \cite{Bisio2018} essentially generalize the definition of quantum channels by requiring the transformations to be linear maps that preserve the concepts corresponding to complete positivity and trace-preservation for type $A$ and type $B$. The notion of an admissible map will be formally restated in our formalism under the name of a structure-preserving map, Def. \ref{def:struc_pres}.} maps forming types $A$ and $B$.

The formalism is an instance of a \textit{type system}. This ``$\rightarrow$'' connector, here nicknamed \textbf{transformation}, is the key element of the type theory of higher order transformations: each set can be seen as an abstract type, and new types can be defined out of existing ones using the transformation connector as a semantic rule. Alongside the base types and the transformation, three other symbols must be added to the \textit{alphabet} for writing types: the first one is the trivial type, noted ``$1$''. This is the type associated with a one-dimensional Hilbert space $\mathbb{C}$ whose associated set is made of the number 1, thus corresponding to the absence of a system. The other two symbols are parentheses, ``$($'' and ``$)$'', needed because $\rightarrow$ is not associative. Indeed, $A \rightarrow (B \rightarrow C) \neq (A \rightarrow B) \rightarrow C$ as the first type transforms a `state' of type $A$ into a `channel' of $B \rightarrow C$, whereas the second transforms a `channel' of $A \rightarrow B$ into a `state' of $C$.

\begin{figure}[htb]
    \centering
     \subfloat[An element of type $A$ is represented as a bottom half-circle (left). An element of type $\overline{A}\equiv A \rightarrow 1$ is a top half-circle (center). The closed circuit means that every element of type $\overline{A}$ takes each element of type $A$ to the number 1 (right).\label{fig:basics_nA}]{
         \includegraphics[width=.9\linewidth]{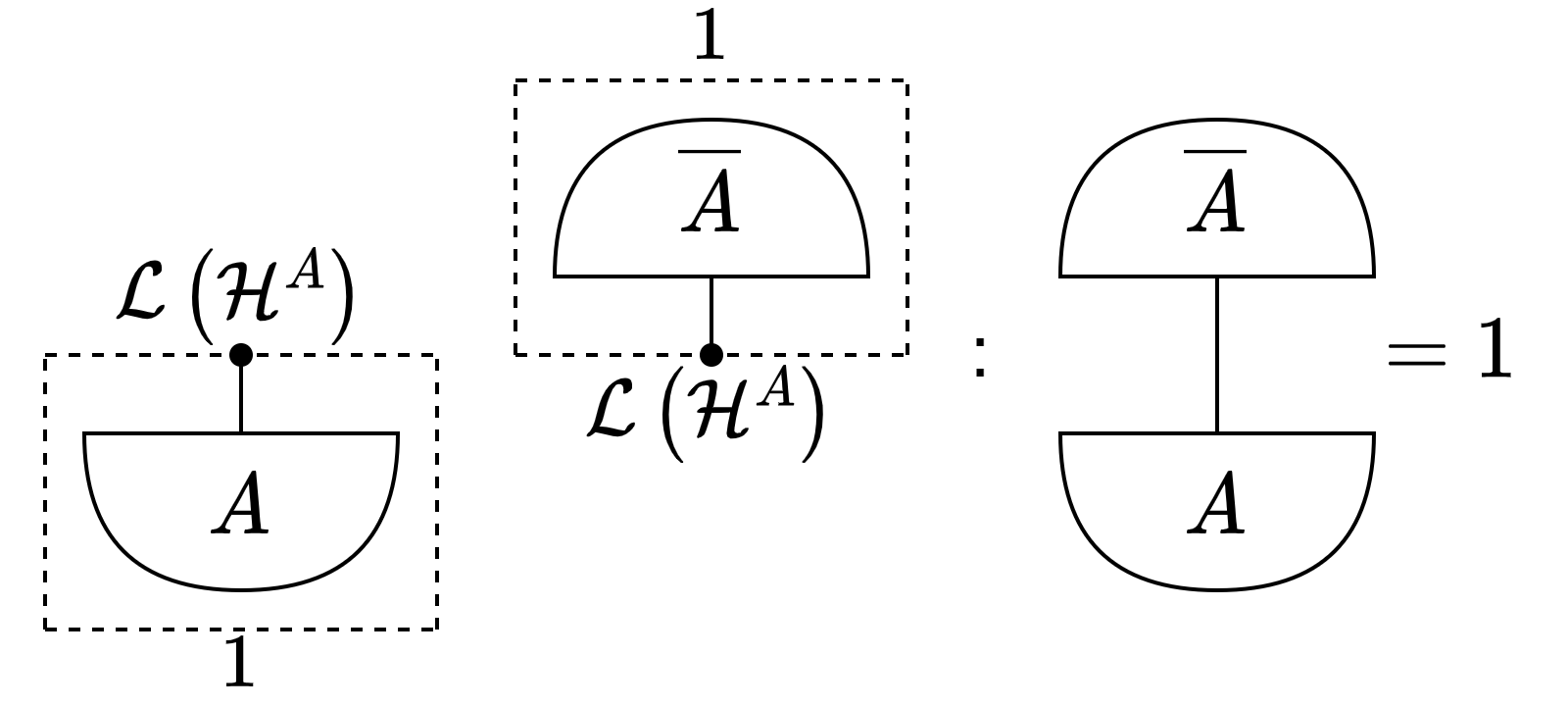}
     }\\
     \subfloat[An element of a type $A \rightarrow B$ is represented as a box. The half-closed circuit means that every element of type $A\rightarrow B$ transforms each element of $A$ into one of $B$. \label{fig:basics_A_to_B}]{
        \includegraphics[width=.9\linewidth]{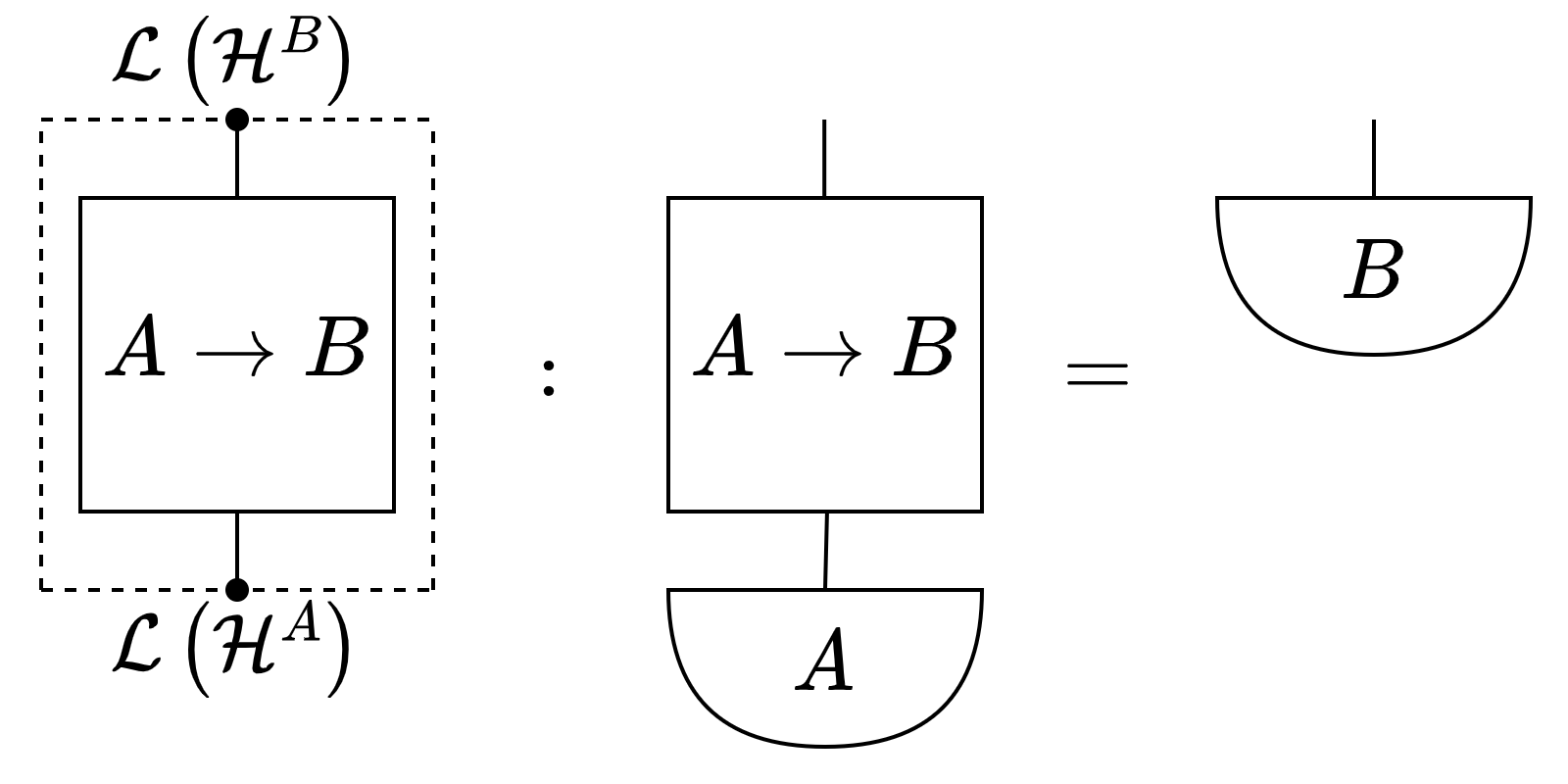}
     }\\
     \subfloat[An element of type $(A \rightarrow (B \rightarrow 1)) \rightarrow 1 = \overline{A \rightarrow \overline{B}}$ is defined as the parallel composition $A \otimes B$ of two types, so it is represented as its two constituents drawn side by side.\label{fig:basics_A_otimes_B}]{
        \includegraphics[width=.95\linewidth]{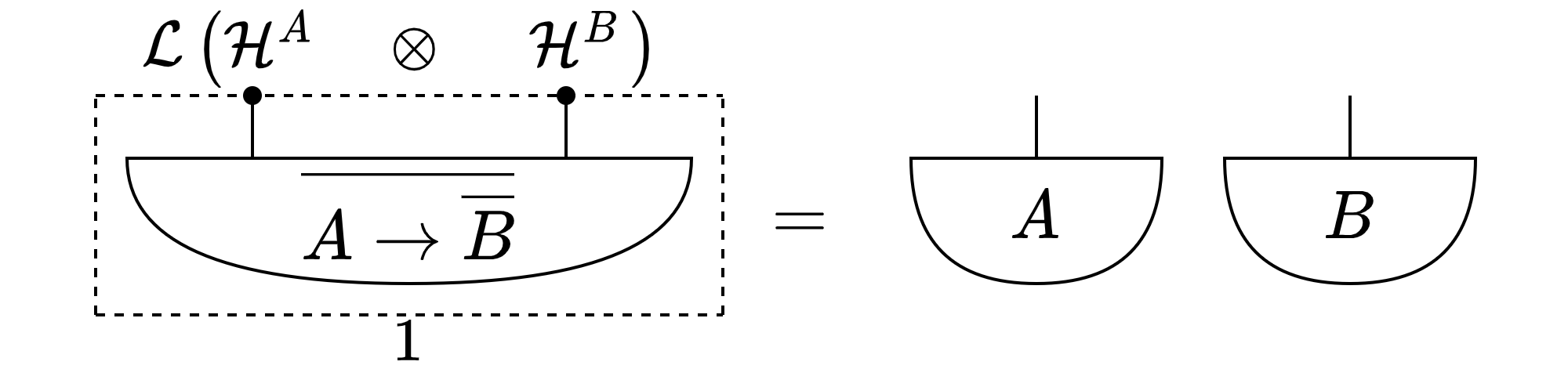}
    }
        \caption[Picturing types/state structures]{Picturing types graphically. 
        In the following, the notion of a type like $A$ will be subsumed by the one of a state structure $\Alg{A}{}$, but the diagrammatic notation will remain the same for simplicity.
        \label{fig:basics}}
\end{figure}
Some graphical notation will be used to represent the types, refer to Fig. \ref{fig:basics} in which diagrams are read from bottom to top: open wires are Hilbert spaces of dimension $>1$, and an absence of wire is a space of dimension one (i.e., a trivial system with trivial type $1$). 
A box between two open wires then represents an admissible map (a transformation type). For example, in the left part of subfigure \ref{fig:basics_A_to_B}, the box within dashed lines can be seen as (the CJ representation of) any map $M$ of type $A \rightarrow B$; it connects an `input' space (in this case, $\LinOp{\Hilb{A}}$) over which the operators of a certain type are defined (in this case, $A$) to an `output' space over which the operators of another type are defined (in this case, the output space is $\LinOp{\Hilb{B}}$ and the type is $B$). Note that the output type is possibly isomorphic to the input (meaning they are associated with the same set of operators) but they are defined on different spaces nonetheless. 

Every type can actually be seen as a transformation from the trivial type to itself, reflecting that any operator can be seen as a linear map from the base field to itself. In the type system, it means that the relation $A = 1 \rightarrow A$ holds. Graphically, it means that a bottom half-circle marked with the letter $A$ is any operator of type $A$. 
For example, the left part of subfigure \ref{fig:basics_nA} can be seen as any operator $V \in \LinOp{\Hilb{A}}$ of type $A$ but also as a transformation of type $1 \rightarrow A$, whence the absence of bottom wire. 

Two special ways of building new types have their own shorthand notation\footnote{Such inessential definitions introduced to make the syntax easier to interpret are often called `syntactic sugars' in computer sciences.} as well as graphical depiction because of their interpretation as sets of operators: the `\textit{functionals} on a type $A$', noted $\overline{A} \equiv A \rightarrow 1$, and the `\textit{tensor product} of types $A$ and $B$', noted $A \otimes B \equiv (A \rightarrow (B \rightarrow 1)) \rightarrow 1$. (Why this latter definition corresponds to a notion of tensor product is explained in Ref. \cite{Perinotti2016}; we shall revisit it in this article and explain it using the projector algebra.)

The functional generalizes the notion of discarding to higher-order maps: its type sends a type $A$ to the trivial type, meaning it sends the generalization of a state space to the single-element set $\{1\}$. In other words, $\overline{A}$ is the type of the \textit{effects} on $A$, mapping elements of type $A$ to the number 1. 

Such types are represented by a top half-circle, which `closes a wire' and therefore depicts the pairing of a state with an effect to produce a number. This pairing is represented in $\LinOp{\Hilb{A}}$ via the inner product so the central part of subfigure \ref{fig:basics_nA} can be seen as $\TrX{A}{N^\dag \: \cdot \: }$ where $N \in \LinOp{\Hilb{A}}$ is of type $A \rightarrow 1 = \overline{A}$. 
We take as a convention to represent negated types (i.e. with a bar on top of it) as such so they are interpreted as the functional $\TrX{A}{N^\dag \: \cdot \:}$ instead of the operator $N$. This is closer in spirit to how the type reads as `taking an element of type $A$ to the number 1'. The right part of the subfigure is then read as the equation $\TrX{A}{N^\dag \: V} = 1$ for all $V$ of type $A$ and $N$ of type $\overline{A}$. 
More generally, any closed wire can be seen as applying the reverse direction of the CJ isomorphism and any equation in graphical representation is true for all operators of the types on the left. For example, the right part of subfigure \ref{fig:basics_A_to_B} is read as the equation $(\TrX{A}{M_{AB}\cdot (V_A \otimes \mathds{1}_B)})^T = U_B$ for all $M$ and $V$ of types $A\rightarrow B$ and $A$, respectively, and for some $U$ of type $B$ (the choice of $U$ may depend on the choices of $M$ and $V$).

Notice that, as operations on types, the `bar' does not commute with the `arrow' i.e., $\overline{A \rightarrow B} \neq \overline{A} \rightarrow \overline{B}$. Indeed, discarding a channel is a different operation from transforming an effect into another one. However, the type $\overline{A} \rightarrow \overline{B}$ of transformations between effects is equivalent to the type $A \leftarrow B$ of reversed transformations between states \cite{Bisio2018}. (The notation $A \leftarrow B \equiv B \rightarrow A$ will often be used to keep the factors ordered alphabetically. In general, all equivalences of types are up to a reordering of the tensor factors.) This logic-like relation\footnote{If the bar is seen as negation and the arrow as implication, this equivalence is indeed the well-known `B implies A if and only if not-A implies not-B'.} encodes the fact that the action of a channel from state in $\LinOp{\Hilb{B}}$ to a state in $\LinOp{\Hilb{A}}$ is equivalent to the action of its adjoint from an effect (represented) in $\LinOp{\Hilb{A}}$ to one in $\LinOp{\Hilb{B}}$. 
This relation further implies that the functionals on functionals are equal to the base type, $\overline{\overline{A}} = A$, i.e. that the `bar' is an involution, since $\overline{A} \rightarrow 1 = A \leftarrow 1$.

The tensor product generalizes the notion of parallel composition to higher-order maps: it takes two types $A$ and $B$ and fuses them into a new bipartite type $A\otimes B$ behaving accordingly. 
Indeed, $A \rightarrow \overline{B}$ is naively a type that takes an input of a type $A$ and outputs one of type $\overline{B}$. However, this type can also be read in the other direction as $A \rightarrow \overline{B} = \overline{A} \leftarrow B$, so it is equivalently interpreted as a type having an input of type $B$. Hence, the functionals on such a type, i.e. of type $\overline{A \rightarrow \overline{B}}$, must be composed of a subsystem of type $A$ and one of type $B$, since both are valid inputs to $A \rightarrow \overline{B}$. In addition, these subsystems must be somewhat independent of each other as each can be interpreted as the first to be inputted into the objects of type $A \rightarrow \overline{B}$; the choice of an element of $A$ should not influence the choice of one in $B$ and vice-versa. Therefore, type $\overline{A \rightarrow \overline{B}}$ is a way to combine types $A$ and $B$ in a `parallel' fashion, motivating the notation $\overline{A \rightarrow \overline{B}} \equiv A \otimes B$. It can then be shown that the set of operators associated with $A\otimes B$ is indeed related to the set spanned by the tensor products of operators of type $A$ with those of type $B$\footnote{For now, this is only an intuitive explanation to motivate the definition of composition from Refs. \cite{Perinotti2016,Bisio2018} which was chosen because type $A \otimes B$ for $A$ and $B$ base types precisely corresponds to the state space of a composite quantum system in $\LinOp{\Hilb{A} \otimes \Hilb{B}}$. They then proceed to show that this definition of the tensor product for two arbitrary types results in a suitable notion of the tensor product of their associated state spaces, namely the intersection of the affine span of the tensor products with the cone of positive operators, see Sec. 5.e of Ref. \cite{Bisio2018}. We will properly restate this definition in Sec. \ref{sec:Proj_char_tensor}. In Sec. \ref{sec:NS_relations}, the intuition we presented in the text will be turned into a formal justification for choosing this definition of the tensor product of types; it will be shown to follow only from requirements on the signaling relations.}. 
For example, $A\otimes B$ corresponds to the type of all bipartite states in $\LinOp{\Hilb{A}\otimes\Hilb{B}}$ if $A$ and $B$ are base types, whereas $(A \rightarrow C) \otimes (B \rightarrow D)$ is the type of all no signaling channels \cite{Beckman2001,Piani2006} from $\LinOp{\Hilb{A}\otimes\Hilb{B}}$ to $\LinOp{\Hilb{C}\otimes\Hilb{D}}$; this is the set spanned by all tensor products of channels of type $A \rightarrow C$ with those of type $B \rightarrow D$. Contrastingly, $(A \otimes B) \rightarrow (C \otimes D)$ is the type of all bipartite channels from $\LinOp{\Hilb{A}\otimes\Hilb{B}}$ to $\LinOp{\Hilb{C}\otimes\Hilb{D}}$ as it transforms bipartite states of type $A \otimes B$ into ones of $C \otimes D$.

Multipartite types like $A \rightarrow B$ are accordingly represented by diagrams with multiple open wires -- one for each tensor factor of the underlying Hilbert space. The tensor product of two types is particular as it is represented as the diagrams for its constituents next to each other rather than a joint diagram, see subfigure \ref{fig:basics_A_otimes_B}. Be aware that this graphical representation does not imply that the set of operators of type $A \otimes B$ is separable. 
For example, a maximally entangled state in $\LinOp{\Hilb{A}\otimes \Hilb{B}}$ is of type $A \otimes B$ so it is one of the operators represented by the disjoint diagram on the right of the subfigure. 

As expected from a tensor product, this is a commutative and associative composition as $A \otimes B = B \otimes A$ (since $\overline{A \rightarrow\overline{B}} = \overline{B \rightarrow\overline{A}}$ up to a reordering of tensor factors) and $A\otimes (B \otimes C) = (A \otimes B) \otimes C$ (this stems from the `uncurrying' property of quantum (super)maps, $A \rightarrow (B \rightarrow C) = (A \otimes B)\rightarrow C$, which encodes the idea that transforming a state into a channel is equivalent to having a channel with two inputs\footnote{However, compared to $A\rightarrow B = \overline{A} \leftarrow \overline{B}$ which is a requirement that comes from the identification of a map with its adjoint, the uncurrying property $A \rightarrow (B \rightarrow C) = (A \otimes B)\rightarrow C$ is rather deduced from the definition of the tensor product; see Sec. 5.e of Ref. \cite{Bisio2018}.}). 

Thus, starting from some postulates, a trivial type 1, base types $A,B,C,...$, and the ``$\rightarrow$'' connector rule, all the higher-order generalizations of the quantum formalism can be defined using the type system. These in turn yield the constraints defining the set of operators representing the transformations of a given type \cite{Perinotti2016, Bisio2018}. 
For example, let types $A_0$ and $A_1$ be, respectively, the set of input and output quantum states of Alice, who applies some quantum operation in between (subsystems associated with the same party will be differentiated with a numeral index starting at 0, and so will be the tensor factors of the associated Hilbert space as well as the types). Then, one can infer that the set of allowed transformations to which she has access is of type $A_0 \rightarrow A_1$. This simple semantic statement is then translated into constraints to apply on the Hilbert space $\mathcal{L}\left(\mathcal{H}^{A_0} \otimes \mathcal{H}^{A_1} \right)$ and yields (the Choi-Jamio{\l}kowski representation of) the set of valid quantum channels for Alice. 
A more complex example is obtained by recovering the set of bipartite process matrices \cite{OCB2012}, whose type corresponds to a deterministic functional on the local quantum instruments of two parties, say Alice and Bob. Knowing that their local instruments sum up to quantum channels, \textit{i.e.} they belong to types $\left(A_0 \rightarrow A_1 \right)$ and $\left(B_0 \rightarrow B_1\right)$, the set of process matrices is the type that takes a composition of these two types as input and outputs a trivial system. In the semantics, this statement corresponds to forming type $\left(\left(A_0 \rightarrow A_1 \right)\otimes \left(B_0 \rightarrow B_1\right)\right) \rightarrow 1$, from which the constraints for the characterization directly ensue. According to what has been said so far, the type of bipartite process matrices rewritten as $\overline{\left(A_0 \rightarrow A_1 \right)\otimes \left(B_0 \rightarrow B_1\right)}$ informs us that that they can be seen as the set of functionals (indicated by the bar notation) on the no signaling channels shared between Alice and Bob (of type $\left(A_0 \rightarrow A_1 \right)\otimes \left(B_0 \rightarrow B_1\right)$).

In the following sections, we will develop new characterization methods that improve three aspects of the type system: first, it provides a way to obtain the characterization constraints directly from a type formula. Second, it allows to compare any type of higher-order transformations. 
This latter feat is key to understanding how ICO arises at higher-orders.
Third, it expands the type semantics so to add another connector to represent the type of transformations with a fixed signaling direction. Not only does this also help understanding ICO, but this connector can moreover be used to read the signaling relations that may feature any given kind of higher order directly from its type.

\section{Projective characterization of the type theory of higher-order quantum transformations\label{sec:Proj_char}}

We start with an example. In quantum theory, the most general representation of a destructive measurement on a state $\rho \in \LinOp{\Hilb{}}$ is done by a POVM \cite{Nielsen2009}: a collection of positive operators $\{E_i\}$, the effects, \textit{resolving} the identity, $\sum_i E_i = \mathds{1}$. The probability of observing outcome $i$ is given by the \textit{Born rule}:
\begin{equation}\label{eq:Born}
    p(i|\rho,\mathds{1}) = \InProd{E_i}{\rho} = \TrX{}{E_i^\dag \: \rho} \:.
\end{equation}
The effects then send each state $\rho$ to a probability $p(i|\rho,\mathds{1})\in [0,1]$ through the Hilbert-Schmidt inner product $\InProd{\cdot}{\cdot}$; they can be seen as \textit{probabilistic functionals} from the state space to a probability, $\InProd{E_i}{\cdot}=\TrX{}{E_i^\dag \: \cdot }: \LinOp{\Hilb{}} \rightarrow [0,1]$. These functionals sum up to a \textit{deterministic functional} (or \textit{unit effect}\footnote{The nomenclature `effect' and `unit effect' have been used in the literature to refer to the operators respectively associated with the probabilistic and deterministic functionals. In this article, we will treat both nomenclatures interchangeably so that the terms probabilistic and deterministic effects will also appear.}) $\TrX{}{\mathds{1} \cdot }: \LinOp{\Hilb{}} \rightarrow 1$ that sends each state to a probability of 1.

When one considers a non-destructive measurement instead, the most general representation is given by a quantum instrument \cite{Davies1970}: a collection of completely-positive (CP) trace-non-increasing maps $\{\mathcal{M}_i\}: \LinOp{\Hilb{A}} \rightarrow \LinOp{\Hilb{B}}$ \textit{resolving} a quantum channel $\sum_i \mathcal{M}_i = \mathcal{M}$. The probability of observing outcome $i$ together with the (unnormalized) output state $\MapOn{}{i}{\rho}\in \LinOp{\Hilb{B}}$ is given by
\begin{equation}
    p(i|\rho,\mathcal{M}) =\TrX{}{\MapOn{}{i}{\rho}}\:.
\end{equation}
The similarity with the destructive case is striking in the CJ picture \cite{Chiri2008memory,Shrapnel2018}: 
\begin{equation}\label{eq:Born_gen}
    p(i|W,M) =\InProd{M_i}{W}\:,
\end{equation}
where the `state' $W = \rho_A \otimes \mathds{1}_B \in \LinOp{\Hilb{A}\otimes\Hilb{B}}$ is destructively measured by a collection of positive operators $\{M_i\}$ resolving a channel $M$, which is the CJ representation of the instrument. 

The two formalisms look similar in the CJ picture, but the `effects' $\mathcal{M}_i$ in the quantum instrument formalism are mappings between the states of the POVM formalism, they are thus higher-order effects. Switching from POVMs to quantum instruments can therefore be seen as a construction of a higher-order theory (this example will be reformulated using the projective characterization in App. \ref{sec:examples_dynamics_constr_1}).

Notice the common threads that will be generalized in this article: the two formalisms involve `states' ($\rho$ and $W$, respectively) that are measured by a `deterministic functional' ($\mathds{1}$ and $M$), resolved into `probabilistic functionals' ($E_i$ and $M_i$). What changes between the two situations is that the set of deterministic functionals in the POVM case, $\{\mathds{1}\}$, has a single element, whereas the one in the quantum instrument case, $\{M\}$, has infinitely many. 
However, the sets of states and deterministic functionals have a similar structure of subspaces of trace-normalized positive operators, which we will abstract under the name `state structure'. And moreover, the \textit{state structures} of both cases are related: $W$ is the tensor product of a quantum state $\rho$ and the functional on it, $\mathds{1}$; the set of valid $W$'s is thus a state structure obtained by \textit{the tensor product of the state structures} of $\rho$ and $\mathds{1}$. Likewise, the state structure of the $M$'s maps the state structure of $\rho$ in space $A$ to a similar state structure but on space $B$; it is \textit{the transformation between two state structures}.

This example motivates the common features of higher-order objects and their projective characterization: the quantum comb formalism \cite{Chiribella2009} or the process matrix formalism \cite{OCB2012} are also characterized by a pair of `states' and `effects'. Their sets of states are state structures in the CJ picture, as they are sets of positive and normalized operators, each defined over a specific subspace. The projective characterization then consists of finding the projector to that subspace, and classifying how the state structures relate to each other. The three ways of being related that will be formalized in this section appeared in the example: \textit{(deterministic) functional on}, \textit{tensor of}, and \textit{transformation between} state structure(s). Similarly to the types reviewed in the previous section, whose characterization can be inferred from the base type and the rules to form new types, knowing the projector characterizing the `base' state structure will be enough to infer the characterization of every state structure through the rules to form new projectors.

\subsection{State structures\label{sec:state_structure}}
Central in this work will be the notion of a \textit{subset of trace-normalized positive elements of an operator system in} $\LinOp{\Hilb{A}}$, called a \textit{state structure}. This kind of set is the backbone of the characterization, as to each type of Ref. \cite{Perinotti2016,Bisio2018} --or ``level in the hierarchy of transformations''-- corresponds such a set. 
The characterization is thus focused on these sets of positive operators that are the images of the types of higher-order maps through the CJ isomorphism, and the abstract structure of these sets is axiomatized through the notion of state structures. A simplification will be to consider only the sets containing an element proportional to the identity operator $\mathds{1}$; an assumption closely related to the concept of \textit{flatness} defined in Ref. \cite{Kissinger_2019}. It is motivated by the fact that the identity is the CJ image of the `trace and replace by white noise' operation; always having an element proportional to the identity then implies that any higher-order map can consist of discarding the input and preparing a purely random output. 
With states being self-adjoint operators, it is moreover natural to ask that every higher-order map send self-adjoint operators to self-adjoint operators. In the CJ picture, this amounts to considering only self-adjoint Choi matrices. The linear span of a set of self-adjoint operators that contains the identity is called an \textit{operator system}.

\begin{defi}[Operator system \cite{ChoiEffros1977}.] \label{def:OS}
    For a given space of operators, an \textbf{operator system} is a linear subspace closed under the adjoint that contains the identity.
\end{defi}
We will refer to both an operator system and its subset of trace-normalized positive elements (for a given normalization) by using the calligraphic font of the letter associated with the subsystem it is defined upon (it should be clear from the context which one is under consideration) and we will use primes to differentiate between operator systems defined on the same space. For example, $\mathscr{A}$ and $\mathscr{A}'$ are two different operator systems over $\LinOp{\Hilb{A}}$. 

Since operator systems are linear subspaces, they can be characterized by linear superoperator projectors noted $\Proj{}{A} : \LinOp{\Hilb{A}} \rightarrow \LinOp{\Hilb{A}}$, see Ref. \cite{Araujo2015,MPM,Milz_2022}. 
Projectors on operator systems must be unital and Hermitian-preserving. Because these subspaces are closed, the projectors can be taken self-adjoint without loss of generality. 
\begin{defi}[Projector on operator system]\label{def:proj}
    A superopertator projector $\Proj{}{A} : \LinOp{\Hilb{A}} \rightarrow \LinOp{\Hilb{A}}$ obeying the following conditions for all $V,N \in \LinOp{\Hilb{A}}$:
    \begin{subequations}\label{eq:proj}
        \begin{gather}
            \ProjOn{}{A}{\mathds{1}} = \mathds{1} \:;\\
            \ProjOn{}{A}{V^\dag} = (\ProjOn{}{A}{V})^\dag \:;\\
            \TrX{}{N^\dag \: \ProjOn{}{A}{V}} = \TrX{}{(\ProjOn{}{A}{N})^\dag \: V} \:;
        \end{gather}
    \end{subequations}
    is called a \textbf{projector on (an) operator system}.
\end{defi}

In addition, higher-order transformations should preserve the positivity of the spectrum. This has to do with the probabilistic interpretation of state structures that have been obtained from a set of linear maps that underwent the CJ isomorphism (this will be explained in Sec. \ref{sec:Proj_char_trans} below). Briefly put, the state of a party cannot have negative eigenvalues after a transformation, as this would correspond to negative outcome probabilities. In addition, following the `admissibility criterion' of Ref. \cite{Bisio2018}, any higher-order transformation should, in general, be definable on a subsystem. Preservation of the positivity of the spectrum even when acting on a subsystem requires transformations to preserve complete positivity. In the CJ picture, this translates into the operators being positive semi-definite. 
In addition, as the probabilities must still sum up to one after any higher-order transformation and since positivity is preserved, the transformations can at most rescale the traces by a positive factor. In the CJ picture, this translates into the operators having a fixed trace norm. 

These two requirements restrict the operator system to trace-normalized positive operators, meaning that any of its elements $V_A$ have to respect the following set of conditions:
\begin{subequations}\label{eq:det_struct}
    \begin{gather}
        V_A \geq 0 \:, \label{eq:det_struct_pos}\\
        \TrX{}{V_A} = c_A \:, \label{eq:det_struct_norm}\\
        \ProjOn{}{A}{V_A} = V_A \:, \label{eq:det_struct_proj}
    \end{gather}
\end{subequations}
where $\Proj{}{A}$ satisfies conditions \eqref{eq:proj}. 
The abstract structure they define will play a central role in the characterization of higher-order quantum transformations. We name it a \textit{state structure}.
\begin{defi}[State structure] \label{def:struct}
    A \textbf{state structure} $\mathscr{A} \in \LinOp{\Hilb{A}}$ is a set obeying constraints \eqref{eq:det_struct}. 
\end{defi}

As mentioned in the introductory example, an instance of a state structure is the set of quantum states in density matrix form, $\rho \in \LinOp{\Hilb{}}$, characterized by
\begin{subequations}\label{eq:state_char}
    \begin{gather}
        \rho \geq 0 \:, \label{eq:state_char_pos}\\
        \TrX{}{\rho} = 1\:, \label{eq:state_char_norm}\\
        \mathcal{I}\{\rho\} = \rho \:. \label{eq:state_char_proj}
    \end{gather}
\end{subequations}
Here, $\mathcal{I}$ is the \textit{identity mapping}, defined as
\begin{equation}\label{eq:Id}
    \forall\, V \in \LinOp{\Hilb{A}}\::\: \mathcal{I}_A(V) = V \:.
\end{equation}

The elements of a POVM are actually operators summing up to (or \textit{resolving}) the single-element state structure defined by
\begin{subequations}\label{eq:effect_char}
    \begin{gather}
        \mathds{1} \geq 0 \:, \label{eq:effect_char_pos}\\
        \TrX{}{\mathds{1}} = d\:, \label{eq:effect_char_norm}\\
        \mathcal{D}\{\mathds{1}\} = \mathds{1} \:, \label{eq:effect_char_proj}
    \end{gather}
\end{subequations}
where $\mathcal{D}$ is the \textit{depolarizing superoperator}, obeying
\begin{equation}\label{eq:depolop}
    \forall\, V \in \LinOp{\Hilb{A}}\:: \mathcal{D}_A(V) = \MapX{A}{V} \:,
\end{equation}
which projects onto the span of the identity. 

Remark that the projectors $\mathcal{I}$ and $\mathcal{D}$ commute. On a given Hilbert space, we will assume all projectors under consideration to always commute pairwise. The reasons behind this assumption will become clearer in the following. 
Assuming that all projectors under consideration commute pairwise allows to use the freedom in the choice of basis for the CJ isomorphism to ensure that all of them commute with the transposition appearing in Eqs. \eqref{eq:CJ} and \eqref{eq:CJ^-1} when acting on self-adjoint operators. As a consequence, the operator systems under consideration will also be closed under this transposition. 
(If needed, the reader can refer to App. \ref{sec:projos} for an extended discussion of what is meant and assumed when referring to `projectors on operator systems' in this article. See in particular Eqs. \eqref{eq:projo} for an abstract formulation of Eqs. \eqref{eq:proj}, and Eqs. \eqref{eq:projos} for the properties assumed about every set of projectors on operator systems.)

In order to compute probabilities on a state structure, it becomes interesting to consider the \textit{probabilistic families resolving} its elements. 
This generalizes the idea of POVM elements in quantum theory, for which each POVM is a different resolution of the identity $\sum_i E_i = \mathds{1}$, as well as the idea of quantum instrument formalism, for which each quantum instrument is a particular family of CP trace-non-increasing maps $\mathcal{M}_i$ providing a resolution of a particular CPTP map $\sum_i \mathcal{M}_i=\mathcal{M}$.
Formally,
\begin{defi}[Resolution of a state structure]\label{def:resolution}
    Let $\mathscr{A}$ be a state structure in $\LinOp{\Hilb{A}}$. The set of operators \textbf{resolving} $\mathscr{A}$ is the set of all collections of positive operators summing up to an element of $\mathscr{A}$. That is, a set of operators $\{E_i\}$ is a \textbf{resolution} of $\mathscr{A}$ if and only if
    \begin{equation}
        \begin{gathered}
            \forall E_i \in \{E_i\} \subset \LinOp{\Hilb{A}},\: \exists V \in \mathscr{A} \::\\
            E_i \geq 0 \:,\\
            E_i + \sum_{j \neq i } E_j = V \:.
        \end{gathered}
    \end{equation}
\end{defi}

\subsection{Functionals} 
Type $A$ has been generalized into state structure $\mathscr{A}$, we now want to generalize the notion of a set of \textit{deterministic effects} from the type $\overline{A}$ to a state structure. It is the set of operators $N$ representing all the functionals mapping each element $V$ of a state structure to the number 1 via the Hilbert-Schmidt inner product for self-adjoint operators, $\TrX{}{N\cdot V}=1$ (note that this inner product features a dagger on the left member, which we omit since all operators are positive). 
In the following, we will always mean deterministic effect when we use the word `effect' alone. 

In a previous work \cite{MPM}, it was noticed that since both the set of states and effects must contain an element proportional to the identity, the two subspaces on which the states and effects are respectively defined must be \textit{quasi-orthogonal}\footnote{The terminology ``quasi-orthogonal'' originated in the study of maximally Abelian subalgebras (MASAs), see Ref. \cite{Hiai2014}}. 
This means that operators $V$ and $N$ must be orthogonal everywhere but at the span of the identity. If $\CompProj{}{A}$ is the projector for $\overline{\mathscr{A}}$, it is related to $\Proj{}{A}$ by $\CompProj{}{A}\equiv \mathcal{I}_A - \Proj{}{A} +\mathcal{D}_A$. This leads to the rephrasing in terms of projectors of the characterization of $\overline{\mathscr{A}}$ from type theory \cite[Lemma 4]{Bisio2018}.
\begin{prop}[Functional]\label{theo:det_fctal}
    Let $\mathscr{A}$ be a state structure. Let $\{E_i\}$ a resolution of an element of $\mathscr{A}$ as in Def. \ref{def:resolution}.
    Let $\overline{\mathscr{A}}$ be the set of operators taking each element $V$ of $\mathscr{A}$ to the number 1 through the inner product,
    \begin{equation}\label{eq:fctal_def}
        N \in \overline{\mathscr{A}} \Rightarrow \forall V \in \mathscr{A}\: : \:\TrX{}{N \cdot V} = 1\:,
    \end{equation}
    and taking each element $E_i$ of every resolution of $V$ to a positive number between 0 and 1, i.e.,
    \begin{equation}\label{eq:fctal_def_resol}
        \begin{gathered}
           \forall V \in \mathscr{A}\:,\:\forall \{E_i\} \:: E_i\geq 0\:,\: \sum_i E_i = V \:,\\
            \TrX{}{N\cdot E_i} \in [0,1]\:.
        \end{gathered}
    \end{equation}
    Then $\overline{\mathscr{A}}$ is a state structure characterized by the following conditions:
    \begin{subequations}\label{eq:det_fctal}
        \begin{gather}
            N \in \overline{\mathscr{A}} \:\iff\notag\\
            N \geq 0 \:, \label{eq:det_fctal_pos}\\
            \TrX{}{N} = \frac{d_A}{c_A} \equiv c_{\overline{A}}\:,\label{eq:det_fctal_norm}\\
            \left\{\mathcal{I}_A - \Proj{}{A} +\mathcal{D}_A\right\}(N)\equiv \CompProjOn{}{A}{N}= N \label{eq:det_fctal_proj} \:.
        \end{gather}
    \end{subequations}
\end{prop}
\begin{figure}[htb]
    \centering
     \subfloat[Defining relation \eqref{eq:fctal_def}.\label{fig:Tr_A_notA}]{
         \centering
         \includegraphics[height=0.12\textheight]{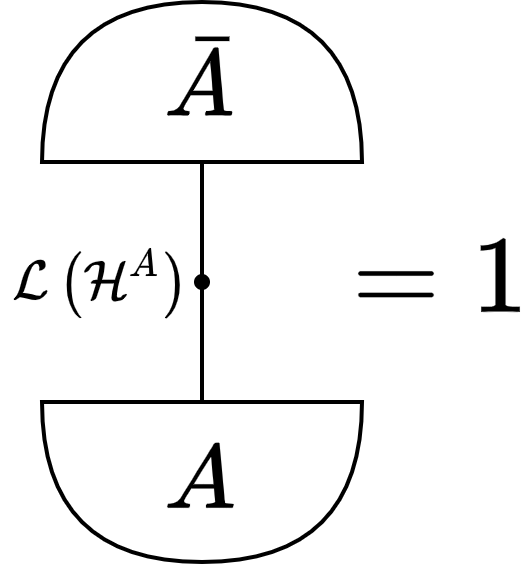}
         }~\hfill
         \subfloat[Diagrammatic representation of their spans.\label{diag:A_notA}]{
         \includegraphics[height=.12\textheight]{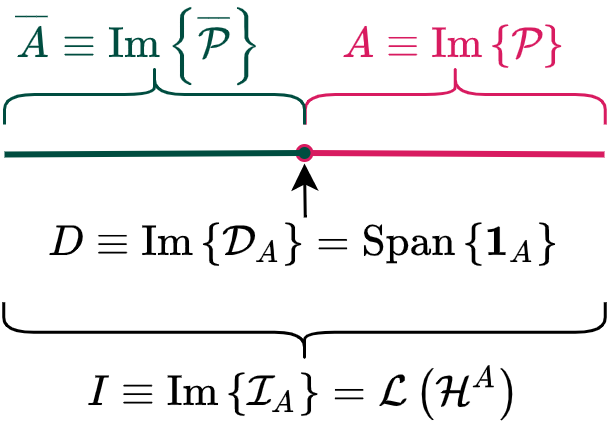}
     }
        \caption{Picturing the characterizing equations of an effect state structure, Proposition \ref{theo:det_fctal}. Fig. \ref{fig:Tr_A_notA} is the graphical representation of the defining relation \eqref{eq:fctal_def}. Fig. \ref{diag:A_notA} is a diagrammatic representation of how the Hilbert space splits between the operator system $\Alg{A}{}$ (pink) and $\CompAlg{A}{}$ (green). (In the diagram, the regular fonts $A$ and $\overline{A}$ have been used instead of script-style because of software limitations; `Im' means the image of a linear map). Together they span the full space (whole line), yet they only intersect at identity (central dot).}
        \label{fig:fctal}
\end{figure}
The proof is provided in App. \ref{app:fctal_proof}, see Fig. \ref{fig:Tr_A_notA} for an illustration. This proposition was first obtained as a theorem in our earlier work on Multi-round Process Matrices (MPM) \cite{MPM}. 
A concrete example of a state structure characterized by Prop. \ref{theo:det_fctal} are Eqs. \eqref{eq:effect_char} obtained from Eqs. \eqref{eq:state_char}. Other examples feature the quantum testers \cite{Chiribella2009} whose characterization follows from the one of quantum combs using Prop. \ref{theo:det_fctal}, and process matrices \cite{OCB2012}, whose characterization follows from the one of the tensor product of channels. The projective characterizations of single and bipartite process matrices are respectively presented in App. \ref{sec:examples_single_channel} and \ref{sec:examples_bipartite_PM}. 

Notice how Prop. \ref{theo:det_fctal} splits the space of operators: the projectors $\Proj{}{A}$ and $\CompProj{}{A}$ obey the relation
\begin{equation}\label{eq:decomposition_I}
    \mathcal{I}_A = \Proj{}{A} + \CompProj{}{A} - \mathcal{D}_A \:;
\end{equation}
this means that the space of all operators is divided between operators $V$ obeying $\ProjOn{}{A}{V} = V$ and operators $N$ obeying  $\CompProjOn{}{A}{N} = N$; the only operators $X$ allowed to live on the intersection of the two subspaces are those obeying $\mathcal{D}_A(X) = X$ since $\Proj{}{A} \circ \CompProj{}{A} = \mathcal{D}_A$. Consequently, the only elements in $\Alg{A}{} \cap \CompAlg{A}$ are multiples of the identity. See Fig. \ref{diag:A_notA} for a diagrammatic illustration of these relations.

Here, the bar in ``$\CompProj{}{A} \equiv \mathcal{I} -  \Proj{}{A} +\mathcal{D}$'' can be seen as an operation at the level of projectors that was applied on $\Proj{}{A}$. 
As such, the projector $\CompProj{}{A}$ will be called the \textbf{negation} of $\Proj{}{A}$ because the operation has properties similar to a logical \textit{not}; see  App. \ref{sec:projo_Boolean} for the details. This algebraic interpretation of projectors will be developed in Sec. \ref{sec:Projos_alg} below.

\begin{figure}[thb]
    \centering
    \includegraphics[height=0.12\textheight]{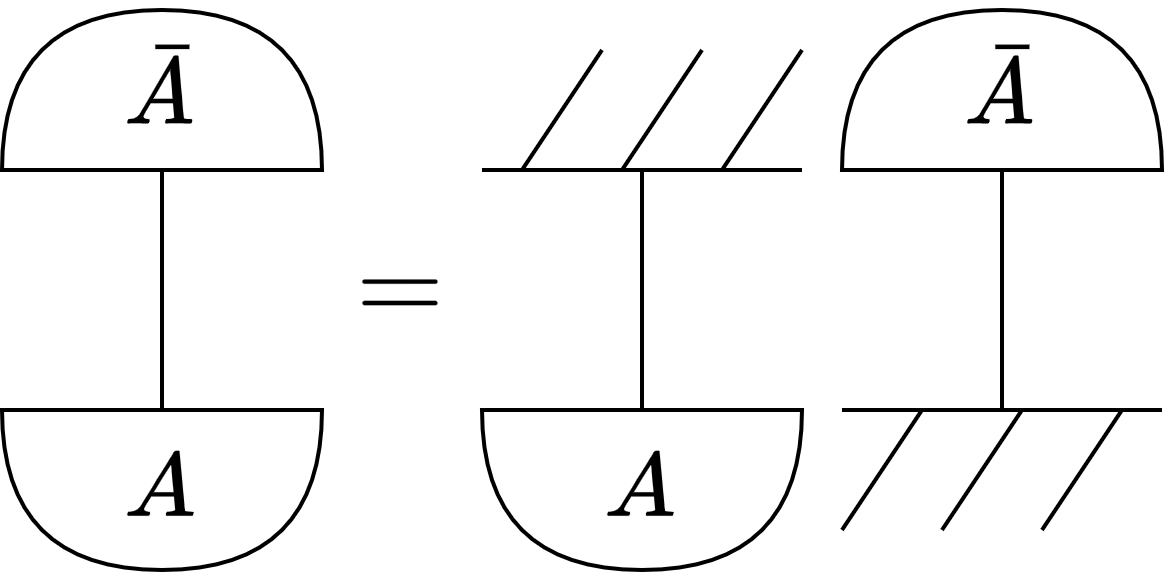}
    \caption{Quasi-orthogonality condition \eqref{eq:QO} is equivalent to the factorization \eqref{eq:fctal_indep} of the inner product. Graphically, the element of a state structure proportional to $\mathds{1}$ is by convention represented by the symbol /// (this is to single it out from arbitrary elements, which are represented by a half-circle). This diagram should then be read as $ \forall V \in \Alg{A}{}, \forall N \in \CompAlg{A}{},\, \TrX{}{N \cdot V} = \TrX{}{\frac{\mathds{1}}{c_A} \cdot V} \TrX{}{N\cdot \frac{\mathds{1}}{c_{\overline{A}}}}$. The diagrammatic translation of quasi-orthogonality between two state structures is thus the possibility of `cutting' the wire between them; this disconnection conveys the idea that these two state structures cannot influence each other.}
    \label{fig:QO}
\end{figure}
Quasi-orthogonality implies the following property \cite[Theorem 2.37 iii)]{Hiai2014}:
\begin{equation}\label{eq:QO}
    \forall V\in \mathscr{A}, \forall N\in \overline{\mathscr{A}}\:,\: \TrX{}{N\cdot V} = \frac{1}{d_A}\TrX{}{N}\TrX{}{V} \: .
\end{equation}
Combining the trace conditions \eqref{eq:det_struct_norm} and \eqref{eq:det_fctal_norm} with equation \eqref{eq:QO} yields the following.
\begin{equation}\label{eq:fctal_indep}
     \TrX{}{N \cdot V } = \TrX{}{N\cdot \frac{\mathds{1}}{c_{\overline{A}}}}\TrX{}{\frac{\mathds{1}}{c_A} \cdot V} \: .
\end{equation}
Fig. \ref{fig:QO} provides a graphical interpretation of this relation: a wire connecting a state $V$ with a `deterministic functional' (a trace with unit effect $N$; $\TrX{}{N \cdot V}$) can be `cut' in two pieces (two traces, $\TrX{}{N \cdot \bullet }\TrX{}{\bullet \cdot V} $) by adding two `discarding effects' on each loose end ($\frac{\mathds{1}}{c_{\overline{A}}}$ and $\frac{\mathds{1}}{c_A}$ replace their respective $\bullet$). 
This hints the physical content of Proposition \ref{theo:det_fctal}: the generalized Born rule is independent of \textit{which} \textit{particular} deterministic state $V$ and effect $N$ are getting connected together; both the state and the effect can be replaced by the identity -- or any other element of their respective state structures -- without altering the outcome probability of 1. Put differently, even when replacing either the state or the effect by pure noise (that is, the discarding effect), it is always a certainty that \textit{something has happened} at the end of the protocol. 
The connection between quasi-orthogonality and the independence between the choice of (deterministic) state and effect is what allows to generalize the notion of \textit{no signaling} to state structures. This will be developed in Sec. \ref{sec:NS}.

\subsection{Tensor product \label{sec:Proj_char_tensor}}
The tensor product of two types as a parallel composition is an abstraction that encompasses both the notions of \textit{bipartite quantum state} as well as \textit{no signaling channel} \cite{Beckman2001,Piani2006}. 
In terms of projectors, the characterization relies on raising the tensor product operation at the level of superoperators in a natural fashion. Let $\Proj{}{A}$ and $\Proj{}{B}$ be respectively projectors on operator systems $\Alg{A}{}$ and $\Alg{B}{}$. Then $\Proj{}{A} \otimes \Proj{}{B}$ acts on $\LinOp{\Hilb{A}\otimes \Hilb{B}}$ so that $\forall q_i \in \mathbb{C}, V_i\in \LinOp{\Hilb{A}}, U_i \in \LinOp{\Hilb{B}}$,
\begin{multline}\label{eq:tensor}
    \left(\Proj{}{A} \otimes \Proj{}{B} \right) \left\{\sum_i \; q_i \; \left(V_i \otimes U_i\right)\right\} \equiv\\ \sum_i\;q_i\;\left(\ProjOn{}{A}{V_i}\otimes \ProjOn{}{B}{U_i}\right) \:.
\end{multline}
This straightforward definition of the tensor product of two projectors on operator systems results in a projector on operator system in $\LinOp{\Hilb{A}\otimes \Hilb{B}}$ (see App. \ref{sec:projo_prop_tensor}). Hence $ \Proj{}{A} \otimes \Proj{}{B}$ is an operation on projectors $\Proj{}{A}$ and $\Proj{}{B}$ that will be called their \textbf{tensor product} (or tensor in short). A diagrammatic depiction of its span with respect to the bipartition of the Hilbert space is given in Fig. \ref{fig:tensor}.
\begin{defi}[No signaling (bipartite) composition]\label{prop:tensor}
    Let $\mathscr{A}$ and $\mathscr{B}$ be two state structures as in Eqs. \eqref{eq:det_struct}. Their no signaling composition $\mathscr{A}\otimes \mathscr{B} \subset \LinOp{\Hilb{A}\otimes \Hilb{B}}$ is the set of all operators $W$ characterized by the following constraints:
    \begin{subequations}\label{eq:det_tensor}
    \begin{gather}
        W\in \mathscr{A}\otimes \mathscr{B} \iff \notag \\
        W \geq 0 \:,\label{eq:det_tensor_pos}\\
        \TrX{}{W} = c_Ac_B \label{eq:det_tensor_norm}\:,\\
        \left(\Proj{}{A}\otimes\Proj{}{B}\right)\{W\} = W\:. \label{eq:det_tensor_proj}
    \end{gather}
    \end{subequations}
    Consequently, $\mathscr{A}\otimes \mathscr{B}$ is the intersection of the affine span of tensor products of operators from $\mathscr{A}$ and $\mathscr{B}$ with the cone of positive operators.
\end{defi}
Def. \ref{prop:tensor} relies on the observation that the positive elements of the affine hull of the tensor product of state structures form a state structure \cite[Lemma 5]{Bisio2018} to define a way to compose two state structures.
It recovers reasonable notions of parallel composition: the set of bipartite states is obtained by setting $\mathscr{A}$ and $\mathscr{B}$ to be the state structures of quantum states, whereas the set of no signaling (sometimes called \textit{causal} \cite{Beckman2001}) bipartite channels is obtained by setting $\mathscr{A}$ and $\mathscr{B}$ to be the state structures of channels. (These two examples are respectively presented in App. \ref{sec:examples_bipartite_QT} and \ref{sec:examples_no_signaling_channel}.) 
We have called it `the no signaling composition' for reasons that will become clear in Sec. \ref{sec:NS}. 
For now, this definition will help us for the characterization of the CJ representation of transformations between state structures.

Notice that the decomposition \eqref{eq:decomposition_I} of the identity projector in terms of $\Proj{}{A}$ and $\CompProj{}{A}$ can be generalized to the identity projectors on two systems. Since $\mathcal{I}_{AB} = \mathcal{I}_A \otimes \mathcal{I}_B$ by linearity, taking the tensor product of Eq. \eqref{eq:decomposition_I} with itself gives
\begin{multline}\label{eq:decomposition_II}
    \mathcal{I}_A \otimes \mathcal{I}_B =\\ \Proj{}{A} \otimes \Proj{}{B} + \Proj{}{A} \otimes \CompProj{}{B} + \CompProj{}{A} \otimes \CompProj{}{B} + \CompProj{}{A} \otimes \Proj{}{B} \\
        -\Proj{}{A} \otimes \mathcal{D}_B - \mathcal{D}_A \otimes \CompProj{}{B} - \CompProj{}{A} \otimes \mathcal{D}_B - \mathcal{D}_A \otimes \Proj{}{B} \\
        + \mathcal{D}_A \otimes \mathcal{D}_B \:.
\end{multline}
This relation is diagrammatically represented in Fig. \ref{fig:tensor}; it implies that the space is divided between four main parts depending on $\Proj{}{A}$, $\Proj{}{B}$, and their negations. Each of these parts might share a common `border', which can be non-trivial, e.g., $\Proj{}{A} \otimes \mathcal{D}_B$ is at the intersection of $\Proj{}{A} \otimes \Proj{}{B}$ and $\Proj{}{A} \otimes \CompProj{}{B}$ since $(\Proj{}{A} \otimes \Proj{}{B}) \circ (\Proj{}{A} \otimes \CompProj{}{B}) = \Proj{}{A} \otimes \mathcal{D}_B$, or not, e.g., $\mathcal{D}_A \otimes \mathcal{D}_B$ is at the intersection of $\Proj{}{A} \otimes \Proj{}{B}$ and $\CompProj{}{A} \otimes \CompProj{}{B}$ since $(\Proj{}{A} \otimes \Proj{}{B}) \circ (\CompProj{}{A} \otimes \CompProj{}{B}) = \mathcal{D}_A \otimes \mathcal{D}_B$. In this latter case, the `border' is trivial as the only elements contained in it are proportional to the identity operator $\mathds{1}_A \otimes \mathds{1}_B$. 
The systematic way to discuss the splitting of the space of operators induced by the projectors and how to find the `borders', i.e. to obtain intersections of operator systems as compositions of projectors, will be presented in Sec. \ref{sec:Projos_alg}.

\begin{figure}[htb]
    \centering
        \includegraphics[width=.8\linewidth]{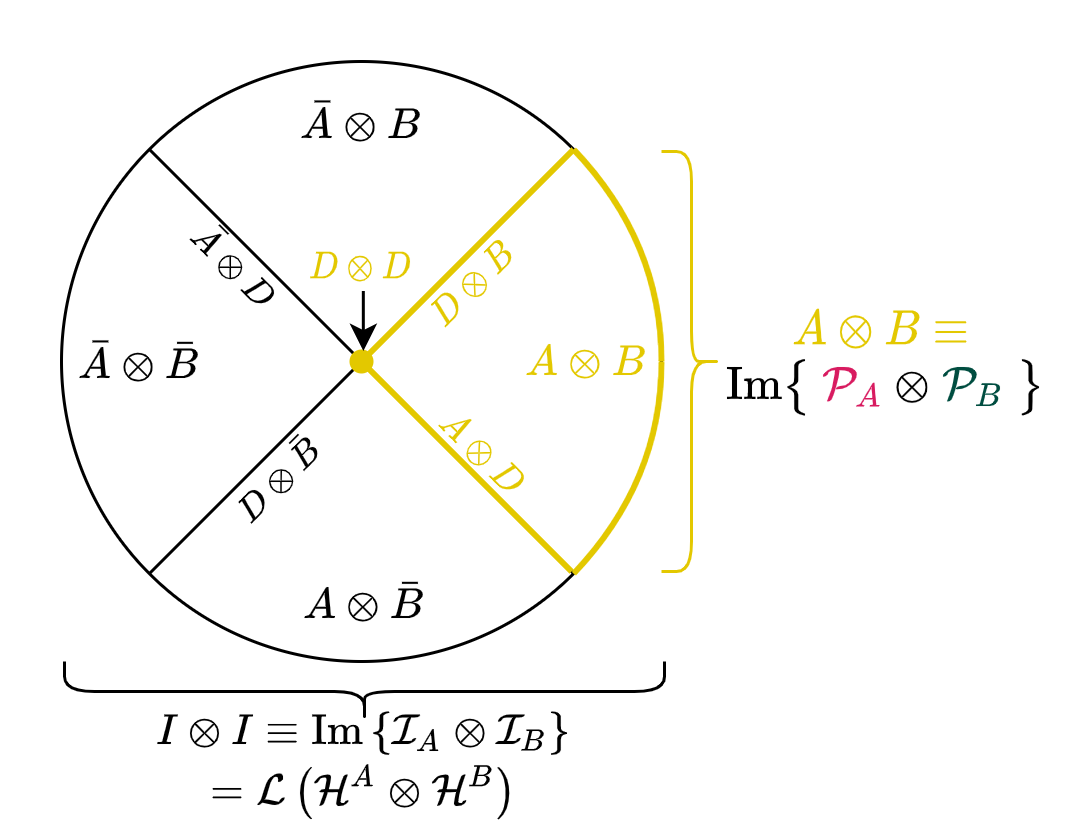}
    \caption{Diagrammatic representation of the image of the tensor product of projectors, as in Def. \ref{prop:tensor}. The full wheel represents $\LinOp{\Hilb{A}\otimes \Hilb{B}}$ while its parts represent its tensor factors, Eq. \eqref{eq:decomposition_II}. (Like in Fig. \ref{diag:A_notA}, `Im' means `Image' and e.g. $D$ is a shortcut for $\mathrm{Im}\{\mathcal{D}\} = \Span{\mathds{1}}$). By convention, the projectors acting on $\LinOp{\Hilb{A}}$ are always on the left-hand side of tensor products and vice-versa for $B$, so that subscripts can be omitted. Note that the intersections are well defined; for example the line `$A \otimes D$', which represents $\mathrm{Im}\{\Proj{}{A} \otimes \mathcal{D}_B\}$ is indeed the intersection $A \otimes B \cap A \otimes \overline{B}$.}
    \label{fig:tensor}
\end{figure}

\subsection{Transformations \label{sec:Proj_char_trans}}
In the type theory of higher-order transformations, every type is an instance of a transformation, so the previous characterization of functionals (type $\overline{A}$) and bipartite states (type $A \otimes B$) are but special cases of a more general rule. This rule says that if $A$ and $B$ are types, then $A \rightarrow B$ is itself a type that takes elements of $A$ to $B$. A state of type $A$ is actually a \textit{transformation} of the trivial type -- the number 1 -- into $A$, noted $1\rightarrow A$. Accordingly, the effect of type $\overline{A}$ is the transformation of type $A$ into the trivial type, $A\rightarrow 1$. No signaling composition is the type $A\otimes B \equiv \overline{A \rightarrow \overline{B}}$. 

As we have seen, the set of elements of a given type corresponds to a state structure, ultimately determined by its projector. Thus, to type $A \rightarrow B$ corresponds a state structure, thereafter noted $\mathscr{A} \rightarrow \mathscr{B}$, which is obtained by combining state structures $\mathscr{A}$ and $\mathscr{B}$ in a certain way. This structure should then be characterized by a composite projector built by combining $\Proj{}{A}$ and $\Proj{}{B}$ using the two previously introduced rules.

Following previous approaches \cite{Kraus1983,Perinotti2016,Bisio2018,Dynamics}, it should be possible to randomize the dynamics --to pick a transformation at random-- without losing the probabilistic interpretation, implying that the transformation must conserve the total probability as well as probabilistic mixtures. And it should be possible to do so even when acting locally on a tensor factor, i.e. even when it acts on a single part of a no signaling composite state structure \cite[Definition 7]{Bisio2018}. The preservation of probabilistic mixtures implies that the maps representing such dynamics are linear (see Ref. \cite{Kraus1983,OCB2012,Perinotti2016,Bisio2018} for the proof). 
It should also be possible to consider entangled input and output pairs \cite{Chiribella2008,Chiribella2009,Perinotti2016,Bisio2018} --i.e., to have correlated inputs and outputs obtained via the exchange of a side system-- without losing the probabilistic interpretation. This generalizes the idea of a CP map and implies that the transformation must be normalized on any state in the tensor product of its input and output and any resolution thereof. This last bit is important as it constrains the CJ representations of transformations to positive semi-definite operators (see Ref. \cite{Chiribella2008,Bisio2018}).
\begin{defi}[Structure-preserving map] \label{def:struc_pres}
    Let $\mathcal{M}$ be a map from $\LinOp{\Hilb{A}}$ to $\LinOp{\Hilb{B}}$. This map is \textbf{structure-preserving} between $\mathscr{A}$ and $\mathscr{B}$ if 1) it is linear; 2) it maps any element of state structure $\mathscr{A}$ to one in $\mathscr{B}$, that is,
    \begin{equation}\label{eq:A_trasnformsinto_B}
        \forall V \in \mathscr{A},\quad\mathcal{M}(V) \in \mathscr{B}\:;
    \end{equation}
    3) its CJ operator is positive.
\end{defi}

\begin{prop}[Transformation between state structures]\label{theo:det_map}
    Let $\mathcal{M} \in \LinOpB{\LinOp{\Hilb{A}}}{\LinOp{\Hilb{B}}}$ be a structure-preserving map between state structures $\mathscr{A}$ and $\mathscr{B}$. Call $M \in \LinOp{\Hilb{A}\otimes \Hilb{B}}$ the Choi-Jamio{\l}kowski representation of this map, as in Eq. \eqref{eq:CJ}. Then, the set $\{M\}$ of all such maps belongs to the state structure $\Alg{A}{} \rightarrow \Alg{B}{}$ defined by the following conditions:
    \begin{subequations} \label{eq:det_map}
    \begin{gather}
        M\in \mathscr{A}\rightarrow \mathscr{B} \: \iff\notag \\
        M \geq 0 \:,\label{eq:det_map_pos}\\
        \TrX{}{M} = c_{\overline{A}}c_B = \frac{c_B}{c_A}d_A \:,\label{eq:det_map_norm}\\
        \left(\Proj{}{A}\rightarrow \Proj{}{B}\right)\{M\} \equiv \left(\overline{\Proj{}{A}\otimes\CompProj{}{B}}\right)\{M\} = M \:.\label{eq:det_map_proj}
    \end{gather}
    \end{subequations}
\end{prop}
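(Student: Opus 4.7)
The plan is to derive each of the three conditions in \eqref{eq:det_map} on the CJ representation $M$ from the structure-preservation axioms \eqref{eq:struc_pres} via the Choi-Jamio\l{}kowski duality \eqref{eq:CJ}-\eqref{eq:CJ^-1}, and then recognize them as the defining conditions of the state structure $\overline{\mathscr{A}\otimes\overline{\mathscr{B}^T}}$ obtained by combining Propositions \ref{theo:det_fctal} and \ref{prop:tensor}.

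For positivity \eqref{eq:det_map_pos}, I would use the tensor-extension clause of Def. \ref{def:struc_pres}: taking $\mathscr{C}$ to be the full state structure on a copy of $A$, the requirement that $\mathcal{M}\otimes\mathcal{I}_C$ map positives to positives on $\mathscr{A}\otimes\mathscr{C}$ forces $\mathcal{M}$ to be completely positive, hence $M\geq 0$ by the standard CJ correspondence between CP maps and positive Choi operators \cite{Choi1975}. For the normalization \eqref{eq:det_map_norm}, I would exploit that an operator system contains the identity (Def. \ref{def:OS}), so $(c_A/d_A)\mathds{1}_A\in\mathscr{A}$; inserting this element into \eqref{eq:struc_pres_resc} and using \eqref{eq:CJ^-1}, one obtains $\TrX{}{\mathcal{M}((c_A/d_A)\mathds{1}_A)}=(c_A/d_A)\TrX{}{M}=c_B$, whence $\TrX{}{M}=(d_A/c_A)c_B=c_{\bar A}c_B$.

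The projector condition \eqref{eq:det_map_proj} is the main step. The key identity, obtained from \eqref{eq:CJ^-1} by cyclicity of the trace, is the CJ duality pairing
\[
\TrX{}{\bar B \cdot \mathcal{M}(A)}\;=\;\TrX{}{M\cdot(A\otimes \bar B^T)}\,.
\]
By \eqref{eq:struc_pres_proj} the image $\mathcal{M}(\mathscr{A})$ is contained in $\mathscr{B}$, so the definition \eqref{eq:fctal_def} of $\bar{\mathscr{B}}$ makes the left-hand side equal to $1$ for every $A\in\mathscr{A}$ and $\bar B\in\bar{\mathscr{B}}$. Consequently $M$ pairs to $1$ with every product element $A\otimes\bar B^T$ of the tensor state structure $\mathscr{A}\otimes\overline{\mathscr{B}^T}$, whose projector is $\Proj{}{A}\otimes\CompProj{}{B^T}$ by Prop. \ref{prop:tensor}. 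Combined with the positivity and trace conditions already obtained, Prop. \ref{theo:det_fctal} applied to the composite state structure $\mathscr{A}\otimes\overline{\mathscr{B}^T}$ then places $M$ in $\overline{\mathscr{A}\otimes\overline{\mathscr{B}^T}}$ with characterizing projector $\overline{\Proj{}{A}\otimes\CompProj{}{B^T}}=\mathcal{I}-\Proj{}{A}\otimes\CompProj{}{B^T}+\mathcal{D}_{AB}$, which is exactly \eqref{eq:det_map_proj}.

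The main obstacle is bookkeeping the transpose on $B$ built into the CJ convention \eqref{eq:CJ}: the output type $\mathscr{B}$ surfaces in the projector as $\CompProj{}{B^T}$, and one must take care not to conflate this with $\CompProj{}{B}^T$ or with the projector of some transposed dual structure. A secondary subtlety is that Prop. \ref{prop:tensor} characterizes $\mathscr{A}\otimes\overline{\mathscr{B}^T}$ as the affine span of product elements rather than as the literal Cartesian product; invoking bilinearity of the Hilbert-Schmidt pairing in $A$ and $\bar B^T$ is what allows the pairing-to-$1$ condition on products to extend to the full tensor state structure, at which point Prop. \ref{theo:det_fctal} closes the argument and recovers \cite[Prop. 1]{Bisio2018} in projective form.
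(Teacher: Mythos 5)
Your proof is correct and follows essentially the same route as the paper's: complete positivity from the extension clause gives $M\geq 0$, and the CJ pairing identity $\TrX{}{\bar B\cdot\mathcal{M}(A)}=\TrX{}{M\cdot(A\otimes\bar B^{T})}=1$ for all $A\in\mathscr{A}$, $\bar B\in\overline{\mathscr{B}}$, combined with Propositions \ref{prop:tensor} and \ref{theo:det_fctal} and the extension to affine combinations, yields the projector condition. The only cosmetic difference is that you derive the normalization \eqref{eq:det_map_norm} directly by inserting $(c_A/d_A)\mathds{1}_A$ into the trace-rescaling axiom, whereas the paper obtains it as a byproduct of applying Proposition \ref{theo:det_fctal} to the composite structure.
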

The proof is presented in App. \ref{sec:det_map_proof}; see Fig. \ref{fig:det_map} for an illustration. An example of a state structure characterized by this proposition is the set of quantum channels, obtained by setting $\mathscr{A}$ and $\mathscr{B}$ to be state structures of quantum states (the examples of single and bipartite channels are respectively presented in App. \ref{sec:examples_single_channel} and \ref{sec:examples_bipartite_channel}; note that App. \ref{sec:examples_single_channel} is but the introductory example of this section rephrased in the formalism developed in this paper). Another example is the set of quantum $n$-combs \cite{Chiribella2009} in which $\mathscr{A}$ is the set of $(n-1)$-combs and $\mathscr{B}$ the one of channels (this example is revisited in Sec. \ref{sec:applications_iso}).
\begin{figure}[thb]
    \centering
    \subfloat[Graphical interpretation of the defining condition of the set $\mathscr{A}\rightarrow\mathscr{B}$: $M \in \mathscr{A}\rightarrow\mathscr{B} \Rightarrow \TrX{A}{M \cdot (A\otimes \mathds{1}_B)})^T \in \mathscr{B}$.\label{fig:A_to_B}]{
        \includegraphics[width=.5\linewidth]{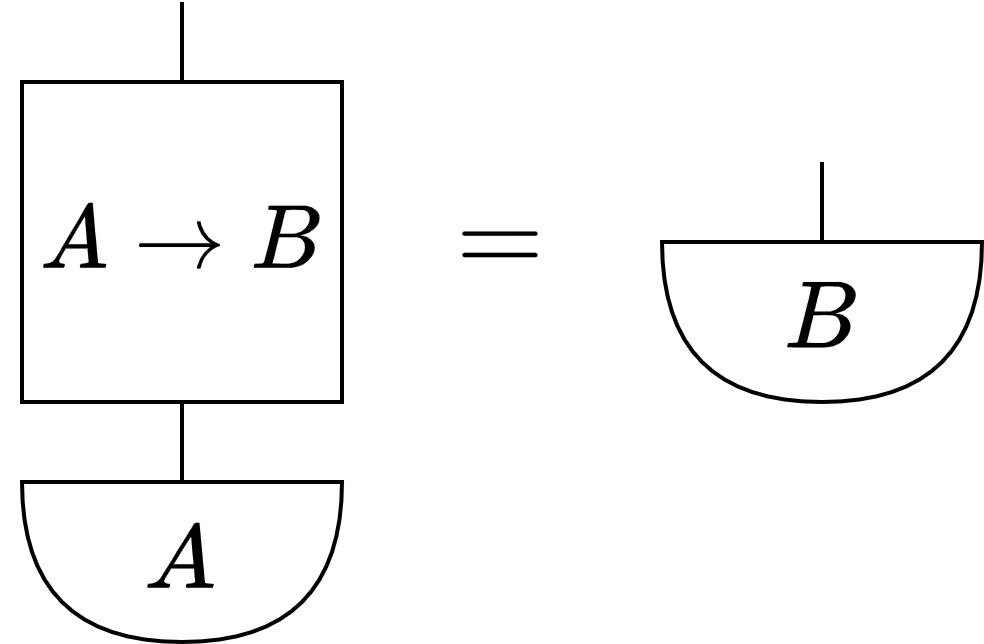}
    }~\hfill
    \subfloat[Diagram for the subspaces defined by Eq. \eqref{eq:det_map_proj} (blue) and $\Proj{}{A}\otimes \CompProj{}{B}$ (yellow)\label{diag:A_to_B}]{
        \includegraphics[width=.40\linewidth]{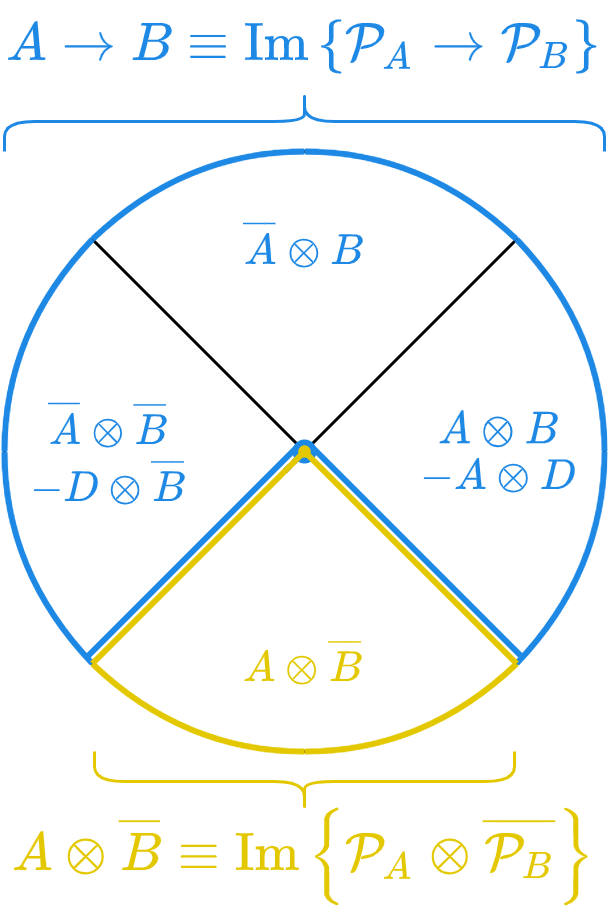}
    }
    \caption{Picturing the characterizing equations of a transformation between state structures, Proposition \ref{theo:det_map}. Fig. \ref{fig:A_to_B} is the graphical representation of the defining relation \eqref{eq:A_trasnformsinto_B}. Fig. \ref{diag:A_to_B} is a diagrammatic representation of how the Hilbert space splits between the span of the state structure of transformations, $\Alg{A}{} \rightarrow \Alg{B}{}$ (blue), and the span of the deterministic functionals on it, $\Alg{A}{}\otimes \CompAlg{B}{}$ (yellow). As in Fig. \ref{diag:A_notA}, they span the full space together (i.e., they cover the whole disk), yet they only intersect at the identity (central dot). Notice how the transformation does not contain the boundaries $\Im{\mathcal{D}_A \otimes \CompProj{}{B}}$ and $\Im{\Proj{}{A}\otimes \mathcal{D}_B}$ as they belong to $\Alg{A}{}\otimes \CompAlg{B}{}$ already.}
    \label{fig:det_map}
\end{figure}

In Proposition \ref{theo:det_map}, a new operation on the projectors has been defined, the \textbf{transformation} $\rightarrow$. 
It combines the tensor and negation operations, $\Proj{}{A} \rightarrow \Proj{}{B} \equiv \overline{\Proj{}{A} \otimes \CompProj{}{B}}$ and its full form is the following projector:
\begin{multline}\label{eq:proj_A_to_B}
    \Proj{}{A} \rightarrow \Proj{}{B} \equiv 
    \mathcal{I}_A\otimes \mathcal{I}_B - \mathcal{P}_A \otimes \mathcal{I}_B \\ + \Proj{}{A} \otimes \Proj{}{B} - \mathcal{P}_A \otimes \mathcal{D}_B + \mathcal{D}_A\otimes \mathcal{D}_B \:.
\end{multline}
This is a connector whose direction matters. As we did with types, we define a reversed symbol, $\leftarrow$, by $\Proj{}{A} \leftarrow \Proj{}{B} \equiv \overline{\CompProj{}{A} \otimes \Proj{}{B}}$ in order to use it without changing the ordering of the tensor factors.

\paragraph*{Remark.} The above derivation is very close in spirit to Ref. \cite{Dynamics} where the projective characterization of transformations was first made in the context of transformation between process matrices, but the obtained projector was missing two terms\footnote{What the authors had obtained in the context of process matrix to process matrix transformation was $\mathcal{I}_A\otimes \mathcal{I}_B - \mathcal{P}_A \otimes \mathcal{I}_B + \Proj{}{A} \otimes \Proj{}{B}$, where $\Proj{}{A}$ is the projector characterizing the input process matrix state structure and $\Proj{}{B}$ the output. Compared to equation \eqref{eq:det_map_proj}, here written fully without negation, 
$\overline{\Proj{}{A}\otimes\CompProj{}{B}} = \mathcal{I}_A\otimes \mathcal{I}_B - \mathcal{P}_A \otimes \mathcal{I}_B + \Proj{}{A} \otimes \Proj{}{B} - \mathcal{P}_A \otimes \mathcal{D}_B + \mathcal{D}_A\otimes \mathcal{D}_B$, 
one sees that two terms are missing. In particular, they are the ones that forbid to postselect on a process matrix before preparing a new one that is purely noisy.}. After verification, it appears that this omission has not hindered the other conclusions of this article \cite{ECR_comm}.

\subsection{The algebra of projectors recovers the type theory}
We started from the type theory of higher-order transformations for which $ \overline{A}, A \otimes B,$ and $ A\rightarrow B$ are the new types that one can obtain from base types $A$ and $B$ under the semantics $\{1,(,),\rightarrow\}$.

We observed that types define \textit{state structures}, which are subsets of operator systems made of trace-normalized positive elements. These state structures are the sets of CJ operators representing higher-order maps for which $\overline{\mathscr{A}}, \mathscr{A} \otimes \mathscr{B}, $ and $ \mathscr{A} \rightarrow \mathscr{B}$ are the new state structures that one can obtain from bases $\mathscr{A}$ and $\mathscr{B}$. 

We thus refined the type theory by relaxing the requirement that the base types are quantum states into the base type being a valid state structure. 
As these state structures are defined on linear subspaces, most of their properties are encoded by the superoperator projector that defines them. These projectors are an abstraction of states structures for which $\CompProj{}{A}, \Proj{}{A}\otimes \Proj{}{B},$ and $ \Proj{}{A}\rightarrow\Proj{}{B}$ are the new projectors that one can obtain from bases $\Proj{}{A}$ and $\Proj{}{B}$. In that respect, the projective characterization amounts to the three algebraic rules $\{\overline{\:\cdot\:},\otimes,\rightarrow\}$ defined in Eqs. \eqref{eq:det_fctal_proj}, \eqref{eq:det_tensor_proj}, and \eqref{eq:det_map_proj}.

\begin{lemm}\label{lem:Boolean_lattice}
    Let $\Proj{}{A}$ be an arbitrary projector on an operator system (see Def. \ref{def:proj}) and acting on space $\LinOp{\Hilb{A}}$. Let $\Proj{}{B},\ldots,\Proj{}{K}$ be similarly defined projectors, respectively acting on $\LinOp{\Hilb{B}}, \ldots, \LinOp{\Hilb{K}}$. 
    Then, every superoperator acting on $\LinOp{\Hilb{A}\otimes \Hilb{B}\otimes\ldots\Hilb{K}}$ obtained by combining these projectors using any permutation of the rules $\{\overline{\:\cdot\:},\otimes,\rightarrow\}$ is a projector on operator system.
\end{lemm}
\begin{proof}
    Each rule individually results in a projector on an operator system as can be shown by direct computation, meaning that if $\Proj{}{A}$ and $\Proj{}{B}$ are projectors on operator systems, then so are $\CompProj{}{A}$, $\Proj{}{A} \otimes \Proj{}{B}$, and $\Proj{}{A} \rightarrow \Proj{}{B}$, see Appendices \ref{sec:projo_Boolean} and \ref{sec:projo_prop_tensor} for the full proof of this statement. 
    It remains to show that any combination of these rules are also giving projectors also obeying Eqs. \eqref{eq:proj}. Using that $\Proj{}{A}\rightarrow\Proj{}{B} = \overline{\Proj{}{A}\otimes \CompProj{}{B}}$, any expression is either an overall negation of a projector, $\CompProj{}{}$, or it is an overall tensor of k factors, $\Proj{}{A}\otimes \ldots \otimes \Proj{}{K}$. In the first case, it is a valid projector provided that the projector being negated is valid. In the second case, the k factors can be seen as a tensor product of two factors using the associativity of the tensor, $\left(\Proj{}{A}\otimes \ldots \right)\otimes \left(\Proj{}{K}\right)$, which is a valid projector provided $\Proj{}{A}\otimes \ldots$ and $\Proj{}{K}$ are. This reasoning repeats inductively until one reaches each individual projector $\Proj{}{A},\Proj{}{B},\ldots,\Proj{}{K}$ which are projectors on operator systems by assumption.
\end{proof}
And because of that, they define valid state structures globally.
\begin{coro}
    Any set build from state structures using the rules $\{\overline{\:\cdot\:},\otimes,\rightarrow\}$ (characterized by, respectively, Prop. \ref{theo:det_fctal}, Def. \ref{prop:tensor}, and Prop. \ref{theo:det_map}) is a state structure itself.
\end{coro}

Thus, when a set $\left\{\Proj{}{A}, \Proj{}{B}, ..., \Proj{}{K} \right\}$ of $k$ `base' projectors has been defined and associated with $k$ systems $A,B,\ldots,K$, one can form arbitrary combinations (for example $(\Proj{}{A} \rightarrow \Proj{}{B}) \otimes \CompProj{}{F}$) and this is guaranteed to define a new composite state structure on the joint space once the normalizations have been fixed (in the example $(\Alg{A}{} \rightarrow \Alg{B}{}) \otimes \CompAlg{F}{} \subset \LinOp{\Hilb{A} \otimes \Hilb{B} \otimes \Hilb{F}}$ is a valid state structure once the constants $c_A$, $c_B$, and $c_{\overline{F}}$ are fixed). 
Now, the characterization has shown that the obtained rules are redundant: instead of using $\{\overline{\:\cdot\:},\otimes,\rightarrow\}$, one can use $\{\overline{\:\cdot\:},\rightarrow\}$ only, as Eq. \eqref{eq:det_map_proj} shows that any occurrence of the no-signaling composition can be replaced by a combination of the negation and the transformation. 
Similarly, instead of using $\{\overline{\:\cdot\:},\rightarrow\}$, one can use the transformation $\rightarrow$ only. To do so one can leverage the trivial system $1$, which is the single-element state structure composed of the number one and characterized by a projector which is also the number one (which is the only projector on $\mathbb{C}$). Using such a rule, one can see that $\CompProj{}{A} = \Proj{}{A} \rightarrow 1$ since $\Proj{}{A} \rightarrow 1 = \mathcal{I}_A - \Proj{}{A} + \mathcal{D}_A$ using Eq. \eqref{eq:proj_A_to_B}. 
Hence, the operations on projectors can be reduced to only the transformation, and together with the parentheses and the trivial projector $1$, they give the same semantics $\{1,(,),\rightarrow\}$ as the type theory of Refs. \cite{Perinotti2016,Bisio2018}. 
This is not a coincidence: when the base state structures are chosen to be the state spaces of quantum states, i.e., when the base projectors are chosen to be the identity projector $\mathcal{I}$ and the associated traces are set to 1, the type formalism is recovered. The next theorem is a technically precise version of this statement, given for completeness only and aimed at a reader familiar with Ref. \cite{Bisio2018}. 
\begin{prop}\label{theo:Bisio_equivalence}
    When the base state structures are set to be the sets of density matrices, the state structures characterized by the projective characterization under the rules $\{\overline{\:\cdot\:},\otimes,\rightarrow\}$ are in one-to-one correspondence with the sets of `deterministic events' characterized using the type-theoretic approach of Refs. \cite{Perinotti2016,Bisio2018}.
\end{prop}
The proof is provided in App. \ref{sec:Bisio_equivalence_proof}. 

Therefore, working with projectors is not only a handy way to define the characterization constraints implied by type theory: There is a correspondence between the semantic definition of new types using $\{1,(,),\rightarrow\}$ and the construction of new projectors using $\{\overline{\:\cdot\:},\otimes,\rightarrow\}$. 
If the base state structures are chosen to be quantum states, then the type theory of Ref. \cite{Perinotti2016,Bisio2018} is recovered. In Sec. \ref{sec:applications_iso}, we will provide an example in which the results of these papers are recovered from the projector algebra.

\section{Implications of replacing types with projectors\label{sec:Projos_alg}}
We showed in the previous section that we can recast the type-theoretic characterization of higher-order quantum transformations into a projective formulation. In this section, we present the two advantages of doing so. The first one is the flexibility when defining the base types, whereas the second one is a new tool for comparing any two types. While the first advantage is `only' a formalisation of something that could have been done using the type-theoretic approach, the second one is intrinsic to working with projectors. It further reveals that the algebraic rules obeyed by the projectors are not arbitrary. They actually form an algebra, as well as a model of logic (first considered in Ref. \cite{Kissinger_2019}). This, on the one hand, provides a mathematical reason why the type system works the way it does. Moreover, it offers a new way of bridging between the type-theoretical approach \cite{Perinotti2016,Bisio2018} and the categorical one \cite{Kissinger_2019,Simmons2022}. 
On the other hand, it allows for a classification of higher-order quantum processes, which will be shown to be based on the notion of signaling in Sec. \ref{sec:algebra_the_algebra}, after a new way of combining projectors is introduced in Prop. \ref{prop:semi-causal_trans}.

\subsection{Extending the hierarchy\label{sec:base_sets}}
Type theory is dependent on a set of base types. In Ref. \cite{Bisio2018}, the first nontrivial types in the hierarchy, called the \textit{elementary types}, are taken as the set of quantum states as in Eqs. \eqref{eq:state_char}. 
In terms of the projective characterization, this means that each base type is characterized with projector $\mathcal{I}$, and the elementary type on $k$ subsystems is the set of quantum states characterized by projector $\mathcal{I_A} \otimes \mathcal{I_B} \otimes \ldots \otimes \mathcal{I_K}$. The other types of higher-order transformations are subsequently obtained by using different sets of connectors to combine the $k$ base projectors.

Within the framework of state structures, instead of having a fixed base type, one can consider any state structure as a base. In fact, any set of higher-order transformations can be coarse-grained into a base state structure by seeing all the systems they act on as a single overall system. With such a procedure, one can obtain many inequivalent base state structures, so the formalism can be further broadened to allow the parties to choose between several base state structures. For example, Alice could be choosing between inputting a quantum or a classical state, and the formalism would accommodate it because these two choices are valid state structures whose projectors commute with one another. In terms of projectors, it means that, for a given subsystem, the base projector can be picked from a set of pairwise commuting projectors on operator systems $\{\Proj{}{},\Proj{'}{},...\}$, instead of being $\mathcal{I}$. 
Indeed, as shown in the following, building multipartite projectors from any set of commuting base projectors will lead to a set of commuting multipartite projectors. Thus, one can always coarse-grain these new multipartite projectors into a new set of base projectors, build new higher-order objects using them, and then repeat the operation to obtain a recursively more complex higher-order transformation (an example of such a procedure is presented in App. \ref{sec:examples_dynamics_constr}). 

Yet, and this is a difference with the type framework, the base projectors do not have to be built recursively by starting with $\mathcal{I}$. Combining several such projectors using $\{\overline{\:\cdot\:}, \otimes, \rightarrow\}$ to obtain a new set of multipartite projectors and coarse-graining into a new set of base projectors is not the only way of obtaining base projectors that are different from $\mathcal{I}$. Actually, any set of pairwise commuting projectors obeying Def. \ref{def:proj} can be considered. 
Therefore, different higher-order hierarchies than the one of quantum processes can be considered in principle; one only needs to start from a base projector characterizing a different set than the one of density matrices. 
For example, the projector to diagonal matrices in a fixed basis, corresponding to the state structure of classical states we mentioned above, is an example of a projector that cannot be obtained from $\mathcal{I}$ using $\{\overline{\:\cdot\:}, \otimes, \rightarrow\}$. 
We will come back to this at the end of the section, leading to the example presented in App. \ref{sec:examples_biased_QT}. It is also discussed in more depth in App. \ref{sec:projos_base_sets}. 
Still, we emphasize that the base projectors one can consider are limited to sets of pairwise commuting elements. 

\subsection{The algebra of projectors\label{sec:projos_alg_def}}
For any hierarchy of higher-order transformations, a key question is how to classify the non-equivalent kinds of transformations, and how to compare them. For example, it is known that the quantum supermaps happen to be a special kind of bipartite quantum channels \cite{Chiribella2009,Perinotti2016}, but is there a systematic way to infer such a conclusion? 

This is where characterizing using projectors helps: instead of focusing on types, the comparison can be made at the level of the linear subspaces defined by the projectors. Comparing now becomes straightforward as it amounts to computing the \textit{overlap} between two subspaces. 
Moreover, this overlap does not require to be computed explicitly; the projector characterizing it -- which is a function of the projectors associated with the compared subspaces -- is enough. Hence the comparison of two types, simplified into finding the overlap of their associated subspaces, can be further simplified into algebraic operations on their associated projectors. For any two projectors on operator systems $\Proj{}{}$ and $\Proj{'}{}$, we define the operation $\cap$ (nicknamed `cap') as a bilinear function resulting in a new projector $\Proj{}{} \cap \Proj{'}{}$ called their \textit{intersection} and characterizing the overlap of their two associated subspaces. Obviously, this is also a projector on an operator system. 

\subsubsection{Intersection, union, and lattice of projectors}
If the projectors $\Proj{}{}$ and $\Proj{'}{}$ are commuting, the intersection is simply their composition:
\begin{equation}\label{eq:cap}
	\Proj{}{}\cap \Proj{'}{} \equiv \Proj{}{}\circ \Proj{'}{} \:,
\end{equation}
and it commutes with both $\Proj{}{}$ and $\Proj{'}{}$. 
Similarly, the following will be called the \textit{union} of $\Proj{}{}$ and $\Proj{'}{}$:
\begin{equation}\label{eq:cup}
	\Proj{}{}\cup \Proj{'}{}\equiv \Proj{}{} + \Proj{'}{} - \Proj{}{}\cap \Proj{'}{}\:.
\end{equation}
It characterizes the linear span of their two subspaces and the connector $\cup$ will be nicknamed `cup'. 
(See App. \ref{sec:projos_prop} for the mathematical details of $\cap$ and $\cup$.)

With these, one can prove that an operator system characterized by $\Proj{'}{}$ is contained within another characterized by $\Proj{}{}$ by showing either of the following:
\begin{subequations}\label{eq:inclusion_condition}
    \begin{align}
        \Proj{}{} \cap \Proj{'}{} = \Proj{'}{} \:;\\
        \Proj{}{} \cup \Proj{'}{} = \Proj{}{} \:.
    \end{align}
\end{subequations}
In terms of projectors, these conditions will be concisely noted
\begin{equation}\label{eq:subset_def}
    \Proj{'}{} \subseteq \Proj{}{} \:.
\end{equation}
In the following, any such equation comparing two projectors (i.e., inclusions and equivalences) should be understood as a statement about their images.
When comparing state structures, the inclusion of projectors is sufficient to show inclusion (up to normalization) since the other constraint, positivity, is common to all state structures.

As it turns out, the projectors form an algebra over $\mathbb{C}$ under the connectives $\{\cap,\cup\}$ (see App. \ref{sec:projos_prop} for the proof). And because every operator system must contain the identity, every projector in the algebra is contained between the depolarizing and identity projectors,
\begin{equation}\label{eq:DsubsetPsubsetI}
    \mathcal{D} \subseteq \Proj{}{} \subseteq \mathcal{I}.
\end{equation}
Actually, this is but the statement that a set of commuting projectors closed under a `join' and `meet' operations (intersection and union, respectively) forms a specific kind of algebra called a bounded lattice. Conditions \eqref{eq:inclusion_condition} define the partial order \eqref{eq:subset_def}, and Eq. \eqref{eq:DsubsetPsubsetI} states that $\mathcal{D}$ and $\mathcal{I}$ are respectively the least and greatest elements in the lattice. 
Hence, the projectors can be compared using the partial order, which corresponds to the embedding of an operator system into another one. 
In terms of state structures, Eqs. \eqref{eq:inclusion_condition} means that any state structure contains or is contained in another one (they can also both contain and be contained in another one, in which case the state structure are equivalent up to normalization); Eq. \eqref{eq:subset_def} then implies that the state structures can be classified using this inclusion relation as it defines a partial ordering; and Eq. \eqref{eq:DsubsetPsubsetI} implies that the existence of a common least and greatest element in the ordering: every state structure is a subset of the set of density matrices (up to normalization) and a superset of the single-element set $\{\mathds{1}\}$. In addition, it also implies that new state structures can be found by taking the intersections and unions of those that are known, and that the number of new state structures found using this procedure is finite.

Nevertheless, compared to $\otimes$ and $\overline{\:\cdot\:}$, the cap and the cup are new rules that cannot be reexpressed using the $\rightarrow$ connector alone. For example, the projector $\mathcal{I}_A$ can be obtained from $\Proj{}{A}$ by using $\{\rightarrow,1,(,),\cap,\cup\}$ since $\mathcal{I}_A=\Proj{}{A} \cup \CompProj{}{A}$ (see below), whereas by using $\{\rightarrow,1,(,)\}$ one can only obtain $\Proj{}{A}$ and $\CompProj{}{A}$. So the downside of the algebra is that it goes outside of the type-theoretic framework of Ref. \cite{Perinotti2016,Bisio2018}. 

Adding $\overline{\:\cdot\:}$ as an operation in the algebra promotes it into a \textit{Boolean algebra} (or \textit{Boolean lattice}), i.e. an algebra of idempotent elements which possess a \textit{negation}.
$\overline{\:\cdot\:}$ acts as the negation since it makes $\CompProj{}{}$ complementary to $\Proj{}{}$, $\Proj{}{}\cup \CompProj{}{} = \mathcal{I}$, and since it is an involution, $\overline{\overline{\Proj{}{}}}= \Proj{}{}$. 
A Boolean algebra obeys the De Morgan law:
\begin{equation}\label{eq:deMorgan_add}
    \overline{\Proj{}{} \cap \Proj{'}{}} = \CompProj{}{} \cup \overline{\Proj{'}{}}  \:,
\end{equation}
which induces the following duality:
\begin{equation}\label{eq:inclusion_duality}
    \Proj{'}{} \subseteq \Proj{}{} \:\iff\: \CompProj{}{} \subseteq \overline{\Proj{'}{}}\:,
\end{equation}
(see App. \ref{sec:projo_Boolean} for the proofs). These two rules greatly reduce the number of computations required to determine the relations between state structures, as inclusions between functionals are dual to inclusions between the corresponding states.

\subsubsection{The sub-lattice over a no signaling subset\label{sec:Projos_NS_lattice}}
The Boolean algebra of projectors is a lattice, meaning that every projector is either the union or the intersection of some other projectors. One can then characterize every other state structure related to the one(s) under consideration by enumerating all their possible intersections, unions, and negations. 
However, such an approach only considers each state structure as a global thing, characterized by a global projector $\Proj{}{}$, resulting in a simplistic lattice $\{\Proj{}{},\CompProj{}{},\mathcal{I},\mathcal{D}\}$. But this is ignoring the finer structure of the projectors: often, they represent state structures defined on several subsystems, in which case they split as some combination of the projectors associated to each subsystem, say $\Proj{}{A}, \Proj{}{B}, \ldots, \Proj{}{K}$, linked together to form $\Proj{}{}$ by using any mixture of the rules $\{\overline{\:\cdot\:},\otimes, \rightarrow\}$, say $\Proj{}{} = (\Proj{}{A} \otimes \CompProj{}{B} \otimes \ldots )\rightarrow\Proj{}{K}$ for instance. 
It is then interesting to understand how this finer structure influences the overall shape of the lattice it generates: given a set $\{\Proj{}{A}, \Proj{}{B}, \ldots, \Proj{}{K}\}$ of $k$ base projectors, what can we say about the lattice of projectors acting on $\LinOp{\Hilb{A} \otimes \Hilb{B} \otimes \ldots \otimes \Hilb{K}}$ obtained from every possible way to combine of these $k$ projectors under $\{\overline{\:\cdot\:},\otimes, \rightarrow\}$?

Before we begin, note that the assumption that all such projectors commute with each other becomes justified by the following lemma:
\begin{lemm}\label{lem:Boolean_lattice_comm}
    The rules $\{\overline{\:\cdot\:},\otimes,\rightarrow\}$ preserve commutation.
\end{lemm}
This is proven by direct computation, and so are all the relations presented in this subsection. (These proofs are all gathered in App. \ref{sec:projo_Boolean} and \ref{sec:projo_LL}.)

Note also that negating the input of a transformation like $\CompAlg{A}\rightarrow\Alg{B}{}$ allows one to treat it as a composition akin to $\Alg{A}{}\otimes \Alg{B}{}$. Simply put, if $\Alg{A}{}\rightarrow\Alg{B}{}$ is a transformation from an input in $\LinOp{\Hilb{A}}$ to an output in $\LinOp{\Hilb{B}}$ (diagrammatically, it has an input wire at $A$ and an output wire at $B$), negating the input turns it into a bipartite state preparation since it now has an output in $\LinOp{\Hilb{A}}$ and one in $\LinOp{\Hilb{B}}$ (diagrammatically, it now has an output wire at $A$ as well as one at $B$). 
Indeed, while an element of $\Alg{A}{}\rightarrow\Alg{B}{}$ can be seen as some composition of an effect at $A$ (its input) together with a state at $B$ (its output), an element of $\CompAlg{A}\rightarrow\Alg{B}{}$ is accordingly the composition of the dual of an effect, i.e., a state, together with a state, yielding a bipartite state (with outputs at $A$ and $B$). 
This composition is however different than the no signaling one $\otimes$. Thus, there are two different ways of composing the state structures $\Alg{A}{}$ and $\Alg{B}{}$ into a bipartite state structure: $\Alg{A}{}\otimes \Alg{B}{}$ and $\CompAlg{A}\rightarrow\Alg{B}{}$. 
Under the same logic, taking the dual (negation) of one of the state structures in a no signaling composition, like $\CompAlg{A}\otimes \Alg{B}{}$ for instance, allows to treat it as a transformation: it has an input (effect of $\CompAlg{A}$) and an output (state of $\Alg{B}{}$), so it can be interpreted as a map from a valid state in $\Alg{A}{}$ to one in $\Alg{B}{}$.

In terms of projectors, the operation $\overline{\:\cdot\:}\rightarrow\cdot = \overline{\overline{\:\cdot\:} \otimes \overline{\:\cdot\:}}$ is then the `compositional' version of the transformation $\rightarrow$,
\begin{equation}
    \CompProj{}{A} \rightarrow \Proj{}{B} = \overline{\CompProj{}{A} \otimes \CompProj{}{B}} \:.
\end{equation}
It is a `larger' composition than the tensor (see App. \ref{sec:projo_prop_tensor} for the proof) as $\overline{\CompProj{}{A} \otimes \CompProj{}{B}}  \neq \Proj{}{A} \otimes \Proj{}{B}$ and 
\begin{equation}\label{eq:tensor_in_par}
    \Proj{}{A} \otimes \Proj{}{B} \subseteq \overline{\CompProj{}{A} \otimes \CompProj{}{B}}.
\end{equation}
Moreover, like the tensor, this $\overline{\:\cdot\:}\rightarrow\cdot$ composition is associative since
\begin{equation}\label{eq:gloubi_3}
    \overline{\left(\CompProj{}{A} \rightarrow \Proj{}{B}\right)} \rightarrow \Proj{}{C} = \CompProj{}{A} \rightarrow \left( \CompProj{}{B} \rightarrow \Proj{}{C}\right)\:.
\end{equation}
Indeed, the left-hand side of the above corresponds to the partition $(AB)C$, i.e. first applying $\overline{\:\cdot\:}\rightarrow\cdot$ between $A$ and $B$, then between $(AB)$ and $C$; whereas the right-hand side corresponds to the partition $A(BC)$. 
This is readily proven by expressing $\cdot\rightarrow\cdot$ as $\overline{\cdot\otimes\overline{\:\cdot\:}}$ in Eq. \eqref{eq:gloubi_3}: the left-hand side becomes $\overline{\left(\CompProj{}{A} \rightarrow \Proj{}{B}\right)} \rightarrow \Proj{}{C} = \overline{\left(\overline{\CompProj{}{A} \otimes \CompProj{}{B}}\right)} \rightarrow \Proj{}{C} = \overline{\CompProj{}{A} \otimes \CompProj{}{B} \otimes \CompProj{}{C}}$ and, similarly, the right-hand side becomes $\CompProj{}{A} \rightarrow \left( \CompProj{}{B} \rightarrow \Proj{}{C}\right) = \CompProj{}{A} \rightarrow \overline{\left( \CompProj{}{B} \otimes \CompProj{}{C}\right) } = \overline{\CompProj{}{A} \otimes  \CompProj{}{B} \otimes \CompProj{}{C} }$. This proof is direct to generalize by induction: no matter the ordering of parentheses and the number of parties, one always ends up with the same expression when expressed using the negation and the tensor, e.g.
\begin{multline}
    \overline{\Proj{}{A}} \rightarrow \left( \overline{\Proj{}{B}} \rightarrow  \ldots \left(\overline{\Proj{}{J}} \rightarrow \Proj{}{K}\right)\ldots\right)\\ 
    =\overline{(\CompProj{}{A} \rightarrow \Proj{}{B})} \rightarrow \left(   \overline{\vphantom{\Proj{}{A}}\ldots}  \rightarrow \left(\overline{\Proj{}{J}} \rightarrow \Proj{}{K}\right)\ldots\right) \\ 
    = ... \\
    =  \overline{\overline{(...\overset{...}{\overline{(\CompProj{}{A} \rightarrow \Proj{}{B})}} \rightarrow \ldots )} \rightarrow \Proj{}{J})} \rightarrow \Proj{}{K} \\  
    = \overline{\CompProj{}{A} \otimes \CompProj{}{B} \otimes \ldots \CompProj{}{J}\otimes \CompProj{}{K}}\:.
\end{multline}

Going back to the comparison of higher-order transformations, we have so far argued that it is reduced to the study of lattices of commuting projectors on operator systems. For $k$ parties, any expression $\Proj{}{}$ built using $\{\overline{\:\cdot\:},\otimes,\rightarrow\}$ is contained in a Boolean lattice closed under operations $\{\cap,\cup\}$ with smallest element $\mathcal{D}_A\otimes\ldots \otimes \mathcal{D}_K$ and biggest element $\mathcal{I}_A\otimes\ldots \otimes \mathcal{I}_K$ in accordance with Eq. \eqref{eq:DsubsetPsubsetI}. 
Of course, with an increasing number of parties these lattices quickly get huge. However, any projector in the lattice has natural subset and superset that are in general different than the tensor product of the depolarizing and of the identity projectors, meaning that projectors usually belong to a sublattice. These sets are built using exclusively one of the two possible combination rules we just discussed.

\begin{defi}[No signaling subset / Fully signaling superset]\label{def:NS_FS_sets}
    Let $\Proj{}{}$ be a projector on an operator system built using $k$ base projectors $\Proj{}{A},\Proj{}{B}\ldots\Proj{}{K}$ composed together using operations $\{\overline{\:\cdot\:},\cap,\cup,\otimes,\rightarrow\}$.\\
    The projector $\Proj{}{NS}$ to its \textbf{no signaling subset} is the largest projector contained in $\Proj{}{}$ obtained as the tensor composition of single-partite projectors. By construction, the single-partite projector associated with party $X$ can only be the base projector, its negation, or the depolarizing superoperator $\mathcal{D}_X$ so that $\Proj{}{NS}$ reads
    \begin{equation}\label{eq:no-sig_subset}
        \Proj{}{NS}\equiv \TProj{\mathcal{D}}{A} \otimes \TProj{\mathcal{D}}{B}\otimes \ldots \otimes \TProj{\mathcal{D}}{K} \subseteq \Proj{}{} \:,
    \end{equation}
    where the $\TProj{\mathcal{D}}{X}$ notation means an element chosen among $\left\{\Proj{}{X},\CompProj{}{X},\mathcal{D}_X\right\}$ depending on the exact form of $\Proj{}{}$.\\
    The projector $\Proj{}{FS}$ to its \textbf{fully signaling superset} is the smallest projector containing $\Proj{}{}$ obtained as the ``\,$\overline{\:\cdot\:}\rightarrow\cdot$'' composition of single-partite projectors. By construction, the single-partite projector associated with party $X$ can only be the base projector, its negation, or the identity superoperator $\mathcal{I}_X$ so that $\Proj{}{FS}$ reads
    \begin{equation}\label{eq:full-sig_supset}
        \Proj{}{} \subseteq \overline{\overline{\TProj{\mathcal{I}}{A}} \otimes \overline{\TProj{\mathcal{I}}{B}} \otimes  \ldots \otimes \overline{\TProj{\mathcal{I}}{J}} \otimes \overline{\TProj{\mathcal{I}}{K}}} \equiv \Proj{}{FS} \:,
    \end{equation}
    where the $\TProj{\mathcal{I}}{X}$ notation means an element chosen among $\left\{\Proj{}{X},\CompProj{}{X},\mathcal{I}_X\right\}$ depending on the exact form of $\Proj{}{}$.
\end{defi}

The reason behind the names of these sets will be explained in the next section. To see why these sets always exist, notice that the definition allows $\Proj{}{NS}$ to be equal to $\mathcal{D}_A \otimes \ldots \otimes \mathcal{D}_K$ and $\Proj{}{FS}$ to be $\mathcal{I}_A \otimes \ldots \otimes \mathcal{I}_K$ in the worst-case scenario. In general, these define a sub-lattice 
\begin{equation}
    \mathcal{D}_A \otimes \ldots \otimes \mathcal{D}_K \subseteq \Proj{}{NS} \subseteq \Proj{}{} \subseteq \Proj{}{FS} \subseteq \mathcal{I}_A \otimes \ldots \otimes \mathcal{I}_K\:,
\end{equation}
around each projector $\Proj{}{}$. 

The point is that a projector $\Proj{}{}$ is always embedded in a lattice of projectors that use the same base projectors and is contained between $\mathcal{I}$ and $\mathcal{D}$. But it is often uninteresting to compare this projector to elements that have a different interpretation. In the single-partite case for instance, the lattice is $\{\Proj{}{},\CompProj{}{},\mathcal{I},\mathcal{D}\}$, but it is not very enlightening to compare $\Proj{}{}$ with $\CompProj{}{}$ as one is the dual of the other, meaning that while one is related to state preparation, the other is related to measurements (effects), which are qualitatively different. 
The following lemma tells us that a projector $\Proj{}{}$ is also embedded in a smaller lattice of only the projectors that have the same interpretation (in terms of seeing a subsystem as either state or effect) as $\Proj{}{}$.
\begin{lemm}[No signaling sub-lattice around a projector]\label{lem:NS_lattice}
    Let $\Proj{}{}$ be a projector on an operator system built using $k$ base projectors $\Proj{}{A},\Proj{}{B}\ldots\Proj{}{K}$ composed together using operations $\{\overline{\:\cdot\:},\otimes,\rightarrow\}$. Then, it belongs to a sub-lattice closed under operations $\{\cap,\cup\}$ so that
    \begin{equation}
        \TProj{}{A} \otimes \ldots \otimes  \TProj{}{K} \subseteq \Proj{}{} \subseteq \overline{\overline{\TProj{}{A}} \otimes \ldots \otimes  \overline{\TProj{}{K}}}\:,
    \end{equation}
    where the $\TProj{}{X}$ notation means an element chosen among $\{\Proj{}{X},\CompProj{}{X}\}$ depending on $\Proj{}{}$. Importantly, the choice is the same on both sides of the equation.
\end{lemm}
This follows mainly from the property \eqref{eq:subset_tensor} that $\Proj{}{A}\subseteq\Proj{'}{A}\subseteq\Proj{''}{A} \Rightarrow \Proj{}{A}\otimes \Proj{}{B}\subseteq\Proj{'}{A} \otimes \Proj{}{B} \subseteq \Proj{''}{A}\otimes \Proj{}{B} \subseteq \overline{\CompProj{''}{A}\otimes \CompProj{}{B} }$, and from property \eqref{eq:inclusion_duality}. The complete proof is provided in App. \ref{sec:NS_lattice_proof}.

A rule of thumb for finding the no signaling projectors is to `count the number of bars' above each base projector in a given expression $\Proj{}{}$. Since negation is an involution, an odd (respectively, even) amount will indicate that the projector is (not) negated in the no signaling subset. The fully signaling projector is then obtained by taking the negation of the no signaling one. More rigorously, the projectors of Lemma \ref{lem:NS_lattice} are found by repetitively using Eq. \eqref{eq:tensor_in_par}. 
For example, to find the no signaling subset of $\left(\Proj{}{A} \otimes \left(\Proj{}{B} \rightarrow \Proj{}{C} \right)\right)\rightarrow \Proj{}{D}$, one first expresses it using negations and tensors products, $\left(\Proj{}{A} \otimes \left(\Proj{}{B} \rightarrow \Proj{}{C} \right)\right)\rightarrow \Proj{}{D} = \overline{\Proj{}{A} \otimes \overline{\Proj{}{B} \otimes \CompProj{}{C}} \otimes \CompProj{}{D}}$, then by inspection $\Proj{}{A}$ and $\Proj{}{C}$ have an odd number of negations above them (one and three, respectively) so they will be negated, whereas $\Proj{}{B}$ and $\Proj{}{D}$ have an even number (two and two) so they will not be. On that account, the no signaling subset of $\left(\Proj{}{A} \otimes \left(\Proj{}{B} \rightarrow \Proj{}{C} \right)\right)\rightarrow \Proj{}{D}$ is $\CompProj{}{A} \otimes \Proj{}{B} \otimes \CompProj{}{C} \otimes \Proj{}{D}$ and its fully signaling superset is $\overline{\Proj{}{A} \otimes \CompProj{}{B} \otimes \Proj{}{C} \otimes \CompProj{}{D}}$

Thus, for a given $\Proj{}{}$, the state structure it defines is always contained between a no signaling one and a fully signaling one. 
Note that the position of the negations in Eqs. \eqref{eq:no-sig_subset} and \eqref{eq:full-sig_supset} are the same because of Eq. \eqref{eq:tensor_in_par}. Lemma \ref{lem:NS_lattice} indeed states that the projectors to these subspaces are actually the least and greatest elements of a sub-algebra stable under $\cap$ and $\cup$, i.e. a sub-lattice.
A rule of thumb to determine if a projector is in the no signaling sub-lattice of another one is again to `count the number of bars' above each base projector. If each number is the same modulo two, then it belongs to it, otherwise it does not. For example, the no-signaling lattice around projector $\Proj{}{A}$ is $\{\Proj{}{A}\}$ itself (in which case it is both it no signaling sub- and fully signaling super-sets), whereas the one around $\Proj{}{A} \otimes \Proj{}{B}$ is $\{\Proj{}{A} \otimes \Proj{}{B}, \CompProj{}{A} \rightarrow \Proj{}{B}\}$ because in $\CompProj{}{A} \rightarrow \Proj{}{B} = \overline{\CompProj{}{A} \otimes \CompProj{}{B}}$ there are `two bars' over both $\Proj{}{A}$ and $\Proj{}{B}$ while in $\Proj{}{A} \otimes \Proj{}{B}$ there are `zero bars' above the two. 

Lemma \ref{lem:NS_lattice} is then a `decompositional' approach to the no signaling sub-lattice: given a multipartite projector, one considers every other projector in its sub-lattice. 
This sub-lattice w.r.t. a given $\Proj{}{}$ can be obtained by defining the projectors $\TProj{}{X}$ involved in its no signaling subset as a new set of base projectors, $\TProj{}{X}\mapsto\Proj{}{X}$ (i.e., by redefining the base projectors so to incorporate the negations), and by restricting the choice of operations $\{\overline{\:\cdot\:},\otimes,\rightarrow\}$ to a choice that preserves the same no signaling subset, which, from of Eq. \eqref{eq:tensor_in_par}, is $\{\cdot\otimes\cdot,\overline{\:\cdot\:}\rightarrow\cdot\}$ (this restiction is the formal version of the rule of thumb mentioned above, since it restrict the operations to only those `adding an even number of bars over each projector'). 
The procedure we just described is then the `compositional' approach to the no signaling sub-lattice: given a set of base projectors, one constructs the no signaling sub-lattice out of it. This leads to the following counterpart to Lemma \ref{lem:NS_lattice}.
\begin{lemm}[Building a no signaling sub-lattice.]\label{theo:NS_lattice}
    Let $\{\Proj{}{A}, \Proj{}{B}, \ldots, \Proj{}{K}\}$ be a set of $k$ projectors on operator systems each associated with a specific Hilbert space $\Hilb{A},\Hilb{B}\ldots, \Hilb{K}$, i.e., a set of base projectors. 
    Any expression $\Proj{}{}$ built using each element of this set once and under the rules $\{\cdot\otimes\cdot,\overline{\:\cdot\:}\rightarrow\cdot\}$ is a projector on an operator system over $\LinOp{\Hilb{A}\otimes \ldots \otimes\Hilb{K}}$ and belongs to the no signaling lattice
    \begin{equation}
        \Proj{}{A} \otimes \ldots \otimes  \Proj{}{K} \subseteq \Proj{}{} \subseteq \overline{\CompProj{}{A} \otimes \ldots \otimes  \CompProj{}{K}}\:.
    \end{equation}
\end{lemm}
This will be proven for a more general setting in Proposition \ref{prop:lattice}, but we first need to discuss the notion of no signaling to better understand why these sub-lattices are called `no signaling'. This will be the topic of the next section. 

If needed, concrete examples of the use of the algebra of projectors developed in this section are provided in App. \ref{sec:examples_QT}. As we mentioned, the algebra is well-defined for any base projector commuting with the identity and depolarizing projectors. Using base projectors different from ones that can be obtained from $\mathcal{I}$ and the operations of the algebra amounts to building a theory that is different than quantum theory and its higher-order generalizations. An example of such a toy model, which we call `biased quantum theory', is presented in App. \ref{sec:examples_biased_QT}. 
These two appendices come with many remarks providing intuition on the notion of no signaling at the level of state structures so to motivate the next section.

\paragraph*{Remark: the algebra of projectors is (almost) Linear Logic.} With the further additions of the tensor and of the transformation operations, the Boolean algebra of projectors is lifted to an abstract lattice-like structure. Such a structure is always associated with a model of formal logic. Observe that the transformation between states is equivalent to the reverse transformation between effects,
\begin{equation}\label{eq:transfoduality}
    \Proj{}{A} \rightarrow \Proj{}{B} = \CompProj{}{A} \leftarrow \CompProj{}{B} = \overline{\Proj{}{A} \otimes \CompProj{}{B}} \:.
\end{equation}
This is a sign of the algebra being an instance of logic: it follows the logic rule that if A implies B, then not B must imply not A. `Propositions' $\Proj{}{}$ consist of $k$ `sub-propositions' $\{\Proj{}{A},\ldots\Proj{}{K}\}$ that can be composed in two different manners: using $\otimes$ or $\overline{\:\cdot\:}\rightarrow\cdot$; and each (sub)proposition can be negated using $\overline{\:\cdot\:}$. To compare two propositions, the connectives $\cap$ and $\cup$ are used, and the comparison of two propositions is itself a valid proposition. The lattice is then the set of all propositions. 
As a model of logic, the projector algebra happens to form a degenerate instance of \textit{multiplicative additive linear logic} (MALL) \cite{GIRARD1987} because its connectives obey the following De Morgan dualities:
\begin{subequations}\label{eq:deMorgan}
\begin{gather}
    \overline{\overline{\mathcal{P}}}_A = \Proj{}{A} \:;\\
    \overline{\Proj{}{A}\cap\Proj{'}{A}} = \CompProj{}{A} \cup \overline{\Proj{'}{A}} \:; \label{eq:deMorgan_addbis}\\
    \overline{\CompProj{}{A} \otimes \Proj{}{B}} = \CompProj{}{A} \rightarrow \CompProj{}{B} \label{eq:deMorgan_mult}\:.
\end{gather}
\end{subequations}
(See App. \ref{sec:projo_prop_LL} for more details.) This observation connects our formalism to the categorical approach of Ref. \cite{Kissinger_2019}, in which they noticed that the logic of higher-order quantum transformations form an instance of \textit{multiplicative linear logic} (MLL)\footnote{In our language, the observation is that the projector rules $\{\overline{\:\cdot\:}, \cdot \otimes\cdot, \overline{\:\cdot\:}\rightarrow\cdot\}$ is an instance of MLL. The fact that the algebra of the projectors characterizing quantum higher-order transformations is an instance of MLL can additionally be understood as the consequence of the correspondence between the types of Ref. \cite{Bisio2018} and $*-$autonomous categories as in Ref. \cite{Kissinger_2019}.}.  The interested reader is invited to consult the categorical counterpart of the present article \cite{Simmons2022}, in which they build a non-degenerate MALL out of higher-order quantum theory and discuss the internal logic of the characterization in a rigorous way.

\section{No signaling\label{sec:NS}}
In a previous work, the projective characterization was used to prove that the multi-round process matrix, an extension of the process matrix \cite{OCB2012}, is a linear sum of quantum combs \cite[Theorem 2]{MPM}. 
By doing so, it relates an object that typically involves indefinite causal order (the multi-round process matrix, and as a consequence the process matrix) with \textit{several} that do not (a quantum comb represents a causally ordered succession of quantum channels sharing a side channel; in the literature, this structure is usually referred to as a quantum network or as a quantum channel with memory), thus formalizing the intuitive idea that `an indefinite causal order is a superposition of \textit{several} causal orderings'. 
Formally, the result states that the projector characterizing the MPM is exactly a sum of projectors characterizing quantum combs, with each term in the sum corresponding to each different causal ordering that the MPM can feature. 

As the MPM and the quantum comb are typical instances of objects in structures that can be characterized using the projective characterization, this suggests that the projectors built using the previously defined operations can be split into terms that have a fixed signaling direction, so that the causal structure can be inferred from the analysis of the projector alone. 
Therefore, in this section, the notion of no signaling is extended to state structures in order for this decomposition to be systematized. Quasi-orthogonality will be shown to be the relevant notion for no signaling at the level of projectors, allowing in turn to define the \textit{prec}, an algebraic rule for composing projectors that encodes a one-way signaling constraint.

\subsection{Definition of no signaling for state structures \label{sec:NS_def}} 

No signaling from the output to the input can be imposed on a transformation as an extra condition. This translates the physical idea that, if the output is assumed to be in the causal future of the input, no (deterministic) operation done in the future can influence, or signal to, the past \cite{Chiribella_2011}.

The subset obeying this constraint will be noted using a ``$\prec$'' symbol, nicknamed the \textbf{prec}. 
By convention, this connector will be seen as a composition, like $\otimes$, rather than a transformation, like $\rightarrow$. 
As explained in last section, this means that the two parts of the connector are treated on equal footing; an object of $\mathscr{A} \otimes \mathscr{B}$ or of $\mathscr{A}\prec \mathscr{B}$ \textit{is composed} of objects in $\mathscr{A}$ \textit{and} in $\mathscr{B}$. This implies that $\mathscr{A}\prec \mathscr{B}$ should contain $\mathscr{A} \otimes \mathscr{B}$ as its no signaling subset. 
On the contrary, an object of $\mathscr{A} \rightarrow \mathscr{B}$ \textit{transforms} an input in $\mathscr{A}$ \textit{into} an output in $\mathscr{B}$; it needs an object of $\mathscr{A}$, and therefore it is composed of objects in $\overline{\mathscr{A}}$ and in $\mathscr{B}$. This implies that $\mathscr{A} \rightarrow \mathscr{B}$ contains $\overline{\mathscr{A}} \otimes \mathscr{B}$ as its no signaling subspace instead.
With respect to the sub-lattice of Lemma \ref{theo:NS_lattice}, we want to introduce a third connective rule, $\prec$, as an in-between $\otimes$ and $\overline{\:\cdot\:}\rightarrow\cdot$. This rule will encode a \textit{composition} that allows signaling from the state structure on its left to the state structure on its right. 
In order to define our use of the term signaling with respect to \textit{transformations}, the left-hand side of connectives $\otimes$ and $\prec$ will therefore be negated; we are looking for the particular subset $\CompAlg{A}\prec \Alg{B}{}$ of the transformation $\Alg{A}{}\rightarrow\Alg{B}{}$ that forbids signaling from output to input. In other words, we want to induce a fine-graining of the order relation \eqref{eq:tensor_in_par} into $\overline{\mathscr{A}} \otimes  \Alg{B}{} \subseteq\CompAlg{A}\prec \Alg{B}{} \subseteq \Alg{A}{}\rightarrow\Alg{B}{}$. 

The proposed condition for defining the set $\CompAlg{A}\prec \Alg{B}{}$ is the requirement that the input of the transformation should be independent of the \textit{particular choice} of effect applied at its output:
\begin{multline}\label{eq:a}
    M\in \CompAlg{A}\prec \Alg{B}{} \subseteq \ChanScr{A}{B}: \quad \forall N, N' \in \overline{\mathscr{B}},\\
    \TrX{B}{M\cdot \left(\mathds{1} \otimes N \right) } = \TrX{B}{M\cdot \left(\mathds{1} \otimes N' \right)}\:.
\end{multline}
As the identity matrix $\mathds{1}$ is always both a valid state and effect (with suitable normalization) in this work, this requirement is rephrased as the following constraint,
\begin{multline}\label{eq:caus}
 M\in \CompAlg{A}\prec \Alg{B}{}:\quad \forall N \in \overline{\mathscr{B}},\\ \TrX{B}{M\cdot \left(\mathds{1} \otimes N \right) } =\TrX{B}{M\cdot \left(\mathds{1} \otimes \frac{\mathds{1}}{c_B}\right)}\:.
\end{multline}
This is depicted in Fig. \ref{fig:notA_prec_B}. When the above is satisfied, we will use it as a definition to say that $M$ is \textit{no signaling} from $B$ to $A$ \cite{Beckman2001,Piani2006}. In the same way, one can define the no input-to-output signaling subset, $\CompAlg{A}\succ \Alg{B}{}$, by requiring the reverse condition, as drawn in Fig. \ref{fig:notA_succ_B}.

\begin{figure}[th!]
    \centering
    \subfloat[$\CompAlg{A}\leftarrow\CompAlg{B}$; by Eq. \eqref{eq:transfoduality} it is equivalent to $\Alg{A}{}\rightarrow\Alg{B}{}$ drawn in Fig. \ref{fig:A_to_B}. \label{fig:notB_to_notA}]{\includegraphics[width=.45\linewidth]{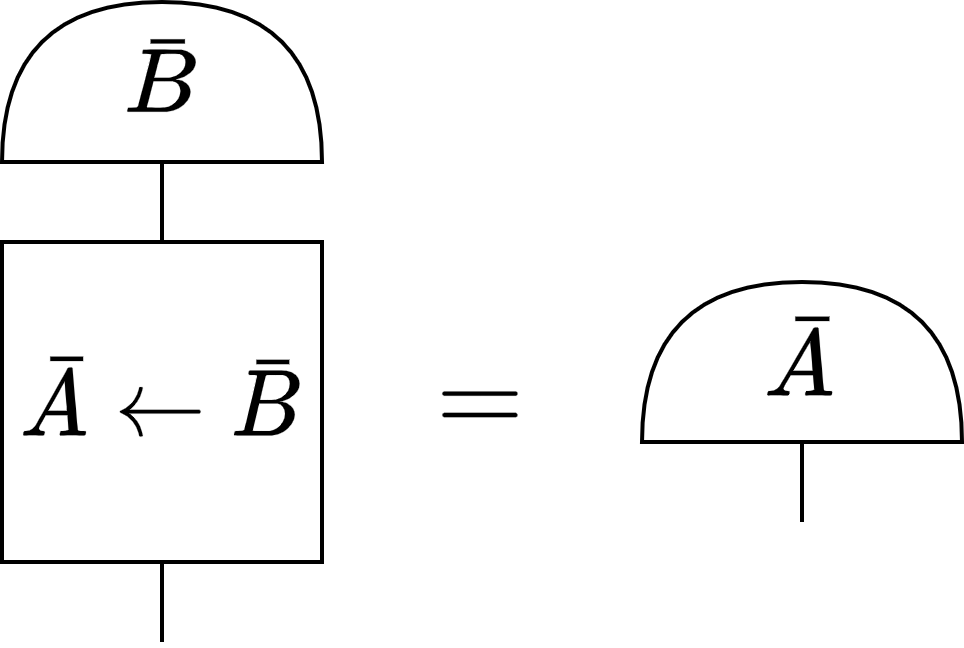}} \hfill
    \subfloat[$\CompAlg{A}\prec \Alg{B}{}$, no signaling from output $B$ to input $A$, Eq. \eqref{eq:caus} and Eq. \eqref{eq:nA_prec_B}. \label{fig:notA_prec_B}]{\includegraphics[width=.45\linewidth]{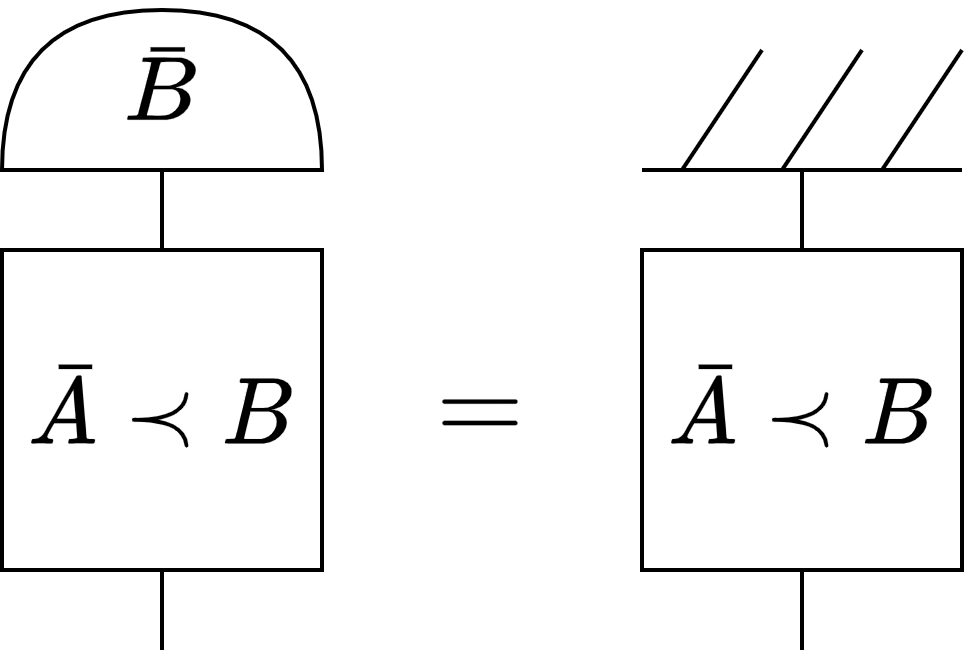}} \\
    \subfloat[$\CompAlg{A}\leftarrow\CompAlg{B}$ in the case of quantum channels where $\CompProj{}{B} = \mathcal{D}_B$. This is the usual ``no signaling from output to input'' condition used in the literature...\label{fig:StrongCausality}]{\includegraphics[width=.45\linewidth]{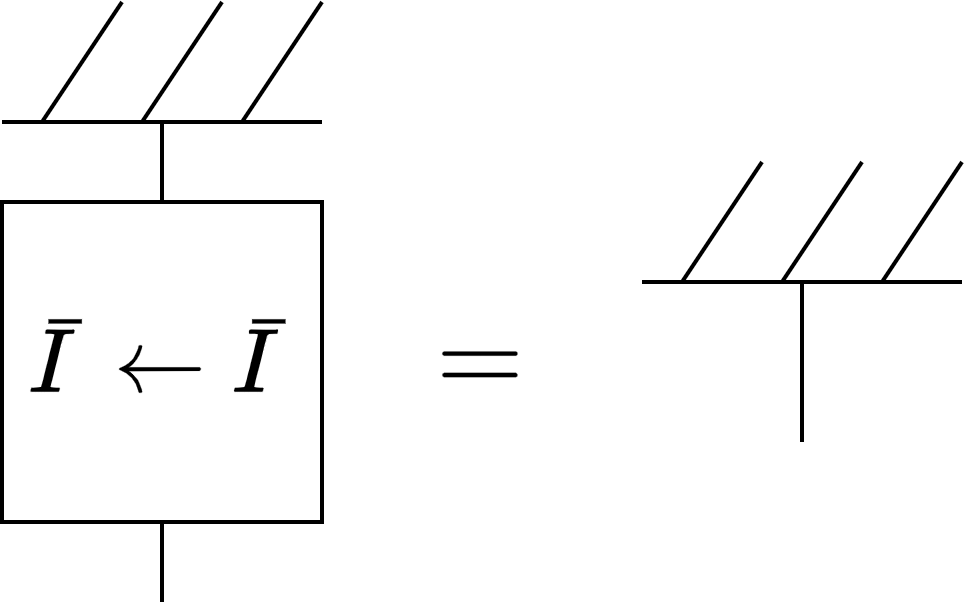}} \hfill
    \subfloat[... because Eq. \eqref{eq:caus} is trivial for the case $\CompProj{}{B} = \mathcal{D}_B$; this is captured by Lemma \ref{lem:accidental} below.\label{fig:StrongCausality_trivial}]{\includegraphics[width=.45\linewidth]{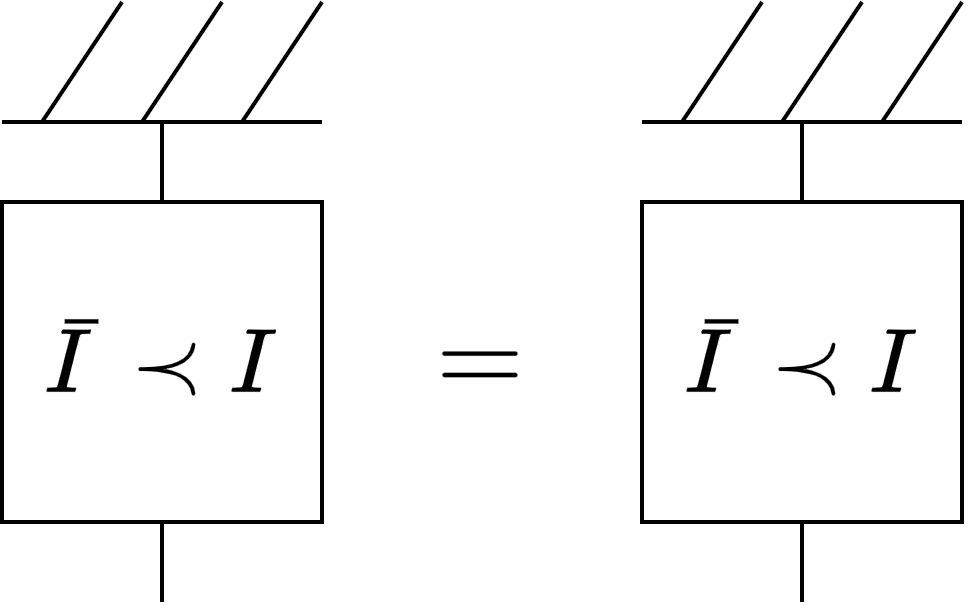}}\\
    \subfloat[Diagram for the subspace defined by Eq. \eqref{eq:caus}.\label{fig:nA_prec_B_diag}]{\includegraphics[width=.4\linewidth]{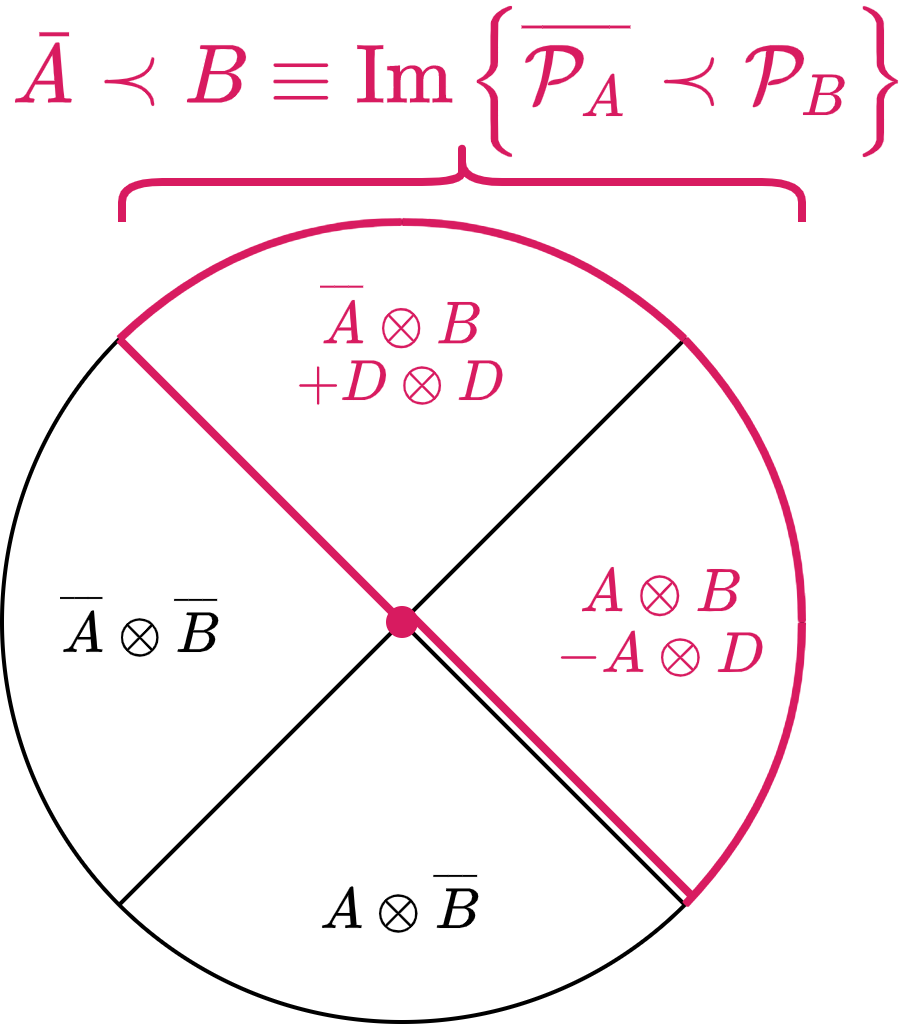}}\hfill
    \subfloat[$\CompAlg{A}\succ \Alg{B}{}$, no signaling from input to output, Eq. \eqref{eq:nA_succ_B}.\label{fig:notA_succ_B}]{\includegraphics[width=.45\linewidth]{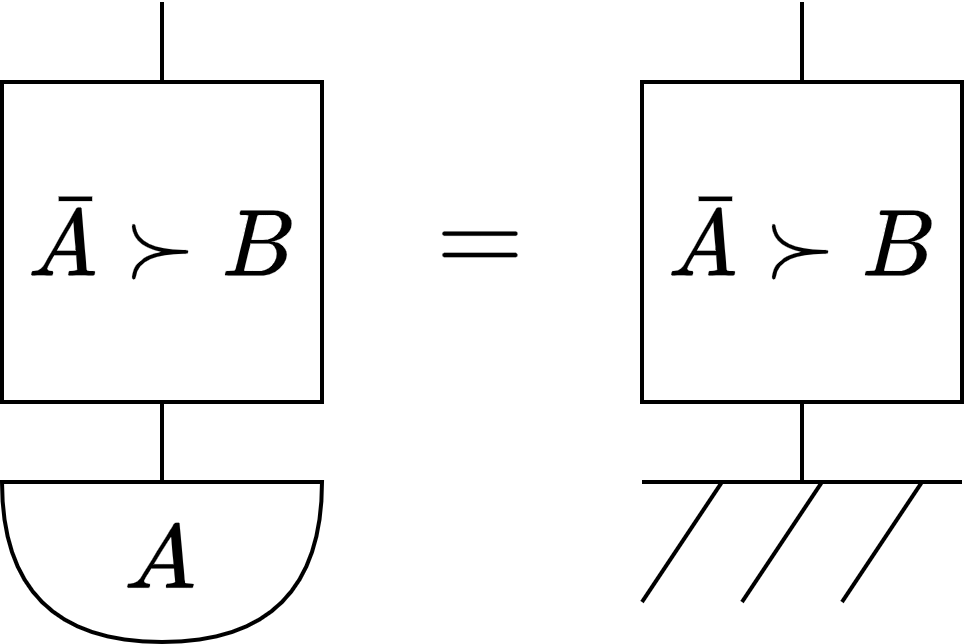}}
    \caption{Graphical interpretation of the defining condition of a transformation and of its no signaling from output to input subset. On the first line, the condition to be a valid element of $\mathscr{A} \rightarrow \mathscr{B}$ is represented as the equivalent condition $\CompAlg{A} \leftarrow \CompAlg{B}$ in \ref{fig:notB_to_notA}. The extra condition that the output cannot use the transformation to signal to the input, Eq. \eqref{eq:caus}, is represented in \ref{fig:notA_prec_B}. On the second line, the same two conditions are represented in the case where $\CompProj{}{B} = \mathcal{D}_B$. On the last line, the diagram representing the subspace defined by condition \eqref{eq:nA_prec_B} of Fig. \ref{fig:notA_prec_B} is drawn in Fig. \ref{fig:nA_prec_B_diag} as an illustration of Prop. \ref{prop:semi-causal_compo}. Finally, the reverse condition, no signaling from input to output, Eq. \eqref{eq:nA_succ_B}, has its graphical representation in Fig. \ref{fig:notA_succ_B}.}
    \label{fig:oui}
\end{figure}

For example, if $M \in \LinOp{\Hilb{A_0} \otimes \Hilb{A_1} \otimes \Hilb{B_0} \otimes \Hilb{B_1}}$ is a bipartite quantum channels, the local inputs on Alice's side has the form $\rho_{A_0} \otimes \mathds{1}_{A_1} \in \LinOp{\Hilb{A_0} \otimes \Hilb{A_1}}$ where $\rho$ is a quantum state, and the reduced single-partite channel seen by Bob after Alice has acted is given by $\TrX{A_0A_1}{M\: \cdot\: (\rho_{A_0} \otimes \mathds{1}_{A_1} \otimes \mathds{1}_{B_0} \otimes \mathds{1}_{B_1})}$ (see examples \ref{sec:examples_single_channel} and \ref{sec:examples_bipartite_channel} if this is unclear). We then say that the channel $M$ is no signaling from Alice to Bob, if, no matter which input $\rho \otimes \mathds{1}$ Alice chooses, Bob always sees the same channel on his side. In this case, she has a single choice of input at $A_1$, $\mathds{1}$, but she can choose any quantum state she wants at $A_0$. The channel is therefore no signaling from Alice to Bob if the following holds for any two states $\rho, \sigma$, 
\begin{multline}
    \TrX{A_0A_1}{M\: \cdot\: (\rho_{A_0} \otimes \mathds{1}_{A_1} \otimes \mathds{1}_{B_0} \otimes \mathds{1}_{B_1})} =\\ \TrX{A_0A_1}{M\: \cdot\: (\sigma_{A_0} \otimes \mathds{1}_{A_1} \otimes \mathds{1}_{B_0} \otimes \mathds{1}_{B_1})} \:.
\end{multline}

\paragraph*{Remark: relation with the previous definitions (see Refs. \cite{Beckman2001,Piani2006,Chiribella2009,Kissinger_2019} ).}In App. \ref{app:NS=QO}, condition \eqref{eq:a} is rephrased to correspond to a constraint on the probability distribution of the outcomes associated with the resolutions of some $N\in \CompAlg{B}$ acting on a bipartite state $\Alg{A}{}\otimes \Alg{B}{}$. This is done in order to link it to the usual theory-independent notion of no signaling used in nonlocality. 
With respect to the notion used in the case of quantum channel formalism (which gives Eq. \eqref{eq:NS_usual} below), our notion of no signaling actually requires more conditions, but recovers the quantum channel notion as a special case. Indeed, observe that due to Eq. \eqref{eq:transfoduality}, a map $M \in \mathscr{A}\rightarrow \mathscr{B}$ can be equivalently interpreted as $\CompAlg{A} \leftarrow \CompAlg{B}$. Consequently, $M$ must obey $\forall N_B \in \CompAlg{B}$, $\TrX{B}{M\left(\mathds{1}_A\otimes N_B\right)} \in \CompAlg{A}$ by definition (see Fig. \ref{fig:notB_to_notA}), and since $\CompAlg{A} = \{\mathds{1}_A\}$, we attain the usual $M \in \mathscr{A}\rightarrow \mathscr{B} \iff$
\begin{equation}\label{eq:NS_usual}
    \TrX{B}{M\left(\mathds{1}_A\otimes \mathds{1}_B\right)} = \TrX{B}{M} = \mathds{1}_A \:,
\end{equation}
which is used in previous works \cite{Chiribella2009,Kissinger_2019} as a condition to check if $M$ is a valid quantum channel \textit{but at the same time} as one to check if there is no signaling from output to input (depending on the source, this condition, Eq. \eqref{eq:NS_usual}, which is graphically depicted in Fig. \ref{fig:StrongCausality}, is also called ``causality condition'' \cite{Kissinger_2019}). 
The reason why using Eq. \eqref{eq:NS_usual} for the definition of the quantum channel, as well as the extra condition that the output is no signaling to the input, instead of using Eqs. \eqref{eq:NS_usual} \textit{and} \eqref{eq:caus}, is that Eq. \eqref{eq:caus} becomes tautological in that case, so it is automatically satisfied (see Fig. \ref{fig:StrongCausality_trivial}). 
However, this equivalence between a general transformation and a no signaling one is only valid for the case of density matrices (as proven explicitly in Lemma \ref{lem:accidental} below) and hides the subtle difference between the two definitions in the general case: being a valid transformation between state structures does not guarantee that the output cannot be used to signal to the input. Graphically, the condition represented in Fig. \ref{fig:notB_to_notA}) does not imply the one in Fig. \ref{fig:notA_prec_B}). 

\subsection{Projective characterization}
Equation \eqref{eq:caus} can be rewritten as
\begin{multline}\label{eq:causal_notA_to_B}
    \forall N \in \overline{\mathscr{B}},\\ \TrX{B}{M\cdot \left(\mathds{1} \otimes N \right) } =\frac{1}{d_B}\TrX{B}{M}\TrX{}{N}\:.   
\end{multline}
This `factorization' of the partial trace is reminiscent of the property \eqref{eq:QO} of quasi-orthogonal subspaces. In App. \ref{app:NS=QO}, it is proven that our definition of no signaling is indeed enforced by the following subsystem-wise notion of quasi-orthogonality.
\begin{lemm} \label{theo:causal_map}
    Let $M$ be an operator in $\LinOp{\Hilb{A}\otimes \Hilb{B}}$, let $V$ be one in $\mathscr{A} \subset \LinOp{\Hilb{A}}$, and let $N$ be one in $\LinOp{\Hilb{B}}$. Then a necessary and sufficient condition for:
    \begin{equation}\label{eq:partTrAB=TrATrB}
        \TrX{A}{M \cdot \left(V \otimes N\right)} = \frac{\TrX{A}{M}\TrX{}{V}}{d_A} \cdot N \:,
    \end{equation}
    to hold for all $V$ and $N$ is that 
    \begin{equation}\label{eq:nAotimesiB}
        \left(\CompProj{}{A} \otimes \mathcal{I}_B \right)\left\{M\right\} = M \:. 
    \end{equation}
    That is to say, that $M$ belongs to $\overline{\mathscr{A}} \otimes \LinOp{\Hilb{B}}$.
\end{lemm}
The proof is presented in App. \ref{sec:proof_causal_map}.
The above lemma states that since no signaling condition on $M$ --here encoded as Eq. \eqref{eq:partTrAB=TrATrB}-- is a linear constraint, there exists a subspace of all the operators obeying this conditions, and this subspace is the operator system $\overline{\mathscr{A}} \otimes \LinOp{\Hilb{B}}$. 

Now, the state structure $\overline{\mathscr{A}} \prec \mathscr{B}$ has been defined as the subset of the state structure $\mathscr{A}\rightarrow\mathscr{B}$ whose elements are no signaling from the output to the input. So, alongside the valid transformation constraint, encoded by the projector $\Proj{}{A} \rightarrow \Proj{}{B}$ (according to Prop. \ref{theo:det_map}), the extra constraint \eqref{eq:a} must be added, and it is encoded by the projector $\mathcal{I}_A \otimes \Proj{}{B}$ (according to Lemma \ref{theo:causal_map}). As two projective constraints must be simultaneously satisfied, we can use the $\cap$ operation to combine them into a unique projector:
\begin{multline}\label{eq:transfo_cap_lemma}
    \left(\Proj{}{A}\rightarrow\Proj{}{B}\right)\cap\left(\mathcal{I}_A \otimes \Proj{}{B}\right)=\\
     \left(\mathcal{I}_A\otimes\mathcal{I}_B - \Proj{}{A}\otimes\CompProj{}{B} + \mathcal{D}_A \otimes \mathcal{D}_B\right) \cap \left(\mathcal{I}_A \otimes \Proj{}{B}\right)\\
    = \mathcal{I}_A \otimes \Proj{}{B} - \Proj{}{A} \otimes \mathcal{D}_B + \mathcal{D}_A \otimes \mathcal{D}_B\:.
\end{multline}
This projector, as it turns out, is a projector on operator system. Besides, neither positivity nor normalization has been affected by the addition of the new constraint. As a consequence, $\overline{\mathscr{A}} \prec \mathscr{B}$ is a state structure.
\begin{prop}[One-way signaling transformation] \label{prop:semi-causal_trans}
Let $\mathcal{M} \in \LinOpB{\LinOp{\Hilb{A}}}{\LinOp{\Hilb{B}}}$ be a structure-preserving map between state structures $\mathscr{A}$ and $\mathscr{B}$ as in Def. \ref{def:struc_pres}. 
Let $\{M\}\equiv \CompAlg{A}\prec \Alg{B}{}$ be the subset of all $\mathcal{M}$ in CJ representation that obey the no signaling from output to input constraint, Eq. \eqref{eq:a}, $\forall N,N' \in \CompAlg{B}:$ $\TrX{B}{M\cdot \left(\mathds{1} \otimes N \right) } = \TrX{B}{M\cdot \left(\mathds{1} \otimes N'\right)}$.
Then, $\CompAlg{A}\prec \Alg{B}{}$ is characterized by
\begin{subequations}\label{eq:det_semi-causal_B!<nA}
    \begin{gather}
        M\in \CompAlg{A}\prec \mathscr{B} \iff \notag \\
        M \geq 0 \:,\label{eq:det_semi-causal_B!<nA_pos}\\
        \TrX{}{M} = c_{\overline{A}}c_B \label{eq:det_semi-causal_B!<nA_norm}\:,\\
        \left(\mathcal{I}_A \otimes \Proj{}{B} - \Proj{}{A} \otimes \mathcal{D}_B + \mathcal{D}_A \otimes \mathcal{D}_B\right)\{M\} = M\:, \label{eq:det_semi-causal_B!<nA_proj}
    \end{gather}
    \end{subequations}
    and it is a state structure as in Def. \ref{def:struct} since
    \begin{equation}\label{eq:nA_prec_B_proj}
        \mathcal{I}_A \otimes \Proj{}{B} - \Proj{}{A} \otimes \mathcal{D}_B + \mathcal{D}_A \otimes \mathcal{D}_B \equiv \CompProj{}{A}\prec\Proj{}{B}
    \end{equation}
    is a projector on an operator system as in Def. \ref{def:proj}.
\end{prop}
\begin{proof}
    Positivity and trace conditions are inherited from $M$ being a valid transformation. The projector condition is obtained directly from the above discussion, i.e. by taking the intersection of the projector on valid transformations, which is $\Proj{}{A}\rightarrow\Proj{}{B}$ by Prop. \ref{theo:det_map}, with the projector on the set of operators obeying the no signaling condition, which is $\mathcal{I}_A \otimes \Proj{}{B}$ by Lem. \ref{theo:causal_map}. That \eqref{eq:nA_prec_B_proj} is a projector on an operator system comes from the fact that it is obtained from two valid projectors on operator systems, $\Proj{}{A}\rightarrow\Proj{}{B}$ and $\mathcal{I}_A \otimes \Proj{}{B}$, composed using the cap operation, $\cap$, which preserves this property (this is proven in App. \ref{sec:projo_prec}).
\end{proof}
See Fig. \ref{fig:nA_prec_B_diag} for a diagrammatic depiction of the subspace associated with projector \eqref{eq:det_semi-causal_B!<nA_proj}. 

As is the case for the transformation, the one-way signaling transformation can be seen as the composition of the state structures $\CompAlg{A}$ and $\Alg{B}{}$ whose characteristics are encoded in a specific algebraic operation that combines projectors $\CompProj{}{A}$ and $\Proj{}{B}$. Formally,
\begin{defi}[One-way signaling composition] \label{prop:semi-causal_compo}
Let $\mathscr{A}$ and $\mathscr{B}$ be two state structures as in Eqs. \eqref{eq:det_struct}, their A-to-B one-way signaling composition is the state structure $\mathscr{A}\prec \mathscr{B} \subset \LinOp{\Hilb{A}\otimes\Hilb{B}}$ consisting of all operators $W$ characterized by the following conditions:
\begin{subequations}\label{eq:det_semi-causal_B!<A}
    \begin{gather}
        W \geq 0 \:,\label{eq:det_semi-causal_B!<A_pos}\\
        \TrX{}{W} = c_Ac_B \label{eq:det_semi-causal_B!<A_norm}\:,\\
        \left(\Proj{}{A}\prec\Proj{}{B}\right)\{W\} = W\:. \label{eq:det_semi-causal_B!<A_proj}
    \end{gather}
    \end{subequations}
where
\begin{equation}\label{eq:semi-causal_comp_A<B}
    \Proj{}{A} \prec \Proj{}{B} \equiv \mathcal{I}_A \otimes \Proj{}{B} - \CompProj{}{A} \otimes \mathcal{D}_B + \mathcal{D}_A \otimes \mathcal{D}_B\;.
\end{equation}
The same way, their B-to-A one-way signaling composition is the analogously defined state structure $\mathscr{A}\succ \mathscr{B} \subset \LinOp{\Hilb{A}\otimes\Hilb{B}}$ but instead using the projector 
\begin{equation}\label{eq:semi-causal_comp_A>B}
    \Proj{}{A} \succ \Proj{}{B} \equiv \Proj{}{A} \otimes \mathcal{I}_B - \mathcal{D}_A \otimes  \CompProj{}{B} + \mathcal{D}_A \otimes \mathcal{D}_B\;.
\end{equation}
\end{defi}

\subsection{Relating the compositions}\label{sec:NS_relations}

The main rules for the characterization of state structures through their projector rules that have been defined so far are summarized in Table \ref{tab:sum_char}. Using the new connectives introduced in Sec. \ref{sec:Projos_alg}, i.e. $\cap$ and $\cup$ (defined by Eqs. \eqref{eq:cap} and \eqref{eq:cup}, respectively), the three bipartite connectors that have been introduced so far, i.e. $\otimes, \prec,$ and $\rightarrow$ (defined by Eqs. \eqref{eq:tensor}, \eqref{eq:semi-causal_comp_A<B}) and \eqref{eq:proj_A_to_B}) can be related together in the algebra of projectors. 
\begin{table*}[htb]
    \centering
    \begin{tabular}{|l|l|c|c|}
        \hline
        Name & Characterization & Proj. rule & Rule nickname \\
        \hline
        State & Definition \ref{def:struct}; Eqs. \eqref{eq:det_struct} & $\Proj{}{A}$ & / \\
        Effect & Proposition \ref{theo:det_fctal}; Eqs. \eqref{eq:det_fctal} & $\CompProj{}{A}$ & Negation \\
        \hline
        2-ways Sign. & Proposition \ref{theo:det_map}; Eqs. \eqref{eq:det_map} & $\Proj{}{A} \rightarrow \Proj{}{B}$ & Transformation \\
        1-way Sign. & Proposition \ref{prop:semi-causal_trans}; Eqs. \eqref{eq:det_semi-causal_B!<nA} & $\CompProj{}{A} \prec \Proj{}{B}$ & Prec\\
        No Sign. & Proposition \ref{prop:causal_trans}; Eqs. \eqref{eq:causal_trans} & $\CompProj{}{A} \otimes \Proj{}{B}$ & Tensor \\
        \hline
    \end{tabular} 
    \caption{Summary of the characterization rules for state structure of states, effects, and the three ways to transform from state structure $\mathscr{A}$ to $\mathscr{B}$.\label{tab:sum_char}}
\end{table*}

For instance, consider the four ways of composing a state in $\Alg{A}{}$ with an effect in $\Alg{B}{}$. They lead to four projectors all having $\CompProj{}{A} \otimes \Proj{}{B}$ as their no signaling subset and the following relations can be inferred:
\begin{subequations}\label{eq:relations}
\begin{align}
    \CompProj{}{A} \otimes \Proj{}{B} &= (\CompProj{}{A} \prec \Proj{}{B}) \cap (\CompProj{}{A} \succ \Proj{}{B})\:, \label{eq:relations_1}\\
    \Proj{}{A} \rightarrow \Proj{}{B} &= (\CompProj{}{A} \prec \Proj{}{B}) \cup (\CompProj{}{A} \succ \Proj{}{B}). \label{eq:relations_2}
\end{align}
\end{subequations}
See App. \ref{sec:projo_prec_inclusion} for the proof.

Using that $\prec$ represents one-way signaling composition, these have a concrete physical interpretation. Consider the set $\CompAlg{A} \otimes \mathscr{B}$ first. Its elements are transformations from $\Alg{A}{}$ to $\Alg{B}{}$ as each $M \in \CompAlg{A} \otimes \mathscr{B}$ satisfies $\forall V \in \Alg{A}{}$:
\begin{equation}\label{eq:A_to_B}
    \TrX{A}{M\cdot (V\otimes \mathds{1})} \in \mathscr{B} \:.
\end{equation}
This is because its linear span is included in the one of the transformations; projector-wise, this is the relation $\CompProj{}{A} \otimes \Proj{}{B} \subseteq \Proj{}{A}\rightarrow \Proj{}{B}$. As the tensor product lives in a subspace of the transformation, each linear constraint obeyed by the latter must be obeyed by the former as well, so this includes Eq. \eqref{eq:A_to_B}. 
In addition to that, the right-hand side of Eq. \eqref{eq:relations_1} puts two extra conditions similar to Eq. \eqref{eq:caus} on $\CompAlg{A} \otimes \mathscr{B}$ (see Figs. \ref{fig:notA_succ_B} and \ref{fig:notA_prec_B}):
\begin{prop}[No signaling transformation] \label{prop:causal_trans}
Let $\mathcal{M} \in \LinOpB{\LinOp{\Hilb{A}}}{\LinOp{\Hilb{B}}}$ be a structure-preserving map between state structures $\mathscr{A}$ and $\mathscr{B}$ as in Def. \ref{def:struc_pres}. 
Let $\{M\}\equiv \CompAlg{A}\otimes \Alg{B}{}$ be the subset of all $\mathcal{M}$ in CJ representation that obeys the no signaling from input to output as well as no signaling from output to input constraints: $\forall V \in \Alg{A}{}, \forall N \in \overline{\mathscr{B}}$,
\begin{subequations}\label{eq:A_to_B_one_way}
\begin{align}
     &\TrX{A}{M\cdot \left(V \otimes \mathds{1}\right) } =\TrX{A}{M\cdot \left(\frac{\mathds{1}}{c_{\overline{A}}} \otimes \mathds{1}\right)}\:; \label{eq:nA_succ_B}\\
     & \TrX{B}{M\cdot \left(\mathds{1} \otimes N \right) } =\TrX{B}{M\cdot \left(\mathds{1} \otimes \frac{\mathds{1}}{c_B}\right)}\:. \label{eq:nA_prec_B}
\end{align}
\end{subequations}

Then, $\CompAlg{A}\otimes \Alg{B}{}$ is a state structure characterized by
\begin{subequations}\label{eq:causal_trans}
    \begin{gather}
        M\in \CompAlg{A}\otimes \mathscr{B} \iff \notag \\
        M \geq 0 \:,\label{eq:causal_trans_pos}\\
        \TrX{}{M} = c_{\overline{A}}c_B \label{eq:causal_trans_norm}\:,\\
        \left(\CompProj{}{A}\otimes \Proj{}{B}\right)\{M\} = M\:. \label{eq:causal_trans_proj}
    \end{gather}
    \end{subequations}
\end{prop}
\begin{proof}
The only difference with Prop. \ref{prop:semi-causal_trans} is the projector condition, $\left(\CompProj{}{A} \otimes \Proj{}{B}\right)\{M\} = M $. By using twice Lemma \ref{theo:causal_map}, Eqs. \eqref{eq:causal_trans} constraint the subspace to the intersection $\left(\Proj{}{A}\rightarrow\Proj{}{B}\right)\cap\left(\mathcal{I}_A \otimes \Proj{}{B}\right)\cap\left(\CompProj{}{A} \otimes \mathcal{I}_B\right)$. Grouping the two terms to the left, this is equal to $\left(\Proj{}{A}\rightarrow\Proj{}{B}\right)\cap\left(\CompProj{}{A} \otimes \Proj{}{B}\right)$. From Eq. \eqref{eq:tensor_in_par}, this further simplifies into $\CompProj{}{A} \otimes \Proj{}{B}$.
\end{proof}

Going back to the physical interpretation of \eqref{eq:relations_1}, the pair of conditions \eqref{eq:A_to_B_one_way} each corresponds to a term in the right-hand side of the equation. As these are linked by a `cap', this reveals the set $\CompAlg{A} \otimes \mathscr{B}$ as the set of transformations that are compatible with no signaling from system A to system B \textbf{and} from B to A at the same time; it is no signaling in both directions, whence the name. 
The same way, the set $\Alg{A}{} \rightarrow \mathscr{B}$ is the one of transformations that respect no signaling from A to B \textbf{or} B to A. Thus, it may allow linear combinations of terms allowing signaling in each direction and consequently to indefinite causal order. 
The sets $\CompAlg{A}\prec \Alg{B}{}$ and $\CompAlg{A}\succ \Alg{B}{}$ lie in between, as they permit signaling in only one direction. Their elements indeed obey condition \eqref{eq:A_to_B} but only one of the conditions \eqref{eq:A_to_B_one_way}. 
This discussion underlies the following chain of inclusions:
\begin{subequations}\label{eq:inclusions}
    \begin{gather}
        \CompProj{}{A} \otimes \Proj{}{B} \subseteq \CompProj{}{A} \prec \Proj{}{B} \subseteq \Proj{}{A}\rightarrow \Proj{}{B}\:,\\
        \CompProj{}{A} \otimes \Proj{}{B} \subseteq \CompProj{}{A} \succ \Proj{}{B} \subseteq \Proj{}{A}\rightarrow \Proj{}{B} \:.
    \end{gather}
\end{subequations}
See Fig. \ref{fig:compos} for a diagrammatic interpretation.

\begin{figure}[thb]
    \centering
    \includegraphics[width=\linewidth]{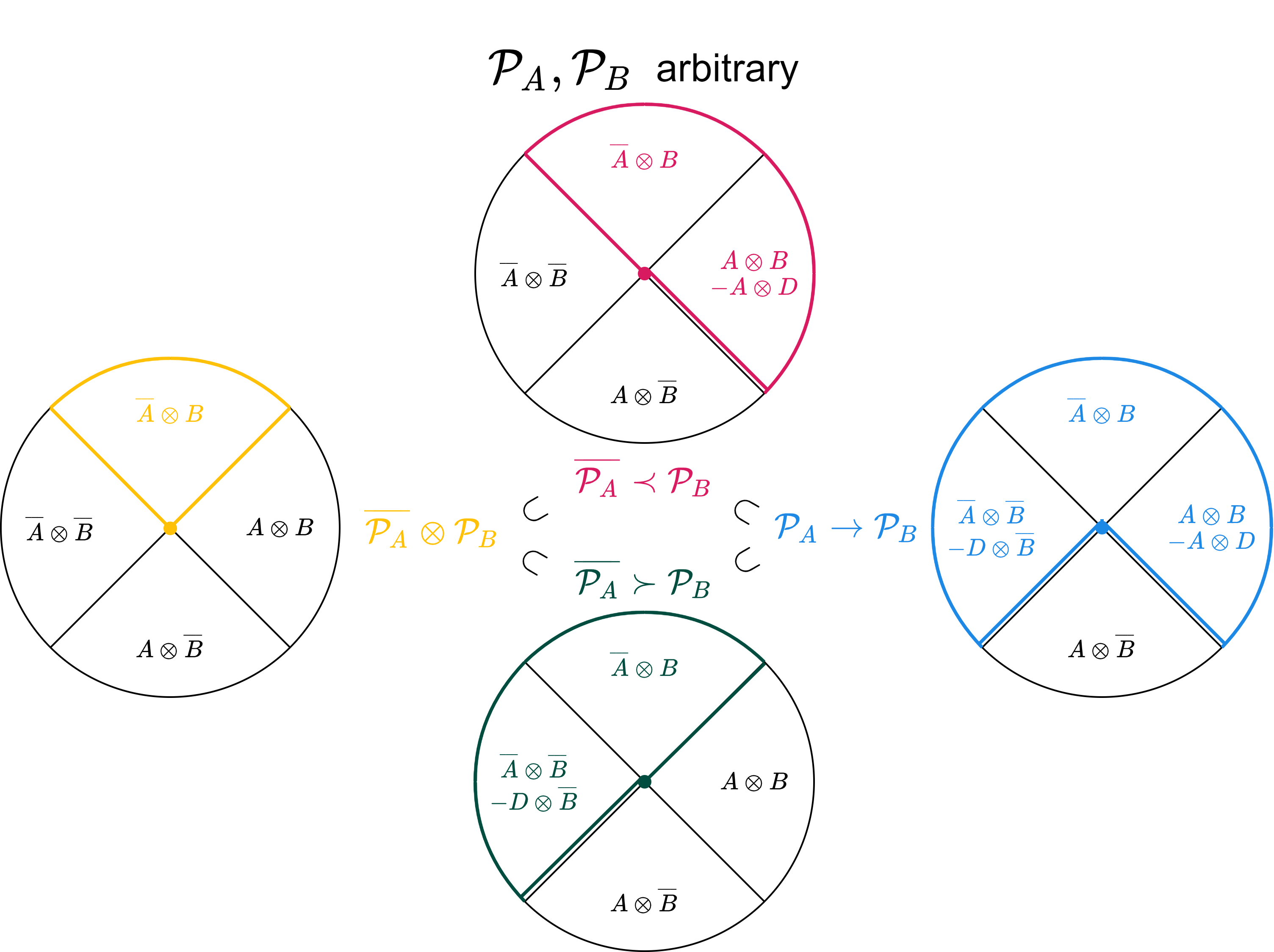}
    \caption{Inclusions \eqref{eq:inclusions} between the supports of the four bipartite composite projectors. These are the compositions involved in the characterization of the three kinds of transformations, Props. \ref{theo:det_map}, \ref{prop:semi-causal_trans}, and \ref{prop:causal_trans} (in, respectively, blue, pink, and yellow). The blue part corresponds to the span of operators obeying condition \eqref{eq:A_to_B}, the pink (resp., green) part to the one obeying \eqref{eq:A_to_B} and \eqref{eq:nA_prec_B} (resp., \eqref{eq:nA_succ_B}), and the yellow part to the one obeying \eqref{eq:A_to_B}, \eqref{eq:nA_succ_B} and \eqref{eq:nA_prec_B}. See also the discussion around Fig. \ref{fig:diag_compo} in App. \ref{sec:projo_prec_inclusion}.}
    \label{fig:compos}
\end{figure}

\paragraph*{Remark: defining the tensor product as no signaling.} It can be noticed that the characterization of the no signaling transformation did not require $M$ to be a valid transformation, i.e. an element of $\Proj{}{A}\rightarrow\Proj{}{B}$, only that it is a state structure. The no signaling composition of two state structures can be characterized as a corollary of Lemma \ref{theo:causal_map} solely by requiring the two one-way signaling conditions to be valid simultaneously.
\begin{coro}\label{theo:tensor}
The set of trace-normalised positive operators $W\in \LinOp{\Hilb{A}\otimes\Hilb{B}}$ obeying the following two no signaling conditions: $\forall N_A\in \CompAlg{A}{}, \forall N_B \in \overline{\mathscr{B}}$,
\begin{subequations}\label{eq:A_otimes_B_def}
\begin{align}
     &\TrX{A}{\left(N_A \otimes \mathds{1}\right)\cdot W } =\TrX{A}{ \left(\frac{\mathds{1}}{c_{A}} \otimes \mathds{1}\right) \cdot W}\:, \\
     & \TrX{B}{\left(\mathds{1} \otimes N_B \right) \cdot W } =\TrX{B}{ \left(\mathds{1} \otimes \frac{\mathds{1}}{c_B}\right)\cdot W}\:,
\end{align}
\end{subequations}
is the state structure of Definition \ref{prop:tensor} characterized by Eqs. \eqref{eq:det_tensor}.
\end{coro}
\begin{proof}
    A set of trace-normalized positive operators belonging to the space $\LinOp{\Hilb{A}\otimes\Hilb{B}}$ is characterized by the projector $\mathcal{I}_A\otimes \mathcal{I}_B$. Making it obey conditions \eqref{eq:A_otimes_B_def} restricts its projector to $\left(\mathcal{I}_A\otimes \mathcal{I}_B\right)\cap\left(\mathcal{I}_A \otimes \Proj{}{B}\right)\cap\left(\Proj{}{A} \otimes \mathcal{I}_B\right)$ by Lemma \ref{theo:causal_map}. Computing the intersections yields the projector $\Proj{}{A} \otimes \Proj{}{B}$ and the definition is reached. 
\end{proof}
This observation has two implications: on the one hand, it justifies its nickname of `no signaling composition'. On the other hand, it provides a `transformation-free' definition of the tensor product based only on the validity of Lemma \ref{theo:causal_map}. This is an important result for the characterization since it avoids cyclic reasoning: the characterization of the transformation in Prop. \ref{theo:det_map} requires the characterization of the tensor from Def. \ref{prop:tensor}. But without Corollary \ref{theo:tensor}, this definition is either assumed \textit{by fiat} or else it comes from Prop. \ref{prop:causal_trans} whose proof requires Prop. \ref{theo:det_map}. 

\subsection{The full algebra of projectors}\label{sec:algebra_the_algebra}
With respect to the discussion in Section \ref{sec:Projos_NS_lattice}, the further addition of the prec connector to the algebra of projectors increases the number of projectors in the Boolean lattice $\mathcal{D} \subseteq \Proj{}{} \subseteq \mathcal{I}$ under operations $\cap,\cup$. 
Any $\Proj{}{}$ can now be decomposed into base projectors $\Proj{}{A},\Proj{}{B},\ldots$ combined using any sequence of operations $\{\overline{\:\cdot\:},\otimes,\prec,\rightarrow\}$ instead of $\{\overline{\:\cdot\:},\otimes,\rightarrow\}$. 
As is the case for all operations on projectors defined so far (see Lemmas \ref{lem:Boolean_lattice} and \ref{lem:Boolean_lattice_comm}), composing using the prec yields a projector onto an operator system that preserves commutation; see App. \ref{sec:projo_prec} for the proof. Summarizing the discussion of App. \ref{sec:projo}, not only the six operations $\{\overline{\:\cdot\:},\cap,\cup,\otimes,\prec,\rightarrow\}$ studied in this paper each preserve commuting projectors on operator systems, but any combination of these operations also does, leading to the following:
\begin{theo}\label{prop:algebra}
    Let $\{\Proj{}{A},\Proj{'}{A},\Proj{''}{A},\dots\}$ be a set of commuting projectors on operator systems on a Hilbert space of operators $\LinOp{\Hilb{A}}$. Let $\{\Proj{}{B},\Proj{'}{B},\Proj{''}{B},\dots\}\in \LinOp{\Hilb{B}}$
    be a similarly defined set. Let $\Proj{}{}$ be an expression built using a combination of elements from these two sets under the rules $\{\overline{\:\cdot\:},\cap,\cup,\otimes,\prec,\rightarrow\}$ so that $\Proj{}{}$ is a projector on $\LinOp{\Hilb{A}\otimes\Hilb{B}}$. Let $\Proj{'}{}$ be another such expression. 
    Then $\Proj{}{}$ and $\Proj{'}{}$ are a pair of projectors on operator systems of $\LinOp{\Hilb{A} \otimes\Hilb{B}}$ 
    that commute.
\end{theo}
    The proof is straightforward. Its details are contained in App. \ref{sec:projo} where each individual operation in $\{\overline{\:\cdot\:},\cap,\cup,\otimes,\prec,\rightarrow\}$ is shown to map commuting projectors on operator systems to commuting projectors on operator systems, and so do their combinations.
    
When the base projectors associated with each tensor factor are fixed, the composite projectors are classified by their inclusion relations using $\cap,\cup$. As explained in Sec. \ref{sec:Projos_NS_lattice}, they form a bounded lattice, meaning that there is a finite number of possible sets of higher-order objects for a fixed number of parties, and that each set is either contained in or contains other sets, with the largest set being the fully signaling superset and the smallest being the no signaling subset. With respect to that, Eqs. \eqref{eq:relations} show that these lattices (as in Definition  \ref{def:NS_FS_sets}) can be extended to contain new, finer-grained elements built using the prec. In other words, Lemma \ref{theo:NS_lattice} can be generalized to include elements built using the prec connector, without changing its top or bottom element. This extended lattice structure of projectors is summarised in the following:
\begin{prop}[Properties of the lattice of projectors]\label{prop:lattice}
    Let $\{\Proj{}{A}, \Proj{}{B}, \ldots, \Proj{}{K}\}$ be a set of $k$ projectors on operator systems, each associated with a specific Hilbert space $\Hilb{A},\Hilb{B}\ldots, \Hilb{K}$, i.e. a set of base projectors.\\ 
    General case: Any expression $\Gamma$ built using each element of this set once and under the set of rules $\{\overline{\:\cdot\:},\cap,\cup,\otimes,\prec,\rightarrow\}$ is a projector on an operator system over $\LinOp{\Hilb{A}\otimes \ldots \otimes\Hilb{K}}$. Moreover, it belongs to a lattice closed under operations $\{\cap,\cup\}$ such that
    \begin{equation}
        \mathcal{D}_{A} \otimes \ldots \otimes  \mathcal{D}_{K} \subseteq \Gamma \subseteq \mathcal{I}_{A} \otimes \ldots \otimes  \mathcal{I}_{K}\:.
    \end{equation}
    Restricted case: Any expression $\Gamma'$ built using each element of this set once and under the restricted set of rules $\{\otimes,\prec,\overline{\:\cdot\:}\rightarrow \cdot\}$ belongs to a sub-lattice such that
    \begin{equation}
        \Proj{}{A} \otimes \ldots \otimes  \Proj{}{K} \subseteq \Gamma' \subseteq \overline{\CompProj{}{A} \otimes \ldots \otimes  \CompProj{}{K}}\:.
    \end{equation}
\end{prop}
The general case follows as a corollary of Theorem \ref{prop:algebra}, whereas the restricted one is one of Lemma \ref{lem:NS_lattice} under Eq. \eqref{eq:inclusions}. A proof is provided in App. \ref{sec:lattice_proof}. Notice
that Eqs. \eqref{eq:inclusions} characterizes the simple example of a no signaling sub-lattice (as in the restricted case) built using the base projectors $\{\CompProj{}{A},\Proj{}{B}\}$. 

The fact there can be more than one base projector associated with each subsystem is what makes the algebra versatile. Each subsystem can indeed be itself composite, for example $\LinOp{\Hilb{A}} = \LinOp{\Hilb{A_0}\otimes \Hilb{A_1}}$; in that case, the various commuting projectors $\{\Proj{}{A},\Proj{'}{A},\Proj{''}{A},\ldots\}$ to consider as bases can be for example the elementary one, $\Proj{}{A}= \Proj{}{A_0}\otimes \Proj{}{A_1}$, the one-way signaling from $A_0$ to $A_1$ one, $\Proj{'}{A}= \Proj{}{A_0}\prec \Proj{}{A_1}$, the two-way signaling one, $\Proj{''}{A}= \CompProj{}{A_0}\rightarrow \Proj{}{A_1}$, etc. 

As mentioned in Sec. \ref{sec:base_sets}, nothing forbids \textit{a priori} to consider \textit{any} family of commuting projectors on operator system as possible choices of bases for a subsystem --not only the ones built from $\mathcal{I}$ and the algebra rules $\{\overline{\;\cdot\;},\otimes, \prec, \rightarrow\}$. The biased quantum theory example of Sec. \ref{sec:examples_biased_QT} is an example of such a choice of arbitrary projectors as base projectors. In this paper, and especially in the applications of the next section, we focus on families built from $\mathcal{I}$ using the rules of the algebra only because of their interpretation as (structure-preserving) quantum supermaps with constraints on their signaling structure. 
Yet, other physically-relevant choices can be made, for example building from the family of bases $\{\mathcal{I}, \Delta\}$, where $\Delta$ projects onto diagonal operators w.r.t. a fixed basis. In that case, $\mathcal{I}$ defines the states of quantum theory and $\Delta$ its `dephased' subspace, i.e. classical states. This direction is left open for future research (a longer discussion is provided in App. \ref{sec:projos_base_sets}).

\paragraph*{Remark: the algebra of projectors is (almost) BV/R{\'e}tor{\'e} logic.} We pointed out the connection with MALL in the remark in Sec. \ref{sec:Projos_NS_lattice} and in App. \ref{sec:projo_prop_LL}. With the further addition of the prec, the algebra defined by the above theorem can be seen as an instance of BV- or R{\'e}tor{\'e} logic, as first noticed in Ref. \cite{Simmons2022} (although their choice of additives connectives is different from ours).

\subsection{The algebra of projectors recovers no signaling}

In this section, we observed that the notion of no signaling could be generalized for the case of state structures. In particular, it was noticed that the notion could be encoded by a new rule $\prec$ in the algebra of projectors, introduced under the name prec.

Bipartite state structures obtained by combining $\CompAlg{A}$ and $\Alg{B}{}$ fall into four classes, essentially determined by conditions \eqref{eq:A_to_B}, \eqref{eq:nA_prec_B}, and \eqref{eq:nA_succ_B} represented graphically in, respectively, Figs. \ref{fig:notB_to_notA}, \ref{fig:notA_prec_B}, and \ref{fig:notA_succ_B}. All of these classes must satisfy the first condition, which amounts to being a valid transformation and to which corresponds the projector $\Proj{}{A}\rightarrow \Proj{}{B}$.
If in addition condition \eqref{eq:nA_prec_B} (respectively, \eqref{eq:nA_succ_B}) is satisfied, the set is restricted to a transformation that is no signaling from the output to the input (resp., from the input to the output), to which corresponds the projector $\CompProj{}{A}\prec \Proj{}{B}$ (resp., $\CompProj{}{A}\succ \Proj{}{B}$). 
If both conditions are satisfied, the set is no signaling, to which corresponds the projector $\CompProj{}{A} \otimes \Proj{}{B}$.

Therefore, working with projectors provides a handy way of assessing the signaling structure within a state structure.

\section{Implications of characterizing signaling\label{sec:applications}}
With the formalization of the projective characterization and the introduction of the one-way signaling subsets using the prec, we now have the tools to address signaling in higher-order quantum transformations.

\subsection{Algebraic manipulations using the prec: two birds with one stone \label{sec:applications_NSandrelations}}
On purely algebraic grounds, the transformation $\rightarrow$ has two defining properties that make it inconvenient to work with. 
The first is not commuting with negation\footnote{In categorical terms, this is a manifestation of the $*-$autonomous character of the category of higher-order transformations \cite{Kissinger_2019}. If the $\rightarrow$ were to commute with the negation, the category would be compact closed instead of $*-$autonomous. This is what happens in the case of quantum theory where the projectors on both sides are the identity: $\overline{\mathcal{I}_A\otimes\mathcal{I}_B} = \overline{\mathcal{I}}_A \otimes \overline{\mathcal{I}}_B = \mathcal{D}_A\otimes\mathcal{D}_B$ and which explains why the bipartite deterministic effects of quantum theory are not two-way signaling. See the discussion in Sec. \ref{sec:applications_iso} and in App. \ref{sec:projo_prec_iso}.}: 
\begin{equation}\label{eq:neg_dont_comm}
    \overline{\Proj{}{A}\rightarrow \Proj{}{B}} \neq \CompProj{}{A} \rightarrow \CompProj{}{B}\:.
\end{equation}
This leads to difficulties in relating seemingly similar sets together using the algebra alone. Two expressions involving the same states and effects may be incomparable because one features a negation on several subsystems at once while the other does not. For example, $\overline{\Proj{}{A}\rightarrow \Proj{}{B}}= \Proj{}{A} \otimes \CompProj{}{B}$ cannot be compared with $\CompProj{}{A} \rightarrow \CompProj{}{B} = \overline{\CompProj{}{A} \otimes \Proj{}{B}}$ without explicitly doing a computation like \eqref{eq:proof_tensor_in_par} to show set inclusion.

The second inconvenient property is that $\rightarrow$ is not associative:
\begin{equation}
    \Proj{}{A} \rightarrow (\Proj{}{B} \rightarrow \Proj{}{C}) \neq (\Proj{}{A} \rightarrow \Proj{}{B}) \rightarrow \Proj{}{C} \:.
\end{equation}
While the algebraic difficulties of non-associativity can be alleviated by replacing the transformation connector ``$\rightarrow$'' with the associative ``$\overline{\:\cdot\:} \rightarrow \:\cdot\:$'', this issue still impedes the interpretation of type formulae. A naive interpretation of the arrow as a transformation may lead an unsuspecting reader to miss comparable state structures. For example, consider types $((A_0 \rightarrow A_1) \rightarrow A_2) \rightarrow A_3$ and $(A_1\rightarrow A_2) \rightarrow (A_0 \rightarrow A_3)$, the first formula depicts two cascaded transformations whose input is a channel from $A_0$ to $A_1$, whereas the second is a single transformation whose input is a channel from $A_1$ to $A_2$. It is tempting to conclude they have nothing in common. However, some computations will reveal that they both feature $\CompProj{}{A_0} \otimes \Proj{}{A_1} \otimes \CompProj{}{A_2} \otimes \Proj{}{A_3}$ as their no signaling subset (as in Def. \ref{def:NS_FS_sets}), and accordingly they should be comparable. If the base types in this example are the quantum states, then the two expressions are more than comparable: they describe the same set of \textit{quantum 2-combs} \cite{Chiribella2009}! Two expressions involving the same no signaling subset may be incomparable because there is no way to `shuffle the parentheses' to put their constituents in the same order.

Surprisingly, the prec does not lead to these issues, as it commutes with the negation:
\begin{equation}\label{eq:prec_neg}
    \overline{\Proj{}{A} \prec \Proj{}{B}} = \CompProj{}{A} \prec \CompProj{}{B} \:.
\end{equation}
And, like the tensor, it is associative:
\begin{multline}\label{eq:prec_assoc}
    (\Proj{}{A} \prec \Proj{}{B}) \prec \Proj{}{C} = \Proj{}{A} \prec (\Proj{}{B} \prec \Proj{}{C})\\ = \Proj{}{A} \prec \Proj{}{B} \prec \Proj{}{C} \:.
\end{multline}
(See Eqs. \eqref{eq:proof_commute_neg} and \eqref{eq:proof_assoc} in App. \ref{sec:projo_prec_assoc} for the proofs.) 
With these two obstacles gone, it becomes possible to compare state structures simply by algebraic manipulations of their projectors. As the tensor and the transformation split into precs, Eqs. \eqref{eq:relations}, and since the algebra obeys the De Morgan relations \eqref{eq:deMorgan}, the `impractical' tensor and transformation connectors can be expressed as precs instead. Any projector $\Proj{}{}$ built as in Theorem \ref{prop:algebra} and whose decomposition into base projectors features $\otimes$ and $\rightarrow$ can be rewritten into unions and intersections of expressions built using the prec alone.

Now, in the algebra, the cap distributes over the cup,
\begin{equation}\label{eq:cap_cup}
    (\Proj{}{A}\cup \Proj{'}{A}) \cap \Proj{''}{A} = (\Proj{}{A}\cap \Proj{''}{A}) \cup (\Proj{'}{A} \cap \Proj{''}{A}) \:.
\end{equation}
(See Eq. \eqref{eq:cap_cup_dist} in App. \ref{sec:projos_prop}.) 
Furthermore, the cap and the cup both distribute over the prec,
\begin{subequations}\label{eq:prec_cap_cup}
    \begin{gather}
        (\Proj{}{A}\cup \Proj{'}{A}) \prec \Proj{}{B} = (\Proj{}{A} \prec \Proj{}{B}) \cup (\Proj{'}{A} \prec \Proj{}{B}) \:,\\
        \Proj{}{A} \prec (\Proj{}{B}\cap \Proj{'}{B}) = (\Proj{}{A} \prec \Proj{}{B}) \cap (\Proj{}{A}\prec \Proj{'}{B}) \:,
    \end{gather}
\end{subequations}
where the above are also satisfied when caps and cups are switched, $\cap \leftrightarrow \cup$. (See App. \ref{sec:projo_prec_dist}.) Consequently, distributivity can be used to define a normal form: expressions involving these three connectors can always be rewritten so that they are put into a union of intersections of one-way signaling compositions. As these three operations are associative, the intermediate parentheses can be dropped everywhere. 

For example, in the bipartite case, there are eight possible one-way signaling compositions, or `prec chains': $\Proj{}{A}\prec\Proj{}{B}$, $\Proj{}{A}\succ\Proj{}{B}$, $\CompProj{}{A}\prec\Proj{}{B}$, etc. The claim is that expressions built using $\{\overline{\:\cdot\:},\cap,\cup,\prec\}$ can be rewritten using the distribution rules into unions of intersections of these eight prec chains like e.g., $\Proj{}{A}\succ \CompProj{}{B}$ or $\left(\Proj{}{A}\prec \Proj{}{B}\right)\cup \left(\left(\CompProj{}{A}\prec \Proj{}{B}\right)\cap \left(\CompProj{}{A}\succ \CompProj{}{B}\right)\right)$, and that expressions not in this form, like e.g., $ \left(\left(\Proj{}{A}\prec \Proj{}{B}\right)\cup \left(\CompProj{}{A}\prec \Proj{}{B}\right)\right) \cap \left(\CompProj{}{A}\succ \Proj{}{B}\right)$, can be `simplified' through distribution until it reaches a `union of intersections of prec chains', e.g., $ \left(\left(\Proj{}{A}\prec \Proj{}{B}\right)\cup \left(\CompProj{}{A}\prec \Proj{}{B}\right)\right) \cap \left(\CompProj{}{A}\succ \Proj{}{B}\right) $ is rewritten into $ \left(\left(\Proj{}{A}\prec \Proj{}{B}\right) \cap \left(\CompProj{}{A}\succ \Proj{}{B}\right) \right) \cup \left(\left(\CompProj{}{A}\prec \Proj{}{B}\right) \cap \left(\CompProj{}{A}\succ \Proj{}{B}\right) \right)$.

\begin{defi}[Normal form]\label{def:normal_form}
    Let $\Proj{}{}$ be a projector on an operator system built from a set of $k$ base projectors $\{\Proj{}{A}, \Proj{}{B}, \ldots, \Proj{}{K}\}$ composed together under operations $\{\overline{\:\cdot\:},\cap,\cup,\otimes,\prec,\rightarrow\}$ so that $\Proj{}{}$ acts on  $ \LinOp{\Hilb{A}\otimes \ldots \otimes\Hilb{K}}$. 
    Then, a \textbf{normal form} $\Gamma$ of $\Proj{}{}$ is a projector equivalent to $\Proj{}{}$ obtained as unions of intersections of expressions built from operations $\{\overline{\:\cdot\:},\prec\}$ alone:
    \begin{equation}
    \Gamma=\bigcup_{i=1}^{x}\left(\bigcap_{j=1}^{y_i} \TProj{}{{\sigma_{ij}(A)}} \prec \TProj{}{\sigma_{ij}(B)} \prec \ldots \prec \TProj{}{\sigma_{ij}(K)} \right) ,
    \end{equation}
    in which there are $x$ unions of expressions labeled by index $i$, and each expression involves $y_i$ intersections of sub-expressions labelled by index $j$, where $\sigma_{ij}$ is an element of the permutation group on $k$ element, like e.g. $\sigma_{00}(A)=A, \sigma_{01}(A)=B,\ldots$, so that each $\TProj{}{{\sigma_{ij}(A)}}$ is a choice of a base projector which is potentially negated depending on indices $i$ and $j$. Each sub-expression is thus a permutation of $\TProj{}{A} \prec \TProj{}{B} \prec \ldots \prec \TProj{}{K}$ where the position of the negations depends on $i$ and $j$ (note that the indices do not necessarily run over the full permutation group).
\end{defi} 
Remark that the convention of ordering expressions in alphabetical label order is dropped in the normal form. It is replaced by an ordering that allows reading prec chains from left to right using associativity. In other words, normal forms only use the $\prec$; its reversed version, $\succ$, must also be dropped. For example, the normal form of $(\Proj{}{A} \prec \Proj{}{B})\succ \Proj{}{C}$ is written as $\Proj{}{C} \prec \Proj{}{A} \prec \Proj{}{B}$.
\begin{theo}\label{theo:normal_form}
    Any projector $\Proj{}{}$ as in the general case of Prop. \ref{prop:lattice} has a normal form.
\end{theo}
\begin{proof}
    By Proposition \ref{prop:lattice}, it is a projector on operator system. By Eqs. \eqref{eq:relations}, the projector can be put in a form involving operations $\{\overline{\:\cdot\:},\cap,\cup,\prec\}$ only. The normal form can always be reached by first distributing negations over intersections and unions using De Morgan law \eqref{eq:deMorgan_add} and over the prec using Eq. \eqref{eq:prec_neg}, then by distributing the prec over unions and intersections using \eqref{eq:prec_cap_cup}, and finally by distributing the caps over the cups using \eqref{eq:cap_cup}. This procedure is explicitly proven in App. \ref{sec:normal_form_proof}.
\end{proof}

This implies that any expression can be brought down to the union of several one-way signaling expressions (which further decompose into intersections if needed). This way of rewriting exists for \textit{any} state structure, lifting any ambiguity induced by the negation or non-associativity. 
In the first example discussed above, the expressions reduce to $\overline{\Proj{}{A}\rightarrow \Proj{}{B}}= \left(\Proj{}{A} \prec \CompProj{}{B}\right) \cap \left(\Proj{}{A} \succ \CompProj{}{B}\right)$ and $\CompProj{}{A} \rightarrow \CompProj{}{B} = \CompProj{}{A} \rightarrow \CompProj{}{B} = \left(\Proj{}{A} \prec \CompProj{}{B}\right) \cup \left(\Proj{}{A} \succ \CompProj{}{B}\right)$. Now one can meaningfully compare them: they are both built from a combination of the same two one-way signaling transformations, but one set is the intersection and the other is the union, we thus conclude that $\overline{\Alg{A}{}\rightarrow \Alg{B}{}} \subseteq \CompAlg{A}\rightarrow \CompAlg{B}$ simply by inspecting the normal form of their respective projectors.

Moreover, the prec connector and the decomposition it induces also allow for an identification of the possible signaling directions these two sets may feature. Indeed, since the prec is exactly the one-way signaling condition, the above normal form is a breakdown of a general expression into several fully ordered expressions, or `prec chains'. 
That is, expressions built using only the prec like $\Proj{}{A}\prec\Proj{}{B}\prec\Proj{}{C}$ characterizes a set of objects with a fixed signaling order. Indeed, since $\Proj{}{A}\prec\Proj{}{B}\prec\Proj{}{C} = \left(\Proj{}{A}\prec\Proj{}{B}\right)\prec\Proj{}{C}$, $C$ cannot signal to $A$ and to $B$, and $B$ cannot signal to $A$; in the same time since $\Proj{}{A}\prec\Proj{}{B}\prec\Proj{}{C} = \Proj{}{A}\prec\left(\Proj{}{B}\prec\Proj{}{C}\right)$, $A$ might signal to $B$ and $C$ while $B$ might signal to $C$\footnote{More precisely: certain operators $W_{ABC}$ in $\Alg{A}{}\prec\Alg{B}{}\prec\Alg{C}{}$ may allow $A$ to signal \textit{deterministically} to $B$ and to $C$ by suitably choosing the (\textit{local} \textit{deteministic}) effect $N_A \in \CompAlg{A}$ she will apply on $W_{ABC}$. And the same way, $B$ may signal to $C$ by suitably choosing his effect $N_B\in \CompAlg{B}$, but he will never be able to signal to $A$, no matter his choice of $N_B $.}. 
These chains are then combined first by requiring no signaling between some subsystems, i.e. using the cap, and then by allowing signaling in both directions, i.e. using the cup. For example, to further impose no signaling between $A$ and $B$ in $\Proj{}{A}\prec\Proj{}{B}\prec\Proj{}{C}$, one intersects it with $\left(\Proj{}{A}\succ\Proj{}{B}\right)\prec\Proj{}{C}$ so that $\left(\Proj{}{A}\prec\Proj{}{B}\prec\Proj{}{C}\right) \cap \left(\Proj{}{B}\prec\Proj{}{A}\prec\Proj{}{C}\right) = \left(\Proj{}{A}\otimes\Proj{}{B}\right) \prec\Proj{}{C}$. Then, to loosen the requirement that $C$ can signal to neither $A$ nor $B$, one takes the union of it with $\left(\Proj{}{C} \prec \Proj{}{A}\prec\Proj{}{B}\right) \cap \left(\Proj{}{C} \prec \Proj{}{B}\prec\Proj{}{A}\right) = \left(\Proj{}{A}\otimes\Proj{}{B}\right)\succ\Proj{}{C}$. The obtained projector, $\overline{\Proj{}{A}\otimes\Proj{}{B}}\rightarrow\Proj{}{C}$, has the normal form $\left(\left(\Proj{}{A}\prec\Proj{}{B}\prec\Proj{}{C}\right) \cap \left(\Proj{}{B}\prec\Proj{}{A}\prec\Proj{}{C}\right)\right)\cup \left(\left(\Proj{}{C}\prec \Proj{}{A}\prec\Proj{}{B}\right) \cap \left(\Proj{}{C}\prec \Proj{}{B}\prec\Proj{}{A}\right)\right)$. This indeed defines a set of objects in which $A$ and $B$ cannot signal to each other, but both might signal to and receive signaling from some party $C$, and all this information can be read directly from its normal form.

Nevertheless, because the normal form encodes the possible signaling directions, it is not unique. Indeed, as the prec is associative, this means that the impossibility to signal is transitive, so that in an expression like $\Proj{}{A}\prec\Proj{}{B}\prec\Proj{}{C}$, $C$ cannot signal to $A$. This implies that some formulae are redundant. For example, the following can be shown in the algebra:
\begin{multline}\label{eq:example_ineq}
    \left(\Proj{}{A}\prec\Proj{}{B}\prec\Proj{}{C}\right) \cap \left(\Proj{}{C}\prec \Proj{}{A}\prec\Proj{}{B}\right) = \\
    \left(\Proj{}{A}\prec\Proj{}{B}\prec\Proj{}{C}\right) \cap \left(\Proj{}{C}\prec \Proj{}{A}\prec\Proj{}{B}\right)\\
    \cap \left(\Proj{}{A}\prec \Proj{}{C}\prec\Proj{}{B}\right)\:.
\end{multline}
This is intuitively straightforward: if $C$ cannot signal to $A$ and $B$ and vice-versa, this also includes the situation in which $C$ cannot signal to $A$ while $B$ cannot signal to $C$. Therefore, the third term on the right-hand side of the above is redundant, and these two normal forms are equivalent. Improving the definition of the normal form so as to incorporate this redundancy would hopefully make the normal form unique. This is left open as a future research direction.

Hence, putting a projector in normal form provides a direct way to read the possible signaling structures of the set of higher-order transformations it represents. Moreover, putting every projector under consideration into normal form provides a quick way to spot equivalent state structures. Nevertheless, having equivalent normal forms is but a sufficient condition for the equivalence of projectors. One should complete the test by actually computing the intersection (or union) of projectors with inequivalent normal forms to reach a definitive conclusion. For example, in Eq. \eqref{eq:example_ineq} the normal forms allows one to regroups the terms into two projectors $\Proj{}{1} := \left(\Proj{}{A}\prec\Proj{}{B}\prec\Proj{}{C}\right) \cap \left(\Proj{}{C}\prec \Proj{}{A}\prec\Proj{}{B}\right)$ and $\Proj{}{2} := \Proj{}{A}\prec \Proj{}{C}\prec\Proj{}{B}$. From there, it is obvious that the right-hand side is contained in the left-hand side since this equation has the form $\Proj{}{1} \subseteq \Proj{}{1} \cap \Proj{}{2}$ and it remains to show that $\Proj{}{1} = \Proj{}{1} \cap \Proj{}{2}$ by explicit computation, which is also made simpler since one knows that they do not have to explicitly compute the intersection contained in $\Proj{}{1}$ as it appears on both side of the equation.

A concrete example of the use of the normal form, as well as of an equivalence of normal form, is provided in App. \ref{sec:examples_dynamics_constr}.

Summarizing, the algebraic fact that the prec is associative and commutes with the negation allows one to write a `normal' form useful to compare types of transformations.
This gives a tool to determine whether two types of higher-order transformations are equivalent by sole inspection of their projectors.
In addition, the physical fact that the prec is a one-way signaling composition, combined with the algebraic rules, allows one to split projectors into several terms with a fixed signaling direction between their base state structures.
This implies that one can know the possible signaling structure that a higher-order theory may feature by sole inspection of its projector.

\subsection{When quantum combs are isomorphic to quantum networks \label{sec:applications_iso}}
The next part of the results is concerned with the relationship between two different ways of building higher-order objects using the developed formalism of projectors. This will be shown to be an example for which the normal form allows to swiftly recover results of Ref. \cite{Bisio2018}.

Following Ref. \cite{Chiribella2009}, a \textbf{network} is defined as the causally ordered (i.e. one-way signaling) succession of `nodes' of the same state structure. A `1-network of base $\Alg{A}{}$' will be the set $\Alg{A}{}\subset \LinOp{\Hilb{A}}$ itself, thereafter noted with an index as $\Alg{A}{0}$ to distinguish between the multiple copies of the state structure $\Alg{A}{}$ instantiated on an increasingly larger number of systems. The `2-network of base $\Alg{A}{}$' will be the set $\Alg{A}{0} \prec \Alg{A}{1} \subset \LinOp{\Hilb{A_0}\otimes\Hilb{A_1}}$, the `3-network' will be $\Alg{A}{0} \prec \Alg{A}{1} \prec \Alg{A}{2}$, etc. up to the `$n$-network' defined as $\Alg{A}{0} \prec \Alg{A}{1} \prec \ldots \prec \Alg{A}{n-1} \subset \LinOp{\Hilb{A_0}\otimes \Hilb{A_1} \otimes \ldots \Hilb{A_{n-1}}}$.
A common occurrence of this structure is the network whose base is a quantum channel so that $\Alg{A}{}$ is characterized by a projector $\Proj{}{\Alg{A}{}} = \mathcal{I}_{A_0} \rightarrow \mathcal{I}_{A_1}$; it is called a \textbf{quantum network} (notice that it will require twice as many systems since the base is defined as a state structure involving two subsystems). This quantum network, here associated with some party that will be called Alice, represents the successive operations of that party. If Alice has a single-node quantum network, it means that Alice acts once on subsystem $A_0$ with a quantum channel and outputs a subsystem $A_1$ in a quantum state defined in space $\LinOp{\Hilb{A_1}}$. If she has a network with two nodes, she will act a first time on $A_0$ and output a first state at $A_1$, then a second time on $A_2$, now potentially using any size of ancillary qudit as a memory register she preserved from her first operation, and outputs a second state in $A_3$. And so on for all numbers of nodes, as defined recursively.

Another way of building a higher-order state structure is the \textbf{comb}, which consists of recursively transforming into a base type: the `1-comb of base $\Alg{A}{}$' is again $\Alg{A}{0}$, then the $2-$comb is $\Alg{A}{0} \rightarrow \Alg{A}{1}$, the $3-$comb is $(\Alg{A}{0} \rightarrow \Alg{A}{1}) \rightarrow \Alg{A}{2}$, etc up to the $n-$comb defined as $(...(\Alg{A}{0}\rightarrow \Alg{A}{1}) \rightarrow ...)\rightarrow \Alg{A}{n} \subset \LinOp{\Hilb{A_0}\otimes \Hilb{A_1} \otimes \ldots \Hilb{A_{n-1}}}$. 
As is the case for the network case, a common occurrence of this structure is the comb whose base is a quantum channel, called the \textbf{quantum comb}.

When the two constructions were introduced in Ref. \cite{Chiribella2009}, it was proven that a quantum network \textit{is} a quantum comb.
When treated using the formalism developed here, there is a stark contradiction. All quantum combs are built using the transformation, $\rightarrow$, which permits two-way signaling. How could it be that they are all equivalent to networks which are built using the prec -- that is, objects featuring a single direction of signaling? 
Besides, why is the 1-comb (built using the two-way signaling transformation) equivalent to the quantum channel (which is causal)? Why does the quantum 2-comb reduce to a two-node quantum network, i.e. a map acting on two quantum states, when by definition it should be a supermap, i.e. a map acting on a quantum channel? And why does it reduce to a succession of two operations that have a well-defined direction of signaling between parties, and not, as its projector suggests, a nesting of two two-way signaling compositions, resulting in four possible signaling directions?

To phrase this issue formally, some notation is required. Let $\Alg{A}{i}^{state} \subset \LinOp{\Hilb{A_i}} $ be the state structure of states on system $A_i$ so that its projector is $\mathcal{I}_{A_i}$, and let $\Alg{A}{i,j}^{channel} \subset \LinOp{\Hilb{A_i}\otimes\Hilb{A_j}}$ be the state structure of channels between systems $A_i$ and $A_j$ so that its projector is $\mathcal{I}_{A_i}\rightarrow \mathcal{I}_{A_j}$. Writing $\Alg{A}{\text{base}}$ as either of these two structures used as the base of a larger state structure (a name appearing in the index indicates that the labeling of the subsystems has not been specified yet), the network and comb can be abstractly defined using recursive formulae. Let $\Alg{A}{\text{1-network}} = \Alg{A}{\text{1-comb}} = \Alg{A}{\text{base}}$. Then, the definitions of Ref. \cite{Chiribella2009} can be abstracted as $\Alg{A}{\text{n-network}} \equiv \Alg{A}{\text{(n-1)-network}} \prec \Alg{A}{\text{base}}$ and $\Alg{A}{\text{n-comb}} \equiv \Alg{A}{\text{(n-1)-comb}} \rightarrow \Alg{A}{\text{1-comb}}$. 
Finally, let the labeling of the base state structures be chosen such that the subsystems in each state structure coincide. 

The structures we want to compare are the quantum networks, the quantum combs, and the combs of quantum states with twice as many nodes.
In symbols, the state structure of quantum networks (i.e. of networks whose base is quantum channels) with $n$ nodes reads $\Alg{A}{0,1}^{channel} \prec \ldots \prec \ldots \prec \Alg{A}{2n-2,2n-1}^{channel}$ which, in the language of Refs. \cite{Perinotti2016,Bisio2018}, carries the type $\left(A_0 \rightarrow A_1\right) \prec \ldots \prec \left(A_{2n-2} \rightarrow A_{2n-1}\right)$ (where each $A$'s are of the type of quantum states), and it is associated with the projector
\begin{multline}\label{eq:proj_n_network_chan}
    \Proj{\text{(n-network)}}{\mathscr{A}_{\text{channel}}} \equiv\\ \left(\mathcal{I}_{A_0} \rightarrow \mathcal{I}_{A_1}\right) \prec \ldots \prec \left( \mathcal{I}_{A_{2n-2}} \rightarrow \mathcal{I}_{A_{2n-1}}\right) \:.
\end{multline}
The state structure of quantum combs (i.e. of combs whose base are quantum channels) with $n$ nodes reads $\left(\ldots\left(\Alg{A}{n-1,n}^{channel} \rightarrow \Alg{A}{n-2,n+1}^{channel}  \right)\ldots\right) \rightarrow \Alg{A}{0,2n-1}^{channel}$, it carries the type $(\ldots((A_{n-1} \rightarrow A_n)\rightarrow (A_{n-2} \rightarrow A_{n+1}))\rightarrow \ldots)\rightarrow (A_0 \rightarrow A_{2n-1})$, and it is associated with the projector
\begin{multline}\label{eq:proj_n_comb_chan}
    \Proj{\text{(n-comb)}}{\mathscr{A}_{\text{channel}}} \equiv\\ (\ldots(\mathcal{I}_{A_{n-1}} \rightarrow \mathcal{I}_{A_{n}}) \rightarrow \ldots ) \rightarrow \left(\mathcal{I}_{A_0} \rightarrow \mathcal{I}_{A_{2n-1}}\right)\:.
\end{multline}%
The state structure of combs based on states with $2n$ nodes reads $\left(\ldots\left(\left(\Alg{A}{0}^{state}\rightarrow \Alg{A}{1}^{state}\right) \rightarrow \Alg{A}{2}^{state}\right)\ldots\right)\rightarrow \Alg{A}{2n-1}^{state}$; it carries the type $ (\ldots((A_0\rightarrow A_1) \rightarrow A_2 ) \rightarrow \ldots)\rightarrow A_{2n-1}$; it is associated with the projector
\begin{equation}\label{eq:proj_2n_comb_states}
    \Proj{\text{(2n-comb)}}{\mathscr{A}_{\text{state}}} \equiv (\ldots (\mathcal{I}_{A_0} \rightarrow \mathcal{I}_{A_1}) \rightarrow \ldots ) \rightarrow \mathcal{I}_{A_{2n-1}}\:.
\end{equation}

To answer the above interrogations, it can be found in the literature 1) that quantum channels are no signaling from output to input, and are equivalent to quantum 1-network and quantum 1-combs; 2) that the first two state structures defined in \eqref{eq:proj_n_network_chan} and \eqref{eq:proj_n_comb_chan} above are equivalent \cite[Theo. 8]{Chiribella2009}; 3) that the elements $M$ of these state structures can be characterized in the CJ picture by the causality condition \cite[Theo. 5]{Chiribella2009}:
\begin{equation}\label{eq:causality_cond}
\begin{gathered}
\forall \in 1,\ldots n,\\
    \TrX{A_{2i+1}}{M^{(i)}} = \frac{1}{d_{A_{2i}}}\TrX{A_{2i}A_{2i+1}}{M^{(i)}} \otimes \mathds{1}_{A_{2i}} \:,\\
    M^{(n)}\equiv M, \; M^{(j)} \equiv \TrX{A_{2j} \ldots A_{2n-1} }{M}, \; j<n\:;
\end{gathered}%
\end{equation}%
4) that the last two state structures defined in \eqref{eq:proj_n_comb_chan} and \eqref{eq:proj_2n_comb_states} above are equivalent \cite[Prop. 6]{Bisio2018}.

We now recover and explain these results using the projective characterization. These four statements can indeed be proven simply by algebraic manipulations of the projectors. They amount to proving that 1) $\mathcal{I}_{A_0} \rightarrow \mathcal{I}_{A_1} = \overline{\mathcal{I}}_{A_0} \prec \mathcal{I}_{A_1}$; 2) Equations \eqref{eq:proj_n_network_chan} and \eqref{eq:proj_n_comb_chan} are equivalent, i.e. $\Proj{\text{(n-comb)}}{\mathscr{A}_{\text{channel}}} = \Proj{\text{(n-network)}}{\mathscr{A}_{\text{channel}}}$; 3) Equation \eqref{eq:causality_cond} is an equivalent way of defining the validity subspace of quantum networks, Eq. \eqref{eq:proj_n_network_chan}; 4) Equations \eqref{eq:proj_n_comb_chan} and \eqref{eq:proj_2n_comb_states} are equivalent, i.e. $\Proj{\text{(n-comb)}}{\mathscr{A}_{\text{channel}}} = \Proj{\text{(2n-comb)}}{\mathscr{A}_{\text{state}}}$.

\paragraph{Transformations between quantum states.}
The case $n=1$ is treated as a warm-up before the general case. Equations \eqref{eq:proj_n_comb_chan}, \eqref{eq:proj_n_network_chan} and \eqref{eq:proj_2n_comb_states} all reduce to $\mathcal{I}_{A_0} \rightarrow \mathcal{I}_{A_1}$, so items 2) and 4) hold. What remains to be proven are items 1) and 3) in the single-node case. The following lemma does that, and it underlies the proof for the general case. 

\begin{figure}[htb]
    \centering
    \includegraphics[width=\linewidth]{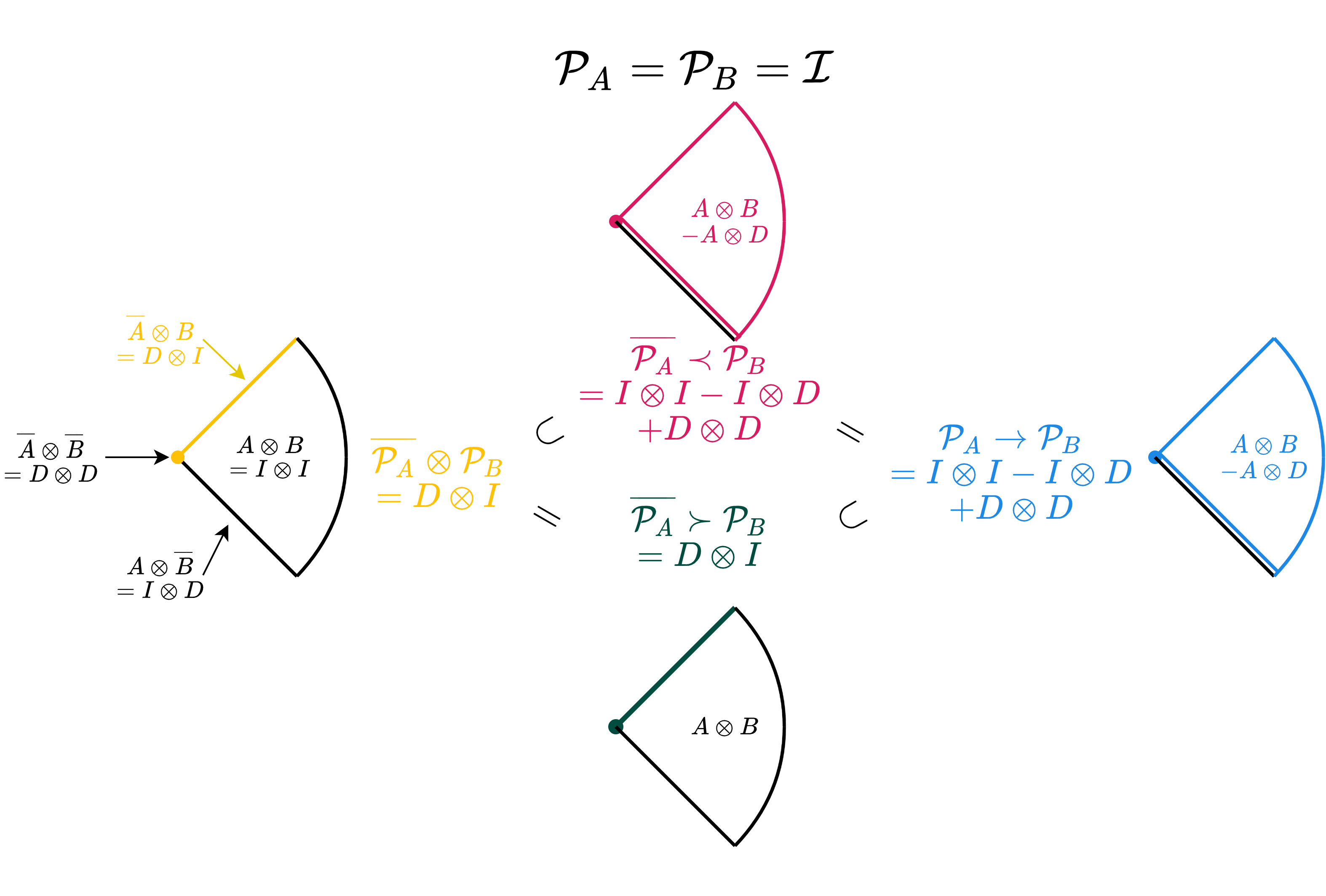}
    \caption{Diagrammatic representation of Lemma \ref{lem:accidental}: isomorphisms between the compositions when the base projectors are the identity, $\mathcal{I}$, i.e. in the case of transformations between quantum states. Compared to the general situation, Fig. \ref{fig:compos}, the A-to-B one-way signaling composition becomes equivalent to the two-way one, whereas the B-to-A one-way signaling composition becomes equivalent to the no signaling one.}
    \label{fig:Accident}
\end{figure}

\begin{lemm}\label{lem:accidental}
    The transformation operation on projectors, $\rightarrow$, simplifies into a prec operation when either of the projectors is the identity. The prec operation on projectors, $\prec$, simplifies into a tensor operation when either of the projectors is depolarizing. In equation,
\begin{subequations}\label{eq:isomorphisms}
\begin{align}
     &\Proj{}{A} \rightarrow \Proj{}{B} = \CompProj{}{A} \prec \Proj{}{B} \notag \\
     &\iff\: \Proj{}{A}=\mathcal{I}_A \:\text{or}\: \Proj{}{B}=\mathcal{I}_B\:; \label{eq:isomorphisms_transfo}\\
    &\CompProj{}{A} \prec \Proj{}{B} = \CompProj{}{A} \otimes \Proj{}{B} \notag \\
    &\iff\: \Proj{}{A}=\mathcal{D}_A \:\text{or}\: \Proj{}{B}=\mathcal{D}_B\:. \label{eq:isomorphisms_tensor}
\end{align}
\end{subequations}
\end{lemm}
(This follows from the definitions of the operations, see App. \ref{sec:projo_prec_iso} for the proof.)

The first equation of this lemma implies that transformations between quantum states, characterized by $\mathcal{I}_{A_0} \rightarrow \mathcal{I}_{A_1}$, are equivalent to a causal succession of a functional and a state, $\mathcal{D}_{A_0} \prec \mathcal{I}_{A_1}$. This proves statement 1), $\mathcal{I}_{A_0} \rightarrow \mathcal{I}_{A_1} = \overline{\mathcal{I}}_{A_0} \prec \mathcal{I}_{A_1}$. Remark these actually are two different ways of defining the quantum channel: as the most general transformation that takes a quantum state to a quantum state, or as a measurement coherently followed by a preparation of a quantum state\footnote{By coherent, it is meant `as any element of the convex hull of measure-and-reprepare maps', so this set does not represent only entanglement breaking maps, i.e. of the form $\mathcal{M}(\bullet)=\rho \TrX{}{\sigma \cdot \bullet}$, but any linear map $\mathcal{M}(\bullet)=\sum_i \rho_i \TrX{}{\sigma_i \cdot \bullet}$). Recall that in this non-CJ representation, the CPTP condition comes as constraints on the $\rho_i$'s and $\sigma_i$'s.} (this is also mentioned in example \ref{sec:examples_single_channel}). 

As for the second equation of the lemma, it is for example the reason why the single-partite process matrix reduces to an effect and a state in tensor product \cite{OCB2012}: the 1-PM is a functional on channels, characterized by the negated projector $\overline{\mathcal{I}_{A_0} \rightarrow \mathcal{I}_{A_1}}$. Using the first part of the lemma, $\overline{\mathcal{I}_{A_0} \rightarrow \mathcal{I}_{A_1}} = \overline{\mathcal{D}_{A_0} \prec \mathcal{I}_{A_1}}$. The negation is distributed over a prec, $\overline{\mathcal{D}_{A_0} \prec \mathcal{I}_{A_1}} = \mathcal{I}_{A_0} \prec \mathcal{D}_{A_1}$, and finally the second part of the lemma is used, $\mathcal{I}_{A_0} \prec \mathcal{D}_{A_1} = \mathcal{I}_{A_0} \otimes \mathcal{D}_{A_1}$. These algebraic manipulations on projectors quickly led to the conclusion that the functionals on quantum channels, characterized by $\overline{\mathcal{I}_{A_0} \rightarrow \mathcal{I}_{A_1}}$, are equivalent to a quantum state on the input system followed by a measurement applied on its output, $\mathcal{I}_{A_0} \prec \mathcal{D}_{A_1}$, and that the state and measurement are causally disconnected, $\mathcal{I}_{A_0} \otimes \mathcal{D}_{A_1}$.

More generally, the first equation \eqref{eq:isomorphisms_transfo} states that any transformation from (or to) an arbitrary state structure
to (or from) the state structure of quantum states will automatically be no signaling from output to input.
Contrastingly, the second equation \eqref{eq:isomorphisms_tensor}  states that any way of combining a state structure with the single element `quantum measurement' state structure, i.e. $\{\mathds{1}\}$, is automatically no signaling from and to it.
These relations are depicted in Fig. \ref{fig:Accident} for the case where both $A$ and $B$ are quantum states. This should be compared with the general case of Fig. \ref{fig:compos}.

Finally, as previously discussed in the remark of Sec. \ref{sec:NS_def}, Eq. \eqref{eq:isomorphisms_transfo} is the case where the definition of no signaling, Eq. \eqref{eq:caus}, overlaps with the one of a transformation, see Eq. \eqref{eq:NS_usual}, so the usual definition of no signaling is recovered. The fact that these two equations coincide in this case is exactly statement 3) in the $n=1$ case: it suffices to notice that $\mathcal{I}_{A_0} \rightarrow \mathcal{I}_{A_1} = \mathcal{D}_{A_0} \leftarrow \mathcal{D}_{A_1}$ and that this projector applied on bipartite operators is exactly enforcing the condition \eqref{eq:causality_cond} in the $n=1$ case, i.e. $\forall M^{(1)} \in \LinOp{\Hilb{A_0}\otimes \Hilb{A_1}}: \left(\mathcal{D}_{A_0} \leftarrow \mathcal{D}_{A_1}\right)\{M^{(1)}\} = M^{(1)} \iff \TrX{A_1}{M^{(1)}} = \mathds{1}_{A_0}$.
\paragraph{Equivalence of quantum combs and networks.} 
We shall now generalize this equivalence for all $n$.
Reinterpreting $\mathcal{I}_{A_0} \rightarrow \mathcal{I}_{A_1} = \mathcal{D}_{A_0} \prec \mathcal{I}_{A_1}$, the quantum channel can be seen as a sort of network in which the nodes alternate between an effect and a state (i.e., the wires alternate between pointing up and down). This was shown to imply \eqref{eq:causality_cond} in the $n=1$ case. Going to $n>1$, a network of quantum channels is then an alternating network of effects and states as associativity can be used, and it indeed implies \eqref{eq:causality_cond} in general.

\begin{theo}\label{theo:combs=networks}
    The projectors to a quantum network \eqref{eq:proj_n_network_chan}, to a quantum comb \eqref{eq:proj_n_comb_chan}, and to a (twicely longer) comb based on quantum states \eqref{eq:proj_2n_comb_states} are all equivalent to the following projector, characterizing an alternating network of quantum effects and states:
    \begin{equation}\label{eq:prec_chain}
    \Proj{(2n)}{} = \overline{\mathcal{I}}_{A_0} \prec \mathcal{I}_{A_1} \prec \ldots \prec \overline{\mathcal{I}}_{A_{2n-2}} \prec \mathcal{I}_{A_{2n-1}}\:.
    \end{equation}
    Meaning that with suitable normalization, the state structure of networks of order $n$ based on quantum channels is equivalent to the one of combs of order $n$ based on quantum channels, and it is also equivalent to the one of combs of order $2n$ based on quantum states. 
    
    In addition, the elements $M$ of the state structure defined by this projector obey equation \eqref{eq:causality_cond}.
\end{theo}
The proof is presented in App. \ref{sec:combs=networks_proof}. This theorem means that an operator $M$ obeying the causality condition \eqref{eq:causality_cond} can be interpreted as an element of four equivalent state structures: (1) from Eq. \eqref{eq:proj_n_comb_chan} as a valid quantum $n$-comb, i.e. (the CJ representation of) a map on $2n$ systems that transforms the recursively defined map on the $n-1$ nodes split between the $2n-2$ systems $A_1, \ldots A_{2n-2}$ into a single-node map, i.e. a quantum channel from $A_0$ to $A_{2n-1}$.
(2) from Eq. \eqref{eq:proj_n_network_chan} as a valid quantum network of channels with $n$ nodes, i.e. (the CJ representation of) a map on $2n$ systems $A_0, \ldots A_{2n-1}$ that represents the $n$ successive operations of a party on $n$ nodes: at node $j$ she applies a channel between systems $A_{2j-2}$ and $A_{2j-1}$ that can depend on (that is, use any size of ancillary memory from) the $j-1$ previous nodes.
(3) from Eq. \eqref{eq:proj_2n_comb_states} as a valid comb of states on $2n$ nodes, i.e. (the CJ representation of) a map on $2n$ systems that transforms the recursively defined map on the $2n-1$ systems $A_0, \ldots A_{2n-2}$ into a quantum state on the $2n-$th system $A_{2n-1}$.
And (4) from Eq. \eqref{eq:prec_chain} as a valid quantum network alternating between quantum measurements and states at each of the $2n$ nodes, i.e. (the CJ representation of) a map on $2n$ systems $A_0, \ldots A_{2n-1}$ that represents the $2n$ successive operations of a party on $2n$ nodes: at odd nodes, she measures a quantum state, at even nodes, she prepares one. Each operation is potentially conditioned by the previous nodes but is independent of the future ones.

Although these four definitions appear quite different, notice that equation \eqref{eq:prec_chain} is actually the normal form of Eqs. \eqref{eq:proj_n_network_chan}, \eqref{eq:proj_n_comb_chan}, and \eqref{eq:proj_2n_comb_states}! Hence the equivalence between seemingly unrelated definitions of higher-order maps is inferred simply by working out their normal form.
The fact that the two comb-like definitions result in structures with a single signaling direction is now directly apparent, as the structure features a single `prec chain' in normal form. 
As mentioned above, this is a very peculiar behavior: From equation \eqref{eq:relations_2} one would have expected the normal forms to be the union of many prec chains with different signaling orderings. That is, the combs should typically have an indefinite causal order between their nodes. 
Lemma \ref{lem:accidental} auspiciously prevents that. The isomorphisms between $\prec$ and $\rightarrow$ in the case of composite state structure whose base structures are sets of quantum states explain this apparent counter-logical equivalence of one-way signaling objects (the networks of states) with various a priori two-way signaling ones. Quantum theory is thus very tame in the sense that successive and/or nested transformations of quantum states or channels happen to automatically have no signaling from output to input.

On the contrary, nesting any other state structures in a comb-like construction will in general result in indefinite causal order. An example of this in the case of Multi-round Process Matrices \cite{MPM} is provided in App. \ref{sec:examples_dynamics_constr}. It also features another example of the use of the normal form.

\section{Discussion}
In this work, a correspondence was developed between several characterizations of the theory of higher-order quantum transformations. 
Starting at the most abstract layer, its formulation in terms of types, we added a new layer, its formulation in terms of superoperator projectors, to bridge to a less abstract layer, its formulation in terms of sets of CJ operators -- here called state structures.
Correspondences between the characterizations can be summed up as follows: Given types $A$ and $B$, types $\overline{A}, A\otimes B$, and $A\rightarrow B$ can be constructed using the semantic rules $\{1,(,),\rightarrow\}$. 
To these types correspond projectors: given projectors $\Proj{}{A}$ and $\Proj{}{B}$, projectors $\CompProj{}{A}, \Proj{}{A}\otimes \Proj{}{B}$ and $\Proj{}{A}\rightarrow \Proj{}{B}$ can be constructed using the algebraic rules $\{\overline{\:\cdot\:},\otimes\}$ as in equations \eqref{eq:det_fctal_proj}, \eqref{eq:det_tensor_proj}, and \eqref{eq:det_map_proj}.
To these types and projectors correspond state structures: given state structures $\Alg{A}{}$ and $\Alg{B}{}$ as in equation \eqref{def:struct}, the state structures $\CompAlg{A}, \CompAlg{A} \otimes \Alg{B}{}$, and $\Alg{A}{}\rightarrow\Alg{B}{}$  can be constructed by requiring that, respectively, conditions \eqref{eq:fctal_def}, \eqref{eq:A_otimes_B_def}, and \eqref{eq:A_to_B} hold.

The main novelty is the prec connector, Prop. \ref{prop:semi-causal_compo}, giving the state structure $\CompAlg{A}\prec \Alg{B}{}$ of maps which are no signaling from the output to the input. 
This allows to speak about the possible signaling structure of the maps characterized by the projective framework.
In addition to that, the properties of the prec hints towards a systematic way to compare different projective expressions as they allow to expand them as unions and intersections of expressions involving only this prec connector.

In the meantime, we identified the features of the algebra of projectors. We showed in particular that it was a Boolean algebra under the rules $\{\overline{\:\cdot\:}, \cap, \cup\}$, which is subsequently extended into a model of linear logic under the rules $\{\overline{\:\cdot\:}, \cap, \cup, \otimes, \overline{\:\cdot\:} \rightarrow \:\cdot\: \}$. 
The introduced prec fits naturally as a multiplicative rule in-between $\otimes$ and $\overline{\:\cdot\:} \rightarrow \:\cdot\:$, which moreover has the property of commuting with the negation.

Finally, we rederived the proof of equivalence between combs based on quantum states, quantum combs, and networks based on quantum channels \cite{Bisio2018} as an example of the use of the projective characterization.
This equivalence explains in particular why these structures feature a definite signaling direction, which comes with a clear physical implementation.
One can wonder how often two differently defined structures with a common no signaling subset are equivalent. And how often do the structures based on the transformation connector (comb-like, which can present ICO in general) reduce in normal form to structures based on a prec (network-like, which cannot present ICO)? Preliminary results indicate these equivalences to be rare, but the question of classifying the isomorphic constructions is left open for future work. 
Another related but more general question that can be asked in terms of projectors is to determine for which base state structures a given construction based on the transformation reduces to a construction based on the one-way signaling composition. In other words, given an abstract way of constructing higher-order transformations, for which base state structures does it reduce to a network?

Several aspects of the theory of higher-order transformations have not been explored here and remain open for future work.
The composition of types in the sense of Ref. \cite{Bisio2018} has only been partially addressed. Knowing the projector characterizing a transformation is sufficient to know what will be the projector characterizing its set of outputs. However, if the input state structure is now restricted to a subset with a different projector, is there a rule to apply to the projector of the transformation in order to get the projector of the possibly now restricted output set?
For example, an $n$-comb takes $(n-1)$-combs as inputs and outputs a 1-comb. One can ask what happens when the inputs are restricted to the no signaling subset: when $(n-1)$ 1-combs are plugged into an $n-$comb instead, is the output set still the full set of 1-combs? 

The notion of causal separability \cite{OCB2012,Giarmatzi2016,Wechs2019,MPM} still is to be defined for general higher-order transformations. The link between the prec and single signaling direction, as well as the normal form, should give a basis for the definition: the normal form can be used to decompose any process into a sum of causally ordered pieces. If this sum is convex, this gives a sufficient condition for causal separability. Nonetheless, there remains to tackle the question of dynamical causal orders, which is a probabilistic notion, and therefore not naturally captured by the formalism of projectors. 

The issue of realizability has not been explored. That is, given a state structure, is it possible to realize each of its elements in a lab experiment? Is there a systematic way to relate these abstract mathematical objects to a circuit realization, as is the case for quantum combs \cite{Chiribella2009} and for some process matrices in terms of time-delocalized subsystems \cite{Oreshkov_2019,Wechs2022}? One may expect that the normal decomposition of general state structures into the union of several one-way signaling structures will prove useful for answering this question.

\textbf{Note added:} After completion of the main results of this work and during the preparation of the manuscript, we became aware of an independent work by Simmons and Kissinger \cite{Simmons2022}, which also investigates the signaling structure of higher-order quantum transformations. Some of the results presented here were also found in this other work using a different formalism.

\begin{acknowledgments}
The authors are grateful for the extensive reviews provided by anonymous reviewers, which contributed greatly to improving this work. They also acknowledge useful technical discussions with Jessica Bavaresco and Aleks Kissinger. 

T. H. is grateful to the organizers and participants of the 2021 Sejny Summer Institute for listening to his rambling interventions about nesting supermaps -- in particular to Pablo Arnault, Titouan Carette, Nat{\'a}lia M{\'o}ller, Eleftherios Tselentis, and Augustin Vanrietvelde for technical discussions. T. H. is especially indebted to Titouan Carette for pointing out the connection with linear logic. In addition, T.H. would like to thank Esteban Castro-Ruiz and Joseph Cunningham for their help and comments, as well as Alexandra Elbakyan for providing access to the scientific literature. 
T. H. benefited from the support of the French Community of Belgium within the framework of the financing of a FRIA grant, then from the Slovak grants ID\# APVV-22-0570 and ID\# VEGA 2/0128/24.

This publication was made possible through the support of the ID\# 61466 grant and ID\# 62312 grant from the John Templeton Foundation, as part of the  \href{https://www.templeton.org/grant/the-quantum-information-structure-of-spacetime-qiss-second-phase}{`The Quantum Information Structure of Spacetime' Project (QISS)}, and by the ID\# 63683 grant from the John Templeton Foundation, as part of the \href{https://www.withoutspacetime.org}{``WithOut SpaceTime'' Project (WOST)}.. The opinions expressed in this publication are those of the authors and do not necessarily reflect the views of the John Templeton Foundation. This work was supported by the Program of Concerted Research Actions (ARC) of the Université libre de Bruxelles and from the F.R.S.-FNRS under project CHEQS within the Excellence of Science (EOS) program. 
O. O. is a Senior Research Associate of the Fonds de la Recherche Scientifique (F.R.S.-FNRS). 

Illustrations were drawn using \href{https://www.drawio.com/}{draw.io}.
\end{acknowledgments}

\bibliography{references.bib}
\appendix
\section{Superoperator projectors\label{sec:projo}}
This appendix explores the properties of the orthogonal superoperator projectors that are used as the principal tool for the characterization in the main text. These are defined under the notion of a projector on an operator system in Section \ref{sec:projos}. Six operations whose purpose is motivated by the main text are then defined on this set. These are: the intersection and union, $\cap,\cup$, the negation, $\overline{\:\cdot\:}$, the tensor and transformation, $\otimes,\rightarrow$, and the prec, $\prec$, successively reviewed in Sec. \ref{sec:projos_prop}, \ref{sec:projo_Boolean}, \ref{sec:projo_LL}, and \ref{sec:projo_prec}. Some important properties of these operations are proven for the purposes of the main text, among which the fact that each operation applied on commuting projector(s) on operator system(s) results in commuting projector(s) on operator system(s), justifying the assumption of commutation while proving Theorem \ref{prop:algebra}.

\subsection{Definition and required properties of superoperator projectors \label{sec:projos}}
\paragraph{Projectors on operator systems.\label{sec:projos_on_os}}
As explained below Def. \ref{def:OS}, a projector on an operator system $\Proj{}{A}: \LinOp{\Hilb{A}}\rightarrow \LinOp{\Hilb{A}}$ is a linear orthogonal superoperator projector to a self-adjoint subspace that contains the identity. 

A linear superoperator $\Proj{}{A}$ is a \textit{projector} on a subspace of $\LinOp{\Hilb{A}}$ if and only if it is idempotent,
\begin{equation}\label{eq:projo_1}
    \Proj{}{A} \circ \Proj{}{A} = \Proj{}{A}\:.
\end{equation}
An operator $V\in \LinOp{\Hilb{A}}$ belongs to the subspace defined by $\Proj{}{A}$ if and only if
\begin{equation}
    \ProjOn{}{A}{V} = V \:.
\end{equation}%
$\Proj{}{A}$ projects to a subspace closed under the adjoint if and only if it obeys 
\begin{equation}\label{eq:projo_2}
    \ProjOn{}{A}{V} = V \Rightarrow \ProjOn{}{A}{V^\dag} = V^\dag\:,
\end{equation}
which can be written concisely as $\Proj{}{A} \circ \dag  = \dag \circ \Proj{}{A}$ where $\dag$ means `taking the adjoint in $\LinOp{\Hilb{A}}$'. 

A projector is \textit{orthogonal} if it does not increase the norm of operators:
\begin{equation}
	\big|\big| \ProjOn{}{A}{V} \big|\big| \leq \big|\big| V \big|\big| \:,
\end{equation}
where $\big|\big| V \big|\big| \equiv \sqrt{\TrX{}{V^\dag \cdot V}}$ is the Hilbert-Schmidt norm. 
Note that this condition is equivalent to requiring the projector to be self-adjoint with respect to the Hilbert-Schmidt inner product $\InProd{V'}{V}\equiv \TrX{}{V^{'\dag}\cdot V}$ (see \textit{e.g.} Ref. \cite[Theorem 10.5]{Roman2008}),
\begin{equation}\label{eq:projo_3}
	\TrX{}{\ProjOn{}{A}{V'}^\dag\cdot V} = \TrX{}{V'^\dag \cdot \ProjOn{}{A}{V}}\:,\quad \forall V,V'\:.
\end{equation} 
As the projectors appearing in the main text are used to talk about the subspace spanned by a collection of elements rather than the elements themselves, what is truly relevant in the projective characterization is the image of the projectors, not the projectors themselves. Consequently, there is no loss of generality in assuming every projector to be orthogonal since there always exists an orthogonal projector with the same image as any given projector. Having this property will greatly simplify some of the proofs. 
For a linear map $\mathcal{M}\in\LinOpB{\LinOp{\Hilb{A}}}{\LinOp{\Hilb{A}}}$, we note its Hilbert-Schmidt adjoint with a $*$ i.e., $\mathcal{M}^*$ is the unique map satisfying $\InProd{V'}{\MOn{V}} = \InProd{\mathcal{M}^*(V')}{V}$ for all $V, V'$. Using this notation, the self-adjointness condition can be written concisely as $\Proj{}{A}=\Proj{*}{A}$.

As the projectors should project onto operator systems, which are subspaces that contain the identity, it should be true that their images contain the span of the identity as a subspace. A necessary and sufficient condition for that to be true is \begin{equation}\label{eq:projo_4}
    \Proj{}{A} \circ \mathcal{D}_A  = \mathcal{D}_A \circ \Proj{}{A} = \mathcal{D}_A\:,
\end{equation}
with $\mathcal{D}_A$ defined as in Eq. \eqref{eq:depolop}. 

\paragraph*{Definition and example.}
Summarizing, a projector on an operator system is a linear superoperator $\Proj{}{A} \in \LinOpB{\LinOp{\Hilb{A}}}{\LinOp{\Hilb{A}}}$ obeying the defining conditions \eqref{eq:projo_1}, \eqref{eq:projo_2}, and \eqref{eq:projo_4}. Moreover, for the purpose of this article, it is assumed to be self-adjoint without loss of generality, condition \eqref{eq:projo_3}. 

Putting everything together, the projector appearing in Eq. \eqref{eq:state_char_proj} is an orthogonal projector on an operator system, meaning it obeys the following set of conditions: 
\begin{subequations}\label{eq:projo}%
    \begin{gather}%
        \Proj{}{A} \circ \Proj{}{A} = \Proj{}{A}\:;\label{eq:projo_cond}\\
        \Proj{}{A} \circ \dag  = \dag \circ \Proj{}{A}\:; \label{eq:projo_HP} \\
        \Proj{*}{A}= \Proj{}{A}\:; \label{eq:projo_SA}\\
        \Proj{}{A} \circ \mathcal{D}_A  = \mathcal{D}_A \circ \Proj{}{A} = \mathcal{D}_A \label{eq:projo_unital}\:.
    \end{gather}%
\end{subequations}
When referring to `a projector' in this article, we implicitly mean `an orthogonal projector on an operator system', i.e., `a projector obeying conditions \eqref{eq:projo}', unless said otherwise.

Examples of projectors on operator systems over a 2-dimensional Hilbert space can be generated using the Pauli basis $\{\mathds{1},\sigma_x,\sigma_y,\sigma_z\}$ (see e.g., Ref. \cite{Nielsen2009} for its definition and properties): it can be checked that any projector `removing' one or several basis elements other than the identity is a projector on an operator system i.e., obeys \eqref{eq:projo}. 
Such is the case of the projector to the subspace of operators that are diagonal in the computational basis, which is used for quantum decoherence. This projector, noted $\Delta: \LinOp{\Hilb{A}} \rightarrow \LinOp{\Hilb{A}}$ and which acts as $\Delta(\cdot) \equiv \frac{\mathds{1}}{2} \TrX{}{\cdot} + \frac{\sigma_z}{2} \TrX{}{\sigma_z\: \cdot }$, is the projector to the subspace spanned by $\{\mathds{1},\sigma_{z}\}$, thus a projector on an operator system. 

\paragraph{Sets of commuting base projectors.\label{sec:projos_base_sets}}

For the purposes of this work, it is assumed that all the projectors under consideration always commute pairwise when acting on the same tensor factor of a Hilbert space. 
Remark that the commutation of two projectors is a sufficient condition for their composition to be a projector as well (see e.g., Ref. \cite[Theorem 2.26 3)]{Roman2008}). 

At first glance, the physical heuristic behind the assumption is that we are only interested in building (higher-order) maps between subsystems of a fixed kind: if Alice's system is a qubit and Bob's is a classical bit, this will not change whether we consider their joint state or a channel between them; the framework developed in this article is only used to characterize their joint state space or the set of channels they can share, not what happens when Alice suddenly demotes her system to a bit. The characterization is done by combining the base projectors of Alice and Bob, which in this case are $\mathcal{I}_A$ and $\Delta_B$ (where $\mathcal{I}$ is the identity projector and $\Delta$ is the projector to the diagonal subspace as defined above). For that reason, it can be assumed that there is only one base projector associated with each factor of a Hilbert space at the start of the construction. 

The rules introduced in the main text and detailed below then lead to several ways to combine the projectors of Alice and Bob into a projector acting on space $\LinOp{\Hilb{A}\otimes \Hilb{B}}$. Hence, the `Alice+Bob' system can be of several different natures: it can be bipartite states, but also channels (in CJ representation), bipartite measurements, etc. All of these correspond to different projectors, and each of these projectors can in turn be seen as the base projectors in the coarse-grained, `Alice+Bob' description. One of the things shown in this appendix is precisely that every introduced rule to build new projectors out of known ones will preserve commutation. Hence, any two projectors characterizing systems of a different type and obtained by relating the same base projectors on the same subsystems using any of the various algebraic rules defined in the following will commute by construction. 

\paragraph*{Non-singleton set of base projectors.} 
Upon closer inspection, it may still be interesting to study what happens when the base subsystems can be picked among a set of possible types. For instance, it is very natural to extrapolate the characterization of a class of processes to a case in which, say, Alice's state preparation becomes a discarding procedure (graphically, it amounts to changing Alice's wire from facing up to facing down). This replacement actually corresponds to the `not' operation in the algebra, so it will preserve commutation as will be proven below. Hence, any base projector can be associated with a second base projector, its negation. In addition, the operation of `intersection' and `union' will also result in two new projectors that commute with both the base projector and its negation. Therefore, even when one considers one base projector per subsystem, it is always extensible to a set of at least four pairwise commuting base projectors. 
There is no point in restricting the set of base projectors then and one can consider any commuting set of base projectors in general. 
Therefore, for the projective characterization developed in this work, the general setting considers $k$ parties $A,B,...$, each of which having access to $n_A, n_B, ...$ base projectors $\{\Proj{(1)}{A},\Proj{(2)}{A},..., \Proj{(n_A)}{A}\}$, $\{\Proj{(1)}{B},\Proj{(2)}{B},..., \Proj{(n_B)}{B}\}$, ... corresponding to the different types of subsystems they may have access to. The commutation assumption, which can be expressed as
\begin{equation}\label{eq:projos_comm_1}
    \Proj{(i)}{X} \circ \Proj{(j)}{X} =  \Proj{(j)}{X} \circ  \Proj{(i)}{X} \:, \quad \forall X,\,\forall i,j\:.
\end{equation}
As mentioned in the related comment made at the end of Sec. \ref{sec:algebra_the_algebra}, these will not only encompass all higher-order objects that could have been built using a fine-graining of the subsystem. For example, it can also include scenarios in which Alice can choose whether she wants to use a classical or quantum bit as her system to model scenarios featuring potential decoherence. In our framework, this corresponds to Alice having a set of base projectors made of two elements, $\{\mathcal{I}_A,\Delta_A\}$ (which will then be extended to a set of eight elements by using the not, intersection and union operations). We want to bring to the reader's attention that since these two projectors commute, all the results of this paper will still hold, albeit it is impossible to build $\Delta$ out of $\mathcal{I}$ using any of the rules presented in this article. This is thus a situation characterized by our result but not considered in this article, which opens a potential direction for future research.

\paragraph*{Commutation with the transposition.} A follow-up consequence of assuming that all projectors commute pairwise is that, when acting on self-adjoint operators, they can be assumed to commute with the transposition induced by the CJ isomorphism. This condition will be required simply because it removes the necessity of keeping track of the transposes, resulting in less clustered formulae. 

Here, transposition is meant in a more general sense than matrix transpose. The CJ correspondence depends on a choice of basis, namely the one in which the matrix transpose that appears in Eq. \eqref{eq:CJ} is performed. This implicitly means that the standard bases of $\Hilb{A}$ and $\Hilb{B}$ were chosen to define the isomorphism. However, this is an arbitrary choice. Eq. \eqref{eq:CJ} is a shortcut for
\begin{equation}
\begin{aligned}
    M &= \mathrm{Tr}_{A'B'}\big[(\phi^+_{AA'} \otimes \phi^+_{B'B}) \\
     &\big(\mathds{1}_A \otimes (\mathcal{I} \otimes \mathcal{M})\{\phi^+\}_{A'B'} \otimes \mathds{1}_B \big)\big] ,
\end{aligned}
\end{equation}
and it is only because $\phi^+ $ was chosen to be $ \phi^+=\sum_{i,j} \ket{i}\bra{j}\otimes \ket{i}\bra{j}$ that it corresponds to matrix transposition in the standard basis. 
Yet, any other choice of unnormalized maximally entangled density matrix $\Phi$ is equally good. Let $\{e_\mu\}_{\mu=0}^{d_A^2-1}$ be an orthonormal basis of $\LinOp{\Hilb{A}}$. Then this basis can always be associated with an unnormalized maximally entangled state of the form $\Phi_{A'A} = \sum_\mu e_\mu \otimes e_\mu$, which is in turn used to define a transposition with respect to $\{e_\mu\}$ by
\begin{equation}
    V_A^T \equiv \TrX{A'}{ \Phi \: (V_{A'}\otimes \mathds{1}_A)} ,\quad \forall V_{A'} \in \LinOp{\Hilb{A'}}\:.
\end{equation}
This abstract notion of transposition happens to correspond to the one of matrix transpose with respect to a basis $\{\ket{\varphi_i}\}_{i=0}^{d_A-1}$ of $\Hilb{A}$ precisely when the basis $\{e_\mu\}$ is $\{e_\mu := \ket{\varphi_i}\bra{\varphi_j}\}_{\mu=(i,j)}$, but it does not have to have this form to define a CJ correspondence. All that matters is that a pair of bases for $\LinOp{\Hilb{A}}$ and $\LinOp{\Hilb{B}}$ is chosen.

This freedom of basis in the definition of the CJ correspondence can be leveraged to make all projectors commute with the transpositions it induces. For instance, this work typically deals with equations of the form
\begin{equation}
    \mathcal{M}(V_A) = \ProjOn{}{B}{\MOn{\ProjOn{}{A}{V_A}}} \:,
\end{equation}
where $\mathcal{M}$ is CP map, $V_A$ a positive operator, and $\Proj{}{A}$ and $\Proj{}{B}$ are projectors on operator systems. 
Our methods will use the reverse direction of the CJ correspondence \eqref{eq:CJ^-1} to rephrase the above into  
\begin{equation}\label{eq:app_transpo1}
    \mathcal{M}(V_A) = \ProjOn{}{B}{\big(\TrX{A}{M\:(\ProjOn{}{A}{V_A} \otimes \mathds{1})}\big)^T} \:,
\end{equation}
and then gather all projectors into a single expression as in
\begin{equation}\label{eq:app_transpo2}
    \mathcal{M}(V_A) = \big(\TrX{A}{((\Proj{}{A} \otimes \Proj{}{B})\{M\})\:(V_A \otimes \mathds{1})}\big)^T \:,
\end{equation}
where the tensor product of projectors is defined in a standard way (see Eq. \eqref{eq:tensor} or in App. \ref{sec:projo_prop_tensor} below). Going from Eq. \eqref{eq:app_transpo1} to \eqref{eq:app_transpo2} assumed that $\big(\ProjOn{}{B}{N_B}\big)^T=\ProjOn{}{B}{N_B^T}$ for all $N_B$. A similar rewriting can also be made when computing $\mathcal{M}^*(N_B)$, in which case one needs to assume that $\big(\ProjOn{}{A}{V_A}\big)^T=\ProjOn{}{A}{V_A^T}$ for all $V_A$. When all operators are self-adjoint, we now show that these assumptions can always be satisfied by picking a specific choice of bases to define the CJ isomorphism.

The general statement we prove is that if the base projectors form a set of $n_X$ pairwise commuting orthogonal projectors on operator systems over subsystem $X$, $\{\Proj{(1)}{X},\Proj{(2)}{X},\Proj{(3)}{X},\ldots,\Proj{(n_X)}{X}\}$, the following can be ensured:
\begin{equation}
    \big(\ProjOn{(i)}{X}{V}\big)^T = \ProjOn{(i)}{X}{V^T}\:, \quad \forall i, \: \forall V: \: V=V^\dag\:, \label{eq:projos_trans_1}
\end{equation}
for the transpositions induced by the CJ isomorphism. This is done by choosing the basis for the isomorphism to be the one diagonalizing the set of projectors. 

If this basis is chosen to be the one that diagonalizes the set of base projectors, i.e. if the base projectors is a set of $n$ pairwise commuting orthogonal projectors $\{\Proj{(1)}{X},\Proj{(2)}{X},\Proj{(3)}{X},\ldots,\Proj{(n_X)}{X}\}$ expressed in this basis as $\ProjOn{(i)}{X}{\cdot} = \sum_{\mu} c^{(i)}_\mu \: e_\mu \TrX{}{e_\mu^\dag \: \cdot}$ with $i=1,\ldots,n_X$ and $c^{(i)}_\mu \in \{0,1\}$, then 
\begin{equation}\label{eq:projo_commutes_1}
    \sum_\nu \ProjOn{(i)}{X'}{e_\nu} \otimes e_\nu = \sum_\nu e_\nu \otimes \ProjOn{(i)}{X}{e_\nu}\:,
\end{equation}
and it follows that the projectors commute with the transpose when applied on self-adjoint operators. Indeed, $\forall V:\: V = V^\dag$ and $\forall i$:
\begin{equation}\label{eq:projo_commutes_2}
    \begin{aligned}
        &\ProjOn{(i)}{X}{V^T}\\
        &= \ProjOn{(i)}{A}{ \TrX{X'}{(V_{X'} \otimes \mathds{1}_X) (\sum_\nu e_\nu \otimes e_\nu)} }\\
        &=  \TrX{X'}{(V_{X'} \otimes \mathds{1}_X) (\sum_\nu e_\nu \otimes \ProjOn{(i)}{X}{e_\nu}) } \\
        &= \TrX{X'}{(V_{X'} \otimes \mathds{1}_X) (\sum_\nu \ProjOn{(i)}{X'}{e_\nu} \otimes e_\nu) }\\
        &=\TrX{A'}{(\ProjOn{(i)}{X'}{V_{X'}} \otimes \mathds{1}_X) (\sum_\nu e_\nu \otimes e_\nu)}\\
        &= \left(\ProjOn{(i)}{X}{V}\right)^T\:,
    \end{aligned}
\end{equation}
where passing from the third to the fourth line was possible only because $V=V^\dag$ and $\Proj{(i)}{X}$ is self-adjoint by definition.

\paragraph*{Definition.}
Summarizing, any subsystem $X$ will be associated with $n_X$ different types given a priori, called their base types. These base types correspond to $n_X$ different state structures to which are associated $n_X$ base projectors noted $\{\Proj{(1)}{X},\Proj{(2)}{X},\Proj{(3)}{X},\ldots,\Proj{(n_X)}{X}\}$ or $\{\Proj{}{X},\Proj{'}{X},\Proj{''}{X},\ldots\}$ . For every self-adjoint operator $V\in \LinOp{\Hilb{X}}$, the following is always assumed to hold for a set of base projectors:
\begin{subequations}\label{eq:projos}
    \begin{gather}
        \Proj{(i)}{X} \circ \Proj{(j)}{X} =  \Proj{(j)}{X} \circ  \Proj{(i)}{X} \:, \quad \forall i,j\:; \label{eq:projos_comm} \\
        \big(\ProjOn{(i)}{X}{V}\big)^T = \ProjOn{(i)}{X}{V^T}\:, \quad \forall i\:. \label{eq:projos_trans} 
    \end{gather}
\end{subequations}

\paragraph*{Remark: non-pairwise-commuting set of base projectors.} 
The assumption that the base projectors all commute is reasonable for the scope of this paper and potential extensions, as argued in the above discussion. Yet it is not necessary: it was only made because there are no heuristical reasons to go beyond it. While it greatly simplifies the mathematics by allowing the intersection (or cap) operation $\cap$ to be defined as the composition of two projectors (Eq. \eqref{eq:cap} in the main text; its mathematical properties are studied in the next subsection below), dropping it may also open a direction for further research: the natural follow-up question to considering a higher-order theory based on a set of pairwise commuting base projectors would be to study what happens when the base projectors are no longer commuting. In this case, most of the results would have to be generalized in a version that does not use this assumption nor the related one that the projectors commute with the transpose. Nonetheless, besides pure mathematical curiosity, we do not have a clear idea of what the purpose of such a framework may be; let alone its physical interpretation.

\subsection{Adding cap and cup operations makes an algebra \label{sec:projos_prop}}
Here, we prove that superoperator projectors on operator systems constitute an algebra under certain rules. Assume $\Proj{}{A}$ and $\Proj{'}{A}$ to be two arbitrary projectors on operator systems, not necessarily the same. By definition, a projector is $\mathbb{C}$-linear:
\begin{equation}
    \ProjOn{}{A}{\sum_i q_i V_i} = \sum_i \;q_i\; \ProjOn{}{A}{V_i}\:,
\end{equation}
$q_i \in \mathbb{C}$.
This linearity allows us to carry on the addition `+' of $\LinOp{\Hilb{A}}$ at the level of the projectors:
\begin{equation}
    \left(\Proj{}{A}+\Proj{'}{A}\right)\{V\} \equiv \ProjOn{}{A}{V} + \ProjOn{'}{A}{V}\:.
\end{equation}
Using linearity again, one can define the negation `-' and scalar multiplication of projectors, thereby defining a vector space over $\mathbb{C}$. Associativity and commutativity are not hard to prove from there.

Projectors also have a natural `conjunction' operation that can be interpreted as a multiplication or as a logic `and', here nicknamed the \textbf{intersection} or the \textbf{cap} and noted via the `$\cap$' symbol. In the case of commuting projectors $\Proj{}{A}$ and $\Proj{'}{A}$, it is
\begin{equation}
    \Proj{}{A} \cap \Proj{'}{A} \equiv \Proj{}{A} \circ \Proj{'}{A} \:.
\end{equation}
This projector $\Proj{}{A} \cap \Proj{'}{A}$ characterizes the intersection of the operator systems $\Alg{A}{}$ and $\Alg{A}{}'$, see the main text.
It is straightforward to prove that it fits the definition of a conjunction: it is a binary operation distributive under the addition defined above and compatible with the scalar multiplication; it is both associative and commutative; for projectors satisfying Eqs. \eqref{eq:projos}, it is resulting in a projector satisfying Eqs. \eqref{eq:projos}; and it is idempotent when acting on projectors since:
\begin{equation}\label{eq:idempot}
    \Proj{}{A} \cap \Proj{}{A} \equiv \Proj{}{A} \circ \Proj{}{A} \equiv \Proj{2}{A} = \Proj{}{A} \:.
\end{equation}
In the above equation, we have also defined a shorthand notation for `squaring' under the multiplication $\cap$: $\Proj{}{A} \cap \Proj{}{A} \equiv \Proj{2}{A}$.

An issue revealed by squaring is that the operation `+' does not necessarily map projectors to a projector:
\begin{equation}
    \left(\Proj{}{A} + \Proj{'}{A}\right)^2 = \Proj{2}{A} + \Proj{'2}{A} + 2 \left( \Proj{}{A} \cap \Proj{'}{A} \right)\:.
\end{equation}
As the last term of the right-hand side is not always zero, $\left(\Proj{}{A} + \Proj{'}{A}\right)^2 \neq \Proj{}{A} + \Proj{'}{A}$ in general, so addition does not preserve idempotency. We wish nonetheless to have operations that keep us in the vector space of commuting projectors, so we redefine the addition to be the union of two projectors (see \textit{e.g.} Ref. \cite{Piziak1999a}). This is inspired by set theory: as $\Proj{}{A}$ maps to a subspace $\mathscr{A}\subset \LinOp{\Hilb{A}}$ and $\Proj{'}{A}$ to $\mathscr{A}' \subset \LinOp{\Hilb{A}}$ we want an addition `$\cup$' so that $\Proj{}{A} \cup \Proj{'}{A}$ maps to $\mathscr{A}\cup \mathscr{A}'$. This requirement is realized by the `disjunction' operation, that can be interpreted as an addition or as logic `or', nicknamed the \textbf{union} or the \textbf{cup} (see also Ref. \cite{MPM}):
\begin{equation}
    \Proj{}{A} \cup \Proj{'}{A} \equiv \Proj{}{A} + \Proj{'}{A} - \Proj{}{A} \cap \Proj{'}{A} \:.
\end{equation}
As the name hints, $\Proj{}{A} \cup \Proj{'}{A}$ characterizes the union of the operator systems $\Alg{A}{}$ and $\Alg{A}{}'$. It is again straightforward to prove that it fits the definition from the properties inherited by the addition and the cap: the cup is a binary operation distributive under the addition defined above and compatible with the scalar multiplication; it is both associative and commutative; for projectors satisfying Eqs. \eqref{eq:projos}, it results in a projector satisfying Eqs. \eqref{eq:projos}; and it inherits the idempotency from the idempotency of the cap as $\Proj{}{A} \cup \Proj{}{A} = \Proj{}{A} + \Proj{}{A} - (\Proj{}{A} \cap \Proj{}{A}) = \Proj{}{A} + \Proj{}{A} - \Proj{}{A}$.

These caps and cup connectors give a quick way to prove subspace inclusions. The subspace spanned by state structure $\mathscr{A}'$ is embedded in the subspace spanned by another state structure $\mathscr{A}$ if and only if either of the following holds (see e.g., Ref. \cite{Roman2008})
\begin{subequations}\label{eq:cap_cup_inclu_app}
    \begin{align}
        \Proj{}{A} \cap \Proj{'}{A} = \Proj{'}{A} \:;\\
        \Proj{}{A} \cup \Proj{'}{A} = \Proj{}{A} \:.
    \end{align}
\end{subequations}
The intersection and union of two projectors project to, respectively, the intersection and union of their images \cite[Theorem 2.26]{Roman2008} i.e.,
\begin{subequations}
    \begin{align}
        \mathrm{Im}\left\{\Proj{}{A} \cap \Proj{'}{A}\right\} = \mathrm{Im}\left\{\Proj{}{A}\right\}\cap \mathrm{Im}\left\{\Proj{'}{A}\right\} \:,\\
        \mathrm{Im}\left\{\Proj{}{A} \cup \Proj{'}{A}\right\} = \mathrm{Im}\left\{\Proj{}{A}\right\} \cup \mathrm{Im}\left\{\Proj{'}{A}\right\}\:,
    \end{align}
\end{subequations}
where the $\cap$ and $\cup$ appearing in the right-hand side of the above are the set-theoretic notions of intersection and union, respectively.  
Condition \eqref{eq:projo_unital} is then recast into
\begin{equation}\label{eq:id_pres}
    \Proj{}{A} \cap \mathcal{D}_A = \mathcal{D}_A \:.
\end{equation}

As defined in Eq. \eqref{eq:subset_def} of the main text, obeying Eqs. \eqref{eq:cap_cup_inclu_app} is concisely noted as $\Proj{'}{A}\subseteq\Proj{}{A}$ and means that such an equation must be understood in terms of the image of the projectors. Accordingly, an equivalence of two projectors is defined as the equivalence of their images:
\begin{equation}
    \Proj{}{A} = \Proj{'}{A} \iff \Im{\Proj{}{A}} = \Im{\Proj{'}{A}}\:.
\end{equation}

\paragraph*{The algebra.} Commuting projectors on operator systems thus form an algebra over $\mathbb{C}$ since they are closed under $(\cap,\cup)$ operations as well as scalar multiplication. 
The cap and cup can indeed be seen as some multiplication and addition of projectors as the former distributes over the latter:
\begin{equation}\label{eq:cap_cup_dist}
    \left( \Proj{}{A} \cup \Proj{'}{A}\right)\cap \Proj{''}{A} = \left(\Proj{}{A} \cap\Proj{''}{A} \right)\cup \left(\Proj{'}{A} \cap \Proj{''}{A} \right)\:.
\end{equation}
Indeed, $\left( \Proj{}{A} \cup \Proj{'}{A}\right)\cap \Proj{''}{A} = \left( \Proj{}{A} + \Proj{'}{A} - (\Proj{}{A} \cap \Proj{'}{A}) \right)\cap \Proj{''}{A} = \big(\Proj{}{A} \cap \Proj{''}{A}\big) + \big(\Proj{'}{A} \cap \Proj{''}{A}\big) - \big(\Proj{}{A} \cap \Proj{'}{A}\big) \cap \Proj{''}{A}$, and $\big(\Proj{}{A} \cap \Proj{'}{A}\big) \cap \Proj{''}{A} = \big(\Proj{}{A} \cap \Proj{'}{A}\big) \cap \big(\Proj{''}{A} \cap \Proj{''}{A}\big) = \big(\Proj{}{A} \cap \Proj{''}{A}\big) \cap \big(\Proj{'}{A} \cap \Proj{''}{A}\big)$.

The projectors $\mathcal{I}$ and $\mathcal{D}$ defined in Eqs. \eqref{eq:Id} and \eqref{eq:depolop} above play a special role in it: one can identify the identity superoperator $\mathcal{I}$ as the multiplicative identity or `\textbf{unit}' of the algebra of projectors since for all projectors $\Proj{}{}$,
\begin{equation}\label{eq:cap_unit}
    \mathcal{I} \cap \Proj{}{} = \Proj{}{} \:;
\end{equation}
it is also the additive absorbing element i.e.
\begin{equation}
    \mathcal{I} \cup \Proj{}{} = \mathcal{I} \:.
\end{equation}
The other way around, the depolarizing superoperator $\mathcal{D}$ is the multiplicative absorbing element or `\textbf{zero}' of the algebra,
\begin{equation}
    \mathcal{D} \cap \Proj{}{} = \mathcal{D}\:,
\end{equation}
and the additive identity,
\begin{equation}\label{eq:cup_unit}
    \mathcal{D} \cup \Proj{}{} = \Proj{}{}\:.
\end{equation}
Because of that property, it should be clear that for any element of the algebra, the following is true:
\begin{equation}
    \mathcal{D} \subseteq \Proj{}{} \subseteq \mathcal{I} \:.
\end{equation}
As a consequence, this is an algebra of idempotent elements equipped with a partial ordering $\subseteq$ that have a common greatest element $\mathcal{I}$ and a common least element $\mathcal{D}$. 
Moreover, the intersection and union of any two elements $\Proj{}{},\Proj{'}{}$ are uniquely defined elements of the algebra. It is direct to check that the following property holds
\begin{equation}
    \Proj{}{} \cap \Proj{'}{} \subseteq \Proj{}{} \subseteq \Proj{}{} \cup \Proj{'}{}\:,
\end{equation}
and so does the analog with $\Proj{'}{}$. In other words, the algebra has a lattice structure (which directly comes from the lattice structure of subspace inclusions; see e.g., Ref. \cite{Roman2008}).

\paragraph*{Remark.}Although referring to the cap and cup as `multiplication' and `addition' provides an intuitive meaning to the connectors, in the following we will call them respectively `additive disjunction and conjunction'. This choice is made in order to avoid ambiguity with the multiplicative conjunction and disjunction that will be introduced in a next section.

\subsection{Adding negation makes a Boolean algebra \label{sec:projo_Boolean}}
In the algebra of superoperator projectors, the new object that naturally appears in Proposition \ref{theo:det_fctal}, $\CompProj{}{A}$, can be seen as an operation on the original projector $\Proj{}{A}$, whence the `bar over $\Proj{}{A}$' notation. This new operation is defined for any projector, and it promotes the algebra to a \textbf{Boolean algebra}. 

It is a Boolean algebra if, in addition to the conjunction (`multiplication') operation $\cap$ and the disjunction (`addition') operation $\cup$, any element is idempotent and has a well-defined complement (or logic `not', thereby referred to as `negation'\footnote{We could also have used the terminology `inverse', `dual', or we could have stuck with `complement', but these words are already used for different concepts in quantum information and functional analysis. This choice is made by consequence to avoid ambiguities.}) characterized by the condition that the addition of any projector with its negation yields the additive identity, i.e. 
\begin{equation}\label{eq:A_cup_nA}
    \Proj{}{A} \cup \CompProj{}{A} = \mathcal{I}_A\:,
\end{equation}
where $\CompProj{}{A}$ is the negation of $\Proj{}{A}$. This is exactly what the quasi-orthogonal complement of Proposition \ref{theo:det_fctal} is doing. First, compute the cap as $\Proj{}{A} \cap \CompProj{}{A} = \Proj{}{A} \circ \left(\mathcal{I}_A - \Proj{}{A} + \mathcal{D}_A\right) = \Proj{}{A}-\Proj{2}{A} +\mathcal{D}_A = \Proj{}{A}-\Proj{}{A}+\mathcal{D}_A$, so
\begin{equation}\label{eq:A_cap_nA}
    \Proj{}{A} \cap \CompProj{}{A} = \mathcal{D}_A\:.
\end{equation}
Then, it is true for the cup because $\Proj{}{A} \cup \CompProj{}{A} = \Proj{}{A} + \mathcal{I}_A - \Proj{}{A} + \mathcal{D}_A  - (\Proj{}{A} \cap \CompProj{}{A}) = \mathcal{I}_A$. 
Thus, we define the \textbf{negation} of any projector $\Proj{}{A}$ as
\begin{equation} \label{eq:neg}
    \CompProj{}{A} \equiv \mathcal{I}_A - \Proj{}{A} + \mathcal{D}_A  \:.
\end{equation}
In addition to property \eqref{eq:A_cup_nA}, one can verify from Eq. \eqref{eq:A_cap_nA} that the projector and its negation intersect at the zero of the algebra, which in this case is indeed $\mathcal{D}_A$ (refer to the diagram in Fig. \ref{fig:fctal}). 
As it should be, negation is an involution: 
\begin{equation}\label{eq:invol}
    \overline{\overline{\mathcal{P}}}_A = \mathcal{P}_A\:,
\end{equation}
and it is closed so that the negation of any projector is a projector as we now prove.
\paragraph*{Negation preserves projector on operator system:}
Note that since $\mathcal{I}_A$ and $\mathcal{D}_A$ are Hermitian-preserving (HP) self-adjoint (SA), and commute with the transpose (CT), Eqs. \eqref{eq:projo_HP},  \eqref{eq:projo_SA}, and \eqref{eq:projos_trans}, the negation of a projector is HP, SA, and CT provided the original projector is.
The idempotency property \eqref{eq:projo_cond} is preserved, $\CompProj{2}{A} = \left(\mathcal{I}_A - \Proj{}{A} + \mathcal{D}_A \right) \cap \left(\mathcal{I}_A - \Proj{}{A} + \mathcal{D}_A \right) = \mathcal{I}_A - \Proj{}{A} + \mathcal{D}_A  - \Proj{}{A} + \Proj{2}{A} - \mathcal{D}_A  + \mathcal{D}_A  -\mathcal{D}_A + \mathcal{D}_A = \mathcal{I}_A - \Proj{}{A}+ \mathcal{D}_A $, hence
\begin{equation}\label{eq:n^2}
    \CompProj{2}{A} = \CompProj{}{A}\:.
\end{equation}
To prove that the negation of a projector on an operator system is a projector on an operator system itself, it remains to show that the identity is still contained in the negation, Eq. \eqref{eq:projo_unital},
\begin{equation}\label{eq:n_depol}
    \mathcal{D}_A \cap \CompProj{}{A} = \mathcal{D}_A\:.
\end{equation}
This follows from $\mathcal{D}_A \cap\CompProj{}{A}=\mathcal{D}_A\cap \left(\mathcal{I}_A - \Proj{}{A} +\mathcal{D}_A \right) = \mathcal{D}_A -\mathcal{D}_A  + \mathcal{D}_A$. 
If we did not have the heuristic of Proposition \ref{theo:det_fctal}, this could be interpreted as a reason for choosing Eq. \eqref{eq:neg} as the definition of negation instead of the orthogonal complement $\mathcal{I}_A - \Proj{}{A}$. In this latter case, identity does not belong to the negated state structure.

\paragraph*{Negation preserves commutativity:}It should also be clear from $\Proj{}{A} \cap \CompProj{}{A} = \Proj{}{A} \cap \left(\mathcal{I}_A - \Proj{}{A} + \mathcal{D}_A\right) = \Proj{}{A}-\Proj{2}{A} +\mathcal{D}_A = \left(\mathcal{I}_A - \Proj{}{A} + \mathcal{D}_A\right) \cap \Proj{}{A}$ that negated projectors commute with the original ones, and, by the same kind of proof, that the negations of two commuting projectors still commute:
\begin{equation}
	\Proj{}{A} \cap \Proj{'}{A} = \Proj{'}{A} \cap \Proj{}{A} \Longrightarrow \CompProj{}{A} \cap \overline{\Proj{'}{A}} = \overline{\Proj{'}{A}} \cap \CompProj{}{A} \:.
\end{equation}

\paragraph*{De Morgan duality:}Moreover, the De Morgan laws are valid for the Boolean algebra of projectors:
\begin{subequations} \label{eq:n_deMorgan}
    \begin{gather}
        \overline{\Proj{}{A}\cup\Proj{'}{A}} = \overline{\Proj{}{A}} \cap \overline{\Proj{'}{A}} \:, \label{eq:deMorgan_cup_to_cap} \\
        \overline{\Proj{}{A}\cap\Proj{'}{A}} = \overline{\Proj{}{A}} \cup \overline{\Proj{'}{A}} \:. \label{eq:deMorgan_cap_to_cup} 
    \end{gather}
\end{subequations}
The proof is more involved: $\overline{\Proj{}{A}\cup\Proj{'}{A}} = \mathcal{I}_A - \Proj{}{A}\cup \Proj{'}{A} + \mathcal{D}_A = \mathcal{I}_A - (\Proj{}{A}+\Proj{'}{A}-(\Proj{}{A}\cap \Proj{'}{A})) + \mathcal{D}_A = \mathcal{I}_A - \Proj{}{A} + \mathcal{D}_A + \mathcal{D}_A - \Proj{'}{A} - \mathcal{D}_A + (\Proj{}{A}\cap \Proj{'}{A}) - \mathcal{D}_A + \mathcal{D}_A = (\mathcal{I}_A-\Proj{}{A}+D)\cap(\mathcal{I}_A-\Proj{'}{A}+\mathcal{D}_A)=\CompProj{}{A} \cap \overline{\Proj{'}{A}}$. The second identity directly ensues. 
The inclusion relation $\subseteq$, defined by conditions \eqref{eq:inclusion_condition}, is by consequence reversed when the projectors it involves are negated. Indeed, if $\Proj{'}{A} \subseteq \Proj{}{A}\Rightarrow \Proj{'}{A} \cap \Proj{}{A} = \Proj{'}{A}$, negating both sides of the cap yields $\overline{\Proj{'}{A}} \cap \CompProj{}{A} = \overline{\Proj{'}{A} \cup \Proj{}{A}}$. But the term under the negation in $\overline{\Proj{'}{A} \cup \Proj{}{A}}$ is equivalent to $\Proj{}{A}$ since the inclusion also implies that $\Proj{'}{A} \subseteq \Proj{}{A}\Rightarrow \Proj{'}{A} \cup \Proj{}{A} = \Proj{}{A}$. Whence, $\overline{\Proj{'}{A}} \cap \CompProj{}{A} = \overline{\Proj{'}{A} \cup \Proj{}{A}} = \CompProj{}{A}$, which by definition is the inclusion $\CompProj{}{A} \subseteq \overline{\Proj{'}{A}}$. This reasoning works in both ways, proving
\begin{equation}\label{eq:inclusion_duality_app}
   \Proj{'}{A} \subseteq \Proj{}{A}\:\iff\: \CompProj{}{A} \subseteq \overline{\Proj{'}{A}}\:,
\end{equation}
which is equation \eqref{eq:inclusion_duality} in the main text.

\subsection{Adding tensor and transformation almost makes Linear Logic\label{sec:projo_LL}}
\subsubsection{The tensor operation \label{sec:projo_prop_tensor}}
For Definition \ref{prop:tensor}, we also defined a tensor product at the level of the projectors. Here we now present its algebraic properties so that we can meaningfully consider $\mathscr{A} \otimes \mathscr{B} \subset \LinOp{\Hilb{A}\otimes\Hilb{B}}$ when we know that $\mathscr{A}$ is characterized by a projector $\Proj{}{A}$ acting on $\LinOp{\Hilb{A}}$ and that $\mathscr{B}$ is by $\Proj{}{B}$. Because of the isomorphism $\LinOp{\Hilb{A}\otimes\Hilb{B}} \cong \LinOp{\Hilb{A}} \otimes \LinOp{\Hilb{B}}$, the definition of \textbf{no signaling composition}, nicknamed \textbf{tensor}, is straightforward:
\begin{multline}
    \left(\Proj{}{A} \otimes \Proj{}{B} \right) \left\{\sum_i \; q_i \; \left(V_i \otimes U_i\right)\right\} \equiv\\ \sum_i\;q_i\;\left(\ProjOn{}{A}{V_i}\otimes \ProjOn{}{B}{U_i}\right) \:,
\end{multline}
and it should hold for all $i$ such that $q_i \in \mathbb{C}$, $V_i \in \LinOp{\Hilb{A}}$, $U_i \in \LinOp{\Hilb{B}}$ as any operator in $\LinOp{\Hilb{A}\otimes \Hilb{B}}$ can be expressed as such. 

By the inherited properties of the tensor product at the level of operators, the tensor at the level of superoperators is associative; it is moreover distributive with respect to the cap,
\begin{equation}\label{eq:cap_tenor_dist}
    \left( \Proj{}{A} \cap \Proj{'}{A} \right) \otimes \Proj{}{B} = \left(\Proj{}{A} \otimes \Proj{}{B}\right) \cap \left(\Proj{'}{A} \otimes \Proj{}{B}\right)\:,
\end{equation}
as well as to the cup,
\begin{equation}\label{eq:cup_tensor_dist}
    \left( \Proj{}{A} \cup \Proj{'}{A} \right) \otimes \Proj{}{B} = \left(\Proj{}{A} \otimes \Proj{}{B}\right) \cup \left(\Proj{'}{A} \otimes \Proj{}{B}\right)\:.
\end{equation}
This directly follows from idempotency:
\begin{equation}
    \left( \Proj{}{A} \cap \Proj{'}{A} \right) \otimes \Proj{}{B} = \left( \Proj{}{A} \cap \Proj{'}{A} \right) \otimes \left( \Proj{}{B} \cap \Proj{}{B} \right),
\end{equation}
and because the intersection of projectors is party-wise, thus it commutes with the tensor product. Another way to see it is that composition of superoperators obeys an interchange law with the tensor product of superoperators:
\begin{multline}\label{eq:interchange_tensor_cap}
    \left(\Proj{}{A} \cap \Proj{'}{A}\right) \otimes \left(\Proj{}{B} \cap \Proj{'}{B}\right)\\ = \left(\Proj{}{A} \otimes \Proj{}{B}\right) \cap \left(\Proj{'}{A} \otimes \Proj{'}{B}\right) \:.
\end{multline}

\paragraph*{Tensor preserves projectors on operator systems:}Expressions built from the tensor product of Hermitian-preserving projectors are automatically HP since the dagger distributes over the tensor, $\left(\rho \otimes \sigma\right)^\dag = \rho^\dag \otimes \sigma^\dag$. If the composed projectors are self-adjoint then so is their tensor product since the inner product is $\mathbb{C}$-linear and splits as $\InProd{\ProjOn{}{A}{\rho_A} \otimes \ProjOn{}{B}{\sigma_B}}{\eta_A \otimes \chi_B}_{AB} = \InProd{\ProjOn{}{A}{\rho_A}}{\eta_A}_A\InProd{\ProjOn{}{B}{\sigma_B}}{\chi_B}_B $ so that $\left(\Proj{}{A} \otimes \Proj{}{B}\right)^*=\Proj{*}{A} \otimes \Proj{*}{B}$. 
They preserve idempotency,
\begin{multline}
    \left(\Proj{}{A}\otimes\Proj{}{B}\right)^2 =\left(\Proj{}{A} \otimes \Proj{}{B}\right) \cap \left(\Proj{}{A} \otimes \Proj{}{B}\right)\\
    \overset{\eqref{eq:interchange_tensor_cap}}{=}\left(\Proj{}{A} \cap \Proj{}{A}\right) \otimes \left(\Proj{}{B} \cap \Proj{}{B}\right)\:,
\end{multline}
because of interchange law, thus
\begin{equation}\label{eq:tensor^2}
    \left(\Proj{}{A}\otimes\Proj{}{B}\right)^2=\left(\Proj{}{A}\otimes\Proj{}{B}\right)\:.
\end{equation}
A further consequence of the interchange law is that the tensor composition of, respectively, the units and the zeroes on $A$ and $B$, $\mathcal{I}_A \otimes \mathcal{I}_B$ and $\mathcal{D}_A \otimes \mathcal{D}_B$, respectively yield the unit and zero of the projector algebra on $\LinOp{\Hilb{A}\otimes\Hilb{B}}$, $\mathcal{I}_{AB}$ and $\mathcal{D}_{AB}$. Indeed, they can be used to define the negation of $\Proj{}{A}\otimes\Proj{}{B}$, $\overline{\Proj{}{A}\otimes\Proj{}{B}}$, and this gives the correct Boolean completions: $\left(\Proj{}{A}\otimes\Proj{}{B}\right) \cap \overline{\left(\Proj{}{A}\otimes\Proj{}{B}\right)} = \left(\Proj{}{A}\otimes\Proj{}{B}\right) \cap \left(\mathcal{I}_A \otimes \mathcal{I}_B - \Proj{}{A}\otimes\Proj{}{B} + \mathcal{D}_A \otimes \mathcal{D}_B\right) = \left(\Proj{}{A}\otimes\Proj{}{B} - \Proj{2}{A}\otimes\Proj{2}{B} + \mathcal{D}_A \otimes \mathcal{D}_B\right) = \mathcal{D}_A \otimes \mathcal{D}_B$ (at the second equality sign, the interchange law was used to compute the cap of $\Proj{}{A}\otimes\Proj{}{B}$ with each of the three components of the negation). $\left(\Proj{}{A}\otimes\Proj{}{B}\right) \cup \overline{\left(\Proj{}{A}\otimes\Proj{}{B}\right)} = \left(\Proj{}{A}\otimes\Proj{}{B}\right) \cup \left(\mathcal{I}_A \otimes \mathcal{I}_B - \Proj{}{A}\otimes\Proj{}{B} + \mathcal{D}_A \otimes \mathcal{D}_B\right) = \Proj{}{A}\otimes\Proj{}{B} + \mathcal{I}_A \otimes \mathcal{I}_B - \Proj{}{A}\otimes\Proj{}{B} + \mathcal{D}_A \otimes \mathcal{D}_B - \mathcal{D}_A \otimes \mathcal{D}_B = \mathcal{I}_A \otimes \mathcal{I}_B$. Hence,
\begin{subequations}\label{eq:composite_D_I}
    \begin{gather}
        \left(\Proj{}{A}\otimes\Proj{}{B}\right) \cap \overline{\left(\Proj{}{A}\otimes\Proj{}{B}\right)} = \mathcal{D}_A \otimes \mathcal{D}_B \equiv \mathcal{D}_{AB} \:;\label{eq:composite_D}\\
        \left(\Proj{}{A}\otimes\Proj{}{B}\right) \cup \overline{\left(\Proj{}{A}\otimes\Proj{}{B}\right)} = \mathcal{I}_A \otimes \mathcal{I}_B \equiv \mathcal{I}_{AB}\label{eq:composite_I} \:. 
    \end{gather}
\end{subequations} 
Tensor composition then ``preserves the identity'' simply because the identity on a joint system is the tensor product of the identities on each of the spaces being combined, $\mathds{1}_{AB} = \mathds{1}_A \otimes \mathds{1}_B$. In again because of the interchange law, $\mathcal{D}_{AB} \cap \left(\Proj{}{A}\otimes\Proj{}{B}\right)=\left(\mathcal{D}_A \otimes \mathcal{D}_B\right) \cap \left(\Proj{}{A}\otimes\Proj{}{B}\right)=\left(\mathcal{D}_A \cap \Proj{}{A}\right)\otimes \left(\mathcal{D}_B \cap\Proj{}{B}\right) = \mathcal{D}_A \otimes \mathcal{D}_B$,
\begin{equation}\label{eq:tensor_depol}
    \mathcal{D}_{AB} \cap \left(\Proj{}{A}\otimes\Proj{}{B}\right) = \mathcal{D}_{AB} \:.
\end{equation}
This shows that the tensor product of two projectors satisfying Eqs. \eqref{eq:projo} satisfies them as well.

\paragraph*{Tensor product preserves commutation:}Since, if $\Proj{}{A}\cap\Proj{'}{A}= \Proj{'}{A}\cap\Proj{}{A}$ and $ \Proj{}{B}\cap\Proj{'}{B} = \Proj{'}{B}\cap\Proj{}{B}$, then $\left(\Proj{}{A}\otimes\Proj{}{B}\right) \cap \left(\Proj{'}{A}\otimes\Proj{'}{B}\right) = \left(\Proj{}{A}\cap\Proj{'}{A}\right)\otimes \left(\Proj{}{B}\cap\Proj{'}{B}\right) = \left(\Proj{'}{A}\cap\Proj{}{A}\right)\otimes \left(\Proj{'}{B}\cap\Proj{}{B}\right)$, and so
\begin{multline}
	\left(\Proj{}{A}\otimes\Proj{}{B}\right) \cap \left(\Proj{'}{A}\otimes\Proj{'}{B}\right) \\
	= \left(\Proj{'}{A}\otimes\Proj{'}{B}\right) \cap \left(\Proj{}{A}\otimes\Proj{}{B}\right)\:.
\end{multline}

\paragraph*{Negation and tensor:}Interestingly, however, the tensor and negation do not commute with each other: $\overline{\Proj{}{A} \otimes \Proj{}{B} } \neq \overline{\Proj{}{A}} \otimes \overline{\Proj{}{B}}$. Actually, the latter expression defines a subspace in the former since
\begin{multline}\label{eq:proof_tensor_in_par}
    \overline{\Proj{}{A} \otimes \Proj{}{B} } \cap \overline{\Proj{}{A}} \otimes \overline{\Proj{}{B}} \\
    =\left(\mathcal{I}_A \otimes \mathcal{I}_B - \Proj{}{A} \otimes \Proj{}{B} + \mathcal{D}_A \otimes \mathcal{D}_B\right) \cap  \overline{\Proj{}{A}} \otimes \overline{\Proj{}{B}} \\
    =\overline{\Proj{}{A}} \otimes \overline{\Proj{}{B}} - \mathcal{D}_A \otimes \mathcal{D}_B + \mathcal{D}_A \otimes \mathcal{D}_B\:,
\end{multline}
so that 
\begin{equation}
    \overline{\Proj{}{A} \otimes \Proj{}{B} } \cap \overline{\Proj{}{A}} \otimes \overline{\Proj{}{B}} = \overline{\Proj{}{A}} \otimes \overline{\Proj{}{B}}\:.
\end{equation}
And from this result
\begin{equation}
    \overline{\Proj{}{A} \otimes \Proj{}{B} } \cup \overline{\Proj{}{A}} \otimes \overline{\Proj{}{B}} = \overline{\Proj{}{A} \otimes \Proj{}{B} } \:.
\end{equation}
Hence,
\begin{equation}\label{eq:notAnotB<notAB}
    \overline{\Proj{}{A}} \otimes \overline{\Proj{}{B}} \subseteq \overline{\Proj{}{A} \otimes \Proj{}{B} }\:.
\end{equation}
(Recall that the subset symbol, $\subseteq$, implicitly means that we are considering the images of these projectors.) This identity which, using \eqref{eq:inclusion_duality_app}, can be recast as
\begin{equation}\label{eq:AB<n(nAnB)}
    \Proj{}{A} \otimes \Proj{}{B} \subseteq \overline{\CompProj{}{A}\otimes \CompProj{}{B}} \:,
\end{equation}
is Eq. \eqref{eq:tensor_in_par} in the main text. That these two projectors are not equal is fundamentally the reason why an indefinite causal order may arise; this will become clearer after the prec is introduced in the algebra so that equations \eqref{eq:causal=0ways} and \eqref{eq:transfo=2way} can be written down. 
For an illustration of this relation, the left-hand side of this identity is depicted in the leftmost diagram in Fig. \ref{fig:diag_compo}, whereas the right-hand side is the rightmost.

Because of the interchange law \eqref{eq:interchange_tensor_cap}, relation \eqref{eq:notAnotB<notAB} is stable when composed with more parties. More generally, the inclusion relation $\subseteq$ induced by conditions \eqref{eq:subset_def} is stable under the tensor; if $\Proj{}{A} \subseteq \Proj{'}{A}$, then the following hold
\begin{subequations}\label{eq:subset_tensor}
    \begin{gather}
        \Proj{}{A}\otimes \Proj{}{B} \subseteq \Proj{'}{A}\otimes \Proj{}{B} \:,\\
        \overline{\CompProj{}{A}\otimes \CompProj{}{B}} \subseteq \overline{\overline{\Proj{'}{A}}\otimes \CompProj{}{B}}\:,\label{eq:subset_tensor_par-par}\\
        \Proj{}{A}\otimes \Proj{}{B} \subseteq \overline{\overline{\Proj{'}{A}}\otimes \CompProj{}{B}}\:.
    \end{gather}
\end{subequations}
The last equation is a consequence of the first using \eqref{eq:notAnotB<notAB}. The first equation holds because 
\begin{multline}
    \left(\Proj{}{A}\otimes \Proj{}{B}\right)\cap \left(\Proj{'}{A}\otimes \Proj{}{B} \right) \\
   \overset{\eqref{eq:cap_tenor_dist}}{=}\left(\Proj{}{A}\cap \Proj{'}{A} \right)\otimes \Proj{}{B}\:,
\end{multline} 
and the term in parenthesis is equal to $\Proj{}{A}$ because $\Proj{}{A} \subseteq \Proj{'}{A}$, therefore the first equation holds. The second equation is proven using De Morgan duality:
\begin{multline}
    \overline{\CompProj{}{A}\otimes \CompProj{}{B}} \cap \overline{\overline{\Proj{'}{A}}\otimes \CompProj{}{B}}\\ \overset{\eqref{eq:deMorgan_cup_to_cap}}{=} \overline{\left(\CompProj{}{A}\otimes \CompProj{}{B}\right)\cup\left(\overline{\Proj{'}{A}}\otimes \CompProj{}{B}\right) } \\
    \overset{\eqref{eq:cup_tensor_dist}}{=} \overline{\left(\overline{\Proj{'}{A}}\cup \CompProj{}{A}\right)\otimes \CompProj{}{B}}\\
    \overset{\eqref{eq:deMorgan_cap_to_cup}}{=} \overline{\overline{\Proj{'}{A}\cap \Proj{}{A}}\otimes \CompProj{}{B}}\\
    =\overline{\overline{\Proj{'}{A}}\otimes \CompProj{}{B}}\:.
\end{multline} 
Relations \eqref{eq:notAnotB<notAB} and \eqref{eq:subset_tensor} are important to define the no signaling subset of an arbitrary projector, as explained in section \ref{sec:Projos_NS_lattice}.

Finally, observe that the only time the identity $\Proj{}{A} \otimes \Proj{}{B} = \overline{\CompProj{}{A}\otimes \CompProj{}{B}}$ holds is when the projector is either characterizing the totality of the space or only the span of identity. In equation,
\begin{equation}\label{eq:Iso_par_tensor}
    \Proj{}{A} \otimes \Proj{}{B} = \overline{\CompProj{}{A}\otimes \CompProj{}{B}} \: \iff \: \Proj{}{A}=\Proj{}{B}=\mathcal{I}\:\text{or}\:\mathcal{D}\:.
\end{equation}
This is proven by rewriting $\overline{\CompProj{}{A}\otimes \CompProj{}{B}}$ into 
\begin{multline}
    \overline{\CompProj{}{A}\otimes \CompProj{}{B}} =  \Proj{}{A}\otimes\Proj{}{B} +
    \Big(\CompProj{}{A}\otimes \Proj{}{B} \\
     - \CompProj{}{A} \otimes \mathcal{D}_B - \mathcal{D}_A \otimes \Proj{}{B} + \mathcal{D}_A\otimes\mathcal{D}_B\Big) \\
    + \Big( \Proj{}{A} \otimes \CompProj{}{B} \\- \Proj{}{A}\otimes \mathcal{D}_B - \mathcal{D}_A \otimes \CompProj{}{B} + \mathcal{D}_A\otimes\mathcal{D}_B\Big) \:,
\end{multline}
using the definition of negation and algebraic properties. It can then be understood from Fig. \ref{fig:diag_compo}: the first term of the above is the right quarter of the wheel, the second is the top quarter with its boundary removed, and the third is the bottom also without boundaries. As the three parts share no intersection, the regular addition `+' is equivalent to a conjunction `$\cup$'.
Next, more algebraic manipulations lead to 
\begin{multline}
    \overline{\CompProj{}{A}\otimes \CompProj{}{B}} =  \Proj{}{A}\otimes\Proj{}{B} \:+ \Big(\CompProj{}{A}\otimes \Proj{}{B} \\+ \Proj{}{A} \otimes \CompProj{}{B} - \mathcal{I}_A\otimes\mathcal{D}_B - \mathcal{D}_A \otimes \mathcal{I}_B \Big)\:,
\end{multline}
and from this expression it is direct to check that the term in parentheses vanishes if and only if either of the conditions in Eq. \eqref{eq:Iso_par_tensor} holds.

\subsubsection{The transformation operation \label{sec:projo_transfo_prop}}
The projector appearing in Proposition \ref{theo:det_map} is built from the projectors characterizing its input and output. The corresponding operation is defined as the \textbf{transformation} in Eq. \eqref{eq:A_to_B}, represented by $\rightarrow$ :
\begin{multline}\label{eq:A_to_B_app}
    \Proj{}{A} \rightarrow \Proj{}{B} \equiv \mathcal{I}_A\otimes \mathcal{I}_B\\
     - \mathcal{P}_A \otimes \mathcal{I}_B + \Proj{}{A} \otimes \Proj{}{B} - \mathcal{P}_A \otimes \mathcal{D}_B + \mathcal{D}_A\otimes \mathcal{D}_B .
\end{multline}
This additional operation in the Boolean algebra of projectors is actually secondary since it can be entirely defined using the negation and the no signaling composition (i.e., the tensor):
\begin{equation}\label{eq:transformation}
    \Proj{}{A} \rightarrow \Proj{}{B} \equiv \overline{\Proj{}{A} \otimes \CompProj{}{B}} \:.
\end{equation}
Therefore it will automatically be a valid projector in the sense that it will obey Eqs. \eqref{eq:projo} if its constituents do, and will accordingly preserve commutation if they do. 
Furthermore, as this expression involves both composition and negation, it will not be associative in general. For example, $\left(\left(\Proj{}{A} \rightarrow \Proj{}{B}\right)\rightarrow \Proj{}{C}\right) \neq \left(\Proj{}{A} \rightarrow \left( \Proj{}{B} \rightarrow \Proj{}{C}\right)\right)$, because the \textit{uncurrying rule} proven in Ref. \cite{Perinotti2016} can be applied to the right-hand side:
\begin{equation}\label{eq:uncurrying}
    \Proj{}{A} \rightarrow \left( \Proj{}{B} \rightarrow \Proj{}{C}\right) = \left(\Proj{}{A} \otimes \Proj{}{B}\right)\rightarrow \Proj{}{C} \:,
\end{equation}
which is obviously different than $\left(\left(\Proj{}{A} \rightarrow \Proj{}{B}\right)\rightarrow \Proj{}{C}\right)$ by Eq. \eqref{eq:AB<n(nAnB)}. In the algebra, the uncurrying rule relies on the associativity of the tensor: 
\begin{multline}
    \Proj{}{A} \rightarrow \left( \Proj{}{B} \rightarrow \Proj{}{C}\right) = \overline{\Proj{}{A}\otimes \overline{\left(\overline{\Proj{}{B} \otimes \CompProj{}{C}}\right)}}\\
    = \overline{\Proj{}{A}\otimes \overline{\overline{\Proj{}{B} \otimes \CompProj{}{C}}}} \overset{\eqref{eq:invol}}{=} \overline{\Proj{}{A}\otimes \Proj{}{B} \otimes \CompProj{}{C}} \\
    = \overline{\left(\Proj{}{A}\otimes \Proj{}{B}\right) \otimes \CompProj{}{C}}
    = \left(\Proj{}{A} \otimes \Proj{}{B}\right)\rightarrow \Proj{}{C} \:.
\end{multline}
Remark that because of Eq. \eqref{eq:subset_tensor_par-par}, the transformation also preserves the inclusion relations:
\begin{equation}
    \Proj{}{A}\subseteq\Proj{'}{A}\:\Rightarrow\: \left(\Proj{}{A} \rightarrow \Proj{}{B}\right)\subseteq\left(\Proj{'}{A} \rightarrow \Proj{}{B}\right)\:.
\end{equation}

At the level of Boolean logic, the transformation operation can be understood as a logical implication. Indeed, the transformation is equal to its inverse implication, meaning that it satisfies
\begin{equation}\label{eq:A->B=nA<-nB}
    \Proj{}{A} \rightarrow \Proj{}{B} = \overline{\Proj{}{A}} \leftarrow \overline{\Proj{}{B}}\:,
\end{equation}
which comes from the definition. Additionally, note that it is equivalent to $\overline{\Proj{}{B}} \rightarrow \overline{\Proj{}{A}}$ since the order of the systems in the tensor product does not matter as $\Hilb{A}\otimes \Hilb{B} \cong \Hilb{B} \otimes \Hilb{A}$, we only keep it fixed to avoid adding confusion. 

The $\rightarrow$ operation on projectors recovers the $\rightarrow$ type constructor of Bisio and Perinotti type theory \cite{Perinotti2016,Bisio2018} (see Sec. \ref{sec:types}): if we define the trivial system as `1', that is to say, the 1-dimensional state structure \{1\} made of the number 1, one can interpret the measurement (thus the negation of a given state structure) as a transformation into the trivial system. This leads to the identity
\begin{equation}\label{eq:proj->1}
    \overline{\Proj{}{A}} = \Proj{}{A} \rightarrow 1 \:,
\end{equation}
which justifies the notation $\mathscr{A}\rightarrow 1 \equiv \overline{\mathscr{A}}$. The proof is straightforward from the definition since $1$ is 1-dimensional: $\Proj{}{A} \rightarrow 1 = \overline{\Proj{}{A} \otimes \overline{1}} = \overline{\Proj{}{A}}$. In the same way, one can prove that
\begin{equation}\label{eq:1->proj}
    \Proj{}{A} = 1 \rightarrow \Proj{}{A} \:.
\end{equation}
Therefore, a state structure can be seen as a transformation from the trivial system to itself and its negation as a transformation from itself to the trivial system. 
In view of the link between composition and transformation, one may also interpret the former in terms of the latter. A bipartite system in tensor-composed state structures, $\mathscr{A}\otimes \mathscr{B}$ for instance, can actually be seen as characterized by the following transformations
\begin{equation}
    \Proj{}{A} \otimes \Proj{}{B} = \overline{\Proj{}{A}\rightarrow\CompProj{}{B}} = \overline{\CompProj{}{A}\leftarrow \Proj{}{B}} \:.
\end{equation}
This means that a no signaling composite bipartite system is equivalent to a functional on a transformation from one state structure to the functionals on the other of the other as explained in Sec. \ref{sec:types}. In the above equation, we see that the direction of the transformation has no influence, which is expected since it is a no signaling composition; this interpretation is explained in Sec. \ref{sec:NS}.

\subsubsection{Linear Logic\label{sec:projo_prop_LL}}
We now show under which definition that the rules $\{\overline{\:\cdot\:}, \cap, \cup, \otimes, \overline{\:\cdot\:} \rightarrow \:\cdot \:\}$ form a model of logic which is almost multiplicative additive linear logic (MALL). The correspondence with MALL is summarized in Table \ref{tab:LL_comparison}, where the rules we have introduced are put in correspondence with their usual notation.
\begin{table*}[ht]
    \centering
    \begin{tabular}{|c|c|c||c|c|}
    \hline
        Name & \multicolumn{2}{c||}{Symbol} & \multicolumn{2}{c|}{Unit} \\
         & proj. & LL & proj. & LL \\
        \hline
        Negation & $\overline{\:\cdot\:}$ & $\cdot^\perp$ & / & /  \\
        \hline
        Additive conjunction & $\cap $ & $ \&  $ & $\mathcal{I}$ & {\sffamily T} \\
        Additive disjunction & $ \cup $ & $ \oplus  $ & $\mathcal{D}$ & 0 \\
        Multiplicative conjunction & $ \otimes $ & $ \otimes $ & 1 & 1 \\
        Multiplicative disjunction& $\overline{\:\cdot\:} \rightarrow \cdot$ & $ \parr $ & 1 & $\perp$ \\
        \hline
        Linear Implication & $\rightarrow$ & $\multimap$ & 1 & / \\
        \hline
    \end{tabular}
    \caption{Correspondence of the algebraic rules of the projective characterization (proj.) with linear logic (LL)}
    \label{tab:LL_comparison}
\end{table*}

Linear logic is a formal system of logic, of which MALL is a fragment, that is a restriction of the logic to fewer rules. It can be defined as a \textit{sequent calculus}, that is the formalization of proof systems in which each \textit{proposition} follows under some \textit{structural and inference rules} from other propositions (it is a \textit{sequent} of these propositions, whence the name) see Ref. \cite{GIRARD1987}. 
If one sees the projectors algebra and the operations on it as the propositions, the starting set of propositions is indeed similar to those of MALL:
\begin{enumerate}
    \item The set from which we start is made of valid projectors;
    \item For every projector $\Proj{}{}$, there exists a projector $\CompProj{}{}$;
    \item For every projectors $\Proj{}{A}$ and $\Proj{'}{A}$, there exist an additive conjunction $\Proj{}{A}\cap\Proj{'}{A}$ as well as an additive disjunction $\Proj{}{A}\cup\Proj{'}{A}$;
    \item For every projectors $\Proj{}{A}$ and $\Proj{}{B}$, there exist a multiplicative conjunction $\Proj{}{A}\otimes\Proj{}{B}$ as well as a multiplicative disjunction $\CompProj{}{A}\rightarrow\Proj{}{B}$;
    \item There are two constants $\left(\mathcal{I},\mathcal{D}\right)$ that go with each additive binary connectors;
    \item There are two constants $\left(1,1\right)$ that go with each multiplicative binary connectors.
\end{enumerate}
And the properties that can be proven for the projectors are similar to those that can be proven in MALL using the sequent calculus:
\begin{enumerate}
    \item All binary connectors are commutative;
    \item Multiplicative connectors distribute over additive ones;
    \item All propositions have a negation obeying the De Morgan rules \eqref{eq:deMorgan};
    \item Additive constants are the negation of each other;
    \item Multiplicative constants are the negation of each other.
\end{enumerate}
Proposition 1 is always true as we take it as an axiom. 
Propositions 2 to 4 are true from the fact that the definition of these various connectors implies closure, which, in the case of projectors, is the conservation of properties \eqref{eq:projo}. These were proven for each connector in the previous sections.
Proposition 5 follows from equations \eqref{eq:cap_unit} and \eqref{eq:cup_unit}.
Proposition 6 happens because of the isomorphism $\LinOp{\Hilb{}}\otimes \mathbb{C}\cong 1$ so that $\Proj{}{} \otimes 1 = \Proj{}{}$ and the same way, $\overline{\CompProj{}{}\otimes \overline{1}} = \overline{\overline{\mathcal{P}}} = \Proj{}{}$, $\overline{1\otimes \CompProj{}{}} = \overline{\overline{\mathcal{P}}} = \Proj{}{}$.

Property 1 is true from the definition. In the case of the mutiplicative connectors, $\Proj{}{A}\otimes \Proj{}{B}$ and $\overline{\CompProj{}{A}\otimes \CompProj{}{B}}$, the isomorphism $\Hilb{A}\otimes\Hilb{B}\cong \Hilb{B}\otimes\Hilb{A}$ should of course be used.

Property 2 follows from Eqs. \eqref{eq:cap_tenor_dist} and \eqref{eq:cup_tensor_dist} in the case of $\otimes$ and application of the De Morgan rules \eqref{eq:n_deMorgan} on these two equations can be used to prove the property in the case of $\overline{\:\cdot\:}\rightarrow\:\cdot\:$. Put differently, the multiplicative conjunction (respectively, disjunction) distributes over the additive conjunction (disjunction)
\begin{subequations}
    \begin{align}
    	&\begin{multlined}
        \Proj{}{A} \otimes \left(\Proj{}{B} \cap \Proj{'}{B} \right) \\
        = \left(\Proj{}{A} \otimes \Proj{}{B}\right) \cap  \left(\Proj{}{A} \otimes \Proj{'}{B}\right)\:;\\
        \end{multlined}\\
        &\begin{multlined}
        \CompProj{}{A} \rightarrow \left(\Proj{}{B} \cup \Proj{'}{B} \right) \\ 
        =  \left(\CompProj{}{A} \rightarrow \Proj{}{B} \right)\cup \left(\CompProj{}{A} \rightarrow \Proj{'}{B} \right)\:.
        \end{multlined}
    \end{align}
\end{subequations}
But as $\otimes$ and $\rightarrow$ are both operations that merge subspaces, the converse obviously does not hold. Indeed an expression like $\Proj{}{X} \cap \left(\Proj{}{A} \otimes \Proj{}{B} \right) = \left(\Proj{}{X} \otimes \Proj{}{A}\right) \cap \left(\Proj{}{X} \otimes \Proj{}{B}\right)$ makes no sense as the right-hand side feature a cap between two superoperator projectors defined on different spaces that are not isomorphic in general. 

Property 3 was proven at Eq. \eqref{eq:invol} for the negation, Eq. \eqref{eq:n_deMorgan} for the additive connectors, and follows from the definition in the case of multiplicative connectors.

Property 4 is the statement $\overline{\mathcal{I}}=\mathcal{I}-\mathcal{I}+ \mathcal{D}=\mathcal{D}$, whose converse holds by idempotency or can be proven by the same kind of computation.

Property 5 is the statement $\overline{1}=1-1+1$, as in the 1-dimensional case $\mathcal{D}=\mathcal{I}=1$.

There are however some discrepancies with MALL, that we now discuss. First, two small issues that indicate a departure from a faithful model of MALL as in Ref. \cite{GIRARD1987}:
\begin{enumerate}
    \item The multiplicative units are equivalent;
    \item The additive falsity (i.e. the unit for $\cup$) is not absorbing.
\end{enumerate}
The first issue is why we use the terminology `degenerate' in the main text to refer to the model of MALL formed by the algebra of projectors. The second issue --that $\Proj{}{A}\otimes\mathcal{D}_B \neq \mathcal{D}_A\otimes \mathcal{D}_B$-- is more severe as it goes against an equality that can be proven in MALL. A way to circumvent this is to redefine the additive falsity as the number 0, but then the trace normalization of all operator systems the projectors characterize should be set to zero which would jeopardize the probabilistic interpretation.

A more substantial problem is the fact that the cap and cup are only well-defined for expressions featuring the same set of base projectors: two projectors can only be composed with a cap if they can be embedded in the same space. This is part of the problem we evoked when discussing Property 2 above. For the model of logic to work properly, the number of subsystems as well as the dimension of each subsystem must be fixed \textit{a priori}, and each expression is associated with a given (set of) subsystem(s), so that the composition is well-defined. Knowing this information allows one to `pad' the expressions with identities so as to embed them in the global Hilbert space. For example, an expression like $\Proj{}{X} \cap \left(\Proj{}{A} \otimes \Proj{}{B} \right)$ can be made definite by knowing for example that there are two systems $A$ and $B$, and that $\Proj{}{X}$ is associated with system $B$ so that the padding reads $\Proj{}{X}\cong\mathcal{I}_A \otimes \Proj{X}{B}$, where $\Proj{X}{B}$ is the projector $\Proj{}{X}$ acting on system $B$. Another possibility is that $\Proj{}{X}$ is associated with both systems, so that the padding is trivial $\Proj{}{X}\cong\Proj{X}{AB}$. Yet another possibility is that the projector is associated with system $A$ and $B$, but the global Hilbert space is tripartite: $\Hilb{}=\Hilb{A}\otimes\Hilb{B}\otimes\Hilb{C}$. In that case, the padding should be done on the two projectors: $\Proj{}{X}\cong\Proj{X}{AB}\otimes \mathcal{I}_C$, $\Proj{}{A} \otimes \Proj{}{B} \cong \Proj{}{A} \otimes \Proj{}{B} \otimes \mathcal{I}_C$.

Despite these particularities, the connection between the algebra of projectors and MALL is too obvious not to be pointed out.

Remark that the issues with interpreting the algebra as a model of LL come from the choice of additive connectives as $\cap$ and $\cup$. This choice is motivated because we want to compare the underlying operator systems associated with different types of higher-order quantum transformations. Another choice proposed in Ref. \cite{Simmons2022} is to use the Cartesian product and direct sum as the additive conjunction and disjunction respectively. In that case, the issues can all be alleviated except for the degeneracy of the multiplicative units... but the additive connectors can no longer be used for comparison. 

\subsection{Adding the prec \label{sec:projo_prec}}
Define the \textbf{A-to-B one-way signaling composition}, nicknamed \textbf{prec}, as
\begin{equation}\label{eq:semi-causal_comp}
    \Proj{}{A} \prec \Proj{}{B} \equiv \mathcal{I}_A \otimes \Proj{}{B} - \CompProj{}{A} \otimes \mathcal{D}_B + \mathcal{D}_A \otimes \mathcal{D}_B \:.
\end{equation}
The reversed sign $\succ$ (also nicknamed the prec) can be defined accordingly: the \textbf{B-to-A one-way signaling composition} is given by
\begin{equation}
    \Proj{}{A} \succ \Proj{}{B} \equiv \Proj{}{A} \otimes \mathcal{I}_B  - \mathcal{D}_A \otimes \CompProj{}{B} + \mathcal{D}_A \otimes \mathcal{D}_B \:.
\end{equation}
As is the case with the transformation $\rightarrow$, note that $\Proj{}{A} \succ \Proj{}{B} \cong \Proj{}{B} \prec \Proj{}{A}$ since $\Hilb{A}\otimes \Hilb{B} \cong \Hilb{B}\otimes \Hilb{A}$.

\paragraph*{Prec preserves projectors on operator systems:}When introduced in the main text, Eq. \eqref{eq:transfo_cap_lemma}, this projector was obtained as the intersection of the following two projectors:
\begin{equation}
    \Proj{}{A} \prec \Proj{}{B} \equiv \left(\CompProj{}{A}\rightarrow\Proj{}{B}\right)\cap\left(\mathcal{I}_A \otimes \Proj{}{B}\right)\:.
\end{equation}
Because $\mathcal{I}_A,\Proj{}{A},\Proj{}{B}$ are projectors on operator systems, meaning they obey Eqs. \eqref{eq:projo}, and we proved that operations $\overline{\:\cdot\:}, \otimes$ and  $\cap$ preserve this property in the last sections, $\Proj{}{A} \prec \Proj{}{B}$ obeys Eqs. \eqref{eq:projo} as well and it is therefore a projector on an operator system in $\LinOp{\Hilb{A}\otimes \Hilb{B}}$.

\paragraph*{Prec preserves commutation:}For the same reason as above, if $\Proj{}{A}\cap\Proj{'}{A} = \Proj{'}{A}\cap\Proj{}{A}$ and $\Proj{}{B}\cap\Proj{'}{B} = \Proj{'}{B}\cap\Proj{}{B}$, then
\begin{multline}
    \left(\Proj{}{A} \prec \Proj{}{B}\right)\cap \left(\Proj{'}{A} \prec \Proj{'}{B}\right)\\ = \left(\Proj{'}{A} \prec \Proj{'}{B}\right) \cap \left(\Proj{}{A} \prec \Proj{}{B}\right)\:.
\end{multline}
It also commutes with the B-to-A one-way signaling composition:
\begin{multline}\label{eq:prec_comm_succ}
    \left(\Proj{}{A} \prec \Proj{}{B}\right)\cap \left(\Proj{'}{A} \succ \Proj{'}{B}\right)\\ = \left(\Proj{'}{A} \succ \Proj{'}{B}\right) \cap \left(\Proj{}{A} \prec \Proj{}{B}\right)\:,
\end{multline}
again because 
\begin{equation}
    \Proj{}{A} \succ \Proj{}{B} \equiv \Proj{}{A} \otimes \mathcal{I}_B - \mathcal{D}_A \otimes  \CompProj{}{B} + \mathcal{D}_A \otimes \mathcal{D}_B\;,
\end{equation}
is an expression derived from connectives that preserve commutation:
\begin{equation}
    \Proj{}{A} \succ \Proj{}{B} \equiv \left(\CompProj{}{A}\rightarrow\Proj{}{B}\right)\cap\left(\Proj{}{A} \otimes \mathcal{I}_B\right)\:.
\end{equation}
Proving \eqref{eq:prec_comm_succ} thus reduces to prove the commutation of the terms in parenthesis in $\left(\left(\CompProj{}{A}\rightarrow\Proj{}{B}\right)\cap\left(\mathcal{I}_A \otimes \Proj{}{B}\right)\right) \cap \left(\left(\overline{\Proj{'}{A}}\rightarrow\Proj{'}{B}\right)\cap\left(\Proj{'}{A} \otimes \mathcal{I}_B\right)\right)$, which follows from the associativity of the $\cap$ and the commutation of the $\rightarrow$ with the $\otimes$. The same way,
\begin{multline}
    \left(\Proj{}{A} \succ \Proj{}{B}\right)\cap \left(\Proj{'}{A} \succ \Proj{'}{B}\right)\\ = \left(\Proj{'}{A} \succ \Proj{'}{B}\right) \cap \left(\Proj{}{A} \succ \Proj{}{B}\right)\:.
\end{multline}

\subsubsection{The prec is associative and commutes with the negation\label{sec:projo_prec_assoc}}
Compared to transformation, one-way signaling composition is better behaved in the sense that it is associative
\begin{equation}\label{eq:semi-causal_assoc}
    \Proj{}{A} \prec \left(\Proj{}{B} \prec \Proj{}{C} \right) = \left(\Proj{}{A} \prec \Proj{}{B}\right) \prec \Proj{}{C} \:.
\end{equation}
So that $\Proj{}{A} \prec \left(\Proj{}{B} \prec \Proj{}{C} \right) = \Proj{}{A} \prec \Proj{}{B} \prec \Proj{}{C}$ can be written unambiguously. 
Even more, it is better behaved than both the no signaling (the tensor) and the two-way signaling (the transformation) compositions as it is distributive with respect to the negation
\begin{equation}\label{eq:semi-causal_negation}
    \overline{\Proj{}{A} \prec \Proj{}{B}} = \CompProj{}{A} \prec \CompProj{}{B} \:.
\end{equation}

We first prove the distributivity of the negation by direct computation,
\begin{multline}\label{eq:proof_commute_neg}
    \overline{\Proj{}{A} \prec \Proj{}{B}} \\
    = \mathcal{I}_A \otimes \mathcal{I}_B - \Proj{}{A} \prec \Proj{}{B} + \mathcal{D}_A\otimes\mathcal{D}_B \\
    = \mathcal{I}_A \otimes \mathcal{I}_B - \mathcal{I}_A \otimes \Proj{}{B} + \CompProj{}{A} \otimes \mathcal{D}_B \\- \mathcal{D}_A\otimes\mathcal{D}_B + \mathcal{D}_A\otimes\mathcal{D}_B\\
    = \big(\mathcal{I}_A \otimes \mathcal{I}_B - \mathcal{I}_A \otimes \Proj{}{B} + \mathcal{I}_A \otimes  \mathcal{D}_B  \big)\\
    - \Proj{}{A} \otimes \mathcal{D}_B + \mathcal{D}_A\otimes\mathcal{D}_B\\
    =\big(\mathcal{I}_A \otimes \CompProj{}{B} \big) - \overline{\overline{\mathcal{P}}}_A \otimes \mathcal{D}_B + \mathcal{D}_A\otimes\mathcal{D}_B\\
     =\CompProj{}{A} \prec \CompProj{}{B}
    \:.
\end{multline}
This can be used to prove associativity,
\begin{multline}\label{eq:proof_assoc}
    \Proj{}{A} \prec \left(\Proj{}{B} \prec \Proj{}{C} \right) \\ 
    =\mathcal{I}_A \otimes \left(\Proj{}{B} \prec \Proj{}{C} \right) \\
    - \CompProj{}{A} \otimes \mathcal{D}_B  \otimes \mathcal{D}_C + \mathcal{D}_A\otimes\mathcal{D}_B\otimes\mathcal{D}_C \\
    = \mathcal{I}_A \otimes \mathcal{I}_B \otimes \Proj{}{C} - \mathcal{I}_A \otimes \CompProj{}{B} \otimes \mathcal{D}_C + \mathcal{I}_A \otimes \mathcal{D}_B \otimes \mathcal{D}_C \\
    -\CompProj{}{A} \otimes \mathcal{D}_B  \otimes \mathcal{D}_C + \mathcal{D}_A\otimes\mathcal{D}_B\otimes\mathcal{D}_C \\
    = \big( \mathcal{I}_A \otimes \mathcal{I}_B \big) \otimes \Proj{}{C} + \mathcal{D}_A\otimes\mathcal{D}_B\otimes\mathcal{D}_C  \\
    - \big( \mathcal{I}_A \otimes \CompProj{}{B} - \Proj{}{A} \otimes \mathcal{D}_B + \mathcal{D}_A\otimes\mathcal{D}_B \big) \otimes \mathcal{D}_C\\ 
    =\big( \mathcal{I}_A \otimes \mathcal{I}_B \big) \otimes \Proj{}{C}  \\
    - \big( \CompProj{}{A} \prec \CompProj{}{B} \big) \otimes \mathcal{D}_C + \mathcal{D}_A\otimes\mathcal{D}_B\otimes\mathcal{D}_C \\
    = \big( \mathcal{I}_A \otimes \mathcal{I}_B \big) \otimes \Proj{}{C}\\
    -  \overline{\big(\Proj{}{A} \prec \Proj{}{B}\big)}  \otimes \mathcal{D}_C  + \left(\mathcal{D}_A\otimes\mathcal{D}_B\right)\otimes\mathcal{D}_C \\
     =\big(\Proj{}{A} \prec \Proj{}{B}\big) \prec \Proj{}{C}\:,
\end{multline}
where we have used the definition to go to the last line, and commutation of the prec with the negation to go to the penultimate one.

\subsubsection{Inclusion relations of the prec\label{sec:projo_prec_inclusion}}
In Sec. \ref{sec:projo_prop_tensor}, it was shown that the tensor product of two projectors defines a subset of the composition of the same two projectors using the transformation connective seen as a composition, proving Eq. \eqref{eq:tensor_in_par} of the main text. That is, $\Proj{}{A}\otimes\Proj{}{B}\subseteq \CompProj{}{A}\rightarrow\Proj{}{B} = \overline{\CompProj{}{A}\otimes\CompProj{}{B}}$, see the discussion around Eq. \eqref{eq:AB<n(nAnB)}. 
As the prec is itself a form of composition as argued in Sec. \ref{sec:NS_def}, similar inclusion relations can be proven. 
Like the tensor product, one-way signaling composition with the trivial system amounts to doing nothing since $\Proj{}{A} \prec 1 = \mathcal{I}_A \otimes 1 - \CompProj{}{A} \otimes 1 + \mathcal{D}_A \otimes 1 = \overline{\CompProj{}{A}} \otimes 1$ and $1 \prec \Proj{}{A} = \Proj{}{A} \otimes 1 - \mathcal{D}_A \otimes 1 + \mathcal{D}_A \otimes 1$ so that
\begin{subequations}
    \begin{gather}
        \Proj{}{A} \prec 1  = \Proj{}{A}\:,\\
        1 \prec \Proj{}{A} = \Proj{}{A} \:.
    \end{gather}
\end{subequations}
One can link the two-way signaling, one-way signaling, and no signaling compositions, respectively, $\overline{\:\cdot\:}\rightarrow\cdot$, $\cdot \prec \cdot$, and $\cdot\otimes\cdot$, by noticing that 
\begin{equation}\label{eq:causal=0ways}
    \left(\Proj{}{A} \prec \Proj{}{B}\right) \cap \left(\Proj{}{A} \succ \Proj{}{B}\right) = \Proj{}{A} \otimes \Proj{}{B} \:,
\end{equation}
and 
\begin{equation} \label{eq:transfo=2way}
    \left(\Proj{}{A} \prec \Proj{}{B}\right) \cup \left(\Proj{}{A} \succ \Proj{}{B}\right) = \CompProj{}{A} \rightarrow \Proj{}{B} \:.
\end{equation}
Negating $\Proj{}{A}$ yields relations \eqref{eq:relations} in the main text. 
Notice that the input of the transformation has to be negated to be interpreted as a composition because of how it was defined in Ref. \cite{Perinotti2016}: it \textit{transforms} an \textit{input} in $\LinOp{\Hilb{A}}$ to an output in $\LinOp{\Hilb{B}}$, hence it must be a functional on $\LinOp{\Hilb{A}}$, meaning it should locally belong to $\CompAlg{A}$ instead of $\Alg{A}{}$ (see the discussion in Sec. \ref{sec:Projos_NS_lattice}). 
On the contrary, the tensor and the prec \textit{compose} an \textit{output} in $\LinOp{\Hilb{A}}$ with an output in $\LinOp{\Hilb{B}}$. 

The first equation is proven by developing it,
    $\left(\Proj{}{A} \prec \Proj{}{B}\right) \cap \left(\Proj{}{A} \succ \Proj{}{B}\right) = 
    \big( \mathcal{I}_A \otimes \Proj{}{B} - \CompProj{}{A} \otimes \mathcal{D}_B + \mathcal{D}_A \otimes \mathcal{D}_B  \big) \cap
    \big( \Proj{}{A} \otimes \mathcal{I}_B - \mathcal{D}_A \otimes \CompProj{}{B} + \mathcal{D}_A \otimes \mathcal{D}_B  \big)$, 
and noting that the intersection of any two elements but $\big(\mathcal{I}_A \otimes \Proj{}{B}\big) \cap \big(\Proj{}{A} \otimes \mathcal{I}_B \big) = \Proj{}{A} \otimes \Proj{}{B}$ is giving $\mathcal{D}_A \otimes \mathcal{D}_B$. Thus, the expression reduces to $ \Proj{}{A} \otimes \Proj{}{B}$ followed by eight occurrences of $\mathcal{D}_A \otimes \mathcal{D}_B$ alternating between a plus and minus sign, therefore canceling each other.
The second equation has a quick proof using the De Morgan rule and the distributivity of the negation:
\begin{multline}
    \left(\Proj{}{A} \prec \Proj{}{B}\right) \cup \left(\Proj{}{A} \succ \Proj{}{B}\right) \\
    \overset{\eqref{eq:invol}}{=} \overline{\overline{\left(\Proj{}{A} \prec \Proj{}{B}\right) \cup \left(\Proj{}{A} \succ \Proj{}{B}\right)}} \\
    \overset{\eqref{eq:deMorgan_cup_to_cap}}{=} \overline{\overline{\left(\Proj{}{A} \prec \Proj{}{B}\right)} \cap \overline{\left(\Proj{}{A} \succ \Proj{}{B}\right)}}\\
    \overset{\eqref{eq:semi-causal_negation}}{=} \overline{\left(\CompProj{}{A} \prec \CompProj{}{B}\right) \cap \left(\CompProj{}{A} \succ \CompProj{}{B}\right)}\\
    \overset{\eqref{eq:causal=0ways}}{=} \overline{\CompProj{}{A} \otimes \CompProj{}{B}}= \CompProj{}{A} \rightarrow \Proj{}{B}\:.
\end{multline}

From Eq. \eqref{eq:causal=0ways}, it can also be inferred that
\begin{subequations}
    \begin{gather}
        \Proj{}{A} \otimes \Proj{}{B} \subseteq \Proj{}{A} \prec \Proj{}{B}\:,\\
        \Proj{}{A} \otimes \Proj{}{B} \subseteq \Proj{}{A} \succ \Proj{}{B}\:,
    \end{gather}
\end{subequations}
as, obviously, $ \left(\Proj{}{A} \otimes \Proj{}{B}\right) \cap \left(\Proj{}{A} \prec \Proj{}{B}\right) = \left(\Proj{}{A} \prec \Proj{}{B}\right) \cap \left(\Proj{}{A} \succ \Proj{}{B}\right) \cap \left(\Proj{}{A} \prec \Proj{}{B}\right) = \left(\Proj{}{A} \prec \Proj{}{B}\right) \cap \left(\Proj{}{A} \succ \Proj{}{B}\right) =  \Proj{}{A} \otimes \Proj{}{B}$, and the second equation is proven analogously. The same way, from Eq. \eqref{eq:transfo=2way}, 
\begin{subequations}
    \begin{gather}
        \Proj{}{A} \prec \Proj{}{B} \subseteq \CompProj{}{A} \rightarrow \Proj{}{B}\:,\\
        \Proj{}{A} \succ \Proj{}{B} \subseteq \CompProj{}{A} \rightarrow \Proj{}{B}\:.
    \end{gather}
\end{subequations}
Putting these relations together,
\begin{subequations}\label{eq:inclusions_compo}
    \begin{gather}
        \Proj{}{A} \otimes \Proj{}{B} \subseteq \Proj{}{A} \prec \Proj{}{B} \subseteq \CompProj{}{A} \rightarrow \Proj{}{B}\:,\\
        \Proj{}{A} \otimes \Proj{}{B} \subseteq \Proj{}{A} \succ \Proj{}{B} \subseteq \CompProj{}{A} \rightarrow \Proj{}{B}\:.
    \end{gather}
\end{subequations}
\begin{figure}[htb]
    \centering
    \includegraphics[width=\linewidth]{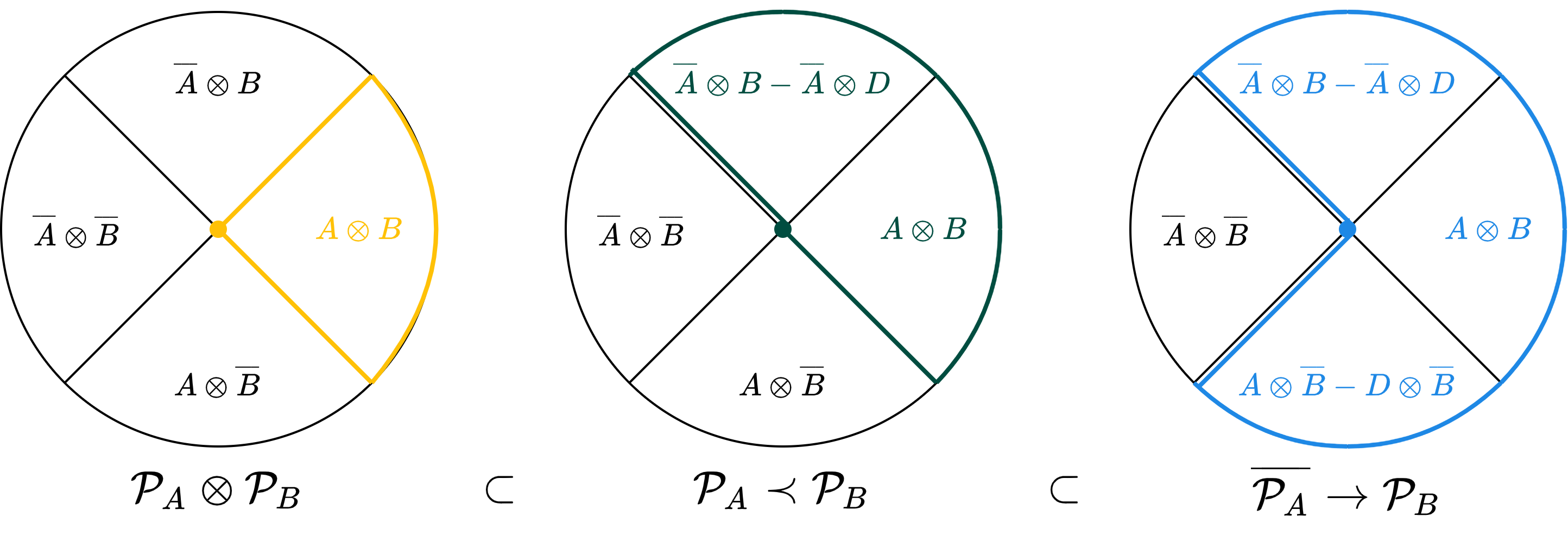}
    \caption{Diagram of the subspaces defined by the three ways of defining a composite projector.}
    \label{fig:diag_compo}
\end{figure}
The first line is diagrammatically depicted in Fig. \ref{fig:diag_compo}: the subspace associated with $\CompProj{}{A} \rightarrow \Proj{}{B}$ is represented in blue on the right. The intermediate case, associated with $\Proj{}{A} \prec \Proj{}{B}$ is depicted in green on the centre; from Eq. \eqref{eq:transfo_cap_lemma}, the green part, $\Proj{}{A} \prec \Proj{}{B} \equiv \left(\CompProj{}{A}\rightarrow\Proj{}{B}\right)\cap\left(\mathcal{I}_A \otimes \Proj{}{B}\right)$, is obtained as the intersection of the blue part, $\CompProj{}{A}\rightarrow\Proj{}{B}$, with $\mathcal{I}_A \otimes \Proj{}{B} = \left(\Proj{}{A}\otimes\Proj{}{B}\right)\cup\left(\CompProj{}{A} \otimes \Proj{}{B}\right)$ which correspond to the right and top quadrants. It indeed recovers the other definition \eqref{eq:semi-causal_comp}, $\Proj{}{A} \prec \Proj{}{B} \equiv \mathcal{I}_A \otimes \Proj{}{B} - \CompProj{}{A} \otimes \mathcal{D}_B + \mathcal{D}_A \otimes \mathcal{D}_B$: the green part made of the right and top quadrants ($\mathcal{I}_A \otimes \Proj{}{B}$) minus the line separating the left and top quadrants ($\CompProj{}{A} \otimes \mathcal{D}_B$) except at the central dot ($\mathcal{D}_A \otimes \mathcal{D}_B$).
The smallest case, associated with $\Proj{}{A} \prec \Proj{}{B}$ is depicted in yellow on the left; as discussed in Corollary \ref{theo:tensor} it is effectively defined by the intersection of no signaling subspaces of Lemma \ref{theo:causal_map}, $\left(\mathcal{I}_A \otimes \Proj{}{B}\right) \cap \left(\Proj{}{A} \otimes \mathcal{I}_B \right)$, that is the intersection of the right and top quadrants ($\mathcal{I}_A \otimes \Proj{}{B}$) with the right and bottom ones ($\Proj{}{A} \otimes \mathcal{I}_B$); the intersection with $\CompProj{}{A}\rightarrow\Proj{}{B}$ brings no new constraints.

Note that similar diagrams appear in Figs. \ref{fig:compos} and \ref{fig:accidental_isomorphism}, but they once again correspond to the situation where $\Proj{}{A}$ is negated. Negating $\Proj{}{A}$ in Eqs. \eqref{eq:inclusions_compo} indeed yields the relations \eqref{eq:inclusions} of the main text depicted in these figures.  

\subsubsection{Distribution properties of the prec \label{sec:projo_prec_dist}}
The cap and prec satisfy an interchange law,
\begin{multline}\label{eq:prec_cap_interchange}
    \big(\Proj{}{A}\cap \Proj{'}{A}\big) \prec \big(\Proj{}{B} \cap \Proj{'}{B}\big) \\
    = \big(\Proj{}{A}\prec \Proj{}{B}\big) \cap \big(\Proj{'}{A}\prec \Proj{'}{B}\big)\:.
\end{multline}
This can be shown as follows:
\begin{multline}
    \left(\Proj{}{A}\cap \Proj{'}{A}\right) \prec \left(\Proj{}{B} \cap \Proj{'}{B}\right) \\
    = \mathcal{I}_A \otimes \left(\Proj{}{B} \cap \Proj{'}{B}\right) - \overline{\left(\Proj{}{A}\cap \Proj{'}{A}\right)} \otimes \mathcal{D}_B + \mathcal{D}_A \otimes \mathcal{D}_B \\
    \overset{\eqref{eq:deMorgan_cap_to_cup}}{=} \mathcal{I}_A \otimes \left(\Proj{}{B} \cap \Proj{'}{B}\right) - \left(\CompProj{}{A}\cup \overline{\Proj{'}{A}}\right) \otimes \mathcal{D}_B \\+ \mathcal{D}_A \otimes \mathcal{D}_B \\
    = \mathcal{I}_A \otimes \big(\Proj{}{B} \cap \Proj{'}{B}\big) - \CompProj{}{A} \otimes \mathcal{D}_B  - \overline{\Proj{'}{A}} \otimes \mathcal{D}_B \\+ \big(\CompProj{}{A}\cap \overline{\Proj{'}{A}}\big) \otimes \mathcal{D}_B + \mathcal{D}_A \otimes \mathcal{D}_B \\
    = \big(\mathcal{I}_A \otimes \Proj{}{B} - \CompProj{}{A} \otimes \mathcal{D}_B + \mathcal{D}_A \otimes \mathcal{D}_B \big) \\
    \cap \big(\mathcal{I}_A \otimes \Proj{'}{B} - \overline{\Proj{'}{A}} \otimes \mathcal{D}_B + \mathcal{D}_A \otimes \mathcal{D}_B  \big)\\
    =\big(\Proj{}{A} \prec \Proj{}{B}\big) \cap \big(\Proj{'}{A}\prec \Proj{'}{B}\big)\:.
\end{multline}
Remark that the grouping at the penultimate line is arbitrary. In different terms, because the cap is commutative we have $\big(\Proj{}{A}\cap \Proj{'}{A}\big) \prec \big(\Proj{}{B} \cap \Proj{'}{B}\big) = \big(\Proj{}{A}\cap \Proj{'}{A}\big) \prec \big(\Proj{'}{B} \cap \Proj{}{B}\big) = \big(\Proj{}{A}\prec \Proj{'}{B}\big) \cap \big(\Proj{'}{A}\prec \Proj{}{B}\big)$. Hence, the exact grouping does not matter as long as the $A$'s and $B$'s are on the correct side of the prec connector.

The cup and the prec satisfy an interchange law as well,
\begin{multline}\label{eq:prec_cup_interchange}
    \big(\Proj{}{A}\cup \Proj{'}{A}\big) \prec \big(\Proj{}{B} \cup \Proj{'}{B}\big)\\
    = \big(\Proj{}{A}\prec \Proj{}{B}\big) \cup \big(\Proj{'}{A}\prec \Proj{'}{B}\big)\:.
\end{multline}
This can be shown the same way,
\begin{multline}
    \big(\Proj{}{A}\prec \Proj{}{B}\big) \cup \big(\Proj{'}{A}\prec \Proj{'}{B}\big)
    \\
    =\big(\mathcal{I}_A \otimes \Proj{}{B} - \CompProj{}{A} \otimes \mathcal{D}_B + \mathcal{D}_A \otimes \mathcal{D}_B \big) \\
    \cup \big(\mathcal{I}_A \otimes \Proj{'}{B} - \overline{\Proj{'}{A}} \otimes \mathcal{D}_B + \mathcal{D}_A \otimes \mathcal{D}_B  \big)\\
    =\big(\mathcal{I}_A \otimes \Proj{}{B} - \CompProj{}{A} \otimes \mathcal{D}_B + \mathcal{D}_A \otimes \mathcal{D}_B \big) \\
    + \big(\mathcal{I}_A \otimes \Proj{'}{B} - \overline{\Proj{'}{A}} \otimes \mathcal{D}_B + \mathcal{D}_A \otimes \mathcal{D}_B  \big) \\
    - \mathcal{I}_A \otimes \big(\Proj{}{B} \cap \Proj{'}{B}\big) + \CompProj{}{A} \otimes \mathcal{D}_B  + \overline{\Proj{'}{A}} \otimes \mathcal{D}_B \\- \big(\CompProj{}{A}\cap \overline{\Proj{'}{A}}\big) \otimes \mathcal{D}_B - \mathcal{D}_A \otimes \mathcal{D}_B\\
    = \mathcal{I}_A \otimes \Proj{}{B} + \mathcal{I}_A \otimes \Proj{'}{B} -\mathcal{I}_A \otimes \big(\Proj{}{B} \cap \Proj{'}{B}\big)\\
    - \big(\CompProj{}{A}\cap \overline{\Proj{'}{A}}\big) \otimes \mathcal{D}_B + \mathcal{D}_A \otimes \mathcal{D}_B\\
    = \mathcal{I}_A \otimes \big(\Proj{}{B} \cup \Proj{'}{B}\big) - \overline{\big(\Proj{}{A}\cup \Proj{'}{A}\big)} \otimes \mathcal{D}_B + \mathcal{D}_A \otimes \mathcal{D}_B \\
    = \big(\Proj{}{A}\cup \Proj{'}{A}\big) \prec \big(\Proj{}{B} \cup \Proj{'}{B}\big) \:.
\end{multline}

These two interchange laws imply distributivity as a special case because the elements in the algebra are all idempotent, so that $\Proj{}{} = \Proj{}{} \cap \Proj{}{} = \Proj{}{} \cup \Proj{}{}$ and we can do the following:
$\big(\Proj{}{A}\cup \Proj{'}{A}\big) \prec \Proj{}{B} = \big(\Proj{}{A}\cup \Proj{'}{A}\big) \prec \big(\Proj{}{B} \cup \Proj{}{B}\big) = \big(\Proj{}{A}\prec \Proj{}{B}\big) \cup \big(\Proj{'}{A}\prec \Proj{}{B}\big)$. 
Therefore, the following relations hold:
\begin{subequations}
    \begin{gather}
        (\Proj{}{A}\cap \Proj{'}{A}) \prec \Proj{}{B} = (\Proj{}{A} \prec \Proj{}{B}) \cap (\Proj{'}{A} \prec \Proj{}{B}) \:,\\
        (\Proj{}{A}\cup \Proj{'}{A}) \prec \Proj{}{B} = (\Proj{}{A} \prec \Proj{}{B}) \cup (\Proj{'}{A} \prec \Proj{}{B}) \:,\\
        \Proj{}{A} \prec (\Proj{}{B}\cup \Proj{'}{B}) = (\Proj{}{A} \prec \Proj{}{B}) \cup (\Proj{}{A}\prec \Proj{'}{B}) \:,\\
        \Proj{}{A} \prec (\Proj{}{B}\cap \Proj{'}{B}) = (\Proj{}{A} \prec \Proj{}{B}) \cap (\Proj{}{A}\prec \Proj{'}{B}) \:.
    \end{gather}
\end{subequations}

\section{Accidental isomorphisms in the case of quantum theory\label{sec:projo_prec_iso}}
The set inclusions \eqref{eq:inclusions_compo}, appearing in the main text as Eqs. \eqref{eq:inclusions}:
\begin{subequations}
    \begin{gather}
        \CompProj{}{A} \otimes \Proj{}{B} \subseteq \CompProj{}{A} \prec \Proj{}{B} \subseteq \Proj{}{A}\rightarrow \Proj{}{B}\:,\\
        \CompProj{}{A} \otimes \Proj{}{B} \subseteq \CompProj{}{A} \succ \Proj{}{B} \subseteq \Proj{}{A}\rightarrow \Proj{}{B} \:,
    \end{gather}
\end{subequations}
can become equivalences in the special cases where the projectors are either identity or depolarizing. These imply set isomorphisms that are not without consequences: subsets with different signaling constraints get accidentally equivalent. 

Putting an identity on the right side of the $\rightarrow$ gives
\begin{equation}
    \Proj{}{A} \rightarrow \mathcal{I}_B= \overline{\Proj{}{A}\otimes \mathcal{D}_B}\:,
\end{equation}%
which happens to be equivalent to
\begin{equation}
    \CompProj{}{A} \prec \mathcal{I}_B= \mathcal{I}_A \otimes \mathcal{I}_B - \Proj{}{A} \otimes\mathcal{D}_B + \mathcal{D}_A \otimes \mathcal{D}_B\:.
\end{equation}
As it can be shown directly from the definitions. In that case, the two-way and one-way signaling compositions coincide. The same way, putting one on the left side gives
\begin{equation}
\begin{aligned}
   \mathcal{I}_A \rightarrow \Proj{}{B} =& \overline{\mathcal{I}_A \otimes \CompProj{}{B}}\\
   =& \mathcal{I}_A \otimes \mathcal{I}_B - \mathcal{I}_A \otimes\CompProj{}{B} + \mathcal{D}_A \otimes \mathcal{D}_B\\
   =&\mathcal{I}_A \otimes \Proj{}{B} - \mathcal{I}_A \otimes \mathcal{D}_B +  \mathcal{D}_A\otimes \mathcal{D}_B \:,
\end{aligned}
\end{equation}
which is equivalent to
\begin{equation}
   \mathcal{D}_A \prec \Proj{}{B} = \mathcal{I}_A \otimes \Proj{}{B} - \mathcal{I}_A \otimes \mathcal{D}_B +  \mathcal{D}_A\otimes \mathcal{D}_B\:.
\end{equation}
One has then the following identities:
\begin{subequations}
\begin{align}
    \CompProj{}{A} \prec \mathcal{I}_B = \Proj{}{A} \rightarrow \mathcal{I}_B \label{eq:Isom_prec_I_r}\:;\\
    \mathcal{D}_A \prec \Proj{}{B} = \mathcal{I}_A \rightarrow \Proj{}{B} \label{eq:Isom_prec_D_l} \:. 
\end{align}
\end{subequations}
These are the only two cases for which the transformation is equivalent to the prec as the following rewriting shows:
\begin{equation}
    \Proj{}{A} \rightarrow \Proj{}{B} = \CompProj{}{A} \prec \Proj{}{B} + \left( \mathcal{I}_A - \Proj{}{A}\right) \otimes \left( \mathcal{I}_B - \Proj{}{B}\right)\:. 
\end{equation}
These relations can be concisely recast as
\begin{multline}\label{eq:Iso_semi-causal_transfo}
    \CompProj{}{A} \prec \Proj{}{B} = \Proj{}{A} \rightarrow \Proj{}{B} \\
    \iff\: \Proj{}{A}=\mathcal{I}_A \:\text{or}\: \Proj{}{B}=\mathcal{I}_B\:.
\end{multline}

In addition to that, one-way signaling composition is equivalent to the no signaling composition in the following cases:
\begin{subequations}
\begin{align}
    \Proj{}{A} \prec \mathcal{D}_B = \Proj{}{A} \otimes \mathcal{D}_B \label{eq:Isom_prec_D_r}\:;\\
    \mathcal{I}_A \prec \Proj{}{B} = \mathcal{I}_A \otimes \Proj{}{B}\label{eq:Isom_prec_I_l} \:.
\end{align}
\end{subequations}
Again, this directly follows from the definition by re-expressing it as
\begin{equation}
        \CompProj{}{A} \prec \Proj{}{B} = \CompProj{}{A} \otimes \Proj{}{B} + \left( \Proj{}{A} - \mathcal{D}_A \right) \otimes \left( \Proj{}{B} - \mathcal{D}_B \right) \:,
\end{equation}
and this can be concisely recast as
\begin{multline}\label{eq:Iso_semi-causal_causal}
    \CompProj{}{A} \prec \Proj{}{B} = \CompProj{}{A} \otimes \Proj{}{B} \\
    \iff\: \Proj{}{A}=\mathcal{D}_A \:\text{or}\: \Proj{}{B}=\mathcal{D}_B\:.
\end{multline} 
It should be noted that condition \eqref{eq:Iso_semi-causal_causal} is stronger than \eqref{eq:Iso_semi-causal_transfo}. Actually, when both conditions are satisfied at once, one has either of the two identities:
\begin{subequations}
\begin{align}
    &\mathcal{I}_A \rightarrow \mathcal{D}_B = \mathcal{D}_A \prec \mathcal{D}_B =  \mathcal{D}_A \otimes \mathcal{D}_B \:;\\
    &\mathcal{D}_A \rightarrow \mathcal{I}_A = \mathcal{I}_A \prec  \mathcal{I}_B = \mathcal{I}_A \otimes  \mathcal{I}_B \:. 
\end{align}
\end{subequations}
The reason this is the case comes from isomorphism \eqref{eq:Iso_par_tensor}, which reduces the transformation into a no signaling composition. Indeed,
\begin{multline}
    \mathcal{D}_A \prec \mathcal{D}_B \overset{\eqref{eq:Iso_semi-causal_transfo}}{=} \mathcal{I}_A \rightarrow \mathcal{D}_B \overset{\eqref{eq:transformation}}{=} \overline{\mathcal{I}_A \otimes \overline{\mathcal{D}_B}}\\
    =\overline{\mathcal{I}_A \otimes \mathcal{I}_B} \overset{\eqref{eq:Iso_par_tensor}}{=} \mathcal{D}_A \otimes \mathcal{D}_B \:.
\end{multline}
And the same way,
\begin{multline}
    \mathcal{I}_A \prec \mathcal{I}_B \overset{\eqref{eq:Iso_semi-causal_transfo}}{=} \mathcal{D}_A \rightarrow \mathcal{I}_B \overset{\eqref{eq:transformation}}{=} \overline{\mathcal{D}_A \otimes \overline{\mathcal{I}_B}}\\
    =\overline{\mathcal{D}_A \otimes \mathcal{D}_B} \overset{\eqref{eq:Iso_par_tensor}}{=} \mathcal{I}_A \otimes \mathcal{I}_B \:.
\end{multline}

Therefore, the isomorphisms \eqref{eq:Iso_par_tensor}, \eqref{eq:Iso_semi-causal_transfo}, and \eqref{eq:Iso_semi-causal_causal} give the conditions for set equivalences in the composition rules.

\begin{figure}[htb]
    \centering
    \subfloat[General projectors\label{fig:accidental_isomorphism_a_b}]{
        \includegraphics[width=.97\linewidth]{Figures/Sets/Compositions.png}
    }\\
    \subfloat[Quantum theory (identity projectors)\label{fig:accidental_isomorphism_id_id}]{
        \includegraphics[width=.97\linewidth]{Figures/Sets/Accidental_Isomorphism.png}
    }
    \caption{Diagrams depicting the subspaces spanned by the four different ways of defining a transformation (top).
    When $\CompProj{}{A} = \mathcal{D}_A$ and $\Proj{}{B} = \mathcal{I}_B$ (bottom), the yellow and green zones are shrunk to the segment $D\otimes B$: Equations akin to \eqref{eq:Isom_prec_D_r} and \eqref{eq:Isom_prec_I_l} are simultaneously satisfied so no signaling (yellow) is equivalent to one-way to $A$ (green). At the same time, the pink and blue zones are reduced to $A\otimes B - A\otimes D$: Eqs. \eqref{eq:Isom_prec_I_r}and \eqref{eq:Isom_prec_D_l} are simultaneously satisfied so two-way signaling (blue) is equivalent to one-way signaling to $B$ (pink).
    \label{fig:accidental_isomorphism}}
\end{figure}
One can understand these relations using a diagram of the kind depicted in Fig. \ref{fig:accidental_isomorphism}.
For a transformation $\Proj{}{A} \rightarrow \Proj{}{B}$ made of arbitrary projectors (Fig. \ref{fig:accidental_isomorphism_a_b}, in blue), its different substructures are effectively distinct (no signaling in yellow, $A$-to-$B$ one-way in pink, and $B$-to-$A$ one-way in green). 
Recall that one can infer the inclusion relations from their overlap, e.g. $\CompProj{}{A} \otimes \Proj{}{B} \subseteq \CompProj{}{A} \prec \Proj{}{B} \subseteq \Proj{}{A} \rightarrow \Proj{}{B}$ can be inferred from the fact that the yellow part is contained within the pink one which is contained within the blue one. Another example, $\left(\CompProj{}{A} \prec \Proj{}{B}\right)\cap \left(\CompProj{}{A} \succ \Proj{}{B}\right) = \Proj{}{A} \rightarrow \Proj{}{B}$ can be inferred from the fact that zone covered by the pink and green areas is equivalent to the blue one.

An accidental isomorphism is present when either the input or the output is a density operator (featuring an identity projector). In the first case, $\Proj{}{A} = \mathcal{I}_A$ and $\CompProj{}{A} = \mathcal{D}_A$, so the area $\overline{A}$ shrinks into a line: the top and left quadrants of the circle are shrunk into the diagonal that goes from bottom left to top right. The yellow zone is thus shrunk into $D\otimes B$, the upper border of $A\otimes B$, and so is the green zone, this equivalence of yellow and green is similar to Eq. \eqref{eq:Isom_prec_D_r}, but with $A$ and $B$ swapped: $\CompProj{}{A} =\mathcal{D}_A \:\Rightarrow\:\CompProj{}{A} \succ \Proj{}{B} = \mathcal{D}_A \otimes \Proj{}{B}$. 
At the same time, the pink zone is shrunk to $A\otimes B -  A \otimes D$, i.e. $A\otimes B$ without its bottom border, and so is the blue zone, this equivalence of pink and blue is Eq. \eqref{eq:Isom_prec_D_l}. 
In the second case,  $\Proj{}{B} = \mathcal{I}_B$ and $\CompProj{}{B} = \mathcal{D}_B$, so the area $\overline{B}$ also shrinks to a line: the bottom and left quadrants are shrunk into the diagonal that goes from top left to bottom right. The yellow zone is untouched but the green one is shrunk into the same zone as $\overline{A}\otimes \overline{B}$ becomes the border $\overline{A}\otimes D$, i.e. the top-left segment. This equivalence when $\Proj{}{B}=\mathcal{I}_B$ is similar to Eq. \eqref{eq:Isom_prec_I_l}, but with $A$ and $B$ swapped: $\Proj{}{B}=\mathcal{I}_B\:\Rightarrow\: \CompProj{}{A}\succ \Proj{}{B} = \CompProj{}{A}\otimes \mathcal{I}_B$. As for the green zone, it is untouched as well, but the blue zone also sees its left quadrant shrunk into the top-left segment, so the green and blue zones become equivalent. This is Eq. \eqref{eq:Isom_prec_I_r}.

When $\CompProj{}{A} = \mathcal{D}_A$ and $\Proj{}{B} = \mathcal{I}_B$ (Fig. \ref{fig:accidental_isomorphism_id_id}), the yellow and green zones are shrunk to the segment $D\otimes B$. The versions of Eqs. \eqref{eq:Isom_prec_D_r} and \eqref{eq:Isom_prec_I_l} where A and B have swapped roles are simultaneously satisfied, so the diagram depicting the no signaling transformation (yellow) is equivalent to the one depicting the transformation which is one-way signaling to $A$ (green). At the same time, the pink and blue zones are reduced to $A\otimes B - A\otimes D$. Eqs. \eqref{eq:Isom_prec_I_r} and \eqref{eq:Isom_prec_D_l} are simultaneously satisfied so the diagram depicting the two-way signaling transformation (blue) is equivalent to the one depicting the transformation which is one-way signaling to $B$ (pink).

\section{No signaling is local quasi-orthogonality\label{app:NS=QO}}
Here we give more details about how the generalization of a quasi-orthogonality, Eq. \eqref{eq:QO}, into party-wise quasi-orthogonality, Eq. \eqref{eq:partTrAB=TrATrB}, yields the condition \eqref{eq:a}. In order to justify its denomination of ``no signaling'', we also show how this condition relates to the model-independent notion of no signaling. That is, to a constraint on the joint outcome distribution that two parties observe when they act probabilistically on a state structure, i.e., when applying resolution of functionals locally. 

Translating this statement into mathematical conditions, if Alice chooses her measurement settings depending on a classical random variable $x$ and gets classical outcome $a$ and if Bob does the same with settings $y$ and outcome $b$ we want the following constraints on their correlations:
\begin{subequations}\label{eq:no_sign_corr}
    \begin{gather}
        \forall y, y', \quad \sum_b p(a,b|x,y) = \sum_b p(a,b|x,y')\:;\label{eq:no_sign_corr_BtoA}\\
        \forall x, x', \quad \sum_a p(a,b|x,y) = \sum_a p(a,b|x',y)\:.\label{eq:no_sign_corr_AtoB}
    \end{gather}
\end{subequations}
This is the standard statement that no local measurement scheme can be used to deterministically gain knowledge of the other party's actions. The first condition, Eq. \eqref{eq:no_sign_corr_BtoA}, states that Alice's distribution of outcome $a$ cannot be used to determine which setting $y$ Bob has used, thus that the measurement result of Alice cannot be used to guess Bob's own choice of measurement. These correlations are \textit{no signaling} from Bob to Alice. Respectively, the second condition, Eq. \eqref{eq:no_sign_corr_AtoB}, defines no signaling from Alice to Bob correlations.

For formulating this statement in the formalism developed in this article, assume two parties Alice and Bob trying to measure a shared bipartite state structure. Each locally sees its own measurement as part of a functional state structure, respectively $\CompAlg{A}$ and $\CompAlg{B}$. Alice's and Bob's measurements are locally disconnected and thus assumed in tensor product (otherwise, they can be trivially signaling to one another just by using the non-commutative nature of their operations). Thus, their joint operations belong to a state structure $\overline{\mathscr{X}}$ having the inner structure $\overline{\mathscr{X}} \subseteq \overline{\mathscr{A}} \otimes \overline{\mathscr{B}}$. The question reduces to finding what kind of shared state $W$ Alice and Bob can locally measure so that their outcome distributions are no signaling.

Let $W$ be a state shared by Alice and Bob
, $W \in \mathscr{X} \supseteq \overline{\overline{\mathscr{A}} \otimes \overline{\mathscr{B}}}$. A measurement of the part of the state that is under the control of Alice is represented by a collection of operators $\left\{N_{a|x}\right\}$ resolving some operator $N_{|x} \in \overline{\mathscr{A}}$. Each $N_{a|x}$ is thus associated with Alice seeing outcome `$a$' given setting `$x$'. The `$_{|x}$' subscript notation is here to emphasize that $x$ can condition the choice of the resolved operator as well as the choice of how it splits into effects, depending on Alice's choice of strategy. The same way, Bob's measurement is given by $\left\{N_{b|y}\right\}$. The probability distribution reads
\begin{equation}
    p(a,b|x,y) = \TrX{}{\left(N_{a|x} \otimes N_{b|y}\right)\cdot W} \:,
\end{equation}
which translates the no signaling conditions \eqref{eq:no_sign_corr} into
\begin{subequations}
    \begin{align}
        &\forall y, y',\notag\\
        &\TrX{}{\left(N_{a|x} \otimes N_{|y}\right)\cdot W} = \TrX{}{\left(N_{a|x} \otimes N_{|y'}\right)\cdot W}\!; \label{eq:no_sign_comput_step}\\
        &\forall x, x',\notag\\
        & \TrX{}{\left(N_{|x} \otimes N_{b|y}\right)\cdot W} = \TrX{}{\left(N_{|x'} \otimes N_{b|y}\right)\cdot W}\!.
    \end{align}
\end{subequations}
These equations can be reduced into a more concise condition that resembles Eq. \eqref{eq:QO}: take Eq. \eqref{eq:no_sign_comput_step}, since it should hold for any $y,y'$ the choice of a particular setting is no longer needed, and one can simply consider different $N \in \overline{\mathscr{B}}$. Rewriting it as
\begin{multline}
    \forall N, N',\\
    \quad \TrX{A}{N_{a|x} \cdot \TrX{B}{\left(\mathds{1} \otimes N\right)\cdot W}} = \\
    \TrX{A}{N_{a|x} \cdot  \TrX{B}{\left(\mathds{1} \otimes N'\right)\cdot W}} \:,
\end{multline}
one can simplify further by noticing that the possible $N_{a|x}$ actually range over the whole of $\LinOp{\Hilb{A}}$ (up to trace normalization of course), so that only the trace over $B$ part is relevant:
\begin{equation}
    \forall N, N',\: \TrX{B}{\left(\mathds{1} \otimes N\right)\cdot W} =  \TrX{B}{\left(\mathds{1} \otimes N'\right)\cdot W}\!.
\end{equation}
Finally, as $\mathds{1}/c_B$ is a valid element of $\overline{\mathscr{B}}$, we can replace $N'$ by it to obtain the shortened
\begin{equation}
    \forall N, \: \TrX{B}{\left(\mathds{1} \otimes N\right)\cdot W} =  \frac{1}{c_B} \: \TrX{B}{\left(\mathds{1} \otimes \mathds{1} \right)\cdot W}.
\end{equation}
Using Proposition \ref{theo:det_fctal}, observe that $1/c_B = \TrX{}{N}/d_B$ because of Eq. \eqref{eq:det_fctal_norm}, we can thus make $\TrX{}{N}$ appear in the right-hand side and reach the desired formulation. Doing the same reasoning for condition \eqref{eq:no_sign_corr_AtoB}, we obtain the following rephrasing of Eqs. \eqref{eq:no_sign_corr}:
\begin{subequations}\label{eq:no_sign}
    \begin{gather}
        \forall N_B, \: \TrX{B}{\left(\mathds{1} \otimes N_B\right)\cdot W} =  \frac{\TrX{}{N_B}\TrX{B}{ W}}{d_B}\:;\label{eq:no_sign_BtoA}\\
        \forall N_A, \: \TrX{A}{\left(N_A \otimes \mathds{1}\right)\cdot W} =  \frac{\TrX{}{N_A}\TrX{A}{ W}}{d_A}\:,\label{eq:no_sign_AtoB}
    \end{gather}
\end{subequations}
Hence, no signaling \eqref{eq:no_sign_corr} can be recast into the conditions \eqref{eq:no_sign}, which amounts to requiring quasi-orthogonality for only one of the tensor factors of an operator. This leads to Eq. \eqref{eq:causal_notA_to_B} and the ensuing characterization in Lemma \ref{theo:causal_map}, and in particular to Eqs. \eqref{eq:A_otimes_B_def} and Corollary \ref{theo:tensor} in the main text. 

\section{Proofs}
\subsection{Proof of Proposition \ref{theo:det_fctal}\label{app:fctal_proof}}
As $c_A/d_A \mathds{1}$ is a valid element of $\mathscr{A}\subseteq\LinOp{\Hilb{A}}$, any element $N$ of $\CompAlg{A}$ must satisfy $1 = \InProd{N}{c_A/d_A \mathds{1}} = \TrX{}{N \cdot c_A/d_A \mathds{1}} = c_A/d_A \TrX{}{N}$. Since $N$ is arbitrary, this fixes the normalization for all elements of $\overline{\mathscr{A}}$: $c_{\overline{A}} \equiv d_A/c_A = \TrX{}{N}$; this is condition \eqref{eq:det_fctal_norm}.

Positivity follows from the requirement of element-wise positivity with each resolution. For any $N$, let $\{\ket{e_j}\}$ is an orthonormal basis of $\Hilb{A}$ so that $N$ is diagonal in this basis. Then its eigendecomposition is given by $N = \sum_j a_j \dyad{e_j}$, where $a_j\in \mathbb{C}$ are the eigenvalues of $N$. 
Since $c_A/d_A \mathds{1} =  c_A/d_A \sum_j\dyad{e_j}$, a resolution of $c_A/d_A \mathds{1}$ can be defined as $\{E_j = c_A/d_A \dyad{e_j}\}_{j=0}^{d_A-1}$. That each element in the resolution gives a number between 0 and 1 reads $\forall j, \InProd{N}{E_j} \in [0,1]$. But $\InProd{N}{E_j} = \TrX{}{\left(\sum_i a_i \dyad{e_i}\right)^\dag\cdot \left(c_A/d_A\dyad{e_j}\right)} = c_A/d_A\: \sum_i \: a_i^*\:\delta_{i,j}$, where $*$ indicates complex conjugation and $\delta_{i,j}$ is the Kronecker delta. Hence, $\forall j, \InProd{N}{E_j} = c_A/d_A\:a_j^* \in [0,1]$, and since $c_A/d_A$ is a positive real constant, this means that each $a_j$ is real and greater or equal to zero, therefore that $N$ is positive semi-definite, condition \eqref{eq:det_fctal_pos}. 

Finally, the projective condition remains. First, $1/c_A \mathds{1}$ must belong to $\overline{\mathscr{A}}$ as it is positive, properly normalized, and respects $\TrX{}{1/c_A \mathds{1} \cdot V} = 1$ for all $V\in \mathscr{A}$. Assuming that $\overline{\mathscr{A}}$ is a deterministic state structure with projector $\CompProj{}{A}\equiv\mathcal{I}-\Proj{}{A}+\mathcal{D}$, the if part follows: Any positive and properly normalized operator $N$ in $\LinOp{\Hilb{A}}$ on which the projector $\CompProj{}{A}$ is applied obeys $\InProd{\CompProjOn{}{A}{N}}{V} = 1$ for all $V \in \mathscr{A}$ because $\InProd{\CompProjOn{}{A}{N}}{V} \equiv \InProd{\left(\mathcal{I}-\Proj{}{A}\right)\{N\}}{V} + \InProd{\mathcal{D}\{N\}}{V}$. The first member on the right part of the equality vanishes because it belongs to the orthogonal complement of $\mathscr{A}$; the second member is normalized so that $\InProd{\mathcal{D}\{N\}}{V} = \InProd{\TrX{}{N}/d_A\mathds{1}}{V} = \InProd{\frac{d_A}{c_A d_A}\mathds{1}}{V}= 1/c_A\TrX{}{V} = 1$.

To prove the only if part, assume that there is a positive $X$ with trace norm $c_{\overline{A}} = d_A/c_A$ that gives $\InProd{X}{V}=1$ for all $V\in \Alg{A}{}$. Then, $\InProd{X}{\ProjOn{}{A}{V}}=1$ by definition of $V$. Since $\Proj{}{A}$ is self-adjoint, this further implies $\InProd{\ProjOn{}{A}{X}}{V} = 1$. 
Remark that $\Proj{}{A} = \mathcal{D} + \Proj{}{A} - \mathcal{D}$ and that $\Proj{}{A} - \mathcal{D} = \mathcal{I} - \CompProj{}{A}$ so $\Proj{}{A} = \mathcal{D} + (\mathcal{I} - \CompProj{}{A})$. Inserting this in the inner product and using its linearity, it must be that $\InProd{\mathcal{D}\{X \}}{V} + \InProd{(\mathcal{I} - \CompProj{}{A})\{X\}}{V} =1$. The first term of the l.h.s. simplifies into $\InProd{\mathcal{D}\{X \}}{V} = d_A/c_A\InProd{\frac{\mathds{1}}{d_A}}{V}$ because the trace of $X$ was assumed to be $d_A/c_A$ and so this term must be equal to 1 as $V$ has a trace norm of $c_A$ by definition. Therefore, it must be the case that $\InProd{(\mathcal{I} - \CompProj{}{A})\{X\}}{V} = 0$. Since both $X$ and $V$ are generally non-zero (as they are arbitrary), and we want this inner product to be zero for all $V$, there are two possibilities for $X$: it can either belong to the orthogonal complement of $V$, i.e. to the image of $\mathcal{I}- \Proj{}{A}$, or else it can be that $(\mathcal{I} - \CompProj{}{A})\{X\} = 0$, i.e. that $\CompProj{}{A}\{X\} = X$. Because $X$ has a non-zero trace while the orthogonal complement of $\Proj{}{A}$ is spanned by traceless matrices, we conclude that only the latter condition can be true, i.e. $\CompProj{}{A}\{X\} = X$ if $\InProd{X}{V}=1$ for all $V\in \Alg{A}{}$. This proves the necessity of the projective condition.

\subsection{Proof of Proposition \ref{theo:det_map}\label{sec:det_map_proof}}
    Following Ref. \cite{Dynamics}, the second requirement of Definition \ref{def:struc_pres} states that:
    \begin{subequations}\label{eq:struc_pres}
    \begin{gather}
    \MOn{V} \geq 0\:, \label{eq:struc_pres_pos}\\
    \TrX{}{\MOn{V}} = c_B\:, \label{eq:struc_pres_resc}\\
    \Proj{}{B} \circ \mathcal{M} \circ \Proj{}{A} = \mathcal{M} \circ \Proj{}{A}\:. \label{eq:struc_pres_proj}
    \end{gather}
    \end{subequations}
    The third requirement imposes condition \eqref{eq:det_map_pos}. This condition implies in turn the positivity of $\mathcal{M}$, Eq. \eqref{eq:struc_pres_pos}, because $\mathcal{M}$ having a positive Choi matrix is equivalent to $\mathcal{M}$ being completely positive by Choi theorem \cite{Choi1975}.
    
    Using the reverse definition of the CJ isomorphism, Eq. \eqref{eq:CJ^-1}, condition \eqref{eq:struc_pres_resc} becomes (subscripts have been put on $M$ and $V$ for clarity)
    \begin{equation}
        \TrX{}{M_{AB} \cdot (V_A \otimes \mathds{1}_B)} = c_B \;,
    \end{equation}
    from which the normalization condition \eqref{eq:det_map_norm} follows, as $V$ can be $c_A\mathds{1}/d_A$. 
    Finally, as $\MOn{V} = (\TrX{A}{M \cdot (V\otimes \mathds{1}_B)})^T$ and $\MOn{V}\in \mathscr{B}$, it should be true that, $\forall N \in \CompAlg{B}$,
    \begin{equation}
        \begin{aligned}
        1 &= \TrX{B}{N\cdot \MOn{V}}\\
        &= \TrX{B}{N\cdot (\TrX{A}{M \cdot (V\otimes \mathds{1}_B)})^T} \\
        &= \left(\TrX{B}{\TrX{A}{M \cdot (V\otimes \mathds{1}_B)} \cdot N^T }\right)^T \\
        &= \TrX{}{M \cdot (V \otimes N^T)} \:.
        \end{aligned}
    \end{equation}
    Since $M$ is positive and normalized, and since $V\in \mathscr{A}$ and $N\in\overline{\mathscr{B}}$ are arbitrary, it follows from this last equation that the set of all $M$ obeying the above is a functional state structure on a spanning set for $\mathscr{A} \otimes \overline{\mathscr{B}}^T  \cong \mathscr{A} \otimes \overline{\mathscr{B}}$ (recall that a state structure is closed under transposition).
    From linearity, it means that $M$ is also normalized on affine combinations of this set. Indeed, let $\{V_i\}$ and $\{N_i\}$ two arbitrary sets of states in $\Alg{A}{}$ and $\CompAlg{B}$, respectively, and let $ 1 = \sum_i q_i$ where $\forall i,\,q_i\in \mathbb{R}$. Then,
    \begin{equation}
        \begin{aligned}
            1 &= \sum_i q_i \\
            &= \sum_i q_i \TrX{}{M \cdot (V_i \otimes N_i^T)}\\
            &= \TrX{}{M \cdot \big(\sum_i q_i(V_i \otimes N_i^T)\big)}\:.
        \end{aligned}
    \end{equation}
    
    Hence, $M$ is also normalized on the affine hull of the pure tensor products of $\Alg{A}{}$ and $\CompAlg{B}$, which is $\Alg{A}{} \otimes \CompAlg{B}$ by Definition \ref{prop:tensor}.
    Therefore, using Proposition \ref{theo:det_fctal}, the set of positive and normalized operators on this set is characterized by a projector $\overline{\Proj{}{A}\otimes \CompProj{}{B}}$ yielding the projective condition \eqref{eq:det_map_proj} (note this was first proven in \cite[Proposition 1]{Bisio2018}).


\subsection{Proof of Proposition \ref{theo:Bisio_equivalence}\label{sec:Bisio_equivalence_proof}}
     The characterization of the set of deterministic events is one of the core results of Ref. \cite{Bisio2018}. The part of their derivation that is of interest for this proof is the characterization of what is called the \textit{deterministic events} of an arbitrary type $X$ constructed from elementary types $A, B, ..., K$ and using the $\{1,(,),\rightarrow\}$ connectors. 
    The authors start by 1) defining the deterministic events of elementary type (Definition 6, p.6); then they 2) generalize the definition to deterministic events of arbitrary type $X \rightarrow Y$, where $X$ and $Y$ can be any type (Definition 7, p. 7); and finally they 3) characterize the deterministic events of type $X\rightarrow Y$ (Proposition 1, p. 11). 
    The end goal of this proof is then to show that their Proposition 1 is our Proposition \ref{theo:det_map} in the special case where the base state structures are sets of density matrices.

    Our proof strategy will consist of showing that each of these steps has an equivalent in the projective characterization under the assumption that the base state structures are the quantum states. 
    Precisely, we will show 1) what is called `deterministic events of elementary types $A, \ldots, K$', and noted $\mathrm{T}_1(A), \ldots, \mathrm{T}_1(K)$ in Ref. \cite{Bisio2018}, correspond to the base state structures $\mathscr{A}, \ldots, \mathscr{K}$, all associated with the identity projector, $\Proj{}{A} = \mathcal{I}_A, \ldots , \Proj{}{K} = \mathcal{I}_K$, and with unit normalization, $c_A = 1, \ldots, c_K = 1$; 
    2) that what is called `the sets of deterministic events of type $X$ obtained from elementary types $A,\ldots, K$', and noted $\mathrm{T}_1(X)$ in Ref. \cite{Bisio2018} (remark that for non-elementary types, the authors of this work prefer to use small case letters like $x$, but this convention is not followed in this article) corresponds to a state structure $\mathscr{X}$ obtained from base structures $\mathscr{A}, \ldots, \mathscr{K}$; 
    3) that their characterization of the deterministic events of type $X \rightarrow Y$ corresponds to our characterization of the state structure $\mathscr{X} \rightarrow \mathscr{Y}$.

    The first item is trivial to show. By Definition 6 in Ref. \cite{Bisio2018}, an operator $M$ is a deterministic event of elementary type $A$ if $M \in \LinOp{\Hilb{A}}:$ $M\geq 0$ and $\TrX{}{M}=1$. This is exactly the state structure of density operators as characterized by Eqs. \eqref{eq:state_char}.

    The second item is shown by first using Theorem 3 of Ref. \cite{Bisio2018}, which characterizes the set of deterministic events of arbitrary type by the following constraints:
    \begin{subequations}
        \begin{gather}
            M \in \mathrm{T}_1(X \rightarrow Y) \iff \notag \\
            M \geq 0 \:;\\
            \forall V \in \mathrm{T}_1(X): \: \mathcal{M}(V) \in \mathrm{T}_1(Y) \:;
        \end{gather}
    \end{subequations}
    where $\mathcal{M}$ is the linear map obtained when applying the reverse direction of the CJ correspondence on $M$. 
    In such a form, it is direct to see that it corresponds to our definition of a structure-preserving map, Def. \ref{def:struc_pres}.

    For the third item, Proposition 1 of Ref. \cite{Bisio2018} provides the following characterization result:
    \begin{subequations}\label{eq:Bisio_char}
        \begin{gather}
            M \in \mathrm{T}_1(X \rightarrow Y) \iff \notag \\
            M \geq 0 \:, \label{eq:Bisio_char_positivity}\\
            M = \lambda_{X \rightarrow Y} \mathds{1} + H_{X\rightarrow Y} \label{eq:Bisio_char_rest} \:,
        \end{gather}
    \end{subequations}
    where $H_{X\rightarrow Y}$ belongs to a subspace $\Delta_{X\rightarrow Y}$. This subspace and the constant $\lambda_{X \rightarrow Y}$ are defined recursively by:
    \begin{subequations}
        \begin{gather}
            \text{For $X=1$ and $Y$ elementary, } \notag \\
            \lambda_Y = \frac{1}{d_Y} \:, \label{eq:lambda_elem}\\
            \Delta_Y = \{M \in \LinOp{\Hilb{Y}}| \TrX{}{M} = 0\} \:. \label{eq:Delta_elem}\\
            \text{Else, } \notag \\
            \lambda_{X\rightarrow Y} = \frac{\lambda_Y}{d_X\lambda_X}\:, \label{eq:lambda_gen}\\
            \Delta_{X\rightarrow Y} = \left[ \LinOp{\Hilb{X}} \otimes \Delta_Y \right] \oplus \left[ \overline{\Delta}_X \otimes \Delta_Y^\perp \right] \:. \label{eq:Delta_gen}
        \end{gather}
    \end{subequations}
    In the last formula, the $\overline{\:\cdot\:}$ indicates a quasi-orthogonal complement whilst the ${}^\perp$ indicates an orthogonal complement.

    What we want to show is that the conditions \eqref{eq:Bisio_char} are exactly the conditions \eqref{eq:det_map}, i.e.,
    \begin{subequations}\label{eq:det_map_app}
    \begin{gather}
        M\in \mathscr{X}\rightarrow \mathscr{Y} \: \iff\notag \\
        M \geq 0 \:\label{eq:det_map_app_pos},\\
        \TrX{}{M} = \frac{c_Y}{c_X}d_X\label{eq:det_map_app_norm} \:,\\
        \left(\Proj{}{X}\rightarrow \Proj{}{Y}\right)\{M\} = M \:,\label{eq:det_map_app_projo}
    \end{gather}
    \end{subequations}
    despite our vastly different approaches. 
    Firstly, notice that conditions \eqref{eq:Bisio_char_positivity} and \eqref{eq:det_map_app_pos} are the same, so the proof will consist in showing that condition \eqref{eq:Bisio_char_rest} is equivalent to conditions \eqref{eq:det_map_app_norm} and \eqref{eq:det_map_app_projo}. 

    Since any $\Delta_{X}$ space is traceless by construction (see Sec. 5. b. of Ref. \cite{Bisio2018}), we can extract the trace condition by taking the trace of Eq.\eqref{eq:Bisio_char_rest}, yielding $\TrX{}{M} = 1$ for Eq. \eqref{eq:lambda_elem}, and $\TrX{}{M} = \frac{\lambda_Y}{\lambda_X}d_Y$ for Eq. \eqref{eq:lambda_gen}. By direct inspection, one can identify that $c_X = \lambda_X d_X$, so that $\TrX{}{M} = \frac{c_Y}{c_X}d_X$, recovering Eq. \eqref{eq:det_map_app_norm}, and that $c_Y = 1$ when $\mathscr{Y}$ is a base state structure, i.e. when $\Proj{}{Y} = \mathcal{I}_Y$.
    
    We can characterize the $\Delta$ subspaces using superoperator projectors: Eq. \eqref{eq:Delta_elem} is equivalent to the condition $\left(\mathcal{I}-\mathcal{D}\right)\{M\} = M$ since the trace of an operator is contained in the subspace invariant under the depolarizing superoperator $\mathcal{D}$. Injecting this in Eq. \eqref{eq:Bisio_char_rest}, we recover the (trivial) condition $\mathcal{I}\{M\} = M$ characterizing the set of density matrices in the case where $X=1$ and $Y$ is elementary. 
    
    Now, for the general case, since $\Delta_{X}$ (with $X$ being any type) must be a traceless subspace, we can posit it to correspond to some projector $\Proj{}{X} - \mathcal{D}_X$ such that $V \in \Delta_{X} \iff (\Proj{}{X} - \mathcal{D}_X)\{V\} = V$. Similarly, to $\Delta_Y$ will correspond some projector $\Proj{}{Y} - \mathcal{D}_Y$ and to $\Delta_{X\rightarrow Y}$ some $\Proj{}{X\rightarrow Y} - \mathcal{D}_X \otimes \mathcal{D}_Y$. Injecting the three ans\"{a}tze into Eq. \eqref{eq:Delta_gen} yields
    \begin{multline}
        \Proj{}{X\rightarrow Y} - \mathcal{D}_X \otimes \mathcal{D}_Y =\left[\mathcal{I}_X \otimes (\Proj{}{Y}- \mathcal{D}_Y)\right]  \\
        + \left[\overline{(\Proj{}{X}- \mathcal{D}_X)} \otimes (\Proj{}{Y}- \mathcal{D}_Y)^\perp\right]\:.
    \end{multline}
    (Note that the correspondence between the direct sum of spaces $\oplus$ and the sum of projectors is only possible because the intersection of each part of the sum is the zero projector. In general, the union of spaces through $\oplus$ must correspond to the union of projectors $\cup$ introduced in Sec. \ref{sec:Projos_alg}.)

    The orthogonal complement $\mathcal{P}^\perp$ of a projector $\Proj{}{}$ is given by $\mathcal{I} - \Proj{}{}$ whereas his quasi-orthogonal complement $\CompProj{}{}$ is given by $\mathcal{I} - \Proj{}{} + \mathcal{D}$ (see Eq. \eqref{eq:det_fctal_proj}). Injecting these into the right-hand side, then subsequently simplifying, gives
    \begin{multline}
        \left[\mathcal{I}_X \otimes (\Proj{}{Y}- \mathcal{D}_Y)\right] \\
        + \left[\overline{(\Proj{}{X}- \mathcal{D}_X)} \otimes (\Proj{}{Y}- \mathcal{D}_Y)^\perp\right] \\
        =\left[\mathcal{I}_X \otimes (\Proj{}{Y}- \mathcal{D}_Y)\right] + \left[(\mathcal{I}_X - \Proj{}{X}) \otimes \CompProj{}{Y}\right] \\
        =\mathcal{I}_X \otimes \Proj{}{Y} - \mathcal{I}_X \otimes \mathcal{D}_Y + \mathcal{I}_X \otimes \CompProj{}{Y} - \mathcal{D}_X \otimes \CompProj{}{Y} \\
        =\mathcal{I}_X \otimes\mathcal{I}_Y - \Proj{}{X} \otimes \CompProj{}{Y} \:.
    \end{multline}
    (All the identities used and projector manipulations are detailed in Sec. \ref{sec:Projos_alg} and App. \ref{eq:projo}.)
    Hence, Eq. \eqref{eq:Delta_gen} is recast as the following projector identity:
    \begin{equation}
        \Proj{}{X\rightarrow Y} - \mathcal{D}_X \otimes \mathcal{D}_Y = \mathcal{I}_X \otimes\mathcal{I}_Y - \Proj{}{X} \otimes \CompProj{}{Y} \:.
    \end{equation}
    Passing the $\mathcal{D}_X \otimes \mathcal{D}_Y$ term on the other side, we obtain
    \begin{equation}
    \begin{aligned}
        \Proj{}{X\rightarrow Y} &= \mathcal{I}_X \otimes\mathcal{I}_Y - \Proj{}{X} \otimes \CompProj{}{Y} + \mathcal{D}_X \otimes \mathcal{D}_Y\\
        &= \overline{\Proj{}{X} \otimes \CompProj{}{Y}} \\
        &=\Proj{}{X} \rightarrow \Proj{}{Y} \:,
    \end{aligned}
    \end{equation}
    recovering Eq. \eqref{eq:det_map_app_projo}.

    Therefore, we showed that conditions \eqref{eq:Bisio_char} are equivalent to the conditions \eqref{eq:det_map_app}. We conclude that Proposition 1 of Ref. \cite{Bisio2018} is equivalent to Proposition \ref{theo:det_map} of this paper. This concludes the proof.
    
\subsection{Proof of Lemma \ref{lem:NS_lattice}\label{sec:NS_lattice_proof}}

    Let $\Proj{(n)}{}$ be the projector obtained after $n$ `steps' that can be categorized as 1) do a global negation of the projector, $\Proj{(n)}{} = \CompProj{(n-1)}{}$; 2) add a base projector on the right using tensor product, $\Proj{(n)}{} = \Proj{(n-1)}{}\otimes \Proj{}{X}$; 3) add a similarly obtained projector after $k$ steps, $\Proj{(k)}{}$, on the right using tensor product, $\Proj{(n)}{} = \Proj{(n-1)}{} \otimes \Proj{(k)}{}$. This covers all cases as the $\rightarrow$ can be split into a negation and a tensor, $\cdot\rightarrow \cdot\equiv \overline{\cdot\otimes\overline{\;\cdot\;}}$, and one can always redefine the tensor factors labeling so that the added system is on the right since $\Hilb{A}\otimes \Hilb{B} \cong \Hilb{B}\otimes \Hilb{A}$.

    The only non-trivial first step is to choose a base projector, say $\Proj{}{A}$, in which case the claim trivially holds, $\Proj{}{A}\subseteq \Proj{}{A} \subseteq \overline{\overline{\mathcal{P}}}_{A}$. Suppose it holds after $n-1$ steps during which a certain amount of base projectors were added so that the joint projectors act on a Hilbert space whose factors range from $\Hilb{A}$ up to $\Hilb{J}$ (this is without loss of generality as the index of the last factor, $J$, was arbitrarily chosen). Then, the hypothesis for the induction reads 
    \begin{equation}
        \TProj{}{A} \otimes \ldots \otimes \TProj{}{J} \subseteq \Proj{(n-1)}{} \subseteq \overline{\overline{\TProj{}{A}} \otimes \ldots \otimes  \overline{\TProj{}{J}}}\:.
    \end{equation}
    Let $\TProj{}{A} \otimes \ldots \otimes \TProj{}{J} \equiv \Proj{(n-1)}{NS}$ and $\overline{\overline{\TProj{}{A}} \otimes \ldots \otimes  \overline{\TProj{}{J}}} \equiv \Proj{(n-1)}{FS}$. Then, $ \Proj{(n-1)}{NS} \subseteq  \Proj{(n-1)}{} \subseteq \Proj{(n-1)}{FS}$. The following holds because of Eqs. \eqref{eq:inclusion_duality} 
    \begin{equation}
        \CompProj{(n-1)}{NS} \supseteq  \CompProj{(n-1)}{} \supseteq \CompProj{(n-1)}{FS}\:,
    \end{equation}
    This corresponds to doing a step of category 1), $\Proj{(n)}{} = \CompProj{(n-1)}{}$, in which case $\CompProj{(n-1)}{NS} = \overline{\TProj{}{A} \otimes \ldots \otimes \TProj{}{J}}$. A negation can be put over every single projector by redefining the tilde, $\TProj{}{A} \mapsto \overline{\TProj{}{A}}$, which implies that  $\CompProj{(n-1)}{NS}$ is redefined as $\CompProj{(n-1)}{NS} = \Proj{(n)}{FS}=\overline{\overline{\TProj{}{A}} \otimes \ldots \otimes \overline{\TProj{}{J}}}$. The same can be done on $\CompProj{(n-1)}{FS}$, which proves the induction for 1).

    Let $\Proj{'}{}$ be an arbitrary projector on operator system, the following holds by \eqref{eq:subset_tensor} and \eqref{eq:tensor_in_par} proven in App. \ref{sec:projo_prop_tensor}:
    \begin{subequations}
        \begin{gather}
            \Proj{(n-1)}{NS} \otimes \Proj{'}{} \subseteq  \Proj{(n-1)}{} \otimes \Proj{'}{} \subseteq \Proj{(n-1)}{FS} \otimes \Proj{'}{}\:,\\
            \Proj{(n-1)}{FS} \otimes \Proj{'}{} \subseteq \overline{\CompProj{(n-1)}{FS} \otimes \CompProj{'}{}}\:,\\
            \Proj{(n-1)}{NS} \otimes \Proj{'}{} \subseteq  \Proj{(n-1)}{} \otimes \Proj{'}{} \subseteq \overline{\CompProj{(n-1)}{FS} \otimes \CompProj{'}{}}\:.
        \end{gather}
    \end{subequations}
    Since $\Proj{'}{}$ is arbitrary, the first equation corresponds to either doing step 2) or 3). For case 2), $\Proj{'}{}\equiv \Proj{}{L}$ is the (j+1)-th subsystem added, corresponding to some party $L$, so that $\Proj{(n-1)}{} \otimes \Proj{}{L} = \Proj{(n)}{}$. The third equation then reads $\Proj{(n-1)}{NS} \otimes \Proj{}{L} \subseteq  \Proj{(n)}{} \subseteq \overline{\CompProj{(n-1)}{FS} \otimes \CompProj{'}{}}$. Set  $\Proj{}{L} \equiv \TProj{}{L}$ and the induction is proven. 
    The reasoning is analog in case 3). Note that the added system, $\Proj{'}{}\equiv\Proj{(k)}{}$, is also included in some $\Proj{(k)}{NS}\subseteq \Proj{(k)}{}\subseteq\Proj{(k)}{FS}$ by assumption. Using Eq. \eqref{eq:subset_tensor} again, one has $\Proj{(n-1)}{NS} \otimes \Proj{(k)}{NS} \subseteq  \Proj{(n-1)}{} \otimes \Proj{(k)}{} \subseteq \overline{\CompProj{(n-1)}{FS} \otimes \CompProj{(k)}{FS}}\:$. As $\Proj{(n)}{} = \Proj{(n-1)}{} \otimes \Proj{(k)}{}$, the induction is proven by using the fact that the tensor as well as $\overline{\:\cdot\:}\rightarrow\cdot$ are associative operations to extend the expressions on both side of $\Proj{(n-1)}{} \otimes \Proj{(k)}{}$ and then to define the $\TProj{}{}$'s according to the sought expression.

\subsection{Proof of Lemma \ref{theo:causal_map}\label{sec:proof_causal_map}}
    Let $\left\{\GGB{X}{i}\right\}$ be a basis of $\LinOp{\Hilb{X}}$ so that
    \begin{subequations}\label{eq:traceless_basis}
    \begin{gather}
        \left(\GGB{X }{i}\right)^\dag =\GGB{X}{i} \:; \label{eq:traceless_basis_hermit}\\
        \GGB{X}{0} \equiv \mathds{1}/\sqrt{d_A} \:; \label{eq:traceless_basis_unit}\\
        \TrX{}{\GGB{X}{i\neq 0}} = 0 \:;\label{eq:traceless_traceless}\\
        \InProd{\GGB{X}{i}}{\GGB{X}{j}} \equiv \TrX{}{\GGB{X}{i}\cdot\GGB{X}{j}} = \delta_{i,j}\:.\label{eq:traceless_basis:ortho}
    \end{gather}
\end{subequations}
    And choose this basis on $\LinOp{\Hilb{A}}$ such that the first $n< d_A^2$ elements after $\GGB{A}{0}$, i.e. the set $\{\GGB{A}{1}, \GGB{A}{2}, \ldots ,\GGB{A}{n}\}$, form a spanning set of $\mathscr{A}\setminus\{\mathds{1} = \sqrt{d_A}\sigma_0\}$. This implies that if $\Proj{}{A}$ is the projector on $\mathscr{A}$, then $\sum_{i=0}^{d_A^2-1}\ProjOn{}{A}{\GGB{A}{i}} = \sum_{i=0}^{n}\ProjOn{}{A}{\GGB{A}{i}}$.
    We expand $M$ and $V$ using this basis, let $M=\sum_{i,j}^{d_A^2-1,d_B^2-1} m_{ij} \; \GGB{A}{i} \otimes \GGB{B}{j}$; $V= c_A/\sqrt{d_A} \; \GGB{A}{0} + \sum_{k=1}^{n} v_k\; \GGB{A}{k}$. By direct computation:
    \begin{equation}\label{eq:proof_1}
        \TrX{A}{M} = d_A \; \sum_{j=0}^{d_B^2-1} m_{0j} \; \GGB{B}{j} \:;
    \end{equation}%
    \begin{equation}\label{eq:proof_2}
        \TrX{}{V} = c_A\;;
    \end{equation}
    \begin{equation}\label{eq:proof_3}
        \TrX{A}{M \cdot \left(V \otimes N\right)} = \left(\sum_{i=0}^{n}\sum_{j=0}^{d_B^2-1} d_A \; m_{ij} \; v_i \GGB{B}{j}\right)\cdot N\!.
    \end{equation}
    Combining Eqs. \eqref{eq:proof_1} and \eqref{eq:proof_2} in the right-hand side of Eq. \eqref{eq:partTrAB=TrATrB} and \eqref{eq:proof_3} in its left-hand side yields
    \begin{multline}
        \TrX{A}{M \cdot \left(V \otimes N\right)} = \frac{\TrX{A}{M}\TrX{}{V}}{d_A} \cdot N \Rightarrow \\
            \left(\sum_{i=0}^{n}\sum_{j=0}^{d_B^2-1} d_A \; m_{ij} \; v_i \GGB{B}{j}\right)\cdot N = \\
            \frac{\left(d_A \; \sum_{j=0}^{d_B^2-1} m_{0j} \; \GGB{B}{j}\right) \left( c_A\right)}{d_A} \cdot N
            \:,
    \end{multline}
    the equality is true for
    \begin{equation}
        \left(\sum_{i > 0}^{n}\sum_{j=0}^{d_B^2-1} \; m_{ij} \; v_i\; \GGB{B}{j}\right)\cdot N = 0\:.
    \end{equation}
    This reduces to $\sum_{i > 0}^n\sum_{j=0}^{d_B^2-1} \; m_{ij} \; v_i\; \GGB{B}{j} = 0$ because $N$ is arbitrary and can therefore be taken as $\mathds{1}_B$. 
    On the other hand, this equation has to hold for all $V$ in $\Alg{A}{}$, so one can consider the family $V_i\equiv \sigma_i^A$. Each $V_i$ then induces a condition $\sum_j \; m_{ij} \; \GGB{B}{j} = 0$ for $i\in 1,\ldots,n$. Because the $\GGB{B}{j}$'s are orthonormal vectors, and $M$ is arbitrary, this condition is true for a given $V_i$ if and only if $m_{ij}=0 \:\forall j$ whenever $i$ corresponds to a basis element of $\mathscr{A}$, i.e., whenever $0<i\leq n$. The condition for the equality to hold is therefore that $M$ has the form $\sum_{j=0}^{d_B^2-1} \left(m_{0j}\: \GGB{A}{0} \otimes \GGB{B}{j} + \sum_{i>n}^{d_A^2-1} m_{ij} \:\GGB{A}{i} \otimes \GGB{B}{j} \right)$. This is equivalent to requiring Eq. \eqref{eq:nAotimesiB} on $M$, concluding the proof.

\subsection{Proof of Proposition \ref{prop:lattice}\label{sec:lattice_proof}}
General case: That $\Gamma$ is a projector on operator system acting on $\LinOp{\Hilb{A}\otimes\ldots\otimes\Hilb{K}}$ is direct from the properties of the operations in the algebra proven in App. \ref{sec:projo}. Note that these operations also guarantee commutativity, so that the composition (cap) of any two such $\Gamma$ is always well-defined and results in a projector on an operator system. As a consequence, the cup is also well-defined. The least and greatest elements directly follow as any expression must be contained between the projector on the span of the identity $\mathcal{D}_{A\ldots K} $ and the identity projector $\mathcal{I}_{A\ldots K}$, so that $\mathcal{D}_{A\ldots K} \overset{\eqref{eq:composite_D}}{=} \mathcal{D}_{A} \otimes \ldots \otimes  \mathcal{D}_{K} \subseteq \Gamma \subseteq \mathcal{I}_{A} \otimes \ldots \otimes  \mathcal{I}_{K} \overset{\eqref{eq:composite_I}}{=} \mathcal{I}_{A\ldots K}$. Finally, the least and greatest elements are reached since (for example) one can always consider the elementary projector $\Proj{}{A}\otimes \ldots\otimes \Proj{}{K}$ and its party-wise-negated counterpart $\CompProj{}{A}\otimes \ldots\otimes \CompProj{}{K}$. These two projectors are valid $\Gamma$'s by construction, and their intersection and union yield $\mathcal{D}_{A} \otimes \ldots \otimes  \mathcal{D}_{K}$ and $\mathcal{I}_{A} \otimes \ldots \otimes  \mathcal{I}_{K}$, respectively.\\
Restricted case: This is essentially proven the same way as Lemma \ref{lem:NS_lattice}, see App. \ref{sec:NS_lattice_proof}, but with two new categories of `steps': 4) $\Proj{(n)}{}= \Proj{(n-1)}{}\prec \Proj{}{L}$ and 5) $\Proj{(n)}{}= \Proj{(n-1)}{} \prec \Proj{(k)}{}$. Reducing these steps into a chain of inclusion like $\Proj{(n-1)}{NS} \otimes  \Proj{'}{} \subseteq \Proj{(n-1)}{}\prec\Proj{'}{} \subseteq \overline{\CompProj{(n-1)}{NS} \otimes  \CompProj{'}{}}$ is again obtained by first noticing that $\Proj{}{A}\subseteq\Proj{'}{A}\subseteq\Proj{''}{A} \Rightarrow \Proj{}{A}\prec \Proj{}{B}\subseteq\Proj{'}{A} \prec \Proj{}{B} \subseteq \Proj{''}{A}\prec \Proj{}{B}$ and then by using relations \eqref{eq:inclusions} so that $\Proj{}{A}\otimes \Proj{}{B} \subseteq \Proj{}{A}\prec \Proj{}{B}$ and $\Proj{''}{A}\prec \Proj{}{B} \subseteq \overline{\CompProj{''}{A}\otimes \CompProj{}{B} }$.

\subsection{Proof of Theorem \ref{theo:normal_form}\label{sec:normal_form_proof}}
Let $\Gamma = \bigcup_{i=1}^{x}\left(\bigcap_{j=1}^{y_i} \TProj{}{{\sigma_{ij}(A)}} \prec \ldots \prec \TProj{}{{\sigma_{ij}(K)}} \right)$ and $\Xi = \bigcup_{m=1}^{z}\left(\bigcap_{n=1}^{t_m} \TProj{}{{\sigma_{mn}(A)}} \prec \ldots \prec \TProj{}{{\sigma_{mn}(K)}} \right)$ be two normal forms involving $k$ base projectors. We define the shorthand notation: $\Gamma_i \equiv \left(\bigcap_{j=1}^{y_i} \TProj{}{{\sigma_{ij}(A)}} \prec \ldots \prec \TProj{}{{\sigma_{ij}(K)}} \right)$ so that
    \begin{equation}
        \Gamma = \bigcup_{i=1}^{x} \Gamma_i = \Gamma_1 \cup \Gamma_2 \cup ... \cup \Gamma_x\:,
    \end{equation} 
    and $\Gamma_{ij} \equiv \TProj{}{{\sigma_{ij}(A)}} \prec \ldots \prec \TProj{}{\sigma_{ij}(K)}$ so that $\Gamma_i = \left(\bigcap_{j=1}^{y_i} \Gamma_{ij}\right)$,
    \begin{equation}
        \Gamma = \bigcup_{i=1}^{x}\left( \bigcap_{j=1}^{y_i} \Gamma_{ij} \right) \:.
    \end{equation}
    Hence,
    \begin{multline}
        \Gamma = \bigcup_{i=1}^{x}\left( \Gamma_{i1} \cap \Gamma_{i2} \cap \ldots \cap \Gamma_{iy_i} \right)\\
        = \left( \bigcap_{j=1}^{y_1} \Gamma_{1j} \right) \cup \left( \bigcap_{j=1}^{y_2} \Gamma_{2j} \right) \cup \ldots \cup \left( \bigcap_{j=1}^{y_x} \Gamma_{xj} \right)\:;
    \end{multline}
    and $\Xi_m$ and $\Xi_{mn}$ are defined accordingly.

    First, remark that the negation, intersection, and union of normal forms can be put into normal forms: $\Gamma \cup \Xi$ is a normal form obtained simply by merging the unions:
    \begin{multline}
        \Gamma \cup \Xi = \left(\bigcup_{i=1}^{x} \Gamma_i\right)\cup \left(\bigcup_{m=1}^{z} \Xi_m\right)\\
        = \Gamma_1 \cup \Gamma_2 \cup ... \cup \Gamma_x \cup \Xi_1 \cup \Xi_2 \cup ... \cup \Xi_z\:,
    \end{multline}
    which is a normal form once the redundant terms in the series --the $\Gamma_i$'s that happen to be equivalent to some $\Xi_m$'s-- have been removed by commutativity and associativity. To prove that $\Gamma \cap \Xi$ can be put in normal form requires to use of the distribution law \eqref{eq:cap_cup_dist}
    \begin{multline}
        \Gamma \cap \Xi = \bigcup_{i=1}^{x}\left( \bigcap_{j=1}^{y_i} \Gamma_{ij} \right)\cap \bigcup_{m=1}^{z}\left( \bigcap_{n=1}^{t_m} \Xi_{mn} \right)\\
        = \left(\Gamma_1 \cup \Gamma_2 \cup ... \cup \Gamma_x\right)\cap \left(\Xi_1 \cup \Xi_2 \cup ... \cup \Xi_z\right)\\
        = \left(\Gamma_1 \cap \Xi_1\right) \cup \left(\Gamma_2 \cap \Xi_1\right) \cup \ldots \cup \left(\Gamma_z \cap \Xi_1\right) \\
        \cup \left(\Gamma_1 \cap \Xi_2\right) \cup \ldots \cup \left(\Gamma_z \cap \Xi_t\right)\:.
    \end{multline}
    Each $\left(\Gamma_i \cap \Xi_m\right)$ term is an intersection of intersections, therefore they can be merged as an overall intersection by associativity and the redundancy can be removed the same way as for the union case above. Then again, the unions of these terms is a normal form once the redundancies like $\left(\Gamma_i \cap \Xi_m\right) = \left(\Gamma_{i'} \cap \Xi_{m'}\right)$ for some $(i,m)\neq (i',m')$ have been removed.
    To prove that $\overline{\Gamma}$ can be put in normal form requires to use successively the De Morgan laws, 
    \begin{equation}
    \begin{aligned}
        \overline{\Gamma} = &\overline{\bigcup_{i=1}^{x}\left( \bigcap_{j=1}^{y_i} \Gamma_{ij} \right)}\\
        \overset{\eqref{eq:deMorgan_cup_to_cap}}{=} &\bigcap_{i=1}^{x}\overline{\left( \bigcap_{j=1}^{y_i} \Gamma_{ij} \right)}\\
        \overset{\eqref{eq:deMorgan_cap_to_cup}}{=} &\bigcap_{i=1}^{x}\left( \bigcup_{j=1}^{y_i} \overline{\Gamma_{ij}} \right)\:,
    \end{aligned}
    \end{equation}
    then the commutation of the negation with the prec, Eq. \eqref{eq:prec_neg} (proven at Eq. \eqref{eq:proof_commute_neg}), is used on each term,
    \begin{multline}
        \overline{\Gamma_{ij}} = \overline{\TProj{}{{\sigma_{ij}(A)}} \prec \ldots \prec \TProj{}{\sigma_{ij}(K)}}\\
        = \overline{\TProj{}{{\sigma_{ij}(A)}}} \prec \ldots \prec \overline{\TProj{}{\sigma_{ij}(K)}}\\
        =\TProj{'}{{\sigma_{ij}(A)}} \prec \ldots \prec \TProj{'}{\sigma_{ij}(K)}
        =\Gamma_{ij}'
    \end{multline}
    where the base projectors have been redefined so as to incorporate the negation. Thus, the $\Gamma_{ij}'$ are normal forms, and so their intersections of unions can be put into normal form by using the two properties proven just before this one.

    Next, let $\Upsilon$ be a projector on an operator system over $l$ subsystems in $\LinOp{\Hilb{L}\otimes \ldots \otimes \Hilb{R}}$ that is in normal form, $\Upsilon=\bigcup_{m=1}^{z}\left(\bigcap_{n=1}^{t_m} \TProj{}{{\chi_{mn}(L)}} \prec \ldots \prec \TProj{}{{\chi_{mn}(R)}} \right)$ where $\chi_{mn}$ is an element of the permutation group over $l$ symbols. 
    Then the one-way signaling composition $\Gamma \prec \Upsilon$ is a projector on operator system over $\LinOp{\Hilb{A}\otimes \ldots \otimes \Hilb{K}\otimes \Hilb{L}\otimes \ldots \otimes \Hilb{R}}$ that can be put into a normal form as well. This is proven using the interchange laws \eqref{eq:prec_cap_cup}. Define $\Upsilon_m$ and $\Upsilon_{mn}$ like above, let $u$ ranging from 1 to $x\times z$ such that $u=1$ is identified with $(i,m)=(1,1)$, $u=2$ with $(i,m)=(1,2)$, etc., let $v_u$ ranging from 1 to $y_i\times t_m$ such that $v_u = 1$ is identified with $(j,n)=(1,1)$, etc., and let $\zeta_{uv}=(\sigma_{ij},\chi_{mn})$ be an element of the permutation group over $k+l$ elements indexed by $uv$. That way, an element like $\Gamma_{ij}\prec \Upsilon_{mn}$ can be rewritten as $\Theta_{uv}$ in the following manner:
    \begin{multline}
        \Gamma_{ij}\prec \Upsilon_{mn} = \left( \TProj{}{{\sigma_{ij}(A)}} \prec \ldots \prec \TProj{}{\sigma_{ij}(K)} \right)\\
        \prec \left(\TProj{}{{\chi_{mn}(L)}} \prec \ldots \prec \TProj{}{{\chi_{mn}(R)}} \right)\\
        =\TProj{}{{\sigma_{ij}(A)}} \prec \ldots \prec \TProj{}{\sigma_{ij}(K)} \\
        \prec \TProj{}{{\chi_{mn}(L)}} \prec \ldots \prec \TProj{}{{\chi_{mn}(R)}}\\
        =\TProj{}{{\zeta_{uv}(A)}} \prec \ldots \prec \TProj{}{\zeta_{uv}(K)} \\
        \prec \TProj{}{{\zeta_{uv}(L)}} \prec \ldots \prec \TProj{}{{\zeta_{uv}(R)}}\\
        \equiv \Theta_{uv}\:;
    \end{multline}
    This is but a relabelling using associativity of the prec \eqref{eq:prec_assoc} (proven at Eq. \eqref{eq:proof_assoc}). But the same way, an element like $\Gamma_{ij}\succ \Upsilon_{mn}$ can also be identified with some $\Theta_{u'v'}$ by finding the permutation $\zeta_{u'v'}$ that exactly corresponds to $\TProj{}{{\zeta_{u'v'}(A)}} \prec \ldots \prec \TProj{}{\zeta_{u'v'}(K)} \prec \TProj{}{{\zeta_{u'v'}(L)}} \prec \ldots \prec \TProj{}{{\zeta_{u'v'}(R)}} = \TProj{}{{\chi_{mn}(L)}} \prec \ldots \prec \TProj{}{{\chi_{mn}(R)}} \prec \TProj{}{{\sigma_{ij}(A)}} \prec \ldots \prec \TProj{}{\sigma_{ij}(K)}$. The important thing to notice is by definition, $\Theta_{uv}$ is a normal form. The one-way signaling composition of $\Gamma$ with $\Upsilon$ then reads:
    \begin{multline}
        \Gamma \prec \Upsilon = \bigcup_{i=1}^{x}\left( \bigcap_{j=1}^{y_i} \Gamma_{ij} \right)\prec \bigcup_{m=1}^{z}\left( \bigcap_{n=1}^{t_m} \Upsilon_{mn} \right)\\
        \overset{\eqref{eq:prec_cup_interchange}}{=} \bigcup_{i=1}^{x}\left(\left( \bigcap_{j=1}^{y_i} \Gamma_{ij} \right)\prec \left(\bigcup_{m=1}^{z}\left( \bigcap_{n=1}^{t_m} \Upsilon_{mn} \right)\right)\right)\\
        \overset{\eqref{eq:prec_cup_interchange}}{=} \bigcup_{i=1}^{x}\left(\bigcup_{m=1}^{z}\left(\left( \bigcap_{j=1}^{y_i} \Gamma_{ij} \right)\prec \left( \bigcap_{n=1}^{t_m} \Upsilon_{mn} \right)\right)\right)\\
        \overset{\eqref{eq:prec_cap_interchange}}{=} \bigcup_{i=1}^{x}\left(\bigcup_{m=1}^{z}\left( \bigcap_{j=1}^{y_i} \left(\Gamma_{ij} \prec \left( \bigcap_{n=1}^{t_m} \Upsilon_{mn} \right)\right)\right)\right)\\
        \overset{\eqref{eq:prec_cap_interchange}}{=} \bigcup_{i=1}^{x}\left(\bigcup_{m=1}^{z}\left( \bigcap_{j=1}^{y_i} \left( \bigcap_{n=1}^{t_m}\Gamma_{ij} \prec  \Upsilon_{mn} \right)\right)\right)\\
        = \bigcup_{(i=1,m=1)}^{(x,z)}\left(\bigcap_{(j=1,n=1)}^{(y_i,t_m)} \Gamma_{ij} \prec  \Upsilon_{mn} \right)\\
        = \bigcup_{u=1}^{x\times z}\left(\bigcap_{v=1}^{y_i \times t_m} \Theta_{uv} \right)\:.
    \end{multline}
    In the above rewriting, associativity of intersections and unions was used to go to the penultimate line, and then the definition of the $\Theta_{uv}$'s was injected to go the last line. Note that compared to the intersection and union cases, there is no risk of redundancies since the permutations of $\Gamma_{ij}$ and $\Upsilon_{mn}$ run over different sets ($\{A,\ldots K\}$ and $\{L,\ldots R\}$ respectively). As each $\Theta_{ab}$ is a ``prec chain'', it is direct to see that the last line is a normal form of $\Gamma \prec \Upsilon$.
    
    Using relations \eqref{eq:relations} (proven in App. \ref{sec:projo_prec_inclusion}), the no signaling composition of two normal forms, $\Gamma \otimes \Upsilon$, as well as their transformation of a normal form into a normal form, $\Gamma \rightarrow \Upsilon$ can be expressed in terms of intersections, unions, negations or precs: $\Gamma \otimes \Upsilon = (\Gamma \prec \Upsilon) \cap (\Gamma \succ \Upsilon)$ and $\Gamma \rightarrow \Upsilon= (\overline{\Gamma} \prec \Upsilon) \cap (\overline{\Gamma} \succ \Upsilon)$.
    Thus, they can be put in normal form as well because of the above discussion.

    Therefore, we showed that the negation of a normal form, the intersection and union of two normal forms, as well as the composition of two normal forms using $\otimes,\prec$, and $\rightarrow$ can all be rewritten into a normal form. The proof is completed by noticing that a single base projector like $\Proj{}{A}$ is by definition in a normal form.
    
\subsection{Proof of Theorem \ref{theo:combs=networks}\label{sec:combs=networks_proof}} 
    Using Lemma \ref{lem:accidental}, the equivalence between \eqref{eq:prec_chain} and \eqref{eq:proj_n_network_chan} is almost immediate: inject the result on each node, $\Proj{\text{(n-network)}}{\mathscr{A}_{\text{channel}}} \equiv \left(\mathcal{I}_{A_0} \rightarrow \mathcal{I}_{A_1}\right) \prec \ldots \prec \left( \mathcal{I}_{A_{2n-2}} \rightarrow \mathcal{I}_{A_{2n-1}}\right) \overset{\eqref{eq:isomorphisms_transfo}}{=} \left(\overline{\mathcal{I}}_{A_0} \prec \mathcal{I}_{A_1}\right) \prec \ldots \prec \left( \overline{\mathcal{I}}_{A_{2n-2}} \prec \mathcal{I}_{A_{2n-1}}\right)$, then use the associativity of the prec.
    
    This in turn can be used to recursively prove the equivalence between \eqref{eq:prec_chain} and \eqref{eq:proj_2n_comb_states}. Indeed, observe that the 1-comb is characterized by $\mathcal{I}_{A_0} \rightarrow \mathcal{I}_{A_1}$ which is equivalent to
    \begin{equation}
    \Proj{\text{(2-comb)}}{\mathscr{A}_{\text{state}}} \equiv \mathcal{I}_{A_0} \rightarrow \mathcal{I}_{A_1} \overset{\eqref{eq:isomorphisms_transfo}}{=} \overline{\mathcal{I}}_{A_0} \prec \mathcal{I}_{A_1} \:.
    \end{equation}
    Then, each iteration of combs satisfies the latter condition of Eq. \eqref{eq:isomorphisms_transfo} on the right side of the $\rightarrow$. E.g, for $\Proj{\text{(4-comb)}}{\mathscr{A}_{\text{state}}}$:
    \begin{multline}
    \Proj{\text{(4-comb)}}{\mathscr{A}_{\text{state}}} \equiv \left( \left( \mathcal{I}_{A_0} \rightarrow \mathcal{I}_{A_1}\right)\rightarrow \mathcal{I}_{A_2} \right) \rightarrow \mathcal{I}_{A_3} \\
    =\left( \left( \overline{\mathcal{I}}_{A_0} \prec \mathcal{I}_{A_1} \right)\rightarrow \mathcal{I}_{A_2} \right) \rightarrow \mathcal{I}_{A_3} \\
    = \left( \overline {\overline{\mathcal{I}}_{A_0} \prec \mathcal{I}_{A_1} }\prec \mathcal{I}_{A_2} \right) \rightarrow \mathcal{I}_{A_3} \\
    =\left( \mathcal{I}_{A_0} \prec \overline{\mathcal{I}}_{A_1} \prec \mathcal{I}_{A_2} \right) \rightarrow \mathcal{I}_{A_3} \\
    =\overline{\mathcal{I}_{A_0} \prec \overline{\mathcal{I}}_{A_1} \prec \mathcal{I}_{A_2}}\prec \mathcal{I}_{A_3} \\
    =\overline{\mathcal{I}}_{A_0} \prec \mathcal{I}_{A_1} \prec \overline{\mathcal{I}}_{A_2} \prec \mathcal{I}_{A_3}\:.
    \end{multline}
    Where the associativity of one-way signaling composition, Eq. \eqref{eq:semi-causal_assoc}, and distributed negation, Eq. \eqref{eq:semi-causal_negation}, has been used to simplify in between each step.
    The proof for the $n$-comb directly follows by induction on the above computation: suppose it holds for $n$, $\Proj{\text{(n-network)}}{\mathscr{A}_{\text{channel}}} = \Proj{\text{(2n-comb)}}{\mathscr{A}_{\text{state}}}$ then for $n+1$ the projector is
    \begin{multline}
        \Proj{\text{(2(n+1)-comb)}}{\mathscr{A}_{\text{state}}} \\
        = \left(\Proj{\text{(2n-comb)}}{\mathscr{A}_{\text{state}}} \rightarrow \mathcal{I}_{A_{2n}} \right)\rightarrow\mathcal{I}_{A_{2n+1}}\\
        = \left(\CompProj{\text{(2n-comb)}}{\mathscr{A}_{\text{state}}} \prec \mathcal{I}_{A_{2n}} \right)\rightarrow\mathcal{I}_{A_{2n+1}}\\
        = \overline{\CompProj{\text{(2n-comb)}}{\mathscr{A}_{\text{state}}} \prec \mathcal{I}_{A_{2n}} }\prec \mathcal{I}_{A_{2n+1}}\\
        = \Proj{\text{(2n-comb)}}{\mathscr{A}_{\text{state}}} \prec \overline{\mathcal{I}}_{A_{2n}} \prec \mathcal{I}_{A_{2n+1}}\\
        = \Proj{\text{(n-network)}}{\mathscr{A}_{\text{channel}}} \prec \left(\mathcal{I}_{A_{2n}} \rightarrow \mathcal{I}_{A_{2n+1}}\right)\\
        \equiv \Proj{\text{((n+1)-network)}}{\mathscr{A}_{\text{channel}}} \:,
    \end{multline}
    where the hypothesis was injected in between the antepenultimate and penultimate lines as well as identity $\overline{\mathcal{I}}_{A_{2n}} \prec \mathcal{I}_{A_{2n+1}} = \left(\mathcal{I}_{A_{2n}} \rightarrow \mathcal{I}_{A_{2n+1}}\right) $.
    
    Next, the equivalence between \eqref{eq:prec_chain} and \eqref{eq:proj_n_comb_chan} is also proven by induction. It holds by definition for the $n=1$ case, suppose it holds for $n$, $\Proj{\text{(n-comb)}}{\mathscr{A}_{\text{channel}}} =  \overline{\mathcal{I}}_{A_0} \prec \mathcal{I}_{A_1} \prec \ldots \prec \overline{\mathcal{I}}_{A_{2n-2}} \prec \mathcal{I}_{A_{2n-1}}$, and define the relabelling $\Proj{\text{(n-comb)}'}{\mathscr{A}_{\text{channel}}} \equiv  \overline{\mathcal{I}}_{A_1} \prec \mathcal{I}_{A_2} \prec \ldots \prec \overline{\mathcal{I}}_{A_{2n-1}} \prec \mathcal{I}_{A_{2n}}$ where all indices have been incremented by 1. Then,
    \begin{multline}
        \Proj{\text{(2(n+1)-comb)}}{\mathscr{A}_{\text{state}}} \equiv \\
        (\ldots(\mathcal{I}_{A_{n}} \rightarrow \mathcal{I}_{A_{n+1}}) \rightarrow \ldots ) \rightarrow \left(\mathcal{I}_{A_0} \rightarrow \mathcal{I}_{A_{2n+1}}\right)\\
        = \Proj{\text{(n-comb)}'}{\mathscr{A}_{\text{channel}}} \rightarrow \left(\mathcal{I}_{A_0} \rightarrow \mathcal{I}_{A_{2n+1}}\right)\\
        \overset{\eqref{eq:uncurrying}}{=} (\mathcal{I}_{A_0} \otimes \Proj{\text{(n-comb)}'}{\mathscr{A}_{\text{channel}}}) \rightarrow  \mathcal{I}_{A_{2n+1}}\\
        \overset{\eqref{eq:isomorphisms_tensor}}{=}(\mathcal{I}_{A_0} \prec \Proj{\text{(n-comb)}'}{\mathscr{A}_{\text{channel}}}) \rightarrow  \mathcal{I}_{A_{2n+1}}\\
        \overset{\eqref{eq:isomorphisms_transfo}}{=} \overline{\mathcal{I}_{A_0} \prec \Proj{\text{(n-comb)}'}{\mathscr{A}_{\text{channel}}}} \prec  \mathcal{I}_{A_{2n+1}}\\
        =\overline{\mathcal{I}}_{A_0} \prec \CompProj{\text{(n-comb)}'}{\mathscr{A}_{\text{channel}}} \prec  \mathcal{I}_{A_{2n+1}}\\
        =\overline{\mathcal{I}}_{A_0} \prec \overline{\overline{\mathcal{I}}_{A_1} \prec \ldots \prec \mathcal{I}_{A_{2n}}} \prec  \mathcal{I}_{A_{2n+1}}\\
        =\overline{\mathcal{I}}_{A_0} \prec \mathcal{I}_{A_1} \prec \ldots \prec \overline{\mathcal{I}}_{A_{2n}} \prec  \mathcal{I}_{A_{2n+1}} \:.
    \end{multline}
    Here, the \textit{uncurrying rule} \eqref{eq:uncurrying} (first proven for types \cite{Perinotti2016}; it is derived for projectors in App. \ref{sec:projo_transfo_prop}) was used between the second and third lines as a computational shortcut.

    Finally, the equivalence between \eqref{eq:prec_chain} and \eqref{eq:causality_cond} follows by induction as well. In the case $n=1$, it is proven by writing explicitly the content of the projector \eqref{eq:prec_chain},
    \begin{multline}\label{ea:proof_a}
        \left(\overline{\mathcal{I}}_{A_0} \prec \mathcal{I}_{A_1}\right)\{M\} = M\\
        \left(\mathcal{I}_{A_0} \otimes \mathcal{I}_{A_1} - \mathcal{I}_{A_0} \otimes \mathcal{D}_{A_1} + \mathcal{D}_{A_0} \otimes \mathcal{D}_{A_1}\right)\{M\}\\
         = M\\
        M - \MapX{A_1}{M} \\
        + \MapX{A_0}{\MapX{A_1}{M}} = M\\
        \MapX{A_0A_1}{M} = \MapX{A_1}{M}\\
        \TrX{A_1}{M}=\frac{1}{d_{A_0}}\TrX{A_0A_1}{M}\:\mathds{1}_{A_0}.
    \end{multline}  
    Suppose it holds for $n$ nodes, i.e., $\Proj{(2n)}{}\{M^{(n)}\}=M^{(n)} \iff \TrX{A_1}{M^{(1)}}=\frac{1}{d_{A_0}}\TrX{A_0A_1}{M^{(1)}}\:\mathds{1}_{A_0} \:\wedge\:\ldots\:\wedge\: \TrX{A_{2n-1}}{M}$ $= \frac{1}{d_{A_{2n-2}}}\TrX{A_{2n-2}A_{2n-1}}{M} \otimes \mathds{1}_{A_{2n-1}}$. Then, for $n+1$ nodes, let $M^{(n+1)} \equiv M$, and
    \begin{multline}
        M = \Proj{(2n+2)}{}\{M\}\\
         = \big[\Proj{(2n)}{}\prec\left(\overline{\mathcal{I}}_{A_{2n}} \prec \mathcal{I}_{A_{2n+1}}\right)\big]\{M\}\\
        = \big[ \left(\mathcal{I}_{A_0} \otimes \ldots \otimes \mathcal{I}_{A_{2n-1}} \right) \otimes \left(\overline{\mathcal{I}}_{A_{2n}} \prec \mathcal{I}_{A_{2n+1}}\right) \\
        - \CompProj{(2n)}{}\otimes \mathcal{D}_{A_{2n}} \otimes \mathcal{D}_{A_{2n+1}} \\
        +  \mathcal{D}_{A_{0}} \otimes \mathcal{D}_{A_{1}}\otimes \ldots\otimes \mathcal{D}_{A_{2n}} \otimes \mathcal{D}_{A_{2n+1}}\big]\{M\} \\
        =\big[ \left(\mathcal{I}_{A_0} \otimes \ldots \otimes \mathcal{I}_{A_{2n-1}} \right) \otimes \left(\overline{\mathcal{I}}_{A_{2n}} \prec \mathcal{I}_{A_{2n+1}}\right) \\
        - \mathcal{I}_{A_{0}} \otimes \ldots \otimes \mathcal{I}_{A_{2n-1}}\otimes \mathcal{D}_{A_{2n}} \otimes \mathcal{D}_{A_{2n+1}}+\\
        \Proj{(2n)}{}\otimes \mathcal{D}_{A_{2n}} \otimes \mathcal{D}_{A_{2n+1}} \big]\{M\} \:.
    \end{multline}
    Using this last equality, one can regroup terms as
    \begin{multline}
        0= \big[\left(\mathcal{I}_{A_0} \otimes \ldots \otimes \mathcal{I}_{A_{2n-1}} \right) \otimes \\
        \left(\left(\overline{\mathcal{I}}_{A_{2n}} \prec \mathcal{I}_{A_{2n+1}}\right) -  \mathcal{I}_{A_{2n}}\otimes \mathcal{I}_{A_{2n+1}}\right)\big]\{M\} \\
        - \Big[\left(\left(\mathcal{I}_{A_{0}} \otimes \ldots \otimes \mathcal{I}_{A_{2n-1}}\right) - \Proj{(2n)}{}\right)\otimes\\ \mathcal{D}_{A_{2n}} \otimes \mathcal{D}_{A_{2n+1}} \Big]\{M\}\:.
    \end{multline}
    This defines two projectors with zero intersection, therefore each piece in square brackets must be zero independently of the other. The first piece, $0 = \big[\left(\mathcal{I}_{A_0} \otimes \ldots \otimes \mathcal{I}_{A_{2n-1}} \right) \otimes \left(\left(\overline{\mathcal{I}}_{A_{2n}} \prec \mathcal{I}_{A_{2n+1}}\right) -  \mathcal{I}_{A_{2n}}\otimes \mathcal{I}_{A_{2n+1}}\right)\big]\{M\}$ is exactly equation \eqref{ea:proof_a} applied on systems $A_{2n+1}A_{2n}$. Whereas the second piece, $0=\Big[\left(\left(\mathcal{I}_{A_{0}} \otimes \ldots \otimes \mathcal{I}_{A_{2n-1}}\right) - \Proj{(2n)}{}\right)\otimes \mathcal{D}_{A_{2n}} \otimes \mathcal{D}_{A_{2n+1}} \Big]\{M\}$ can be recast into $0=\big[(\left(\mathcal{I}_{A_{0}} \otimes \ldots \otimes \mathcal{I}_{A_{2n-1}}\right) - \Proj{(2n)}{}\big]\{\TrX{A_{2n}A_{2n+1}}{M}\}$. Using $M^{(n)} \equiv \TrX{A_{2n}A_{2n+1}}{M}$, this last equation must contain by hypothesis the $n$ other causality conditions. 
    Therefore, the $n+1$ causality conditions have been recovered from the projector, completing the proof.

\section{Detailed examples for Section \ref{sec:Proj_char} -- (Some) state structures built with quantum theory as base state structure\label{sec:examples_QT}}
Here we present some examples of state structures that can be built using the three operations $\{\overline{\:\cdot\:},\otimes,\rightarrow\}$, respectively called \textit{negation, tensor, and transformation} and presented in Sec. \ref{sec:Proj_char}. The base state structures $\mathscr{A},\mathscr{B},\ldots$ associated with each party in the following examples are assumed to be the set of quantum states, to which correspond projectors $\mathcal{I}_A, \mathcal{I}_B, \ldots$. 

With these base state structures, we consider first single-partite POVM measurements, which are the resolutions of the unit effect (or functional) $\overline{\mathscr{A}}$ characterized by $\overline{\mathcal{I}}_A$ as an example of negation. 
Then the bipartite quantum states $\mathscr{A} \otimes \mathscr{B}$, characterized by $\mathcal{I}_A \otimes \mathcal{I}_B$, are considered as an example of tensor, followed by bipartite functional $\overline{\mathscr{A} \otimes \mathscr{B}}$, as an example of tensor and negation.
Next, we consider quantum channels $\mathscr{A} \rightarrow \mathscr{B}$, characterized by $\mathcal{I}_A \rightarrow \mathcal{I}_B$ as an example of transformation, followed by single-partite process matrix $\overline{\mathscr{A} \rightarrow \mathscr{B}}$ as an example of transformation and negation. 

We then move on to ``higher-order theory'', interpreting the transformations as higher-order states. Our aim is to present the difference between the set of bipartite quantum channels, $\left(\Alg{A}{0}\otimes \Alg{B}{0} \right) \rightarrow \left(\Alg{A}{1}\otimes \Alg{B}{1} \right)$, and its subset of no signaling channels, $\left(\Alg{A}{0}\rightarrow \Alg{A}{1} \right) \otimes \left(\Alg{B}{0}\rightarrow \Alg{B}{1} \right)$. The former indeed has the type of a first-order transformation: it transforms the bipartite states $\Alg{A}{0}\otimes \Alg{B}{0}$ into bipartite states $\Alg{A}{1}\otimes \Alg{B}{1}$. The latter is also of the first order, but built differently: it is a tensor product of transformations rather than a transformation between tensor products. 
This difference is used to define a higher-order: the state structure $\left(\Alg{A}{0}\rightarrow \Alg{A}{1} \right) \otimes \left(\Alg{B}{0}\rightarrow \Alg{B}{1} \right)$ is a first-order transformation with respect to the state structures of quantum states (characterized by $\mathcal{I}_X$), but it can be interpreted with respect to the state structure of quantum channels (characterized by $\mathcal{I}_{X_0} \rightarrow \mathcal{I}_{X_1}$), effectively moving its interpretation to a higher-order transformation theory. 
In such a case, the state structure $\left(\Alg{A}{0}\rightarrow \Alg{A}{1} \right) \otimes \left(\Alg{B}{0}\rightarrow \Alg{B}{1} \right)$ is the parallel composition of two systems A and B of the base type (i.e. represented by quantum channels); the no signaling channels are seen as bipartite higher-order states. The last example then concerns the functionals normalized on these, which are the bipartite process matrices.

\subsection{Single party quantum theory\label{sec:examples_single_QT}}
This is the simplest example of a theory that can be built assuming Proposition \ref{theo:det_fctal}. 
In this case, we have a state structure $\mathscr{A}$ whose operator system spans the whole of $\LinOp{\Hilb{A}}$, which will constitute the states. To it, we associate its negation $\overline{\mathscr{A}}$, which constitutes the unit effects (or deterministic functionals) interpreted as `the need of a state of $\mathscr{A}$ to obtain a probability of 1'. This results in the state being the regular notion of the quantum state in density matrix form, as defined in Eq. \eqref{eq:state_char}, 
\begin{subequations}\label{eq:state_qu}
    \begin{gather}
        V\geq 0 \:, \\
        \TrX{}{V} = 1 \:, \\
        \mathcal{I}\{V\} = V \:.
    \end{gather}
\end{subequations}
To it corresponds the regular notion of effect as $\overline{\mathscr{A}} = \{\mathds{1}\}$ because Proposition \ref{theo:det_fctal} implies that the valid $N \in \overline{\mathscr{A}}$ have to obey
\begin{subequations}\label{eq:effect_qu}
    \begin{gather}
        N \geq 0 \:, \\
        \TrX{}{N} = d_A \:, \\
        \mathcal{D}\{N\} = N \:,
    \end{gather}
\end{subequations}
where we used $\overline{\mathcal{I}} = \mathcal{I}-\mathcal{I}+\mathcal{D}=\mathcal{D}$. These requirements single out $N = \mathds{1}$. That is to say, the set of destructive measurements of any quantum state whose outcome can be predicted with certainty (whence the terminology \textit{deterministic} functional) is the singleton $\{\mathds{1}\}$, i.e. the discard. The destructive measurement whose outcome can be predicted only up to a probability, the effects (or probabilistic functionals), are (represented by) the operators resolving $\mathds{1}$; we have derived the POVM formalism. By construction, the probability rule in that case indeed recovers the Born rule \eqref{eq:Born}:
\begin{equation}
    p(i|V,N) = \InProd{N_i}{V} \:,
\end{equation}
where each $N_i$ belongs to a family $\{N_i\}$ resolving $\mathds{1}$. With respect to the introductory example of Sec. \ref{sec:Proj_char}, only the notation has been changed: $(\rho,\mathds{1},E_i)\mapsto(V,N,N_i)$.

\subsection{Bipartite quantum theory\label{sec:examples_bipartite_QT}}
Using Definition \ref{prop:tensor}, the bipartite quantum theory is an example of a theory featuring a tensor of projectors. From the previous example of single-party quantum theory, we can define a bipartite theory on space $\LinOp{\Hilb{X}}$ by requiring a bipartition $\LinOp{\Hilb{X}} \cong \LinOp{\Hilb{A}\otimes \Hilb{B}}$ so that there exist states upon which bipartite measurements are well-defined (see the discussion around Corollary \ref{theo:tensor} as well as App. \ref{app:NS=QO}). The one-party example has shown that local measurements of, say, Alice resolve the state structure $\overline{\mathscr{A}}$ as in Eqs. \eqref{eq:state_qu}. Assume Bob has a similarly defined $\overline{\mathscr{B}}$ so that the parallel composition of the measurements of Alice and Bob are resolving $\overline{\mathscr{A}} \otimes \overline{\mathscr{B}}$. By Definition \ref{prop:tensor}, this set is characterized by
\begin{subequations}\label{eq:effect_qu_otimes_qu}
    \begin{gather}
        M \geq 0 \:, \\
        \TrX{AB}{M} = 1 \:, \\
        \left(\mathcal{D}_A \otimes \mathcal{D}_B\right)\{M\} = M \:.
    \end{gather}
\end{subequations}
It is not hard to see from Eq. \eqref{eq:composite_D} that also in the bipartite case, the effects of quantum theory resolve a single element, $\{M=\mathds{1}= \mathds{1}_A \otimes \mathds{1}_B\}$. Note, however, that the effects can now in general be entangled in the sense that there are resolutions $\{M_i\}$ for which there is no possibility to find a decomposition $\{M_i = \sum_{i} q_i E_{i}^A \otimes F_{i}^B\}$ where, for all $q_i$ and $i$, $q_i\geq 0\:, \sum_{i} q_i = 1$ and with $E_{i}$ (respectively, $F_{i}$) being a valid effect resolving an element of $\overline{\mathscr{A}}$ (resp., $\overline{\mathscr{B}}$). A Bell measurement is an instance of such a collection of entangled effects resolving $\mathds{1}$ in the $d_A=d_B=2$ case.

We now use Proposition \ref{theo:det_fctal} to characterize the valid states. An operator $W \in \LinOp{\Hilb{A}\otimes \Hilb{B}}$ is a valid state if it belongs to the set $\overline{\overline{\mathscr{A}} \otimes \overline{\mathscr{B}}}$, i.e.
\begin{subequations}\label{eq:state_qu_otimes_qu}
    \begin{gather}
        W \geq 0 \:, \\
        \TrX{AB}{W} = 1 \:, \\
        \overline{\left(\mathcal{D}_A \otimes \mathcal{D}_B\right)}\{W\} = W\:.
    \end{gather}
\end{subequations} 
Remark that property \eqref{eq:Iso_par_tensor} actually applies to the projective constraint, so that it can be simplified into 
\begin{equation}
    \left(\mathcal{I}_A \otimes \mathcal{I}_B\right)\{W\} = W \:.
\end{equation}
Meaning that $\overline{\overline{\mathscr{A}} \otimes \overline{\mathscr{B}}} \cong \mathscr{A}\otimes\mathscr{B}$ actually holds for quantum theory\footnote{In categorical language, this is the fact that the category of quantum theory is compact closed, whereas the more general category of higher-order quantum transformations is $*$-autonomous, see Ref. \cite{Kissinger_2019}.}, and surprisingly it does so only in this case (this is a consequence of Lemma \ref{lem:accidental}, proven in App. \ref{sec:projo_prec_iso}). 
In other words, in quantum theory the set of valid states normalized on local measurements is exactly the set of no signaling composite states: we recover the intuition that it is impossible for parties measuring a part of a shared quantum state to signal to the other one. What is more surprising is that it is actually the only theory having this property: in general a state normalized on a pair of local measurements can be used for signaling; this is the physical content of the inclusion $\overline{\overline{\mathscr{A}} \otimes \overline{\mathscr{B}}} \supseteq \mathscr{A}\otimes\mathscr{B}$ which follows from Eq. \eqref{eq:tensor_in_par}. This ability to signal is made apparent in the bipartite process matrix and in the biased quantum theory examples below (respectively presented in App. \ref{sec:examples_bipartite_PM} and \ref{sec:examples_bipartite_biaisedQT}).

\subsection{Single party quantum channel and process matrix formalisms\label{sec:examples_single_channel}}
The first example of a state structure built using the transformation connector `$\rightarrow$' described in Proposition \ref{theo:det_map} is the one of quantum channels. With the associated quantum instrument formalism and the single-partite process matrices, they form a higher-order state and effect pair. This is the same pair as the one considered in the introductory example of Sec. \ref{sec:Proj_char}, rephrased using the language developed in that section (it will be rephrased again in App. \ref{sec:examples_dynamics_constr_1} using all of the language developed in this paper). The construction is schematically represented in Fig. \ref{fig:1stTo2ndOrder}. 
\begin{figure*}[htb]
    \centering
    \includegraphics[width=.8\linewidth]{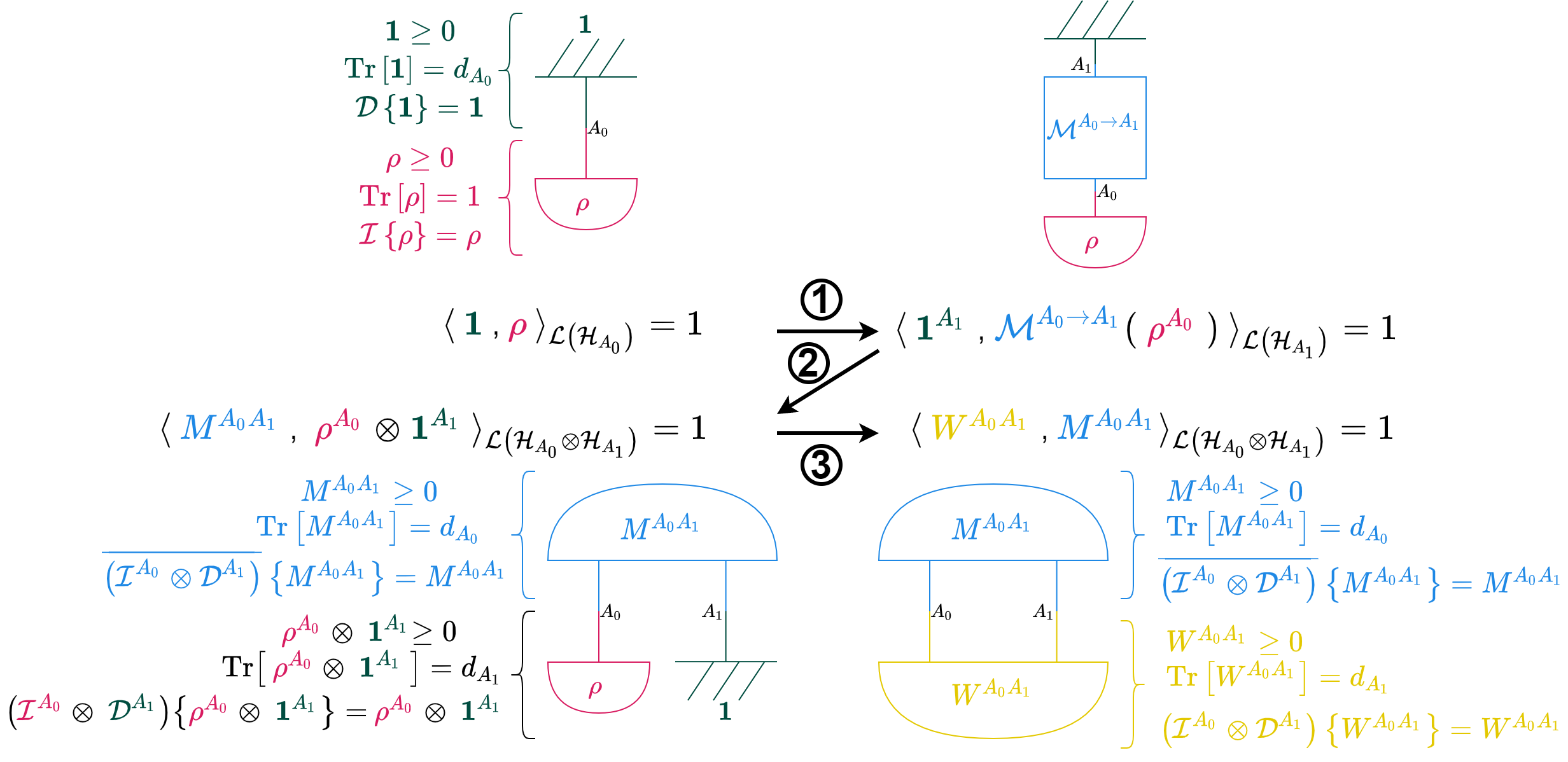}
    \caption{Dynamics-like construction from quantum states to single-partite PM: 1) adding CPTP dynamics to states (in pink) yields channels (in blue), 2) going to the CJ picture and characterizing it using Prop. \ref{theo:det_map}, 3) taking the `negation' of the channels to `complete' the states and effects pair (pink and green) into a single-partite PM (in yellow)}
    \label{fig:1stTo2ndOrder}
\end{figure*}

Let $\mathscr{A}_0$ and $\mathscr{A}_1$ be state structures characterized by similar constraints as Eqs. \eqref{eq:state_qu}. In accordance with Definition \ref{def:struc_pres}, we introduce dynamics from $\mathscr{A}_0$ to $\mathscr{A}_1$ as the set of all CPTP maps $\{\mathcal{M}\}$ from $\mathscr{A}_0$ to $\mathscr{A}_1$; this is represented as step 1 in Fig. \ref{fig:1stTo2ndOrder}. 
Call $\mathscr{A}_0 \rightarrow \mathscr{A}_1$ the set of all operators $M \in \LinOp{\Hilb{A_0} \otimes \Hilb{A_1}}$ which are the CJ representation of an admissible transformation; these are the $M$ such that for all quantum state $\rho_{A_0}\in \mathscr{A}_0 \subset \LinOp{\Hilb{A_0}}$, there exists a state $\sigma_{A_1} \in \mathscr{A}_1 \subset \LinOp{\Hilb{A_1}}$ so that
\begin{equation}
    \left[\TrX{A_0}{M\cdot\left(\rho_{A_0} \otimes \mathds{1}\right)}\right]^T = \sigma_{A_1} \:.
\end{equation}
By definition, the operators $M$'s are (representing) quantum channels, so by Proposition \ref{theo:det_map}, they must satisfy
\begin{subequations} \label{eq:state_qu_to_state_qu}
\begin{gather}
    M \geq 0 \:,\\
    \TrX{}{M} = d_{A_0} \;,\\
    \left( \mathcal{I}_{A_0} \rightarrow \mathcal{I}_{A_1}\right)\{M\} = M \:,
\end{gather}
\end{subequations}
to be valid; this is depicted as step 2 in Fig. \ref{fig:1stTo2ndOrder}. 
To see that these conditions indeed single out the quantum channels in CJ form, remark that the last two conditions can be combined in the more common quantum `1-comb' condition, $\TrX{A_1}{M} = \mathds{1}_{A_0}$ (see Ref. \cite{Chiribella2009,MPM}; see also the discussion in Sec. \ref{sec:applications_iso}). It is obtained by explicitly writing the projector
\begin{equation}
    \begin{gathered}
         \left( \mathcal{I}_{A_0} \rightarrow \mathcal{I}_{A_1}\right)\{M\} = \overline{ \mathcal{I}_{A_0} \otimes \mathcal{D}_{A_1}}\{M\}\\ = \left(\mathcal{I}_{A_0} \otimes \mathcal{I}_{A_1} - \mathcal{I}_{A_0} \otimes \mathcal{D}_{A_1} + \mathcal{D}_{A_0} \otimes \mathcal{D}_{A_1} \right)\{M\}    \:,
    \end{gathered}
\end{equation}
and noticing that
\begin{equation}
    \left(\mathcal{D}_{A_0} \otimes \mathcal{D}_{A_1}\right)\{M\} = \frac{\TrX{A_0A_1}{M}}{d_{A_0}d_{A_1}}\:\mathds{1}_{A_0A_1} \:.
\end{equation}

Seeing these as the effects of a higher-order theory, the corresponding set of states is obtained through Prop. \ref{theo:det_fctal} (notice we are actually working the proof of Prop. \ref{theo:det_map} backward; see App. \ref{sec:det_map_proof}):
\begin{subequations}
\begin{gather}
    W \geq 0 \:,\\
    \TrX{}{W} = d_{A_1} \;,\\
    \overline{\mathcal{I}_{A_0} \rightarrow \mathcal{I}_{A_1}}\{W\} = W \:.
\end{gather}
\end{subequations}
This is depicted as step 3 in Fig. \ref{fig:1stTo2ndOrder}; this, by definition, is the set of single-partite process matrices, as it is the set of functionals normalized on quantum instruments. Indeed, any resolution $\{M_i\}$ of a channel leads to a probability rule of the form of Eq. \eqref{eq:Born_gen}:
\begin{equation}
    p(i|W,M) = \InProd{M_i}{W}\:.
\end{equation}
The set of valid $W$'s is characterized by projector $\overline{\mathcal{I}_{A_0} \rightarrow \mathcal{I}_{A_1}} = \overline{\overline{\mathcal{I}_{A_0} \otimes \overline{\mathcal{I}}_{A_1}}} = \mathcal{I}_{A_0} \otimes \overline{\mathcal{I}}_{A_1}$. Thus it corresponds to state structure $\Alg{A}{0} \otimes \CompAlg{A}_1$, characterized by $\mathcal{I}_{A_0} \otimes \mathcal{D}_{A_1}$. By Definition \ref{theo:tensor}, it is supposedly in the affine span of the objects of the form $\rho_{A_0} \otimes \mathds{1}_{A_1}$. 
However, the remarkable thing here is that, since there is only one element in $\CompAlg{A}_{1}= \{\mathds{1}_{A_1}\}$, there is no need to take the affine span as $\Alg{A}{0}$ is convex closed: in Fig. \ref{fig:1stTo2ndOrder} it means that the pure tensor products of the pink and green objects are exactly the yellow one. In other words, this recovers the known result that a single-partite process matrix reduces to inputting a quantum state and tracing out the output (see the supplementary material of Ref. \cite{OCB2012}, alternatively see Ref. \cite{morimae2014process}).

\paragraph{Remark: the no signaling subset of a quantum channel.\label{sec:examples_single_channel_remNS}}From the above considerations, it can be guessed that the set of elements of the form $M = \mathds{1}_{A_0} \otimes \sigma_{A_1}^T$, which is $\CompAlg{A}_0 \otimes \Alg{A}{1}$, is a subset of the valid quantum channels. And indeed they are the `trace-and-replace' channels such that any input state $\rho_{A_0}$ is traced out and replaced by $\sigma_{A_1}$:
\begin{multline}
    \left(\TrX{A_0}{\left(\mathds{1}_{A_0} \otimes \sigma_{A_1}^T\right)\cdot\left(\rho_{A_0} \otimes \mathds{1}_{A_1}\right)}\right)^T = \\
    \TrX{A_0}{\rho_{A_0}}\;\left(\sigma_{A_1}^T\right)^T = \sigma_{A_1} \:.
\end{multline}
In terms of state structures this is the inclusion $\CompAlg{A}_0 \otimes \Alg{A}{1} \subseteq \mathscr{A}_0 \rightarrow \mathscr{A}_1 = \overline{\mathscr{A}_0 \otimes \overline{\mathscr{A}_1}}$ so it is again Eq. \eqref{eq:tensor_in_par} at play: the transformation can be seen as another composition of $\CompAlg{A}_0$ with $\Alg{A}{1}$ that is larger than the one provided by their tensor product, $\CompAlg{A}_0 \otimes \Alg{A}{1}$. 
But here, compared to the bipartite quantum theory, the inclusion is strict, $\CompAlg{A}_0 \otimes \Alg{A}{1} \subsetneq \mathscr{A}_0 \rightarrow \mathscr{A}_1 = \overline{\mathscr{A}_0 \otimes \overline{\mathscr{A}_1}}$; the `trace-and-replace' are indeed but a specific kind of channels.

Actually, a special property of the subset $\CompAlg{A}_0 \otimes \Alg{A}{1}$ is that neither side of the channel can have a deterministic influence over the other. To see it, treat $\CompAlg{A}_0$ and $\mathscr{A}_1$ as if they were under the control of different parties so the `trace' part is controlled by $A_0$ and the `replace' part by $A_1$. The resolutions $\{M_i\}$ of $M$ now depend on 2 classical variables: $i\mapsto (j,k)$ such that $j$ is known only by $A_0$ and $k$ only by $A_1$. Any such `bipartite' resolution $\{M_{j,k}\}$ has well-defined `single-partite marginals':
\begin{equation}\label{eq:app_M_jk}
    \begin{aligned}
        \sum_j M_{j,k} &= \mathds{1} \otimes \sigma_k \:, \\
        \sum_k M_{j,k} &= E_j \otimes \sigma^T \:, \\
        \sum_{j,k} M_{j,k} &= \mathds{1} \otimes \sigma^T \:,
    \end{aligned}
\end{equation}
in the sense that $\{E_j\}$ is a resolution of $\mathds{1}$ and $\{\sigma_k\}$ is one of $\sigma^T$. Indeed, if such partial sums were not possible to obtain, the interpretation as local parties acting probabilistically would fail; one could not interpret $\sum_j M_{j,k}$ (respectively, $\sum_k M_{j,k}$) as a resolution where the action of $A_0$ (resp., $A_1$) on her side of the shared channel is deterministic. Such `bipartite' resolution $M_{j,k}$ of a `trace-and-replace' map has interpretation as a `measure-and-reprepare' scenario; the variable $j$ is associated with the measurement outcome seen by $A_0$ whereas $k$ conditions the choice of repreparation made by $A_1$. 

The probability rule becomes $p(j,k|W,M) = \InProd{M_{j,k}}{W}$ and can be simplified using the fact that $W$ factorises into $\rho \otimes \mathds{1}$, 
\begin{equation}
    p(j,k|W,M) = \TrX{}{M_{j,k}\:\left(\rho \otimes \mathds{1}\right)} \:.
\end{equation}
Talking about one side's `deterministic influence' over the other requires summing over her classical variable. The probabilistic interpretation of such summation is precisely a conditional distribution. For example, after $A_0$ has acted on her side the `reduced' distribution as seen on $A_1$ side is $\sum_j p(j,k|W,M) = p(k|j;W,M)$; it is conditioned by `quantum variables' $W,M$ (as well as how $M$ has been resolved into $\{M_{j,k}\}$) and by the actual value of the classical variable $j$ that $A_0$ has observed. 
However, for an $M \in \CompAlg{A}_0 \otimes \Alg{A}{1} \subset \mathscr{A}_0 \rightarrow \mathscr{A}_1$, the conditional distributions seen by each side drastically simplify because of Eqs. \eqref{eq:app_M_jk},
\begin{equation}
    \begin{aligned}
        \sum_j p(j,k|W,M) &= \TrX{}{\sigma_k} \:;\\
        \sum_k p(j,k|W,M) &= \TrX{}{E_j \: \rho} \:.
    \end{aligned}
\end{equation}
Consequently, the classical variables are independent of each other, $p(k|j;W,M) = p(k|W,M)$ and vice-versa. In other words, the knowledge of $j$ cannot help determine $k$, and vice-versa. In such a channel no information can be deterministically passed from the input side at $A_0$ to the output side at $A_1$; the subset $\CompAlg{A}_0 \otimes \Alg{A}{1}$ is dubbed the `no signaling' subset of $\mathscr{A}_0 \rightarrow \mathscr{A}_1$ for this reason. This notion of `no signaling' as the independence of conditional probabilities is generalized to any state structure in Sec. \ref{sec:Projos_NS_lattice} under Definition \ref{def:NS_FS_sets}, and it is generally proven to correspond to the tensor composition of state structures in Proposition \ref{prop:causal_trans}.

\paragraph{Remark: channels are no signaling from output to input.\label{sec:examples_single_channels_remprec}}Another occurrence of no signaling in state structures is the property of quantum channels to be automatically no signaling from the output to the input. 
So far we considered $M$ to represent the action of the parties and $W$ to be the environment, in the sense that $M$ was being resolved and $W$ was deterministic and fixed. Reversing the picture, the transformation $M$ in $\overline{\mathscr{A}_0 \otimes \overline{\mathscr{A}_1}}$ can be seen as a channel shared by two parties $A_0$ and $A_1$ trying to communicate through it. The description of their joint intervention is thus an element of $\mathscr{A}_0 \otimes \overline{\mathscr{A}_1}$ (or a resolution of one such element if they are acting probabilistically). 
As in the above remark, the probabilistic dependency of $W$ can be split between two variables $j$ and $k$, each known only by one of the parties. $A_0$ prepares a state according to $j$, inputs it through the channel, and then $A_1$ measures its output. In that case, however, party $A_1$'s action is to perform a POVM, but it means she can apply only one deterministic (or unit) effect at the output of a quantum channel (i.e. $\mathds{1}_{A_1}$). It is impossible for her to deterministically influence the input of the channel by her action on the output as there are no two different deterministic effects $N$ and $N'$ such that $\TrX{A_1}{M\cdot\left(\mathds{1} \otimes N\right)} \neq \TrX{A_1}{M\cdot\left(\mathds{1} \otimes N'\right)}$; no matter what $A_1$ does deterministically, $A_0$ always sees the same thing on her end of the channel. This absence of deterministic influence is then to be interpreted as the no signaling from the output to the input condition \eqref{eq:nA_succ_B} defined in App. \ref{app:NS=QO}. 
This property is not true for transformations in general (again, it follows as a consequence of Lemma \ref{lem:accidental}): if $M$ was not a quantum channel but, say, a transformation between process matrices, the party at the output could have deterministic influence over what the input sees. This observation is a motivation for the introduction of the one-way signaling composition of two state structures in Sec. \ref{sec:NS} as well as the discussion about why the quantum comb formalism is ``accidentally'' one-way signaling in Sec. \ref{sec:applications_iso}. 

\subsection{Bipartite quantum channel theory and the no signaling channels\label{sec:examples_bipartite_channel}}
This example is one of the simplest that mixes the tensor and the transformation, i.e. Definition \ref{prop:tensor} and Proposition \ref{theo:det_map}. A bipartite quantum channel is a CPTP map $\mathcal{M}$ between bipartite quantum states. For $\mathcal{M}: \LinOp{\Hilb{A_0}\otimes \Hilb{B_0}} \rightarrow \LinOp{\Hilb{A_1}\otimes \Hilb{B_1}}$, the set of all such channels corresponds to the state structure $\left(\Alg{A}{0}\otimes\Alg{B}{0}\right) \rightarrow \left(\Alg{A}{1}\otimes\Alg{B}{1}\right)$. 
We will see in this example that the ordering of parentheses and connectors $\otimes,\rightarrow$ is important, and that this semantic difference underlies the difference between the bipartite channels and their no signaling subset. Like the previous subsection, this example also hints at how the notion of no signaling is generalized in Sec. \ref{sec:NS} and especially how it relates to the notion of no signaling at the level of the outcomes probability distribution that is developed in App. \ref{app:NS=QO}. 

\paragraph{Bipartite quantum channels\label{sec:examples_bipartite_channel_char}}
A bipartite channel is example \ref{sec:examples_single_channel} in the bipartite case, refer to the top right of Fig. \ref{fig:bipartite_PM_construction} for an illustration. 
The deterministic probability rule reads:
\begin{equation}
    \InProd{\mathds{1}_{A_1}\otimes \mathds{1}_{B_1}}{\MOn{\rho_{A_0B_0}}} = 1\:,
\end{equation}
which holds for $\mathds{1}_{A_1}\otimes \mathds{1}_{B_1} \in \overline{\Alg{A}{1} \otimes \Alg{B}{1}} $ and for all $\rho_{A_0B_0}\in \Alg{A}{0} \otimes \Alg{B}{0}$. In the CJ picture, this is 
\begin{equation}
    \InProd{M}{\rho_{A_0B_0} \otimes \mathds{1}_{A_1}\otimes \mathds{1}_{B_1}} = 1\:,
\end{equation}
with $M \in \LinOp{\Hilb{A_0}\otimes\Hilb{B_0} \otimes \Hilb{A_1}\otimes \Hilb{B_1}}$, leading to the characterization of valid $M$'s as:
\begin{subequations}
\begin{gather}
    M \geq 0 \:,\\
    \TrX{}{M} = d_{A_0}d_{B_0} \;,\\
    \left( \left( \mathcal{I}_{A_0}\otimes\mathcal{I}_{B_0}\right) \rightarrow \left( \mathcal{I}_{A_1} \otimes \mathcal{I}_{B_1}\right)\right)\{M\} = M \:.
\end{gather}
\end{subequations}
The projector is $\left( \mathcal{I}_{A_0}\otimes\mathcal{I}_{B_0}\right) \rightarrow \left( \mathcal{I}_{A_1} \otimes \mathcal{I}_{B_1}\right) = \mathcal{I}_{A_0}\otimes\mathcal{I}_{B_0}\otimes \mathcal{I}_{A_1} \otimes \mathcal{I}_{B_1} -  \mathcal{I}_{A_0}\otimes\mathcal{I}_{B_0} \otimes \overline{\mathcal{I}_{A_1} \otimes \mathcal{I}_{B_1}} +  \mathcal{D}_{A_0}\otimes\mathcal{D}_{B_0}\otimes \mathcal{D}_{A_1} \otimes \mathcal{D}_{B_1} $; 
It can be further simplified by noticing that $\overline{\mathcal{I}_{A_1} \otimes \mathcal{I}_{B_1}} = \mathcal{D}_{A_1} \otimes \mathcal{D}_{B_1}$. This simplification then gives condition $ M - \TrX{A_1B_1}{M} \otimes \mathds{1}_{A_1}\otimes \mathds{1}_{B_1} +  \mathds{1}_{A_0}\otimes \mathds{1}_{B_0} \otimes \mathds{1}_{A_1}\otimes \mathds{1}_{B_1}= M$ which, once the trace is fixed, is equivalent to the usual condition $\TrX{A_1B_1}{M} = \mathds{1}_{A_0}\otimes \mathds{1}_{B_0}$. One indeed recovers the same condition as for the single-partite case but applied on two subsystems.

\paragraph{Remark: Bipartite channels from the composition of two channels.\label{sec:examples_bipartite_channel_remcommut}} We defined the state structure of bipartite channels as the set of transformations between bipartite states, one can nonetheless wonder if it can be defined as the composition of two state structures of single-partite channels instead. In other words, can we define the state structure $\left(\Alg{A}{0}\otimes\Alg{B}{0}\right) \rightarrow \left(\Alg{A}{1}\otimes\Alg{B}{1}\right)$ by combining state structures $\Alg{A}{0}\rightarrow \Alg{A}{1}$ and $\Alg{B}{0} \rightarrow \Alg{B}{1}$? 
The tensor product appears as a good candidate to do so. However, it will be shown next that it only yields the subset of no signaling bipartite quantum channels. As it is discussed in Sec. \ref{sec:Projos_NS_lattice} under Eq. \eqref{eq:tensor_in_par}, there exists a larger composition than the tensor, based on the non-commutation of the negation with the tensor. It amounts to taking the negation of the tensor product of the negations or, equivalently, to taking the transformation with a negated input. In symbols, this last sentence amounts to the following inclusion:
\begin{equation}
    \Alg{A}{} \otimes \Alg{B}{} \subseteq \overline{\CompAlg{A} \otimes \CompAlg{B}} = \CompAlg{A} \rightarrow \Alg{B}{} = \Alg{A}{} \leftarrow \CompAlg{B} \:.
\end{equation}
So is the right-hand side the proper composition? It turns out that it is for $\Alg{A}{} = \Alg{A}{0}\rightarrow \Alg{A}{1}$ and $\Alg{B}{} = \Alg{B}{0} \rightarrow \Alg{B}{1}$ being the state structures of quantum channels. This can be shown by some projector algebra and, as for the case of bipartite quantum states above, this crucially relies on the fact that the identity projector on two systems is the same as the tensor of two identities as in Eq. \eqref{eq:composite_I}. The projector associated to state structure $\overline{\overline{\Alg{A}{0}\rightarrow \Alg{A}{1}} \otimes \overline{\Alg{B}{0} \rightarrow \Alg{B}{1}}}$ can undergo the following rewriting:
\begin{multline}
    \overline{\overline{\mathcal{I}_{A_0} \rightarrow \mathcal{I}_{A_1}} \otimes  \overline{\mathcal{I}_{B_0} \rightarrow \mathcal{I}_{B_1}}} \\
    = \overline{(\mathcal{I}_{A_0} \otimes \mathcal{D}_{A_1}) \otimes (\mathcal{I}_{B_0} \otimes \mathcal{D}_{B_1})} \\
    = \overline{\mathcal{I}_{A_0} \otimes \mathcal{I}_{B_0} \otimes \mathcal{D}_{A_1} \otimes \mathcal{D}_{B_1}}\\
    \overset{\eqref{eq:composite_D}}{=} \overline{\mathcal{I}_{A_0} \otimes \mathcal{I}_{B_0} \otimes \overline{\mathcal{I}_{A_1} \otimes \mathcal{I}_{B_1}}} \\
    = (\mathcal{I}_{A_0} \otimes \mathcal{I}_{B_0}) \rightarrow (\mathcal{I}_{A_1} \otimes \mathcal{I}_{B_1})\:,
\end{multline}
and this last line is indeed the projector associated with bipartite quantum channels.

The fact that the bipartite quantum channels are equivalent to the composition of single-partite channels is not trivial. This only happens in a handful of cases in which the base state structures are quantum, i.e. when the projectors associated with subsystems are the identity or depolarizing. (The cases are inferred from Eq. \eqref{eq:Iso_par_tensor} which is again a consequence of Lemma \ref{lem:accidental} presented in Sec. \eqref{sec:applications_iso}.) In general, the following inclusion is tight, 
\begin{multline}\label{eq:app_equiv_proj_bipartite_chan}
    (\Alg{A}{0}\rightarrow \Alg{A}{1}) \otimes (\Alg{B}{0} \rightarrow \Alg{B}{1}) \\
    \subseteq \left(\Alg{A}{0}\otimes\Alg{B}{0}\right) \rightarrow \left(\Alg{A}{1}\otimes\Alg{B}{1}\right)  \\
     \subseteq \overline{\overline{(\Alg{A}{0}\rightarrow \Alg{A}{1})} \otimes \overline{(\Alg{B}{0} \rightarrow \Alg{B}{1})}} \:,
\end{multline}
so a bipartite transformation is a genuinely different state structure than a composition of transformations.

\paragraph{No signaling bipartite channels.\label{sec:examples_no_signaling_channel}}
A concrete example of the above inclusions is obtained by comparing (the state structures of) the tensor product of quantum channels to a bipartite quantum channel, i.e. $\left(\Alg{A}{0}\otimes\Alg{B}{0}\right) \rightarrow \left(\Alg{A}{1}\otimes\Alg{B}{1}\right)$ compared to $\left(\Alg{A}{0}\rightarrow\Alg{A}{1}\right) \otimes \left(\Alg{B}{0}\rightarrow\Alg{B}{1}\right)$ in the case of base state structures of quantum states. 
The first case, as we have shown, corresponds to the set of bipartite channels, whereas the second is the set of no signaling \cite{Piani2006} (also known as causal \cite{Beckman2001}) channels as we will now show. 
The tightness of the inclusion can be proven with a dash of the algebra of projectors: using Eq. \eqref{eq:app_equiv_proj_bipartite_chan}, the bipartite channels are associated with projector $\overline{\left(\mathcal{I}_{A_0} \otimes \mathcal{D}_{A_1}\right) \otimes \left(\mathcal{I}_{B_0} \otimes \mathcal{D}_{B_1}\right)}$ whereas $\left(\Alg{A}{0}\rightarrow\Alg{A}{1}\right) \otimes \left(\Alg{B}{0}\rightarrow\Alg{B}{1}\right)$ is associated to $\overline{\mathcal{I}_{A_0} \otimes \mathcal{D}_{A_1}} \otimes \overline{\mathcal{I}_{B_0} \otimes \mathcal{D}_{B_1}}$ so Eqs. \eqref{eq:notAnotB<notAB} and \eqref{eq:Iso_par_tensor} can be used: 
\begin{multline}
    \overline{\left(\mathcal{I}_{A_0} \otimes \mathcal{D}_{A_1}\right) \otimes \left(\mathcal{I}_{B_0} \otimes \mathcal{D}_{B_1}\right)} \\
    \supset \overline{\mathcal{I}_{A_0} \otimes \mathcal{D}_{A_1}} \otimes \overline{\mathcal{I}_{B_0} \otimes \mathcal{D}_{B_1}} \:.
\end{multline}

Next, we prove that it is indeed the set of no signaling channels, that is, the subset of channels forbidding deterministic signaling from Alice's side to Bob's side and vice-versa. We already mentioned in Remark \ref{sec:examples_single_channels_remprec} that channels are no signaling from the output to input and that local operation on a channel is a single-partite PM, corresponding to a state preparation followed by a discard. Putting these two pieces of information together, what is to be shown here is that the choice of input state on Alice's side cannot induce a deterministic influence on Bob's measurement and vice-versa. In equation, Alice and Bob share a bipartite channel $M$, and each acts locally by state preparation and measurement. Call $x$ and $y$ the settings of respectively Alice and Bob, and $a$ and $b$ their outcomes. That is, depending on some input $x$, Alice prepares some state $\rho_{|x}$ that she inputs in her side of the channel, $A_0$. Then, she measures at the output $A_1$ some POVM $\{E_{a|x}\}$ which choice can also depend on $x$, and sees outcome $a$. Bob does the same at $B_0$ with state $\sigma_{|y}$ at $B_0$ and POVM  $\{F_{b|y}\}$ at $B_1$. The probability rule then reads
\begin{equation}
    p(a,b|x,y)=\TrX{}{M \cdot \left(\rho_{|x} \otimes \sigma_{|y}  \otimes E_{a|x}^T \otimes F_{b|y}^T \right)}\:.
\end{equation}
(We omit reference to the `quantum variables' $M$, $\rho$, $\sigma$, etc. in the probability distribution to lessen clutter, as we are only interested in the correlations between the classical variables $a,b,x,$ and $y$.) Alice is no signaling to Bob if her choice of setting $x$ cannot deterministically influence his measurement outcome, that is (see e.g., \cite{Piani2006})
\begin{equation}\label{eq:NS_chan_1}
    \forall x,x', \: \sum_a p(a,b|x,y)=\sum_a p(a,b|x',y)\:.
\end{equation}
The same way Bob is no signaling if
\begin{equation}\label{eq:NS_chan_2}
    \forall y,y', \: \sum_b p(a,b|x,y)=\sum_b p(a,b|x,y')\:.
\end{equation}
In terms of the channel, these two conditions can be shown to be equivalent to (see \cite{Chiribella2009}):
\begin{subequations}
    \begin{gather}
        \TrX{A_1}{M} = \mathds{1}_{A_0} \otimes \TrX{A_0A_1}{M} \:;\label{eq:2-comb_cd_1}\\
        \TrX{B_1}{M} = \mathds{1}_{B_0} \otimes \TrX{B_0B_1}{M}\:.\label{eq:2-comb_cd_2}
    \end{gather}
\end{subequations}
which form a stronger constraint than the channel condition $\TrX{A_1B_1}{M} = \mathds{1}_{A_0} \otimes \mathds{1}_{B_0}$. 

These two conditions are exactly those encoded in the projector of the state structure $\left(\Alg{A}{0}\rightarrow\Alg{A}{1}\right) \otimes \left(\Alg{B}{0}\rightarrow\Alg{B}{1}\right)$, which reads $\left(\mathcal{I}_{A_0} \rightarrow \mathcal{I}_{A_1}\right) \otimes \left(\mathcal{I}_{B_0}\rightarrow \mathcal{I}_{B_1}\right) = \overline{\mathcal{I}_{A_0} \otimes \mathcal{D}_{A_1}} \otimes \overline{\mathcal{I}_{B_0} \otimes \mathcal{D}_{B_1}}$. Indeed, notice that this is the composition of two projectors acting on different subspaces (and thus commuting): \begin{multline}
    \left(\mathcal{I}_{A_0} \rightarrow \mathcal{I}_{A_1}\right) \otimes \left(\mathcal{I}_{B_0}\rightarrow \mathcal{I}_{B_1}\right) = \\
    \left(\left(\mathcal{I}_{A_0} \rightarrow \mathcal{I}_{A_1}\right) \otimes \mathcal{I}_{B_0}\otimes \mathcal{I}_{B_1}\right) \cap \\
    \left(\mathcal{I}_{A_0} \otimes  \mathcal{I}_{A_1} \otimes \left(\mathcal{I}_{B_0}\rightarrow \mathcal{I}_{B_1}\right)\right)\:,
\end{multline}
Where the $\cap$ symbol represents a composition of commuting projectors as defined in Sec. \ref{sec:Projos_alg} (see App. \ref{sec:projos_prop} for a review of its properties). Hence, each of them enforces a condition independent of the other. 
The first one, $\left(\overline{\mathcal{I}_{A_0} \otimes \mathcal{D}_{A_1}} \otimes \mathcal{I}_{B_0} \otimes \mathcal{I}_{B_1} \right)\{M\} = M$, is explicitly 
\begin{equation}
    M - \mathds{1}_{A_1} \otimes \TrX{A_1}{M} + \mathds{1}_{A_0} \otimes \mathds{1}_{A_1} \otimes \TrX{A_0A_1}{M} = M \:,
\end{equation}
which is equivalent to condition \eqref{eq:2-comb_cd_1}, and thus to \eqref{eq:NS_chan_1}. The same way, $\left(\mathcal{I}_{A_0} \otimes  \mathcal{I}_{A_1} \otimes \overline{\mathcal{I}_{B_0}\rightarrow \mathcal{D}_{B_1}}\right)$ can be shown equivalent to condition \eqref{eq:2-comb_cd_2} and thus to \eqref{eq:NS_chan_2}.

This example of no signaling bipartite channel illustrates how Definition \ref{prop:tensor} is exactly a composition of two state structures done in a manner that forbids a deterministic influence of each state structure over the other. In the CJ picture, this definition tells us that the set of these channels obeys
\begin{subequations}
    \begin{gather}
        M \geq 0\:;\\
        \TrX{}{M} = d_{A_0}d_{B_0}\:;\\
        \left(\left(\mathcal{I}_{A_0} \rightarrow \mathcal{I}_{A_1}\right) \otimes \left(\mathcal{I}_{B_0}\rightarrow \mathcal{I}_{B_1}\right)\right)\{M\} = M\:;
    \end{gather}
\end{subequations}
or, equivalently, is the set of trace-normalized positive operators in the affine hull of tensor products of single-partite channels, i.e.,
\begin{equation}\label{eq:aff_sum_channels}
    M = \sum_i q_i \: M^A_i \otimes M^B_i \:,
\end{equation}
where $M\geq 0$, $q_i \in \mathbb{R}$, $\sum_i q_i = 1$, each $M^A_i \in \LinOp{\Hilb{A_0}\otimes\Hilb{A_1}}$ obeys single quantum channel conditions \eqref{eq:state_qu_to_state_qu}, and so does each $M^B_i \in \LinOp{\Hilb{B_0}\otimes\Hilb{B_1}}$.
Generalizing this notion of no signaling to general state structures is the purpose of Sec. \ref{sec:NS} in which the tensor product of state structure is effectively proven to be no signaling composition, see Corollary \ref{theo:tensor}.

\paragraph{Remark: separability and localizability.} Notice that the projective methods are linear constraints, so while they can give information about the deterministic signaling structure, nothing can be said about separability or localizability. 
In the context of a no signaling quantum channel, separability will be the analog of separable quantum states: the subset of channels whose decomposition like Eq. \eqref{eq:aff_sum_channels} is a convex sum, for which each $q_i \in [0,1]$. 
Localizability on the other hand is the possibility of obtaining the channel as local operations applied on a joined ancillary entangled state \cite{Beckman2001}. It is known that separable and localizable channels are subsets of the set of no signaling channels. Yet, the characterization of these sets is much more involved than working out the subspaces they span, as their constraints are non-linear in the operators. 
Generalizing separability and localizability to higher-order quantum transformations is left open as a future research direction.

\subsection{Bipartite no signaling channel and process matrix formalism. \label{sec:examples_bipartite_PM}}
\begin{figure}[htb]
    \centering
    \includegraphics[width=\linewidth]{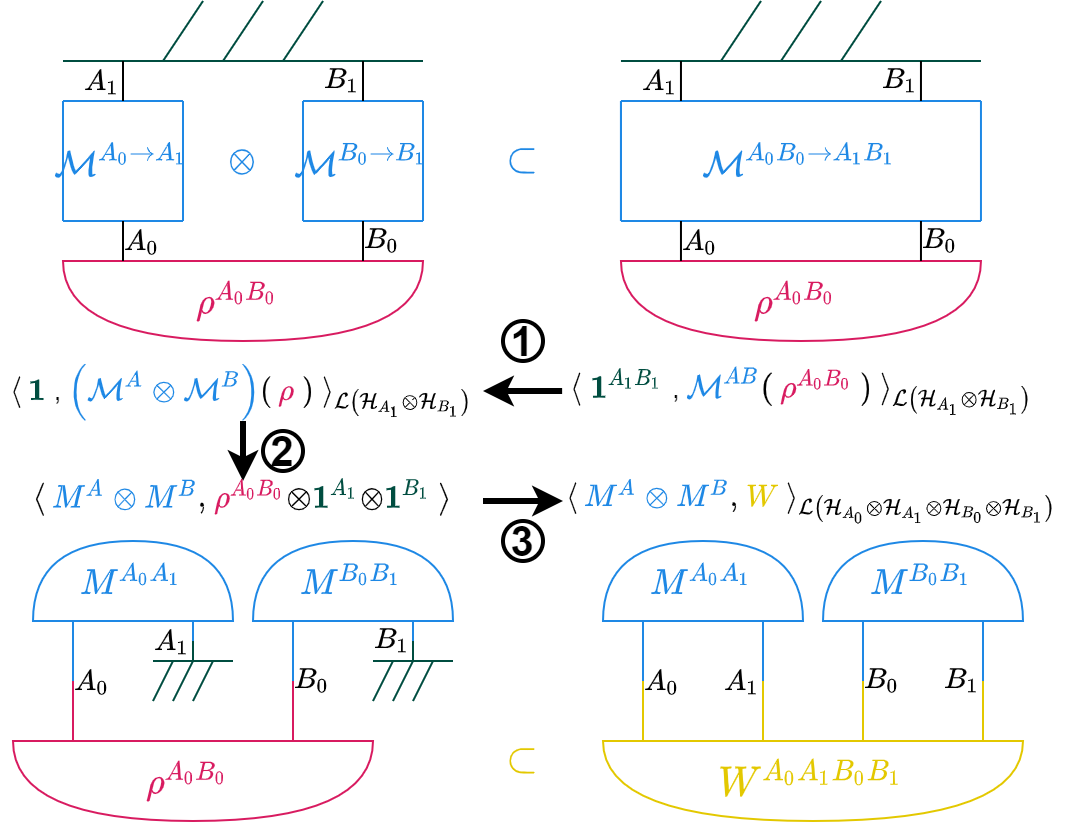}
    \caption{Gleason-like construction of a bipartite PM: starting from a bipartite channel normalized on a state (in blue and pink, top right), 1) Restrict the set of channels to local operations of Alice and Bob subset (in blue, top left); 2) Go to the CJ picture so that the channels are interpreted as two effects composed in tensor (in blue, bottom left); 3) Extend the allowed higher-order `state' to the most general operator normalized on a tensor product of channels, the bipartite PM (in yellow, bottom right)}
    \label{fig:bipartite_PM_construction}
\end{figure}

This last example mixes all three operations on state structures (i.e., $\otimes,\rightarrow,$ and $\overline{\:\cdot\:}$) in order to recover the bipartite process matrix framework from the four base state structures of quantum states. With respect to the characterization of no signaling channels in the previous section, the bipartite process matrices are nothing short of the functionals normalized on this set \cite{OCB2012,MPM}. We will do this construction step by step, relating all the other structures characterized in this appendix.

The construction is similar to that of Sec. \ref{sec:examples_single_channel} but assumes the induced dynamics to be bipartite to start with. Refer to Fig. \ref{fig:bipartite_PM_construction}: Starting with the regular bipartite quantum theory as in Sec. \ref{sec:examples_bipartite_QT} (top right corner, with states in pink and effects in green), one first considers no signaling at the level of effects to argue for bipartite states, $\rho \in \mathscr{A}_0\otimes\mathscr{B}_0$ (as we have seen, this is automatically the case in the case of quantum theory).   
Then, the theory is extended by allowing quantum channels (top right corner, the set of channels $\mathcal{M}$ is in blue) as in Sec. \ref{sec:examples_bipartite_channel} so that these maps correspond to the state structure $\left(\Alg{A}{0}\otimes\Alg{B}{0}\right)\rightarrow \left(\Alg{A}{1}\otimes\Alg{B}{1}\right)$. 

Step 1 (going to the top left situation of Fig. \ref{fig:bipartite_PM_construction}) consists of restricting the set of valid bipartite channels to the set of no signaling channels, so as to guarantee that parties $A$ and $B$ are causally disconnected. In other words, one considers another requirement of no signaling, at the level of the transformation this time. The heuristic of which is straightforward: if there are local effects, the dynamics may be constrained to local operations as well because e.g. Alice and Bob's labs are space-like separated. Another way to put it: Alice and Bob are each allowed to do any physical transformation on their qudit before measuring, but no global transformation is allowed. Both points of view conclude that the set of channels is restricted to its local subset spanned by the tensor product of local transformations. 
Step 2 (going to the bottom left situation of Fig. \ref{fig:bipartite_PM_construction}) is to pass to the CJ picture so to obtain their characterization. The operations of Alice and Bob have the form $M^{A_0A_1} \otimes M^{B_0B_1} \in \left(\Alg{A}{0}\rightarrow\Alg{A}{1}\right)\otimes \left(\Alg{B}{0}\rightarrow\Alg{B}{1}\right) \subset \left(\Alg{A}{0}\otimes\Alg{B}{0}\right)\rightarrow \left(\Alg{A}{1}\otimes\Alg{B}{1}\right)$ (or, more generally, the affine span of these pure tensors, see Sec. \ref{sec:examples_no_signaling_channel}). 
Finally, step 3 (going to the bottom right situation of Fig. \ref{fig:bipartite_PM_construction}) is to extend the set of states and effects to all functionals as in Proposition \ref{theo:det_fctal}. By doing so, we have inferred from a few heuristics what is the set of bipartite process matrices $\{W\}$, and that its state structure is $\overline{\left(\Alg{A}{0}\rightarrow\Alg{A}{1}\right)\otimes \left(\Alg{B}{0}\rightarrow\Alg{B}{1}\right)}$, which characterization directly ensues (this is done explicitly in Ref. \cite{MPM}).

Abstractly, this way of building a higher-order theory consists in inducing a no signaling composition of state structures (a tensor product), then normalizing a functional on it (a negation). We will nickname this kind of construction a `Gleason-like' construction as opposed to a `dynamics-like' construction that involves defining dynamics of dynamics (nested transformations; i.e., supermaps). The dynamics-like way of defining higher-order transformations is considered in Sec. \ref{sec:examples_dynamics_constr}. This choice of nomenclature is made for distinguishing between two lines of works that resulted in the same kind of higher-order formalisms featuring ICO, respectively Refs. \cite{OCB2012,Araujo2015,Shrapnel2018} and Refs. \cite{Chiribella2008,Chiribella2009,Chiribella2013,Perinotti2016,Bisio2018}. In a sense, the Gleason-like way is the special case of the dynamics-like way obtained when the output is the trivial system.

\section{An example with a different base projector -- Biased Quantum Theory\label{sec:examples_biased_QT}} 
One novelty of our approach compared to the type theory is that the base types --generalized into base state structures-- do not have to be the set of quantum states. For example, in a state structure like $\Alg{A}{}\rightarrow\Alg{B}{}$ the base state structures $\Alg{A}{}$ and $\Alg{B}{}$ can be any state structure. To compare the resulting process theories, one can consider various sets of possible base state structures for a fixed number of subsystems associated with a fixed number of parties. Say Alice's system is known to be bipartite and quantum, then her possible base state structures can be the elementary one, $\Alg{A}{}= \Alg{A}{0} \otimes \Alg{A}{1}$, or a transformation $\Alg{A}{}= \Alg{A}{0} \rightarrow \Alg{A}{1}$. The same reasoning applies to Bob.

Nonetheless, the set of base state structures in consideration does not actually have to be restricted to only those that can be built from the quantum states (corresponding to the identity projector) and the operations in the algebra. The only constraint implied by the projective characterization is that the set of base projectors associated with a single party has to contain the depolarizing and the identity projectors and that every projector in the set should commute with the other ones. So what about projectors that cannot be built from the algebra but that commute with the other elements?

As mentioned below Proposition \ref{prop:lattice} in the main text, a physically motivated example is the set of diagonal operators with respect to a fixed basis. These correspond to the dephased states obtained after a measurement in this basis, hence to classical systems to an extent. The projector $\Delta$ to this subspace can be checked to commute with $\mathcal{I}$ and $\mathcal{D}$, so that $\mathcal{I} \supset \Delta \supset \mathcal{D}$, and to obey the other properties of a projector on an operator system \eqref{eq:projo}. The set of base state structures for a single-partite system in this case is characterized by $\{\Delta, \overline{\Delta}\}$ instead of $\{\mathcal{I}, \mathcal{D}\}$. Studying higher-order transformations mixing base state structures of classical and quantum theory is left open as a future research direction.

The purpose of this appendix is instead to present another, less physically relevant, projector that cannot be built from $\mathcal{I}$ and the operations of the algebra, so as to study the properties of the multipartite state structures built out of it. This toy model has different signaling properties than transformations built out of the quantum states as we will show, providing some motivation and insights for Sec. \ref{sec:NS}.

For simplicity let us take a 2-dimensional underlying Hilbert space $\Hilb{A}$. Let $\Proj{}{}$ be the projector that restricts the basis elements to $\{\mathds{1}, \sigma_x, \sigma_y\}$ in the Pauli basis so that we can define $\mathscr{A} = \Span{\mathds{1}, \sigma_x, \sigma_y}$ associated with projector $\Proj{}{A}$ and, by Proposition \ref{theo:det_fctal}, $\overline{\mathscr{A}} = \Span{\mathds{1}, \sigma_z}$ is associated with $\CompProj{}{A}$. Compare it to quantum theory in 2 dimensions: $\mathscr{A}_{quant.}=\Span{\mathds{1}, \sigma_x, \sigma_y, \sigma_z}$ associated with $\mathcal{I}_{A}$ and $\CompAlg{A}_{quant.} = \{\mathds{1}\}$ with $\mathcal{D}_{A}$, it is direct to check that $\Proj{}{A}$ is a projector on operator system and that $\mathcal{I}_{A} \subset \Proj{}{A} \subset \mathcal{D}_{A}$. 

The theory characterized by $\Proj{}{A}$ is a \textit{biased} quantum theory in the sense that the measurement can be `biased' towards an arbitrary element of $\overline{\mathscr{A}}$. For an element $V\in \Alg{A}{}$ and one $N\in \CompAlg{A}$, the probability rule reads 
\begin{equation}
    p(i|V,N) = \InProd{N_i}{V} \:,
\end{equation}
and it can be interpreted in the picture of regular quantum mechanics as a regular measurement followed by a projection onto the state $N$.

Thus, some deterministic postselection of the measurement in a given basis --here in the computational basis-- is allowed in this kind of theory. Nevertheless, the theory by itself does not allow for the usual counter-logical behaviors encountered in theories with postselection because it is inherently constraining the allowed states into a basis that is quasi-orthogonal to the one of the postselection. Another way to picture it is that the theory allows for a postselection but in a basis that is by construction mutually unbiased with respect to the basis of the state: in the example, Alice can in general postselect in any state of the form $\mathds{1}+p\;\sigma_z$, where $p$ real and $p^2 \leq 1$ because of positivity. That is, she can choose to project the state into a mixture of projectors $\dyad{0}$ and $\dyad{1}$, but the state is itself built by superposing $\ket{\pm}\equiv \frac{1}{\sqrt{2}}(\ket{0} \pm \ket{1})$ and $\ket{\pm i}\equiv \frac{1}{\sqrt{2}}(\ket{0} \pm i \ket{1})$, the two mutually unbiased bases w.r.t. the computational one. Therefore, her postselection cannot be used to distinguish which state has been prepared from the maximally mixed state $\mathds{1}/2$. That is, she cannot deterministically distinguish states ensembles by choosing a suitable $p \neq 0$, which explains how the theory is operationally equivalent (sometimes called `tomographically indistinguishable') to a qubit quantum theory with one `forbidden' axis of the Bloch sphere.
In other words, it is also guaranteed to be a perfectly well-behaved theory, in the sense that no matter the choice of $V$, $N$ and its resolution $\{N_i\}$, no incoherence like overnormalized or negative probabilities can be found in the outcomes.

This one-party example is of course trivial, but as soon as more than one party is allowed, the possibility of deterministically sending a signal from one party to another will coincide with this kind of allowed postselection. This fact can be compared to the motivating example, where the choice of \textit{which} quantum channel the elements of an instrument sum up to also amounts to deterministically inducing a bias in the outcome probability. 

\subsection{Bipartite biased quantum states\label{sec:examples_bipartite_biaisedQT}}
In the biased case, where $\overline{\mathscr{A}} = \overline{\mathscr{B}} = \Span{\mathds{1},\sigma_z}$ the difference between a bipartite no signaling state and a general bipartite one becomes striking. 
The set of all valid states normalized on the local effects resolving $\overline{\mathscr{A}} \otimes \overline{\mathscr{B}}$ is $\overline{\overline{\mathscr{A}} \otimes \overline{\mathscr{B}}}$ which, according to Proposition \ref{theo:det_fctal} is made of the following 13 basis elements:
\begin{multline}
    \overline{\overline{\mathscr{A}} \otimes \overline{\mathscr{B}}} \subset \mathrm{Span}\Big\{ \mathds{1}^A \otimes \mathds{1}^B, \mathds{1}^A \otimes \GGB{B}{x}, \mathds{1}^A \otimes \GGB{B}{y}, \\
    \GGB{A}{x} \otimes \mathds{1}^B, \GGB{A}{x} \otimes \GGB{B}{x}, \GGB{A}{x} \otimes \GGB{B}{y}, \GGB{A}{x} \otimes \GGB{B}{z}, \\
    \GGB{A}{y} \otimes \mathds{1}^B, \GGB{A}{y} \otimes \GGB{B}{x}, \GGB{A}{y} \otimes \GGB{B}{y}, \GGB{A}{y} \otimes \GGB{B}{z}, \\
    \GGB{A}{z} \otimes \GGB{B}{x}, \GGB{A}{z} \otimes \GGB{B}{y} \Big\} \:.
\end{multline}
On the other hand, the no signaling composition of $\Alg{A}{}$ and $\Alg{B}{}$, $\mathscr{A}\otimes \mathscr{B}$ as by Definition \ref{prop:tensor}, is only made of 9 elements (notice that the spanning set of $\mathscr{A}\otimes \mathscr{B}$ is indeed contained in the one of $\overline{\overline{\mathscr{A}} \otimes \overline{\mathscr{B}}}$, in accordance with Definition \ref{def:NS_FS_sets}). The four missing elements are
\begin{multline}
    \overline{\overline{\mathscr{A}} \otimes \overline{\mathscr{B}}} \backslash \mathscr{A}\otimes \mathscr{B} \subset \mathrm{Span}\Big\{ \GGB{A}{x} \otimes \GGB{B}{z}, \GGB{A}{y} \otimes \GGB{B}{z},\\
    \GGB{A}{z} \otimes \GGB{B}{x}, \GGB{A}{z} \otimes \GGB{B}{y} \Big\} \:,
\end{multline}
which are exactly the elements containing a $\sigma_z$ term that we will now show to allow signaling between $A$ and $B$. That the elements of $\overline{\overline{\mathscr{A}} \otimes \overline{\mathscr{B}}} \backslash \mathscr{A}\otimes \mathscr{B}$ allow for signaling is the content of Sec. \ref{sec:NS}; this result is proven in the general case in Lemma \ref{theo:causal_map}. Indeed, these basis elements are those which are quasi-orthogonal globally, meaning that an operator $W$ that contains some of them will satisfy $\forall N^A\in \CompAlg{A}$, $\forall N^B\in \CompAlg{B}$
\begin{multline}
    \TrX{}{\left(N^A \otimes N^B\right) \cdot W} \\
    = \frac{1}{d_Ad_B}\TrX{}{\left(N^A \otimes N^B\right)}\TrX{}{W} \:,
\end{multline}
because it satisfies Proposition \eqref{theo:det_fctal} and therefore Eq. \eqref{eq:QO}. But it does not obey quasi-orthogonality with respect to a local measurement, meaning that it will fail to satisfy the analog of Eq. \eqref{eq:partTrAB=TrATrB} in at least one of Eqs. \eqref{eq:A_otimes_B_def} like e.g., $\TrX{A}{\left(N^A \otimes \mathds{1}\right)\cdot W } \neq 1/d_A\TrX{A}{N^A}\TrX{A}{W}$.

The consequence of this observation is that if Alice and Bob share a bipartite no signaling state, they may observe nonlocal entanglement effects on their outcome distributions, but these correlations will obey no signaling constraints in both directions, Eqs. \eqref{eq:no_sign_corr}. 
If, however, they share a general bipartite state, they may use it to achieve deterministic signaling: for certain states, they will be able to signal perfectly in one direction, i.e. Alice can perfectly send a message to Bob, and vice-versa.

For instance, consider the task where Alice receives a classical bit $x$ and she wants to communicate it to Bob, so that his outcome $b$ has the same value, $b=x$. Without any resources, Bob can only guess and thus succeed with p(b=x)=1/2. Now if we allow them to measure a shared state in $\overline{\overline{\mathscr{A}} \otimes \overline{\mathscr{B}}}$, they can pick the following state:
\begin{equation}\label{eq:biased_ancilla}
    W_{A\prec B} = \frac{1}{4}\left( \mathds{1}^A \otimes \mathds{1}^B + \GGB{A}{z} \otimes \GGB{B}{x} \right)\:,
\end{equation}
(the notation in the subscript means `Alice can signal to Bob') and choose to do the following: Alice `steers' her measurement towards $\ket{0}$ or $\ket{1}$ depending on $x$,
\begin{equation}
    N^A_{|x} = \mathds{1}^A + (-1)^x \: \GGB{A}{z} \:,
\end{equation}
while Bob is measuring an unbiased $N^B=\mathds{1}$ resolved into a measurement in the $\ket{\pm}$ basis,
\begin{equation}
    N^B_{b} = \frac{1}{2} \left(\mathds{1}^B + (-1)^b \: \GGB{B}{x}\right)\:,
\end{equation}
where $b={0,1}$ so that his probabilistic effects satisfy $N^B _0 + N^B_1= N^B = \mathds{1}^B$. One can check that they are effectively properly normalized positive operators belonging to the proper state structures, $N^A_{|x} \in \Alg{A}{}$, $N^B \in \Alg{B}{}$, despite $N^B _0,N^B_1\notin \Alg{B}{}$. The measurement yields the following probability distribution:
\begin{equation}
    p(b|x) = \TrX{}{\left(N^A_{|x} \otimes N^B_b\right) \cdot W_{A\prec B} } \:.
\end{equation}
Injecting the above expressions into it yields
\begin{equation}
    p(b|x) = \frac{1}{2}\left(1 + (-1)^{x+b}\right) \:,
\end{equation}
which gives 0 when $x\neq b$ and 1 when $x=b$; Alice's setting is perfectly correlated to Bob's outcome. We conclude that a bit was perfectly sent from $A$ to $B$, or that using the $W_{A\prec B}$ state as their resource, they obtained $p(x=b)=1$.

The bottom line of this example is that there exist states in $\overline{\overline{\mathscr{A}} \otimes \overline{\mathscr{B}}}$ which allow beating one of the no signaling constraints, Eqs. \eqref{eq:no_sign_corr}, with a probability of 1. In this regard, and with respect to every theory that can be characterized by the projective methods of this article, quantum theory plays a special role as it is the only one whose bipartite states normalized on no signaling effects are automatically no signaling. This fact is the deeper meaning of the inclusion $\overline{\CompProj{}{A} \otimes \CompProj{}{B}} \supseteq \Proj{}{A}\otimes \Proj{}{B}$ of Eq. \eqref{eq:tensor_in_par} becoming an equality in the case $\Proj{}{A} = \Proj{}{B} = \mathcal{I}$ (Eq. \eqref{eq:Iso_par_tensor}) which follows from Lemma \ref{lem:accidental}). 

Comparing the biased to the regular quantum theory, the bipartite quantum states do saturate the inclusion (see example \ref{sec:examples_bipartite_QT}), which is why they are no signaling compared to bipartite biased states. Contrastingly, bipartite quantum channels (see example \ref{sec:examples_bipartite_channel}), when seen as the composition of two channels, do not saturate it (\ref{sec:examples_bipartite_channel_remcommut}). This is why information can be passed from one side of bipartite channels to the other, and why the no signaling channels are a proper subset of them (\ref{sec:examples_no_signaling_channel}).

\subsection{Biased quantum channels\label{sec:examples_biasedQT_channel}}
The dynamics of the biased quantum theory can be postulated as the maps from the biased theory to itself. Proposition \ref{theo:det_map} characterizes these as the set $\mathscr{A}\rightarrow\mathscr{B}$ where the state structure of the input and output are the same, $\mathscr{A}=\mathscr{B}=\Span{\mathds{1},\sigma_x,\sigma_y}$:
\begin{multline}
    \mathscr{A} \rightarrow \mathscr{B} \subset \mathrm{Span}\Big\{ \mathds{1}^A \otimes \mathds{1}^B, \mathds{1}^A \otimes \GGB{B}{x}, \mathds{1}^A \otimes \GGB{B}{y}, \\
    \GGB{A}{x} \otimes \GGB{B}{x}, \GGB{A}{x} \otimes \GGB{B}{y}, \GGB{A}{y} \otimes \GGB{B}{x},\GGB{A}{y} \otimes \GGB{B}{y}, \\
    \GGB{A}{z} \otimes \mathds{1}^B, \GGB{A}{z} \otimes \GGB{B}{x}, \GGB{A}{z} \otimes \GGB{B}{y}, \GGB{A}{z} \otimes \GGB{B}{z} \Big\} \:.
\end{multline}

Here, the fact proven in Sec. \ref{sec:NS_def} that a transformation between state structures is a composition that does not forbid signaling in any direction can be proven by an example. Suppose Alice and Bob are sharing the channel $M_{A \prec B}=\frac{1}{2}\left(\mathds{1}^A\otimes\mathds{1}^B+ \sigma_x^A\otimes \sigma_x^B\right)$. Alice can perfectly signal to Bob by encoding her setting $x$ in the $\sigma_x$ basis, $V_{|x}=1/2(\mathds{1}+(-1)^x\sigma_x)$, and if Bob measures in the same basis, they effectively have a perfect single bit channel, $p(b=x)=1$. 

On the other hand, suppose they share the channel $M_{A\succ B}=\frac{1}{2}\left(\mathds{1}^A\otimes\mathds{1}^B+ \sigma_z^A \otimes \sigma_z^B\right)$. 
Now it is Bob who can perfectly signal to Alice: Alice has to use an ancilla so that she can prepare the same joint state as the bipartite example, Eq. \eqref{eq:biased_ancilla},
\begin{equation}
    W_{A\prec A'}= \frac{1}{4}\left(\mathds{1}^{A}\otimes \mathds{1}^{A'} + \GGB{A}{z}\otimes \GGB{A'}{x}\right)\:.
\end{equation}
She sends the $A$ part through the channel and keeps the $A'$ part in her ancilla. Bob can then apply the measurement $N^B_{|y}=\mathds{1}+(-1)^y\GGB{}{z}$ at the outcome of the channel, depending on the variable $y$ he wishes to send. 
Alice can then finally measure her ancilla in the $\GGB{}{x}$ basis, $N^{A'}_{a}=1/2(\mathds{1}+(-1)^a\sigma_x)$, leading to the distribution
\begin{multline}
    p(a|y)\\
    = \TrX{}{\left(N^B_{|y} \otimes N_a^{A'}\right)\cdot\TrX{A}{M_{A\succ B}\cdot W_{A\prec A'}}^T}\:.
\end{multline}
and she will get perfect correlation with Bob setting, $p(a=y)=1$, exactly like in the bipartite state example. 
Therefore, in the state structure $\mathscr{A}\rightarrow\mathscr{B}$, there are elements allowing perfect signaling from Alice to Bob as well as from Bob to Alice. 
Contrasingly, the quantum channels (seen as a composition of an input and an output using the transformation) only allow one direction of signaling, from the input (Alice) to the output (Bob; see Remark \ref{sec:examples_single_channels_remprec}).

\section{Introductory example extended: dynamics-like construction up to fourth order\label{sec:examples_dynamics_constr}}
For this example, we will focus on the case of higher-order transformations built upon quantum mechanics. In a nutshell, we will go back to the construction that lifts the POVM formalism into the quantum instrument formalism that we presented at the beginning of Sec. \ref{sec:Proj_char}, we will abstract it, and we will repeat it until an indefinite causal order (ICO) arises. The reason we do that is two-fold: on the one hand, we want to present how the characterization techniques work in a concrete case (as we will see, all the objects that will be defined have already been studied in the literature). On the other hand, and in accordance with the discussion of Sec. \ref{sec:applications_iso}, we want to stress how tame quantum theory is with respect to other theories that can be characterized using projective means: albeit we will be using the transformation operation $\rightarrow$ repetitively, i.e. we are nesting operations in a way that allows for bidirectional signaling, it will not result in a transformation with ICO before the fourth order. This is in stark contrast with the example of the biased theory of App. \ref{sec:examples_biased_QT} in which one could observe signaling in both directions already at the level of bipartite states as well as of channels.

\subsection{Reformulating the introductory example: the quantum channel is a 1-comb\label{sec:examples_dynamics_constr_1}}
Reformulating the introductory example of Sec. \ref{sec:Proj_char}, we start with a state structure $\mathscr{A}_0\subset \LinOp{\Hilb{A_0}}$ with projector $\mathcal{I}_{A_0}$ and a trace of 1, as in Eqs. \eqref{eq:state_char} or \eqref{eq:state_qu}. Defining the functionals on this state structure, we consider state and effect pairs in complementary (i.e. quasi-orthogonal) state structures, $(\rho,\mathds{1}) \in \mathscr{A}_0 \times \overline{\mathscr{A}_0}$, linked by the normalization of the probability rule
\begin{equation}
    1 = \InProd{\mathds{1}}{\rho}_{\LinOp{\Hilb{A_0}}} \equiv \TrX{}{\mathds{1}\cdot\rho} \:.
\end{equation}
These pairs represent all deterministic preparation and measurement procedures, i.e. those yielding a probability of 1 irrespective of the choice of state. These pairs characterize the `first-order theory' (because these are the states and effects on which can be defined first-order transformations, which in turn will be the effects of the second-order theory). 
Probabilistic assignments are obtained by resolving the effect state structure by a collection of positive operators $\{E_i\}$, 
\begin{equation}
    p(i|\rho,\mathds{1}) = \TrX{}{E_i \cdot \rho} \;,
\end{equation}
yielding the POVM formalism and the usual definition of effects.

To go to the second-order theory, one postulates some dynamics so that the state structure $\mathscr{A}_0$ is mapped to a similarly defined state structure $\mathscr{A}_1$ by first-order transformations, i.e. by some CPTP map $\mathcal{M} \in \LinOpB{\LinOp{\Hilb{A_0}}}{\LinOp{\Hilb{A_1}}}$ so that
\begin{equation}
    \InProd{\mathds{1}}{\rho}_{\LinOp{\Hilb{A_0}}} \mapsto \InProd{\mathds{1}}{\MapOn{A_0\rightarrow A_1}{}{\rho}}_{\LinOp{\Hilb{A_1}}} \:.
\end{equation}
As it was shown in example \ref{sec:examples_single_channel}, in the CJ picture, this results in a new pair $\left(\rho^{A_0} \otimes \mathds{1}^{A_1}, M^{A_0A_1} \right) \in \left(\mathscr{A}_0 \otimes \overline{\mathscr{A}_1}\right) \times \left(\mathscr{A}_0 \rightarrow \mathscr{A}_1\right)$, so that
\begin{equation}
    1 = \TrX{}{M \cdot \left(\rho \otimes \mathds{1}^T\right)}^T = \TrX{}{M \cdot \left(\rho \otimes \mathds{1}\right)} \:.
\end{equation}
Here, $M$ is a quantum channel in CJ representation, which by definition is a 1-comb. 
Resolving state structure $\left(\mathscr{A}_0 \rightarrow \mathscr{A}_1\right)$ yields the quantum instrument formalism,
\begin{equation}
    p(j|M, \rho) = \InProd{M_j}{\rho \otimes \mathds{1}} \:.
\end{equation}

In this example, the dynamics-like construction of a second-order theory is completed by requiring that the set of states be extended to all functionals normalized on the effects. That is, to all $W$ in $\overline{\mathscr{A}_0 \rightarrow \mathscr{A}_1}$ so that the normalization becomes
\begin{equation}
    \InProd{M}{W}_{\LinOp{\Hilb{A_0}\otimes \Hilb{A_1}}} = 1 \:,
\end{equation}
where $W$ is a single-partite PM and $M$ is a 1-comb, so that both are second-order objects. 

Nonetheless, we noted at the end of example \ref{sec:examples_single_channel} that single-partite PM trivially decomposes into states and measurements. Meaning that the states $W$ of the second-order naturally descend to the first-order: $\overline{\mathscr{A}_0 \rightarrow \mathscr{A}_1} = \mathscr{A}_0 \otimes \CompAlg{A}_1$. The explanation in terms of the projector algebra is simply a matter of definition $\Proj{}{A_0}\rightarrow\Proj{}{A_1} = \overline{\Proj{}{A_0}\otimes \CompProj{}{A_1}}$. In terms of signaling structure, because the transformation is two-way signaling, $\Proj{}{A_0}\rightarrow\Proj{}{A_1} = \left(\CompProj{}{A_0}\prec\Proj{}{A_1}\right) \cup \left(\Proj{}{A_0}\succ\Proj{}{A_1}\right)$, the functional on transformations have to be no signaling by De Morgan duality \eqref{eq:deMorgan_mult}: $\overline{\Proj{}{A_0}\rightarrow\Proj{}{A_1}} = \overline{\CompProj{}{A_0}\prec\Proj{}{A_1}} \cap \overline{\Proj{}{A_0}\succ\Proj{}{A_1}} = \left(\Proj{}{A_0}\prec\CompProj{}{A_1}\right) \cap \left(\Proj{}{A_0}\succ\CompProj{}{A_1}\right) \overset{\eqref{eq:relations_1}}{=} \Proj{}{A_0}\otimes \CompProj{}{A_1} $.

Another difference we noticed in Remark \ref{sec:examples_single_channels_remprec} is that the 1-comb was one-way signaling despite being built by transformation. Now we can see this fact quickly from isomorphism \eqref{eq:Iso_semi-causal_transfo} (Eq. \eqref{eq:isomorphisms_transfo} in the main text): observe that the 1-comb is characterized by
$\mathcal{I}_{A_0} \rightarrow \mathcal{I}_{A_1}$ which by Lemma \ref{lem:accidental} is equivalent to
\begin{equation}
    \mathcal{I}_{A_0} \rightarrow \mathcal{I}_{A_1} = \overline{\mathcal{I}}_{A_0} \prec \mathcal{I}_{A_1} \:.
\end{equation}
Hence, we are guaranteed that the 1-combs have a fixed signaling direction, so they cannot show indefinite causal order. If the base state structure was anything else than quantum states, we would not have witnessed such a simplification; the 1-comb projector would have decomposed into two different orderings: $\mathcal{P}_{A_0} \rightarrow \mathcal{P}_{A_1} = \left(\overline{\mathcal{P}}_{A_0} \prec \mathcal{P}_{A_1}\right) \cup \left(\overline{\mathcal{P}}_{A_0} \succ \mathcal{P}_{A_1}\right)$.

\subsection{Dynamics-like construction}
As preparation and measurement are special transformations with respectively, input and output trivial systems, we can also promote the destructive measurement at $A_1$ into a measurement performed by a second party, so that the probability rule becomes
\begin{equation}\label{eq:1-comb_ij}
    p(i,j|M, \rho) = \InProd{M_j}{\rho \otimes E_i^T}\:,
\end{equation}
this is still one-way signaling as it amounts to having a second party making a POVM measurement $\{E_i\}$ (which is a special case of a quantum instrument) after the quantum instrument $\{M_j\}$.

\paragraph*{Remark.} To better stick to our conventions, we could have passed all probabilistic assignments to the left side of the inner product:
\begin{equation}
    p(i,j|M, \rho) = \InProd{E_i \ast M_j}{\rho}_{\LinOp{\Hilb{A_0}}}\:,
\end{equation}
where $E_i \ast M_j \equiv \TrX{A_1}{\left(\mathds{1}^{A_0}\otimes E_i^T\right)\cdot M_j}$ is the link product of the two operators \cite{Chiribella2009}, which corresponds to the CJ representation of the sequential composition of these two instruments over subsystem $A_1$.

This construction of the quantum instrument formalism has been abstracted under the name `dynamics-like construction' at the end of App. \ref{sec:examples_bipartite_PM}. For the sake of the argument, however, we will take a slightly different point of view in the following: the introduced transformations will be considered to be the states of the higher-order theory instead of the effects, and the effects will not be extended to be the functionals on the new states. In other words, the introduced dynamics will be considered to be the deterministic objects over which the parties have no influence (the state), and everything else will be locally controlled by some parties acting probabilistically (the effects). For the case of Eq. \eqref{eq:1-comb_ij}, this amounts to taking the marginal over $j$,  $p(i|M, \rho) = \InProd{M}{\rho \otimes E_i^T}$. The parties $A_0$ and $A_1$ then share a channel in which $A_0$ inputs a state, and of which $A_1$ measures the output. We do so because this will maximize the number of local parties, which makes ICO more obvious: it would be difficult to talk about the ICO \textit{within} the whole operator representing the shared dynamics (i.e. the global functional on the state), whereas it can be more easily discussed by looking at the correlations achievable by a group of parties sharing this global object. In this regard, this amounts to taking the dynamics as the state and everything else as local effects in tensor product. 

Therefore, given states in $\Alg{A}{} \in \LinOp{\Hilb{A_0}}$ and effects in $\CompAlg{A}$, the dynamics-like construction now amounts to defining bipartite states in $\Alg{A}{}\rightarrow\Alg{B}{} \in \LinOp{\Hilb{A_0}\otimes\Hilb{A_1}}$ with $\Hilb{A_1}\cong\Hilb{A_0}$. Associated with these shared states are two local effects $\Alg{A}{0}$ and $\CompAlg{A}_1$, each under the control of an independent party. I.e., the dynamics-like construction of the next order consists of the following redefinition of the states and effects pair:
\begin{equation}
    \Alg{A}{0} \times \CompAlg{A}_0 \mapsto \left(\Alg{A}{0}\rightarrow\Alg{A}{1} \right) \times \left(\overline{\Alg{A}{0}\rightarrow\Alg{A}{1}}\right)\:.
\end{equation}
\subsection{The quantum supermap is a 2-comb\label{sec:examples_dynamics_constr_2}}
The third-order theory is obtained by assuming the existence of dynamics over the current dynamics, i.e. second-order transformations. In the same way that structure-preserving maps can be nicknamed `superoperator', we are introducing a `supermap' $\mathcal{N}$ as a linear map between two maps with the same state structure \cite{Chiribella2008}. If the map $\mathcal{M}$ of the previous section is defined between subsystems $A_1$ and $A_2$, we introduce $\mathcal{N}$ as a CPTP-preserving supermap that send $\mathcal{M}$ to a similar map $\Tilde{\mathcal{M}}$ between subsystems $A_0$ and $A_3$, $\mathcal{N}\left(\mathcal{M}^{A_1\rightarrow A_2}\right) = \Tilde{\mathcal{M}}^{A_0 \rightarrow A_3}$, so that
\begin{equation}
    \InProd{\mathds{1}}{\MOn{\rho}}_{\LinOp{\Hilb{A_1}}} \mapsto \InProd{\mathds{1}}{\left[\mathcal{N}\left(\mathcal{M}\right)\right](\rho)}_{\LinOp{\Hilb{A_3}}} \:.
\end{equation}
Going to the CJ picture, the following probability rule is obtained: 
\begin{equation}
    p(i,j|N, M, \rho) = \InProd{N}{\rho \otimes M_j^{T_{A_2}} \otimes \overline{E}_i^T} \:.
\end{equation}
The theory is then characterized by quasi-orthogonal pairs of the form $\left(N, \rho \otimes M \otimes \mathds{1}\right) \in \left(\left( \mathscr{A}_1 \rightarrow \mathscr{A}_2 \right) \rightarrow \left( \mathscr{A}_0 \rightarrow \mathscr{A}_3\right)\right) \times \left( \mathscr{A}_0 \otimes \left(\mathscr{A}_1\rightarrow\mathscr{A}_2\right) \otimes \overline{\mathscr{A}_3}\right)$.

This way, $N$ is a 2-comb, an object that transforms 1-combs into 1-combs. We have already seen in the last section that the 1-combs themselves can be interpreted as one-way signaling objects, so that the signaling of the 2-comb can be made explicit by looking at its corresponding projector:
\begin{multline}
    \left( \mathcal{I}_{A_1} \rightarrow \mathcal{I}_{A_2}\right) \rightarrow \left( \mathcal{I}_{A_0} \rightarrow \mathcal{I}_{A_3} \right)\\
    \overset{\eqref{eq:Iso_semi-causal_transfo}}{=} \left( \overline{\mathcal{I}}_{A_1} \prec \mathcal{I}_{A_2}\right) \rightarrow \left( \overline{\mathcal{I}}_{A_0} \prec \mathcal{I}_{A_3} \right)\\
    \overset{\eqref{eq:transfo=2way}}{=} \left(\overline{\overline{\mathcal{I}}_{A_1} \prec \mathcal{I}_{A_2}}\right) \prec \left( \overline{\mathcal{I}}_{A_0} \prec \mathcal{I}_{A_3} \right) \\
    \cup \left(\overline{\overline{\mathcal{I}}_{A_1} \prec \mathcal{I}_{A_2}}\right) \succ \left( \overline{\mathcal{I}}_{A_0} \prec \mathcal{I}_{A_3} \right) \\
    = \mathcal{I}_{A_1} \prec \overline{\mathcal{I}}_{A_2} \prec \overline{\mathcal{I}}_{A_0} \prec \mathcal{I}_{A_3} \\
    \cup \overline{\mathcal{I}}_{A_0} \prec \mathcal{I}_{A_3} \prec \mathcal{I}_{A_1} \prec \overline{\mathcal{I}}_{A_2} \:.
\end{multline}
So it may be concluded that the most general 2-comb is a superposition of two possible quantum networks: one in which the channel $M$ is measured, $\mathcal{I}_{A_1} \prec \overline{\mathcal{I}}_{A_2}$, then reprepared as $\Tilde{M}$, $\overline{\mathcal{I}}_{A_0} \prec \mathcal{I}_{A_3}$  in its causal future and one where it is first $\Tilde{M}$ which is prepared.

Yet, there is again an isomorphism at play. This is the isomorphism presented in Sec. \ref{sec:applications_iso}: the $n-$combs are equivalent to quantum networks. This will make the interpretation of the supermap $N$, a second-order transformation, equivalent to a succession of first-order transformations with a single signaling direction.
\begin{multline} \label{eq:supermap=2-comb}
    \left( \mathcal{I}_{A_1} \rightarrow \mathcal{I}_{A_2}\right) \rightarrow \left( \mathcal{I}_{A_0} \rightarrow \mathcal{I}_{A_3} \right)\\
    \overset{\eqref{eq:uncurrying}}{=}\left( \mathcal{I}_{A_0} \otimes \left( \mathcal{I}_{A_1} \rightarrow \mathcal{I}_{A_2}\right)  \right) \rightarrow \mathcal{I}_{A_3} \\
    \overset{\eqref{eq:Iso_semi-causal_transfo}}{=}\overline{\left( \mathcal{I}_{A_0} \otimes \left( \overline{\mathcal{I}}_{A_1} \prec \mathcal{I}_{A_2}\right)  \right)} \prec \mathcal{I}_{A_3} \\
    \overset{\eqref{eq:Iso_semi-causal_causal}}{=}\overline{\left( \mathcal{I}_{A_0} \prec \left( \overline{\mathcal{I}}_{A_1} \prec \mathcal{I}_{A_2}\right)  \right)} \prec \mathcal{I}_{A_3} \\
    =\overline{\mathcal{I}}_{A_0} \prec \mathcal{I}_{A_1}\prec \overline{\mathcal{I}}_{A_2}  \prec \mathcal{I}_{A_3}\:.
\end{multline}
This means that $N$ can equivalently be understood as a succession of two quantum channels \cite{Chiribella2009,Perinotti2016,Bisio2018}; this is a peculiar feature of quantum supermaps, and of quantum combs in general.

This surprisingly tame behavior of having a single signaling direction when reduced to a succession of first-order objects will not be carried out in the fourth order, as we will see next. 
In addition, note that this explains why a single-partite PM does not feature ICO (or anything outside of the usual quantum theory). If the systems $A_0$ and $A_3$ were trivial, so that $N$ is defined as (the CJ of) a supermap that takes a quantum channel to a probability, we see that the associated projector $\overline{\mathcal{I}}_{A_0} \prec \mathcal{I}_{A_1}\prec \overline{\mathcal{I}}_{A_2}  \prec \mathcal{I}_{A_3} \mapsto 1 \prec \mathcal{I}_{A_1}\prec \overline{\mathcal{I}}_{A_2}  \prec 1$, and this can be simplified into $\mathcal{I}_{A_1}\prec \overline{\mathcal{I}}_{A_2} \overset{\eqref{eq:Iso_semi-causal_causal}}{=} \mathcal{I}_{A_1} \otimes \mathcal{D}_{A_2}$.

\subsection{The quantum super-supermap is an MPM\label{sec:examples_dynamics_constr_3}}
To go to the fourth order, the same procedure is applied again: we introduce a `super-supermap' $\mathcal{W}$ so that it maps supermaps like $\mathcal{N}$ to themselves, $\mathcal{W}(\mathcal{N}) = \Tilde{\mathcal{N}}$, where $\mathcal{N}\in\LinOp{\Hilb{A_1}\otimes \Hilb{A_2} \otimes \Hilb{A_5}\otimes \Hilb{A_6}}$ and $\Tilde{\mathcal{N}}\in\LinOp{\Hilb{A_0}\otimes \Hilb{A_3} \otimes \Hilb{A_4}\otimes \Hilb{A_7}}$ :
\begin{multline}
    \InProd{\mathds{1}}{\left[\mathcal{N}\left(\mathcal{M}\right)\right](\rho)}_{\LinOp{\Hilb{A_3}}} \\
    \mapsto \InProd{\mathds{1}}{\left[\left[\mathcal{W}\left(\mathcal{N}\right)\right]\left(\mathcal{M}\right)\right](\rho)}_{\LinOp{\Hilb{A_7}}}\:.
\end{multline}
In the CJ representation, the probability rule is
\begin{multline}
    p(i,j,k|W, M, N, \rho) =\\
    \InProd{W}{\rho \otimes N_j^{T_{A_2}T_{A_6}} \otimes M_j^{T_{A_4}} \otimes \overline{E}_i^{T_{A_7}}} \:.
\end{multline}
Here, $M$ is a transformation of $\rho$ defined in $\mathscr{A}_3 \rightarrow \mathscr{A}_4$; $N$ is a transformation of $M$, $N \in \left( \mathscr{A}_2 \rightarrow \mathscr{A}_5 \right) \rightarrow \left( \mathscr{A}_1 \rightarrow \mathscr{A}_6\right)$, and the $W$ operator is a transformation of $N$,
\begin{multline}
    W \in  \left[\left( \mathscr{A}_2 \rightarrow \mathscr{A}_5 \right) \rightarrow \left( \mathscr{A}_1 \rightarrow
    \mathscr{A}_6\right) \right]\\
    \rightarrow \left[\left( \mathscr{A}_3 \rightarrow \mathscr{A}_4 \right) \rightarrow \left( \mathscr{A}_0 \rightarrow \mathscr{A}_7\right) \right] \:.
\end{multline}
We can now treat $N$ as a party itself, which physically corresponds to someone acting a first time between nodes $A_1$ and $A_2$, and a second time between $A_5$ and $A_6$, using a quantum network.

Again, one can focus on the projector characterizing the state structure of $N$ to extract its signaling structure by putting it in normal form. First, successive applications of the uncurrying rule \eqref{eq:uncurrying} yield
\begin{multline}
    \left[\left( \mathcal{I}_{A_2} \rightarrow \mathcal{I}_{A_5} \right) \rightarrow \left( \mathcal{I}_{A_1} \rightarrow
    \mathcal{I}_{A_6}\right) \right]\\
    \rightarrow \left[\left( \mathcal{I}_{A_3} \rightarrow \mathcal{I}_{A_4} \right) \rightarrow \left( \mathcal{I}_{A_0} \rightarrow \mathcal{I}_{A_7}\right) \right] \\
    = \big[ \big( \mathcal{I}_{A_3} \rightarrow \mathcal{I}_{A_4} \big) \otimes  \big(\big( \mathcal{I}_{A_2} \rightarrow \mathcal{I}_{A_5} \big) \\
    \rightarrow \big( \mathcal{I}_{A_1} 
     \rightarrow \mathcal{I}_{A_6}\big)\big) \big] \rightarrow \left( \mathcal{I}_{A_0} \rightarrow \mathcal{I}_{A_7} \right)
    \\
    = \big[  \mathcal{I}_{A_0} \otimes \big(\left( \mathcal{I}_{A_3} \rightarrow \mathcal{I}_{A_4} \right) \\
    \otimes  \left(\left( \mathcal{I}_{A_2} \rightarrow \mathcal{I}_{A_5} \right) \rightarrow \left( \mathcal{I}_{A_1} \rightarrow
    \mathcal{I}_{A_6}\right)\right)\big) \big] \rightarrow \mathcal{I}_{A_7} \:.
\end{multline}
Next, Eqs. \eqref{eq:Iso_semi-causal_transfo} and \eqref{eq:supermap=2-comb}, \eqref{eq:Iso_semi-causal_causal}, \eqref{eq:Iso_semi-causal_transfo}, and \eqref{eq:semi-causal_negation} are used successively:
\begin{multline}
     \big[  \mathcal{I}_{A_0} \otimes \big(\left( \mathcal{I}_{A_3} \rightarrow \mathcal{I}_{A_4} \right) \\
    \otimes  \left(\left( \mathcal{I}_{A_2} \rightarrow \mathcal{I}_{A_5} \right) \rightarrow \left( \mathcal{I}_{A_1} \rightarrow
    \mathcal{I}_{A_6}\right)\right)\big) \big] \rightarrow \mathcal{I}_{A_7}\\
    = \Big[  \mathcal{I}_{A_0} \otimes \Big(\Big( \overline{\mathcal{I}}_{A_3} \prec \mathcal{I}_{A_4} \Big) \\
    \otimes  \Big(\overline{\mathcal{I}}_{A_1} \prec \mathcal{I}_{A_2} \prec \overline{\mathcal{I}}_{A_5} \prec \mathcal{I}_{A_6}\Big)\Big) \Big]\rightarrow \mathcal{I}_{A_7}\\
    \overset{\eqref{eq:Iso_semi-causal_causal}}{=} \Big[  \mathcal{I}_{A_0} \prec \Big(\Big( \overline{\mathcal{I}}_{A_3} \prec \mathcal{I}_{A_4} \Big)\\
     \otimes  \Big(\overline{\mathcal{I}}_{A_1} \prec \mathcal{I}_{A_2} \prec \overline{\mathcal{I}}_{A_5} \prec \mathcal{I}_{A_6}\Big)\Big) \Big] \rightarrow \mathcal{I}_{A_7} \overset{\eqref{eq:Iso_semi-causal_transfo}}{=} \\
     \overline{\mathcal{I}_{A_0} \prec \left(\left( \overline{\mathcal{I}}_{A_3} \prec \mathcal{I}_{A_4} \right) \otimes  \left(\overline{\mathcal{I}}_{A_1} \prec \mathcal{I}_{A_2} \prec \overline{\mathcal{I}}_{A_5} \prec \mathcal{I}_{A_6}\right)\right)} \\\prec \mathcal{I}_{A_7}\overset{\eqref{eq:semi-causal_negation}}{=} \\
    \overline{\mathcal{I}}_{A_0} \prec \overline{\left(\left( \overline{\mathcal{I}}_{A_3} \prec \mathcal{I}_{A_4} \right) \otimes  \left(\overline{\mathcal{I}}_{A_1} \prec \mathcal{I}_{A_2} \prec \overline{\mathcal{I}}_{A_5} \prec \mathcal{I}_{A_6}\right)\right)} \\\prec \mathcal{I}_{A_7}\:.
\end{multline}

If, for simplicity, we assume the subsystems $A_0$ and $A_7$ to be of dimension 1, we can already notice in the first line of the previous equation that the state structure of the operator $W$ features a projector whose expression involves the negation of a tensor product of two quantum combs:
\begin{equation}
    W \in  \overline{\left( \mathscr{A}_2\rightarrow \mathscr{A}_5 \right) \rightarrow  \left(\mathscr{A}_1 \rightarrow \mathscr{A}_6\right) \otimes \left( \mathscr{A}_3 \rightarrow \mathscr{A}_4 \right)} \:,
\end{equation}
i.e. it is the functional normalized on the tensor product between a 2-comb and 1-comb. This is by definition a multi-round process matrix (MPM). We proved in a previous work that it can manifest indefinite causal order and even more that it can beat causal inequalities \cite{MPM}. The fourth order, therefore, presents multiple directions of signaling in a non-trivial manner.

The fundamental reason why, as we showed in this previous article \cite{MPM}, is that this last expression defines a set which is the union of 3 sets of quantum combs with different signaling orderings,
\begin{multline}
    \overline{\left( \mathscr{A}_2\rightarrow \mathscr{A}_5 \right) \rightarrow  \left(\mathscr{A}_1 \rightarrow \mathscr{A}_6\right) \otimes \left( \mathscr{A}_3 \rightarrow \mathscr{A}_4 \right)}\\
    = \\
    \overline{\left( \left( \left( \left(\mathscr{A}_1 \rightarrow \mathscr{A}_2 \right) \rightarrow \mathscr{A}_5 \right) \rightarrow \mathscr{A}_6\right) \rightarrow  \mathscr{A}_3 \right) \rightarrow \mathscr{A}_4} \: \cup \\
    \overline{\left( \left( \left( \left(\mathscr{A}_1 \rightarrow \mathscr{A}_2 \right) \rightarrow \mathscr{A}_3 \right) \rightarrow \mathscr{A}_4\right) \rightarrow  \mathscr{A}_5 \right) \rightarrow \mathscr{A}_6} \: \cup \\
    \overline{\left( \left( \left( \left(\mathscr{A}_3 \rightarrow \mathscr{A}_4 \right) \rightarrow \mathscr{A}_1 \right) \rightarrow \mathscr{A}_2\right) \rightarrow  \mathscr{A}_5 \right) \rightarrow \mathscr{A}_6} \:.
\end{multline}
In other words, $W$ can be taken as a superposition of all combs that respect the signaling ordering in between the nodes of the 2-comb and the 1-comb that will be plugged into it, but it itself does not assume a global ordering. The above expression thus has 3 possible global orderings, depending on whether the operation of the party with the 1-comb (acting between nodes 3 and 4) is before, in between, or after the two operations of the party with the 2-comb.

Back to the general case of a super-supermap, we can use this insight to present a case of non-unique normal form. One can further simplify the projector to a normal form using Eq. \eqref{eq:relations_1}:
\begin{multline}
\overline{\mathcal{I}}_{A_0} \prec\\
 \overline{\left(\left( \overline{\mathcal{I}}_{A_3} \prec \mathcal{I}_{A_4} \right) \otimes  \left(\overline{\mathcal{I}}_{A_1} \prec \mathcal{I}_{A_2} \prec \overline{\mathcal{I}}_{A_5} \prec \mathcal{I}_{A_6}\right)\right)} \\\prec \mathcal{I}_{A_7}\\
= \overline{\mathcal{I}}_{A_0} \prec \\
 \overline{\big(\left(\overline{\mathcal{I}}_{A_1} \prec \mathcal{I}_{A_2} \prec \overline{\mathcal{I}}_{A_5} \prec \mathcal{I}_{A_6} \prec \overline{\mathcal{I}}_{A_3} \prec \mathcal{I}_{A_4}\right) \cap}\\ \overline{\left(\overline{\mathcal{I}}_{A_3} \prec \mathcal{I}_{A_4} \prec \overline{\mathcal{I}}_{A_1} \prec \mathcal{I}_{A_2} \prec \overline{\mathcal{I}}_{A_5} \prec \mathcal{I}_{A_6}\right)\big)}\\
\prec \mathcal{I}_{A_7}\\
\overset{\eqref{eq:deMorgan_cap_to_cup}}{=} \overline{\mathcal{I}}_{A_0} \prec\\
 \Big(\overline{\left(\overline{\mathcal{I}}_{A_1} \prec \mathcal{I}_{A_2} \prec \overline{\mathcal{I}}_{A_5} \prec \mathcal{I}_{A_6} \prec \overline{\mathcal{I}}_{A_3} \prec \mathcal{I}_{A_4}\right)} \cup\\ \overline{\left(\overline{\mathcal{I}}_{A_3} \prec \mathcal{I}_{A_4} \prec \overline{\mathcal{I}}_{A_1} \prec \mathcal{I}_{A_2} \prec \overline{\mathcal{I}}_{A_5} \prec \mathcal{I}_{A_6}\right)}\Big)\prec \mathcal{I}_{A_7}\\ 
\overset{\eqref{eq:prec_neg}}{=} \overline{\mathcal{I}}_{A_0} \prec\\ \big(\big(\mathcal{I}_{A_3} \prec \overline{\mathcal{I}}_{A_4} \prec \mathcal{I}_{A_1} \prec \overline{\mathcal{I}}_{A_2} \prec \mathcal{I}_{A_5} \prec \overline{\mathcal{I}}_{A_6}\big)  \cup\\ 
    \big(\mathcal{I}_{A_1} \prec \overline{\mathcal{I}}_{A_2} \prec \mathcal{I}_{A_5} \prec \overline{\mathcal{I}}_{A_6} \prec \mathcal{I}_{A_3} \prec \overline{\mathcal{I}}_{A_4} \big)\big) \prec \mathcal{I}_{A_7} \\
\overset{\eqref{eq:prec_cap_cup}}{=} \overline{\mathcal{I}}_{A_0} \prec 
\mathcal{I}_{A_3} \prec \overline{\mathcal{I}}_{A_4} \prec \mathcal{I}_{A_1} \prec \overline{\mathcal{I}}_{A_2}\\
 \prec \mathcal{I}_{A_5} \prec \overline{\mathcal{I}}_{A_6}  \prec \mathcal{I}_{A_7} \cup\\ 
    \overline{\mathcal{I}}_{A_0} \prec \mathcal{I}_{A_1} \prec \overline{\mathcal{I}}_{A_2} \\ \prec \mathcal{I}_{A_5} \prec \overline{\mathcal{I}}_{A_6} \prec \mathcal{I}_{A_3} \prec \overline{\mathcal{I}}_{A_4} \prec \mathcal{I}_{A_7} \:.
\end{multline}
Thus,
\begin{multline}\label{eq:NF_1}
    \left[\left( \mathcal{I}_{A_2} \rightarrow \mathcal{I}_{A_5} \right) \rightarrow \left( \mathcal{I}_{A_1} \rightarrow
    \mathcal{I}_{A_6}\right) \right]\\
    \rightarrow \left[\left( \mathcal{I}_{A_3} \rightarrow \mathcal{I}_{A_4} \right) \rightarrow \left( \mathcal{I}_{A_0} \rightarrow \mathcal{I}_{A_7}\right) \right] \\
    =\\
    \overline{\mathcal{I}}_{A_0} \prec \mathcal{I}_{A_3} \prec \overline{\mathcal{I}}_{A_4} \prec \mathcal{I}_{A_1} \\
    \prec \overline{\mathcal{I}}_{A_2} \prec \mathcal{I}_{A_5} \prec \overline{\mathcal{I}}_{A_6}  \prec \mathcal{I}_{A_7} \\ 
    \cup \overline{\mathcal{I}}_{A_0} \prec \mathcal{I}_{A_1} \prec \overline{\mathcal{I}}_{A_2} \prec \mathcal{I}_{A_5} \\
    \prec \overline{\mathcal{I}}_{A_6} \prec \mathcal{I}_{A_3} \prec \overline{\mathcal{I}}_{A_4} \prec \mathcal{I}_{A_7} \:.
\end{multline}

But one can also inject the previously discussed simplification for the case of a functional into the central part. Distributing over the union, one obtains a normal form as in Definition \ref{def:normal_form}:
\begin{multline}\label{eq:NF_2}
    \left[\left( \mathcal{I}_{A_2} \rightarrow \mathcal{I}_{A_5} \right) \rightarrow \left( \mathcal{I}_{A_1} \rightarrow
    \mathcal{I}_{A_6}\right) \right]\\
    \rightarrow \left[\left( \mathcal{I}_{A_3} \rightarrow \mathcal{I}_{A_4} \right) \rightarrow \left( \mathcal{I}_{A_0} \rightarrow \mathcal{I}_{A_7}\right) \right] \\
    =\\
    \overline{\mathcal{I}}_{A_0} \prec \mathcal{I}_{A_1} \prec \overline{\mathcal{I}}_{A_2} \prec \mathcal{I}_{A_3} \prec \overline{\mathcal{I}}_{A_4} \prec \mathcal{I}_{A_5} \\
    \prec \overline{\mathcal{I}}_{A_6}  \prec \mathcal{I}_{A_7} \\
    \cup \overline{\mathcal{I}}_{A_0} \prec \mathcal{I}_{A_3} \prec \overline{\mathcal{I}}_{A_4} \prec \mathcal{I}_{A_1} \prec \overline{\mathcal{I}}_{A_2} \prec \mathcal{I}_{A_5} \\
    \prec \overline{\mathcal{I}}_{A_6}  \prec \mathcal{I}_{A_7} \\ 
    \cup \overline{\mathcal{I}}_{A_0} \prec \mathcal{I}_{A_1} \prec \overline{\mathcal{I}}_{A_2} \prec \mathcal{I}_{A_5} \prec \overline{\mathcal{I}}_{A_6} \prec \mathcal{I}_{A_3} \\ \prec \overline{\mathcal{I}}_{A_4} \prec \mathcal{I}_{A_7} \:.
\end{multline}

The reason for these two normal forms to be valid comes from the transitivity of the prec, i.e. that it is associative, Eq. \eqref{eq:prec_assoc}. Call $\Proj{}{A} = \mathcal{I}_{A_3} \prec \overline{\mathcal{I}}_{A_4}$ $\Proj{}{B_0} = \mathcal{I}_{A_1} \prec \overline{\mathcal{I}}_{A_2}$, and $\Proj{}{B_1} = \mathcal{I}_{A_5} \prec \overline{\mathcal{I}}_{A_6}$. Then, in the expression
$\left(\Proj{}{A} \prec \Proj{}{B_0} \prec \Proj{}{B_1}\right) \cup \left(\Proj{}{B_0} \prec \Proj{}{B_1}\prec \Proj{}{A}\right)$ one sees that the state structure characterized by this projector allows for signaling from $A$ to $B_0$ and $B_1$ and at the same time from $B_0$ and $B_1$ to $A$. Obviously, if $A$ can be both before or after the two $B$'s, she can also be in the middle, in equation:
\begin{multline}
    \left(\Proj{}{A} \prec \Proj{}{B_0} \prec \Proj{}{B_1}\right) \cup \left(\Proj{}{B_0} \prec \Proj{}{B_1}\prec \Proj{}{A}\right) =\\
    \left(\Proj{}{A} \prec \Proj{}{B_0} \prec \Proj{}{B_1}\right) \cup \left(\Proj{}{B_0} \prec \Proj{}{B_1}\prec \Proj{}{A}\right) \\ \cup \left(\Proj{}{B_0}\prec \Proj{}{A} \prec \Proj{}{B_1}\right)\:.
\end{multline}
In other words, saying that $B_0$ can signal to $A$ which can signal to $B_1$ does not add new information when one knows that $A$ can signal to $B_0$ and $B_1$ and at the same time that $B_0$ and $B_1$ can signal to $A$. 
This implies by De Morgan duality that 
\begin{multline}
    \left(\Proj{}{A} \prec \Proj{}{B_0} \prec \Proj{}{B_1}\right) \cap \left(\Proj{}{B_0} \prec \Proj{}{B_1}\prec \Proj{}{A}\right) =\\
    \left(\Proj{}{A} \prec \Proj{}{B_0} \prec \Proj{}{B_1}\right) \cap \left(\Proj{}{B_0} \prec \Proj{}{B_1}\prec \Proj{}{A}\right) \\ \cap \left(\Proj{}{B_0}\prec \Proj{}{A} \prec \Proj{}{B_1}\right)\:,
\end{multline}
and thus that Eqs. \eqref{eq:NF_1} and \eqref{eq:NF_2} are equivalent.
Ergo, the normal form is multiply defined in that case because of redundant information in the description of the signaling structure.

Redundancy or not, the presence of a union in the normal form indeed confirms that more than one fixed signaling direction is allowed, hence that there are operators in that state structure that allow for indefinite causal ordering. What we have just done is to track the ICO origin as an ambiguity in the decomposition of higher-order into lower-order: using the algebraic rules of the projectors, we could express the projector associated with the fourth-order in terms of unions of first-order projectors (by that we mean that we expressed the state structure of super-supermap as unions of `prec chains' between quantum states and measurements). This decomposition does not forbid the existence of operators that belong to the affine hull of the union. That is, operators that are obtained from a superposition of several chains with different signaling directions. These are the ones that can show causal non-separability, like the operator studied in Ref. \cite{MPM} for example.

The bottom line of this example is that we had to go to the fourth order to observe signaling in more than one direction and an indefinite causal order. In a general process theory built using the dynamics-like construction, there should have been two directions already at the second order, possibly featuring ICO (see the biased quantum theory example), and $2^{n-1}$ directions at order $n$. This tameness is a concrete consequence of Lemma \ref{lem:accidental} and the ensuing Theorem \ref{theo:combs=networks}.
\end{document}